\numberwithin{equation}{section}
\numberwithin{figure}{section}
\title{Mixing Any Cocktail with Limited Ingredients: On the Structure of Payoff Sets in Multi-Objective POMDPs and its Impact on Randomised Strategies\thanks{James C. A. Main is an F.R.S.-FNRS Research Fellow and Mickael Randour is an F.R.S.-FNRS Research Associate. Both authors are members of the TRAIL institute. This work has been supported by the Fonds de la Recherche Scientifique – FNRS under Grant n° T.0188.23 (PDR ControlleRS).}}
\author{James {C.~A.} Main \and Mickael Randour \and F.R.S.-FNRS \& UMONS -- Université de Mons}
\date{}
\theoremstyle{definition} \newtheorem{example}{Example}[section]
\theoremstyle{definition} \newtheorem{theorem}{Theorem}[section]
\theoremstyle{definition} 
\theoremstyle{definition} 
\theoremstyle{definition} \newtheorem{lemma}[theorem]{Lemma}
\theoremstyle{definition} \newtheorem{corollary}[theorem]{Corollary}
\theoremstyle{definition} \newtheorem{remark}[theorem]{Remark}
\tikzset{
  >=stealth,
  left sided/.style={
    draw=none,
    append after command={
      [shorten <= -0.5\pgflinewidth]
      (\tikzlastnode.north west) edge[dashed](\tikzlastnode.south west)
    }
  },
  two sided/.style={
    draw=none,
    append after command={
      [shorten <= -0.5\pgflinewidth]
      (\tikzlastnode.north west) edge[dashed](\tikzlastnode.south west)
      (\tikzlastnode.north east) edge[dashed](\tikzlastnode.south east)
    }
  },
  right sided/.style={
    draw=none,
    append after command={
      [shorten <= -0.5\pgflinewidth]
      (\tikzlastnode.north east) edge[dashed](\tikzlastnode.south east)
    }
  },
  every state/.style={circle, minimum size=1cm},
  every path/.style={thick},
  initial text=,
  node distance=1cm
}
\tikzstyle{stochasticc} = [fill, circle, minimum size=0.1cm, inner sep=0.05cm, outer sep=0cm]
\tikzstyle{stochastics} = [fill, rectangle, minimum size=0.1cm, inner sep=0.05cm, outer sep=0cm]
\newcommand{\ud}{\mathrm{d}}
\newcommand{\bigo}{\mathcal{O}}
\newcommand{\image}[1]{\mathsf{Im}(#1)}
\newcommand{\leLex}{\mathrel{\leq_{\mathsf{lex}}}}
\newcommand{\geLex}{\mathrel{\geq_{\mathsf{lex}}}}
\newcommand{\lLex}{\mathrel{<_{\mathsf{lex}}}}
\newcommand{\vect}{\mathbf{v}}
\newcommand{\vectComp}{v} \newcommand{\vectB}{\mathbf{w}}
\newcommand{\vectCompB}{w} \newcommand{\scalar}{\alpha} \newcommand{\scalarB}{\beta}
\newcommand{\scalarC}{\gamma}
\newcommand{\ooInt}[2]{\left]#1, #2\right[}
\newcommand{\ccInt}[2]{\left[#1, #2\right]}
\newcommand{\ocInt}[2]{\left]#1, #2\right]}
\newcommand{\coInt}[2]{\left[#1, #2\right[}
\newcommand{\IR}{\mathbb{R}}
\newcommand{\IRbar}{\bar{\mathbb{R}}}
\newcommand{\IN}{\mathbb{N}}
\newcommand{\subsets}[1]{2^{#1}}
\newcommand{\proba}{\mathbb{P}}
\newcommand{\expectation}{\mathbb{E}}
\newcommand{\expectancy}{\expectation}
\newcommand{\dist}[1]{\mathcal{D}(#1)}
\newcommand{\distMeasure}{\mu} \newcommand{\cyl}[1]{\mathsf{Cyl}(#1)}
\newcommand{\supp}[1]{\mathsf{supp}(#1)}
\newcommand{\sigmaAlgebra}{\mathcal{A}}
\newcommand{\probaEvent}{E}
\newcommand{\rv}{X}
\newcommand{\rvVect}{\mathcal{\rv}}
\newcommand{\rvB}{Y}
\newcommand{\rvBVect}{\mathcal{\rvB}}
\newcommand{\rvC}{V}
\newcommand{\rvCVect}{\mathcal{\rvC}}
\newcommand{\indic}[1]{\mathds{1}_{#1}}
\newcommand{\concurStateSpace}{S}
\newcommand{\concurState}{s}
\newcommand{\pomdp}{\mathfrak{P}}
\newcommand{\obsFun}{\mathsf{Obs}}
\newcommand{\observation}{z}
\newcommand{\obsSpace}{\mathcal{Z}}
\newcommand{\pomdpTuple}{(\mdpStateSpace, \mdpActionSpace, \mdpTrans, \obsSpace, \obsFun)}
\newcommand{\belief}{B}
\newcommand{\beliefUpdate}{\mathsf{Bel}_{\mathsf{up}}}
\newcommand{\beliefHistory}{\mathsf{Bel}_{\mathsf{hist}}}
\newcommand{\mdp}{\mathfrak{M}}
\newcommand{\mdpStateSpace}{S}
\newcommand{\mdpState}{s}
\newcommand{\mdpStateB}{t}
\newcommand{\mdpTrans}{\delta}
\newcommand{\mdpActionSpace}{A}
\newcommand{\mdpAction}{a}
\newcommand{\mdpActionB}{b}
\newcommand{\mdpActionC}{c}
\newcommand{\mdpTuple}{(\mdpStateSpace, \mdpActionSpace, \mdpTrans)}
\newcommand{\scc}{C}
\newcommand{\playSet}[1]{\mathsf{Plays}(#1)}
\newcommand{\play}{\pi}
\newcommand{\histSet}[1]{\mathsf{Hist}(#1)}
\newcommand{\playPrefix}[2]{#1_{\leq#2}}
\newcommand{\playSuffix}[2]{#1_{\geq#2}}
\newcommand{\hist}{h}
\newcommand{\histConcat}[2]{#1\cdot#2}
\newcommand{\histPart}{\mathcal{H}}
\newcommand{\pomdpSigmaAlgebra}{\sigmaAlgebra_{\pomdp}}
\newcommand{\stratSigmaAlgebra}{\sigmaAlgebra_{\mathsf{pure}}}
\newcommand{\last}[1]{\mathsf{last}(#1)}
\newcommand{\buchi}[1]{\mathsf{B\ddot{u}chi}(#1)}
\newcommand{\reach}[1]{\mathsf{Reach}(#1)}
\newcommand{\reachBounded}[2]{\mathsf{Reach}^{\leq{}#2}(#1)}
\newcommand{\reachExact}[2]{\mathsf{Reach}^{=#2}(#1)}
\newcommand{\target}{T}
\newcommand{\infSet}[1]{\mathsf{inf}(#1)}
\newcommand{\discSum}[2]{\mathsf{DSum}^{#1}_{#2}} \newcommand{\discSumG}[2]{\mathsf{GDSum}^{#1}_{#2}} \newcommand{\discFactor}{\lambda}
\newcommand{\spath}[2]{\mathsf{SPath}^{#1}_{#2}} \newcommand{\totrew}[1]{\mathsf{TRew}_{#1}}
\newcommand{\indexPosition}{\ell}
\newcommand{\indexLast}{r}
\newcommand{\indexEC}{i}
\newcommand{\indexSequence}{n} \newcommand{\indexSequenceB}{m} 
\newcommand{\lengthBound}{k}
\newcommand{\numEC}{k}
\newcommand{\word}{w}
\newcommand{\wordOmega}{u}
\newcommand{\strat}[1]{\sigma_{#1}}
\newcommand{\stratB}[1]{\tau_{#1}}
\newcommand{\stratMDP}{{\strat{}}}
\newcommand{\stratBMDP}{{\stratB{}}}
\newcommand{\mixedStrat}{\mu}
\newcommand{\stratClass}{\Sigma}
\newcommand{\stratClassPure}[1]{\stratClass_{\mathsf{pure}}(#1)}
\newcommand{\stratClassAll}[1]{\stratClass(#1)}
\newcommand{\weight}{w}
\newcommand{\weightMax}{W}
\newcommand{\objective}{\Omega}
\newcommand{\payoff}{f}
\newcommand{\payoffB}{g}
\newcommand{\payoffTuple}{\bar{\payoff}}
\newcommand{\payoffTupleB}{\bar{\payoffB}}
\newcommand{\achSet}[2]{\mathsf{Ach}_{#2}(#1)} \newcommand{\achSetPure}[2]{\mathsf{Ach}_{#2}^{\mathsf{pure}}(#1)}
\newcommand{\achSetClass}[3]{\mathsf{Ach}_{#2}^{#3}(#1)}
\newcommand{\paySet}[2]{\mathsf{Pay}_{#2}(#1)} \newcommand{\paySetPure}[2]{\mathsf{Pay}_{#2}^{\mathsf{pure}}(#1)}
\newcommand{\paySetClass}[3]{\mathsf{Pay}_{#2}^{#3}(#1)}
\newcommand{\spanAff}[1]{\mathsf{aff}(#1)}
\newcommand{\convex}[1]{\mathsf{conv}(#1)}
\renewcommand{\ker}[1]{\mathsf{ker}(#1)}
\newcommand{\down}[1]{\mathsf{down}(#1)}
\newcommand{\corners}[1]{\mathsf{extr}(#1)}
\newcommand{\zeroVect}{\mathbf{0}}
\newcommand{\zeroVectDim}[1]{\zeroVect_{#1}}
\newcommand{\oneVect}{\mathbf{1}}
\newcommand{\oneVectDim}[1]{\oneVect_{#1}}
\newcommand{\vectSpace}{V}
\newcommand{\linMap}{L}
\newcommand{\linMapB}{\mathcal{L}}
\newcommand{\linForm}{x^*}
\newcommand{\linFormB}{y^*}
\newcommand{\hplane}{H}
\newcommand{\proj}[1]{\mathsf{proj}_{#1}}
\newcommand{\scalarProd}[2]{\langle{}#1, #2\rangle{}}
\newcommand{\interior}[1]{\mathsf{int}(#1)}
\newcommand{\relInt}[1]{\mathsf{ri}(#1)}
\newcommand{\closure}[1]{\mathsf{cl}(#1)}
\newcommand{\border}[1]{\mathsf{bd}(#1)}
\newcommand{\ball}[2]{B(#1, #2)}
\newcommand{\topSpace}{X}
\newcommand{\topSpaceB}{Y}
\newcommand{\openSet}{U}
\newcommand{\closedSet}{F}
\newcommand{\nhood}{N}
\newcommand{\topology}{\mathcal{T}}
\newcommand{\topologyB}{\mathcal{T}'}
\newcommand{\topBasis}{\mathcal{B}}
\newcommand{\topologyDis}{\mathcal{T}_{\mathsf{dis}}}
\newcommand{\topologyInd}{\mathcal{T}_{\mathsf{ind}}}
\newcommand{\prodTopology}{\mathcal{T}_\Pi}
\newcommand{\metric}{\mathsf{dist}}
\newcommand{\discMetric}{\metric_{\mathsf{disc}}}
\newcommand{\distMetric}{\metric_{\mathsf{proba}}}
\newcommand{\playMetric}{\metric_{\mathsf{play}}}
\newcommand{\numObj}{d}
\newcommand{\payoffComp}{q}
\newcommand{\payoffVect}{\mathbf{\payoffComp}}
\newcommand{\payoffVectVerbose}{(\payoffComp_\indexPayoff)_{1\leq\indexPayoff\leq\numObj}}
\newcommand{\payoffCompB}{p}
\newcommand{\payoffVectB}{\mathbf{\payoffCompB}}
\newcommand{\payoffVectBVerbose}{(\payoffCompB_\indexPayoff)_{1\leq\indexPayoff\leq\numObj}}
\newcommand{\indexPayoff}{j}
\newcommand\vartextvisiblespace[1][.5em]{\makebox[#1]{\kern.07em
    \vrule height.3ex
    \hrulefill
    \vrule height.3ex
    \kern.07em
  }}
\def\squarecorner#1{
\pgf@x=\the\wd\pgfnodeparttextbox \pgfmathsetlength\pgf@xc{\pgfkeysvalueof{/pgf/inner xsep}}\advance\pgf@x by 2\pgf@xc \pgfmathsetlength\pgf@xb{\pgfkeysvalueof{/pgf/minimum width}}\ifdim\pgf@x<\pgf@xb \pgf@x=\pgf@xb \fi \pgf@y=\ht\pgfnodeparttextbox \advance\pgf@y by\dp\pgfnodeparttextbox \pgfmathsetlength\pgf@yc{\pgfkeysvalueof{/pgf/inner ysep}}\advance\pgf@y by 2\pgf@yc \pgfmathsetlength\pgf@yb{\pgfkeysvalueof{/pgf/minimum height}}\ifdim\pgf@y<\pgf@yb \pgf@y=\pgf@yb \fi \ifdim\pgf@x<\pgf@y \pgf@x=\pgf@y \else
        \pgf@y=\pgf@x \fi
\pgf@x=#1.5\pgf@x \advance\pgf@x by.5\wd\pgfnodeparttextbox \pgfmathsetlength\pgf@xa{\pgfkeysvalueof{/pgf/outer xsep}}\advance\pgf@x by#1\pgf@xa \pgf@y=#1.5\pgf@y \advance\pgf@y by-.5\dp\pgfnodeparttextbox \advance\pgf@y by.5\ht\pgfnodeparttextbox \pgfmathsetlength\pgf@ya{\pgfkeysvalueof{/pgf/outer ysep}}\advance\pgf@y by#1\pgf@ya }
    \savedanchor\northeast{\squarecorner{}}
    \savedanchor\southwest{\squarecorner{-}}
\begin{document}

\maketitle

\begin{abstract}
We consider multi-dimensional payoff functions in partially observable Markov decision processes. We study the structure of the set of expected payoff vectors of all strategies (policies) and study what kind are needed to achieve a given expected payoff vector. In general, pure strategies (i.e., not resorting to randomisation) do not suffice for this problem.

We prove that for any payoff for which the expectation is well-defined under all strategies, it is sufficient to mix (i.e., randomly select a pure strategy at the start of a play and committing to it for the rest of the play) \textit{finitely many} pure strategies to approximate any expected payoff vector up to any precision. Furthermore, for any payoff for which the expected payoff is finite under all strategies, any expected payoff can be obtained exactly by mixing finitely many strategies.
\end{abstract}

\section{Introduction}\label{section:intro}
\subparagraph*{The model.}
\textit{Markov decision processes} (MDPs) are a classical framework for decision making in uncertain environments.
MDPs are notably used in the fields of formal methods (e.g.,~\cite{BK08,gog23}) and reinforcement learning (e.g.,~\cite{SuttonB18}).
An MDP models the interaction of a controllable system with a stochastic environment: at each step of the process, the system selects an action and the state of the process is updated following a probability distribution that depends only on the current state and chosen action.
This interaction goes on forever and yields a \textit{play} (i.e., an execution) of the process.
The quality of a play is quantified by a \textit{payoff function}, which assigns a numeric value to all plays.
The goal is then to compute an \textit{optimal strategy} (policy), i.e., a decision-making procedure for the system that maximises the expected payoff.
Such a strategy constitutes a \textit{formal blueprint} for a controller of the modelled system~\cite{rECCS}.

\subparagraph*{Imperfect information.}

In an MDP, the system is \textit{perfectly informed} of the current state in each step.
\textit{Partially observable MDPs} (POMDPs, e.g.~\cite{DBLP:journals/ai/KaelblingLC98}) generalise MDPs by relaxing this assumption.
In a POMDP, the system perceives an observation at each step from the current state, which may be shared with other states, and strategies must make decisions based \textit{only on observations}.
POMDPs are harder to analyse than MDPs in general~\cite{DBLP:conf/mfcs/ChatterjeeDH10}.

\subparagraph*{Multi-objective (PO)MDPs.}
More complex specifications, such as simultaneous constraints on the response time and energy consumption of a system, are usually modelled using  \textit{multi-dimensional payoff functions}.
In this setting, some expected payoff vectors may be incomparable; an analysis of trade-offs between the different dimensions may be necessary.

The goal is generally to determine, given a vector, whether it is \textit{achievable}, i.e., whether there exists a strategy whose expected payoff is greater than or equal to the vector in all components (e.g.,~\cite{DBLP:journals/lmcs/EtessamiKVY08,DBLP:conf/lpar/ChatterjeeFW13,DBLP:journals/fmsd/RandourRS17}).
A related problem is to compute or approximate the \textit{Pareto curve} of the set of expected payoffs, i.e., the expected payoffs that are \textit{Pareto-optimal} (e.g.,~\cite{DBLP:conf/atva/ForejtKP12,DBLP:conf/ijcai/RoijersWO15,DBLP:journals/lmcs/ChatterjeeKK17,DBLP:conf/tacas/QuatmannK21,DBLP:conf/atal/ReymondBN22}).
Intuitively, an expected payoff is Pareto-optimal if there is no strategy whose expected payoff is as good on all dimensions and strictly better on one dimension.
Alternatively, one can look for strategies with expected payoffs that are optimal for the \textit{lexicographic} order over vectors (e.g.,~\cite{DBLP:conf/ijcai/WrayZ15,DBLP:conf/fm/HahnPSSTW21,CKMWW23,DBLP:conf/atva/BusattoGastonCMMPR23}).
For instance, in a two-dimensional setting, this equates to finding strategies that maximise the expected payoff on the second dimension among the strategies that maximise the expected payoff on the first dimension.

\subparagraph*{Strategy complexity.}
For controller design, simpler strategies, i.e., using limited memory and randomisation, are preferable, notably for their explainability.
In MDPs, for many classical (one-dimensional) specifications, there exist optimal strategies that use no memory and no randomisation.
For instance, such strategies suffice to maximise the probability of a parity objective~\cite{DBLP:conf/soda/ChatterjeeJH04} or the expectation of a mean~\cite{bierth1987expected} or discounted-sum payoff~\cite{Sha53}.
For one-dimensional specifications in POMDPs, although memory is usually necessary to play optimally, randomisation is not~\cite{DBLP:journals/iandc/Chatterjee0GH15}.

In contrast to the one-dimensional case, Pareto-optimal strategies in multi-objective MDPs and POMDPs may require both \textit{memory} and \textit{randomisation} (see, e.g.,~\cite{DBLP:journals/fmsd/RandourRS17,DBLP:conf/fsttcs/BrihayeGMR23}).
For this reason, a major problem is understanding what are the simplest relevant strategies, e.g., whether randomisation is necessary, or what is the minimum amount of memory required for the application at hand.
A related approach consists in directly searching for simple strategies that satisfy the desired constraints (e.g.,~\cite{DBLP:conf/tacas/DelgrangeKQR20}).

There is an extensive body of work aimed at understanding memory requirements for strategies in various game-based models.
These works provide sufficient conditions on payoffs for which memoryless or finite-memory strategies suffice, or even complete characterisations of such payoffs.
For instance, see~\cite{DBLP:conf/fsttcs/0001PR18,DBLP:journals/lmcs/BouyerLORV22,DBLP:journals/theoretics/BouyerRV23,DBLP:conf/icalp/CasaresO23} for games on deterministic graphs,~\cite{DBLP:journals/lmcs/BouyerORV23,DBLP:journals/ijgt/GimbertK23} for stochastic games and~\cite{DBLP:conf/stacs/Gimbert07} for MDPs.
Most of these works focus on pure strategies.

In this work, our focus is on \textit{randomisation requirements} in countable-state multi-objective POMDPs.
Instead of identifying bounds on memory, we study whether we can achieve vectors while using strategies with \textit{limited randomisation power}.
In general, it is not possible to jointly minimise memory and randomisation requirements: there can be a trade-off between memory and randomisation, even in the one-dimensional case~\cite{DBLP:conf/qest/ChatterjeeAH04,DBLP:conf/stacs/Horn09,DBLP:journals/acta/ChatterjeeRR14,DBLP:conf/concur/MonmegePR20}.

\subparagraph*{Randomised strategies.}
A strategy that does not use randomisation is called pure.
Formally, a \textit{pure strategy} is a function that assigns actions to histories (i.e., sequences of observations).
Randomised strategies can be defined in two ways.
On the one hand, a \textit{behavioural strategy} is a function that assigns distributions over actions to histories.
In other words, when following a behavioural strategy, the random choices of actions are made at each step of the process.
On the other hand, a \textit{mixed strategy} is a distribution over the set of pure strategies.
When playing according to a mixed strategy, a pure strategy is randomly chosen at the start of the process and is followed for the entire play.

In POMDPs, behavioural and mixed strategies are equivalent whenever actions are observable; any strategy from one class can be emulated by a strategy of the other.
This result is known as \textit{Kuhn's theorem}~\cite{Kuhn53,Aumann64}.
We note that distributions over behavioural strategies are no more general than mixed and behavioural strategies~\cite{DBLP:journals/jacm/BertrandGG17}.

When limiting ourselves to \textit{finite-memory strategies}, the natural analogues of mixed strategies are strictly weaker than the analogues of behavioural strategies~\cite{DBLP:journals/iandc/MainR24}.
In fact, the emulation of a behavioural strategy that flips a coin at each round requires a mixed strategy that randomises over \textit{uncountably} many pure strategies, infinitely many of which require infinite memory to account for all patterns.
It follows that mixed strategies with a finite support, i.e., that mix finitely many pure strategies, are a \textit{strict sub-class} of the set of randomised strategies (regardless of memory).
Arguably, this is one of the simplest ways of integrating randomness in strategies.
Considering such strategies instead of general behavioural (or mixed) strategies amounts to \textit{limiting the randomisation power of strategies}.

\subparagraph*{A simple example.}
For the sake of illustration, let us consider the MDP depicted in Figure~\ref{figure:introduction:mdp}.
This MDP models a situation where a person wants to go to work.
They must choose between riding their bicycle or taking the train.
However, the train may be delayed with high probability due to an ongoing strike.
The goal of the commuter is twofold: maximise the likelihood of reaching work within $40$ time units (to reach work on time) and do so as fast as possible on average.

\begin{figure*}
  \centering
  \begin{subfigure}[t]{0.48\textwidth}
    \centering
      \begin{tikzpicture}
      \node[state, align=center] (h) {\textsf{home}};
      \node[right = of h] (ref) {};
      \node[state,above = of ref] (r) {\textsf{ride}};
      \node[state, accepting, right = of ref] (w) {\textsf{work}};

      \path[->] (h) edge[bend left] node[midway,stochasticc] (mt) {} (r);
      \path[->] (r) edge[bend left] node[midway,stochasticc] (mr) {} (w);
      \path[->] (h) edge[bend right] node[midway,stochasticc] (mb) {} (w);

      \begin{scope}[on background layer]
        \path[-,white] (h) edge[bend right] node[xshift=-1mm,yshift=-1.5mm,text=black,align=center] {$\mathsf{bike}\mid 30$} (mb);
        \path[-,white] (mb) edge[bend right] node[text=black,align=center] {$1$} (w);
        \path[-,white] (h) edge[bend left] node[left, text=black,align=center,xshift=2mm,yshift=1mm] {$\mathsf{train}\mid5$} (mt);
        \path[-,white] (mt) edge[bend left] node[above, text=black,align=center,yshift=-2mm] {$\frac{1}{4}$} (r);
        \path[-,white] (r) edge[bend left] node[above, text=black,align=center,xshift=3mm,yshift=-1mm] {$\mathsf{train}\mid 5$} (mr);
        \path[-,white] (mr) edge[bend left] node[right, text=black,align=center] {$1$} (w);
      \end{scope}
      \path[->] (mt) edge[bend left] node[right, text=black,align=center] {$\frac{3}{4}$} (h);
      
      \path[->] (w) edge[loop below] node[midway,stochasticc,yshift=0.5mm] {} node[right,yshift=3mm,xshift=2mm,align=center] {$\mathsf{meeting}\mid1$}  node[left,yshift=3mm,xshift=-2mm] {$1$} (w);
    \end{tikzpicture}
    \caption{Transitions are labelled by actions and probabilities.
      Weights represent the time taken by each action.
      The target is doubly circled.}\label{figure:introduction:mdp}
  \end{subfigure}
  \hfill
  \begin{subfigure}[t]{0.48\textwidth}
    \centering
    \scalebox{0.8}{
      \begin{tikzpicture}[scale=0.8]
        \draw[-stealth] (-0.15,0) -- (5.5,0) node[yshift=-23] {$\proba(\spath{}{\textsf{work}}\leq{40})$}; 
        \draw[-stealth] (0,-0.15) -- (0,5.5) node[yshift=5] {$\expectancy(\spath{}{\textsf{work}})$};
        \coordinate (oo) at (0, 0);
        \coordinate (1x) at (1,0);
        \coordinate (2x) at (2,0);
        \coordinate (3x) at (3,0);
        \coordinate (4x) at (4,0);
        \coordinate (5x) at (5,0);
        
        \coordinate (1y) at (0,1);
        \coordinate (2y) at (0,2);
        \coordinate (3y) at (0,3);
        \coordinate (4y) at (0,4);
        \coordinate (5y) at (0,5);

        \node[xshift=-10] at (oo) {$25$};
        \node[stochastics] at (1y) (qy){};
        \node[xshift=-10] at (1y) {$26$};
        \node[stochastics] at (2y) (qy){};
        \node[xshift=-10] at (2y) {$27$};
        \node[stochastics] at (3y) (qy){};
        \node[xshift=-10] at (3y) {$28$};
        \node[stochastics] at (4y) (qy){};
        \node[xshift=-10] at (4y) {$29$};
        \node[stochastics] at (5y) (qy){};
        \node[xshift=-10] at (5y) {$30$};
        
        \node[yshift=-10] at (oo) {$0.5$};
        \node[stochastics] at (1x) (qx){};
        \node[yshift=-10] at (1x) {$0.6$};
        \node[stochastics] at (2x) (qx){};
        \node[yshift=-10] at (2x) {$0.7$};
        \node[stochastics] at (3x) (qx){};
        \node[yshift=-10] at (3x) {$0.8$};
        \node[stochastics] at (4x) (qx){};
        \node[yshift=-10] at (4x) {$0.9$};
        \node[stochastics] at (5x) (qx){};
        \node[yshift=-10] at (5x) {$1$};

\coordinate (v0) at (3.6651611328125, 0.0);
        \coordinate (v1) at (5.0, 2.8125); \coordinate (v2) at (5.0, 5); \coordinate (v3) at (0.78125, 2.109375);
        \coordinate (v4) at (3.6651611328125, 0.0);

\coordinate (p1) at (5.0, 3.75);
\coordinate (p4) at (1.8359375, 1.58203125);
        \coordinate (p5) at (2.626953125, 1.1865234375);
        \coordinate (p6) at (3.22021484375, 0.889892578125);
        \coordinate (p7) at (3.6651611328125, 0.66741943359375);
        \coordinate (p8) at (3.6651611328125, 0.5005645751953125);
        \coordinate (p9) at (3.6651611328125, 0.3754234313964844);

        \coordinate (mix) at (4.5728515625, 1.9124999999999979); 

        \fill[red!20] (v0) --(v1) -- (v2) -- (v3) -- (v0);
        \draw[red] (v0) --(v1) -- (v2) -- (v3) -- (v0);
        
        \begin{scope}[red]
          \node[stochasticc] at (v0) {};
          \node[xshift=18, yshift=6] at (v0) {$\stratMDP_{\texttt{train}}$};
          \node[stochasticc] at (v2) {};
          \node[yshift=10,xshift=5,red] at (v2) {$\stratMDP_{\texttt{bike}}$};
          
          \node[stochasticc] at (p1) {};
          \node[xshift=15] at (p1) {$\stratMDP_{1\texttt{t+b}}$};
          \node[stochasticc] at (v1) {};
          \node[xshift=15] at (v1) {$\stratMDP_{2\texttt{t+b}}$};
          \node[stochasticc] at (v3) {};
          \node[yshift=10,xshift=-5] at (v3) {$\stratMDP_{3\texttt{t+b}}$};
          \node[stochasticc] at (p4) {};
          \node[yshift=7] at (p4) {$\stratMDP_{4\texttt{t+b}}$};
          
          \node[stochasticc] at (p5) {};
          \node[yshift=7] at (p5) {$\stratMDP_{5\texttt{t+b}}$};
\end{scope}

        \begin{scope}[blue]
          \node[stochasticc] at (mix) {};
          \node[xshift=-13, yshift=7] at (mix) {$\stratMDP_{\texttt{mix}}$};
        \end{scope}
      \end{tikzpicture}
    }
    \caption{A set of expected payoffs.
      The probability of reaching work before $40$ time units is given on the first dimension.
      The other dimension represents the expected time to reach work.}\label{figure:introduction:payoffs}
  \end{subfigure}
  \caption{An MDP with two payoffs and its associated set of expected payoffs.}\label{figure:introduction}
\end{figure*}
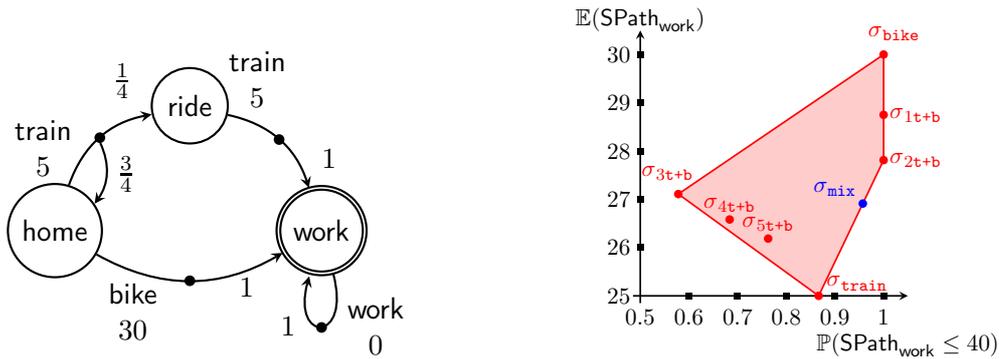
We illustrate the set of expected payoffs for this situation in Figure~\ref{figure:introduction:payoffs}.
We label expected payoffs with their corresponding strategies.
On the one hand, $\stratMDP_{\mathsf{train}}$ and $\stratMDP_{\mathsf{bike}}$ denote the pure strategies that always choose the action in the subscript.
On the other hand, we denote by $\stratMDP_{\indexPosition\mathsf{t+b}}$ the strategy that attempts to take the train $\indexPosition$ times before choosing the bicycle.

This simple example highlights the need for randomisation and memory in multi-objective MDPs and POMDPs.
When limited to pure strategies, for instance, it is not possible to reach work on time with probability exceeding $90\%$ while guaranteeing an expected commute time lower than $27$: any pure strategy whose payoff is not represented will reach work on time with a smaller probability than $\stratMDP_{\mathsf{train}}$, and thus will not be satisfactory.
However, as suggested by the point labelled by $\stratMDP_{\mathsf{mix}}$ on the illustration, these constraints can be satisfied by mixing $\stratMDP_{\mathsf{train}}$ with $\stratMDP_{2\mathsf{t+b}}$.
In fact, here, all expected payoffs can be obtained by finite-support mixed strategies.
We explain below how this property generalises.
We remark however that, in general, the set of expected payoffs need not be a convex polytope like in this example even in a finite MDP (see Section~\ref{section:running}).

\subparagraph*{Our contributions.}
We study the structure of sets of expected payoffs in countable-state multi-objective POMDPs, focusing on the relationship between what can be obtained with pure and with randomised strategies.
Our goal is to provide \textit{general results}, i.e., that apply to a broad class of payoffs.
We only consider \textit{universally unambiguously integrable} payoffs (Section~\ref{section:payoffs}), i.e., payoffs that have a well-defined (possibly infinite) expectation no matter the strategy, a natural requirement when aiming to reason about expectations.
We obtain finer results for \textit{universally integrable} payoffs, i.e., payoffs whose expectation is \textit{finite} under all strategies.
We show that our results for universally integrable payoffs cannot be extended to universally unambiguously integrable payoffs via \textit{finite MDPs}.

First, we prove that for universally integrable (multi-dimensional) payoffs, for all strategies, there exists a \textit{pure strategy} with a greater or equal expected payoff in the lexicographic sense (Theorem~\ref{thm:lexico:pure}).

This first result serves as a building block to one of our main results: any expected payoff vector of a universally integrable (multi-dimensional) payoff is a \textit{convex combination} of expected payoffs of pure strategies (Theorem~\ref{thm:mixing:exact}).
From a strategic perspective, this means that in multi-objective POMDPs, \textit{finite-support mixed strategies} suffice to exactly obtain any (Pareto-optimal) expected payoff vector.
As a corollary, we obtain that any extreme point of a set of expected payoffs of a universally integrable payoff can be obtained by a pure strategy (Corollary~\ref{cor:mixing:exact:extreme}).
These results generalise known properties for classical combinations of objectives for which the set of expected payoffs is a convex polytope (e.g.,~combinations of $\omega$-regular specifications~\cite{DBLP:journals/lmcs/EtessamiKVY08} and universally integrable total-reward payoffs~\cite{DBLP:conf/atva/ForejtKP12}).
While none of the previous properties generalise to the whole class of universally unambiguously integrable payoffs (Examples~\ref{ex:lexico:general} and~\ref{ex:mixing:approx}), we prove that, for such payoffs, convex combinations of expected payoffs of pure strategies can be used to \textit{approximate} any expected payoff (Theorem~\ref{thm:mixing:approx}).

In both cases, we can bound the number of strategies that are mixed depending on the number of dimensions $\numObj$.
Depending on the setting, we can match or approximate expected payoffs by mixing no more than $\numObj+1$ strategies (Theorem~\ref{thm:mixing:support:all}).

We also provide a sufficient condition to guarantee that the set of expected payoffs is closed.
We show that for continuous universally \textit{square integrable} payoffs (which generalise \textit{real-valued continuous payoffs}), the set of expected payoffs is closed (Theorem~\ref{thm:continuous:closed}).

To prove our results, we borrow techniques from several fields.
We mainly rely on topology, properties of convex sets (separating and supporting hyperplanes) and the theory of the Lebesgue integral.
We assume familiarity with basic properties of the Lebesgue integral, and otherwise endeavour to keep the text self-contained by recalling most required notions, notably through an extensive appendix.

\subparagraph*{Applicability.} The class of \textit{universally integrable} payoffs is large: it contains most classical payoffs. Indeed, all \textit{bounded} payoffs are de facto in this class. For example, all indicators of objectives are in it. This means that all settings where one considers the \textit{probabilities} of sets of plays (i.e., either an inherently qualitative objective or one arising from fixing a threshold for a quantitative payoff) are in it, for \textit{any} definition of such sets of plays. Classical examples include $\omega$-regular objectives~\cite{DBLP:journals/lmcs/EtessamiKVY08}, window objectives~\cite{DBLP:journals/lmcs/BrihayeDOR20} or percentiles queries~\cite{DBLP:journals/fmsd/RandourRS17}. Bounded payoffs also encompass discounted-sum~\cite{Sha53} and mean-payoff~\cite{lmcs:1156,DBLP:journals/lmcs/ChatterjeeKK17} functions. Heterogeneous combinations, such as, e.g., combinations of energy and mean-payoff~\cite{DBLP:conf/concur/BruyereHRR19} also fit under this umbrella.

Payoffs that are completely out of the scope of our results---i.e., not even universally unambiguously integrable---include total payoff~\cite{gog23}, average-energy~\cite{DBLP:journals/acta/BouyerMRLL18} and shortest path~\cite{DBLP:journals/fmsd/RandourRS17}, all with both positive and negative weights: the expectation is ill-defined when plays of positive and negative infinite payoffs have non-zero measure. Let us focus on shortest-path objectives. When only positive weights are used, a classical restriction~\cite{DBLP:journals/fmsd/RandourRS17}, the associated payoff is universally unambiguously integrable (but not universally integrable as plays of infinite payoffs may exist). Finally, when the state and action spaces are finite and targets are almost-surely visited (as in the above example), the payoff is continuous and universally square integrable (Appendix~\ref{appendix:square:shortest path}).

For many payoffs that are at least universally unambiguously integrable, extreme points of the set of expected payoffs can be obtained via pure finite-memory strategies (e.g., $\omega$-regular objectives~\cite{DBLP:journals/lmcs/EtessamiKVY08} or mean-payoff~\cite{lmcs:1156}). It follows that for these objectives, mixing over \textit{finitely many pure finite-memory strategies} is sufficient to fulfil any achievability requirement.
In other words, one of the least expressive models of randomised finite-memory strategies (see the classification of~\cite{DBLP:journals/iandc/MainR24}) suffices.
Furthermore, the blow-up in memory from this mixing argument is small: it suffices to mix, at most, one more strategy than the number of payoffs.

Our results also highlight the role of randomisation in strategies for multi-objective POMDPs: it is useful \textit{only} to balance the payoffs on different dimensions in many cases.

\subparagraph*{Related work.}
We provide a few references, in addition to the main references cited previously.
In our proof of Theorem~\ref{thm:mixing:exact}, we invoke the hyperplane separation theorem.
This theorem also plays a role in approximation schemes of the set of achievable vectors: see, e.g.,~\cite{DBLP:conf/atva/ForejtKP12,DBLP:conf/tacas/QuatmannK21}; a unifying approach is presented in~\cite{DBLP:phd/dnb/Quatmann23}.

Specifications with multiple objectives have also been considered in the context of two-player (non-stochastic) games on graphs (e.g.,~\cite{FH13,DBLP:journals/acta/ChatterjeeRR14}) and two-player stochastic games (e.g.,~\cite{DBLP:conf/mfcs/ChenFKSW13,DBLP:conf/lics/AshokCKWW20}).
Closely related to multi-objective specifications are approaches that provide guarantees simultaneously in the worst case and the expected case~\cite{DBLP:journals/iandc/BruyereFRR17,DBLP:conf/aaai/Chatterjee0PRZ17} or expectation optimisation among strategies that satisfy some probabilistic constraint~\cite{DBLP:conf/ijcai/ChatterjeeE0R18}.

Regarding randomisation in strategies,~\cite{DBLP:journals/iandc/Chatterjee0GH15} studies when randomisation is helpful in strategies or in transitions of games.
In particular, the authors show that if there exists an optimal strategy to maximise the probability of an event in an POMDP, then there exists a pure optimal strategy.
This property is generalised by our result on lexicographic POMDPs.

\subparagraph*{Outline.}
Section~\ref{section:prelim} introduces notation and preliminary notions.
We introduce payoff functions in Section~\ref{section:payoffs} and establish relationships between expected payoffs of general strategies and expected payoffs of pure strategies.
In Section~\ref{section:running}, we provide an example highlighting the potential complexity of sets of expected payoffs in multi-objective POMDPs.
We study randomisation requirements for the lexicographic optimisation of several objectives in POMDPs in Section~\ref{section:lexico}.
Section~\ref{section:achievable} presents our results regarding expected payoff sets and achievable sets.
Finally, we study continuous payoff functions in Section~\ref{section:continuous}.

Appendices~\ref{appendix:prelim},~\ref{appendix:plays:topology} and~\ref{appendix:continuous payoffs} complement Sections~\ref{section:prelim} and~\ref{section:payoffs}.
Appendix~\ref{appendix:square:shortest path} shows that the results of Section~\ref{section:continuous} apply to universally integrable continuous shortest-path payoffs.

\section{Preliminaries}\label{section:prelim}

This section defines the objects occurring in the main part of this paper.
Additional definitions and results have been deferred to Appendix~\ref{appendix:prelim}.

\subparagraph*{Set-theoretic notation.}
We let $\IN$ and $\IR$ respectively denote the sets of natural and real numbers.
We let $\IN_0=\IN\setminus\{0\}$.
We denote the extended real line by $\IRbar = \IR\cup\{-\infty,+\infty\}$.

Let $A'\subseteq A$ and $B'\subseteq B$ be sets and $f\colon A\to B$ be a function.
We let $\indic{A'}\colon A\to \{0, 1\}$ denote the indicator function of $A'$.
We let $\image{f}$ denote the image of $f$.
We let $f^{-1}(B') = \{a\in A\mid f(a)\in B'\}$ denote the inverse image of $B'$ by $f$.
For any $b\in B$, we write $f^{-1}(b)$ instead of $f^{-1}(\{b\})$ to lighten notation.
We let $|A|$ denote the cardinal of $A$.

\subparagraph*{Probability.}
Let $A$ be a countable set.
We write $\dist{A}$ for the set of distributions over $A$, i.e., the set of functions $\distMeasure\colon A\to\ccInt{0}{1}$ such that $\sum_{a\in A}\distMeasure(a) = 1$.
The support of a distribution $\distMeasure\in\dist{A}$ is $\supp{\distMeasure} = \{a\in A\mid \distMeasure(a)> 0\}$.

Given a set $B$ and a sigma-algebra $\sigmaAlgebra$ over $B$, we denote by $\dist{B, \sigmaAlgebra}$ the set of probability distributions over the measurable space $(B, \sigmaAlgebra)$.
Let $\mu\in\dist{B, \sigmaAlgebra}$ and $\payoff\colon B\to\IRbar$ be a measurable function.
We say that $\payoff$ is $\mu$-integrable if it is integrable with respect to $\mu$, i.e., if $\int_{B}|\payoff|\ud\mu\in\IR$.
We extend the Lebesgue integral to non-positive functions in the following way: if $\payoff$ is non-positive, we let $\int_{B}\payoff\ud\mu=-\int_{B}-\payoff\ud\mu$.
If $\payoff$ is non-negative, non-positive or $\mu$-integrable, we say that $\int_{B}\payoff\ud\mu$ is the $\mu$-integral of $\payoff$.

\subparagraph*{Vector spaces.}
Throughout this text, vectors are written in boldface to distinguish them from scalars.
Let $\numObj\in\IN_0$.
We let $\zeroVectDim{\numObj}$ and $\oneVectDim{\numObj}\in\IR^\numObj$ be the vectors of $\IR^\numObj$ where all components are zero and one respectively.
We omit the dimension subscript when the dimension is clear from the context.

Given $\vect = (\vectComp_\indexPayoff)_{1\leq\indexPayoff\leq\numObj}, \vectB = (\vectCompB_\indexPayoff)_{1\leq\indexPayoff\leq\numObj}\in\IR^\numObj$, we let $\scalarProd{\vect}{\vectB} = \sum_{\indexPayoff=1}^\numObj \vectComp_\indexPayoff\vectCompB_\indexPayoff$ denote the scalar product of $\vect$ and $\vectB$.
We let $\|\!\cdot\!\|_2$ denote the Euclidean norm  on $\IR^\numObj$, defined by $\|\vect\|_2 = \sqrt{\scalarProd{\vect}{\vect}}$ for all $\vect\in\IR^\numObj$.

The affine span of a set $D\subseteq\IR^\numObj$, which we denote by $\spanAff{D}$, is the smallest affine set (i.e., translation of a vector subspace of $\IR^\numObj$) in which $D$ is included.

Given a linear map $\linMap\colon\IR^\numObj\to\IR^{\numObj'}$ (where $\numObj'\in\IN_0$), we let $\ker{\linMap}$ denote the kernel of $\linMap$, i.e., the set $\{\vect\in\IR^\numObj\mid\linMap(\vect) = \zeroVectDim{\numObj'}\}$.
A \textit{linear form} is a linear map whose co-domain is $\IR$.

\subparagraph*{Convexity.}
A \textit{convex combination} of vectors $\vect_1$, \ldots, $\vect_\indexSequence\in\IR^\numObj$ is a linear combination $\sum_{\indexSequenceB=1}^\indexSequence\scalar_\indexSequenceB\cdot\vect_\indexSequenceB$ such that $\scalar_1$, \ldots, $\scalar_\indexSequence\in\ccInt{0}{1}$ and $\sum_{\indexSequenceB=1}^\indexSequence\scalar_\indexSequenceB = 1$.
We refer to a sequence of coefficients $\scalar_1$, \ldots, $\scalar_\indexSequence\in\ccInt{0}{1}$ such that $\sum_{\indexSequenceB=1}^\indexSequence\scalar_\indexSequenceB = 1$ as \textit{convex combination coefficients}.
Given $\vect$, $\vectB\in\IR^\numObj$, we let $\ccInt{\vect}{\vectB} = \{\scalar\cdot\vect + (1-\scalar)\vectB\mid\scalar\in\ccInt{0}{1}\}$ denote the (closed) segment from $\vect$ to $\vectB$; it is the set of convex combinations of $\vect$ and $\vectB$.
Open and half-open segments are defined analogously.

Let $D\subseteq\IR^\numObj$.
The \textit{convex hull} of $D$, denoted by $\convex{D}$, is the set of all convex combinations of elements of $D$.
The set $D$ is \textit{convex} if for all $\vect$, $\vectB\in D$, $\ccInt{\vect}{\vectB}\subseteq D$, or, equivalently, if $D = \convex{D}$.
If $D$ is convex, we say that $\payoffVect\in D$ is an \textit{extreme point} of $D$ if $\payoffVect\notin\convex{D\setminus\{\payoffVect\}}$, i.e., if $\payoffVect$ is not a convex combination of elements of $D$ other than $\payoffVect$ and we let $\corners{D}$ denote the set of extreme points of $D$.
Extreme points generalise the notion of vertices of polytopes.

The definition of a convex combination does not bound the number of involved vectors.
However, in $\IR^\numObj$, it is sufficient to only consider convex combinations involving no more than $\numObj+1$ vectors.
This is formalised by the following theorem. \begin{theorem}[Carath\'{e}odory's theorem for convex hulls (e.g.,~{\cite[Thm.~17.1]{DBLP:books/degruyter/Rockafellar70}})]\label{theorem:caratheodory:convex}
  Let $D\subseteq\IR^\numObj$ and $\payoffVect\in\convex{D}$.
  There exists $D'\subseteq D$ such that $|D'|\leq\numObj+1$ and $\payoffVect\in\convex{D'}$.
\end{theorem}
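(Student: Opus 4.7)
The plan is a standard reduction argument by minimality: I would take a representation $\payoffVect = \sum_{\indexSequenceB=1}^{\indexSequence}\scalar_\indexSequenceB\vect_\indexSequenceB$ of $\payoffVect$ as a convex combination of vectors $\vect_1, \ldots, \vect_\indexSequence\in D$, chosen such that $\indexSequence$ is minimal (hence every $\scalar_\indexSequenceB > 0$), and show that $\indexSequence\leq\numObj+1$. The conclusion then follows by setting $D' = \{\vect_1, \ldots, \vect_\indexSequence\}$.

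Assume for contradiction that $\indexSequence\geq\numObj+2$. Then the $\indexSequence-1$ vectors $\vect_1-\vect_\indexSequence, \ldots, \vect_{\indexSequence-1}-\vect_\indexSequence$ lie in $\IR^\numObj$ and are strictly more numerous than $\numObj$, so they are linearly dependent. This yields scalars $\scalarB_1, \ldots, \scalarB_{\indexSequence-1}\in\IR$, not all zero, with $\sum_{\indexSequenceB=1}^{\indexSequence-1}\scalarB_\indexSequenceB(\vect_\indexSequenceB-\vect_\indexSequence) = \zeroVect$. Setting $\scalarB_\indexSequence = -\sum_{\indexSequenceB=1}^{\indexSequence-1}\scalarB_\indexSequenceB$ produces a tuple $(\scalarB_\indexSequenceB)_{1\leq\indexSequenceB\leq\indexSequence}$, not identically zero, such that $\sum_{\indexSequenceB=1}^{\indexSequence}\scalarB_\indexSequenceB = 0$ and $\sum_{\indexSequenceB=1}^{\indexSequence}\scalarB_\indexSequenceB\vect_\indexSequenceB = \zeroVect$. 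Since the $\scalarB_\indexSequenceB$ sum to zero but are not all zero, at least one is strictly positive.

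I would then perturb the convex combination along this linear dependence: for any $t\in\IR$, the identity $\payoffVect = \sum_{\indexSequenceB=1}^\indexSequence(\scalar_\indexSequenceB - t\scalarB_\indexSequenceB)\vect_\indexSequenceB$ holds, and the new coefficients still sum to $1$. Choosing $t = \min\{\scalar_\indexSequenceB/\scalarB_\indexSequenceB \mid \scalarB_\indexSequenceB > 0\}$, one checks that $t\geq 0$ (since all $\scalar_\indexSequenceB > 0$), that $\scalar_\indexSequenceB - t\scalarB_\indexSequenceB\geq\scalar_\indexSequenceB\geq 0$ whenever $\scalarB_\indexSequenceB\leq 0$, and that $\scalar_\indexSequenceB - t\scalarB_\indexSequenceB\geq 0$ whenever $\scalarB_\indexSequenceB > 0$, with equality at any index achieving the minimum. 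This exhibits $\payoffVect$ as a convex combination of at most $\indexSequence - 1$ elements of $D$, contradicting the minimality of $\indexSequence$.

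The main obstacle is purely bookkeeping: ensuring that the perturbed coefficients remain non-negative on both sides (the signs of $\scalarB_\indexSequenceB$ split the indices into two cases handled separately) and that the minimum defining $t$ is well-defined, which relies on the observation that the coefficients $\scalarB_\indexSequenceB$ summing to zero forces a strictly positive one. Beyond that, the substantive content is the dimension count $\indexSequence - 1 > \numObj$ driving the linear dependence, and the argument is essentially linear-algebraic.
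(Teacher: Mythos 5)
Your proof is correct and complete: it is the classical Carath\'{e}odory argument (take a minimal representation, extract an affine dependence from more than $\numObj+1$ points, and perturb the coefficients along it until one vanishes), and every step checks out, including the existence of a strictly positive $\scalarB_\indexSequenceB$ and the non-negativity of the perturbed coefficients on both sign cases. The paper does not prove this statement at all --- it cites it to Rockafellar --- so there is no in-paper argument to compare against; your write-up is the standard textbook proof and fills that role without issue.
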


Convexity does not generalise to $\IRbar^\numObj$.
Most notably, convex combinations of vectors of $\IRbar^\numObj$ (defined in the same way as above) may be ill-defined.
Although we consider convex combinations of elements of $\IRbar^\numObj$ in the sequel, these are always guaranteed to be well-defined, for reasons explained below.

\subparagraph*{Topology.}
We only provide notation in this section.
For convenience, we recall some definitions in Appendix~\ref{appendix:prelim:topology}, including the definitions of the product topology, continuity, compactness and the usual topology of $\IRbar$.

Let $(\topSpace, \topology)$ be a Hausdorff topological space.
For all $D\subseteq\topSpace$, we let $\closure{D}$ and $\interior{D}$ denote the closure and interior of $D$.
The boundary of $D\subseteq\topSpace$ is the set $\border{D}= \closure{D}\setminus\interior{D}$.

In addition to the three notions above, we introduce an additional one for subsets of $\IR^\numObj$ (with its usual topology).
The \textit{relative interior} of a set $D\subseteq\IR^\numObj$, denoted by $\relInt{D}$, is the interior of $D$ as a subset of $\spanAff{D}$ (with the induced topology).
The relative interior of a subset of $\IR^\numObj$ includes its interior and these sets may differ.
For instance, the segment $\ccInt{\zeroVectDim{2}}{\oneVectDim{2}}\subseteq\IR^2$ has empty interior.
However, we have $\relInt{\ccInt{\zeroVectDim{2}}{\oneVectDim{2}}} = \ooInt{\zeroVectDim{2}}{\oneVectDim{2}}$ (because $\spanAff{\ccInt{\zeroVectDim{2}}{\oneVectDim{2}}}$ is the line of equation $x = y$).

\subparagraph*{Hyperplane separation.}
A \textit{hyperplane} $\hplane$ of $\IR^\numObj$ is a set of the form $\{\vect\in\IR^\numObj\mid \linForm(\vect) = \scalar\}$ for some non-zero linear form $\linForm$ and $\scalar\in\IR$.
Let $D_1$ and $D_2\subseteq\IR^\numObj$.
The sets $D_1$ and $D_2$ are \textit{strongly separated} by a hyperplane if there exists a non-zero linear form $\linForm$ such that $\inf_{\payoffVect\in D_1}\linForm(\payoffVect)>\sup_{\payoffVectB\in D_2}\linForm(\payoffVectB)$.
A convex set $D\subseteq\IR^\numObj$ is \textit{supported} by a hyperplane at $\payoffVect\in D$ if there exists a non-zero linear form $\linForm$ such that, for all $\payoffVectB\in D$, $\linForm(\payoffVectB)\leq\linForm(\payoffVect)$ (a \textit{supporting hyperplane} in this case is $\hplane = \{\vect\in\IR^\numObj\mid\linForm(\vect) = \linForm(\payoffVect)\}$).
We provide an illustration of the notions of separating and supporting hyperplanes in Figure~\ref{figure:prelim:separating and supporting}.

\begin{figure}
  \begin{subfigure}[t]{0.48\textwidth}
    \centering
    \begin{tikzpicture}[scale = 0.85]
      \draw[-stealth] (-1.5,0) -- (3, 0); \draw[-stealth] (0,-1.5) -- (0,3); 

      \coordinate (og) at (0,0);
      \coordinate (x) at (0.5,0);
      \coordinate (y) at (0,0.5);
      \coordinate (py) at (0, 1);
      \coordinate (px) at (1, 0);
      \coordinate (mx) at (-1, 0);
      \coordinate (my) at (0, -1);
      \coordinate (ox) at (1,0);
      \coordinate (m) at (3, 2);
      \coordinate (c) at (2, 2);

      \node[right, orange] at (m) {$D_1$};
      \node[above left, red] at (mx) {$D_2$};

      \node[stochasticc, orange] at (c) {};
      \node[above, orange] at (c) {$(4, 4)$};
      \draw[stealth-stealth, orange] (c) -- (m) node[midway, below] {$2$};

      \node[xshift=-5] at (y) {$1$};
      \node[yshift=-8] at (x) {$1$};
      \node[stochastics] at (x) (qx){};
      \node[stochastics] at (y) (qy){};
      
      \coordinate (h1) at (3, -1.25); \coordinate (h2) at (-1.25, 3);
      \coordinate (a1) at (3, -0.75); \coordinate (a2) at (-0.75, 3); 
      \coordinate (b1) at (3, -1.75); \coordinate (b2) at (-1.75, 3);
      
      \draw[-, thick, dashed, color=blue]  (h1) -- (h2);

      \begin{scope}[on background layer]
        \filldraw[red, fill=red!20,thick] (px) -- (py) -- (mx) -- (my) -- cycle;
        \filldraw[white, fill=blue!20] (a1) -- (a2) -- (b2) -- (b1) -- cycle;
        \filldraw[orange, fill=orange!20,thick] (m) arc[start angle=0, end angle=360, radius = 1cm] -- cycle;
      \end{scope}
      
    \end{tikzpicture}
    \caption{The dashed blue line ($x + y = \frac{7}{2}$) strongly separates $D_1$ and $D_2$: they lie on a different side of the line and there is a strip around the line that is disjoint from $D_1$ and $D_2$.}\label{figure:prelim:separating hyperplane}
  \end{subfigure}\hfill \begin{subfigure}[t]{0.48\textwidth}
    \centering
    \scalebox{1}{
      \begin{tikzpicture}
        \draw[-stealth] (-1.5,0) -- (1.5,0); \draw[-stealth] (0,-1.5) -- (0,1.5); \coordinate (og) at (0, 0);
        \coordinate (x) at (1, 0);
        \coordinate (y) at (0, 1);
        \coordinate (m) at (45:1);

        \node[xshift=-5] at (y) {$1$};
        \node[yshift=-10] at (x) {$1$};
        \node[stochastics] at (x) (qx){};
        \node[stochastics] at (y) (qy){};

        \begin{scope}[on background layer]
          \filldraw[red, fill=red!20,thick] (x) arc[start angle=0, end angle=360, x radius=1cm, y radius = 1cm] -- cycle;
        \end{scope}
        
        \draw[dashed, blue] (0, {sqrt(2)}) -- ({sqrt(2)}, 0);
        \node[stochasticc, blue] at (m) {};
        \node[above right, blue] at (m) {{$\payoffVect$}};
      \end{tikzpicture}
    }
    \caption{The dashed blue line ($x + y = \sqrt{2}$)  supports the unit ball for the Euclidean norm in $\IR^2$ at $\payoffVect = \frac{\sqrt{2}}{2}\oneVect$.}\label{figure:prelim:supporting hyperplane}
  \end{subfigure}
  \caption{Illustration of separating and supporting hyperplanes.}\label{figure:prelim:separating and supporting}
\end{figure}

We recall a variant of the hyperplane separation theorem and the supporting hyperplane theorem.
We first outline a sufficient condition such that two disjoint convex sets can be strongly separated.
\begin{theorem}[Hyperplane separation theorem~{\cite[Cor.~11.4.2]{DBLP:books/degruyter/Rockafellar70}}]\label{thm:hyperplane:separation}
  Let $D_1$ and $D_2$ be two convex subsets of $\IR^\numObj$.
  If $\closure{D_1}\cap\closure{D_2}=\emptyset$ and $D_1$ or $D_2$ is bounded, then there exists a hyperplane strongly separating $D_1$ and $D_2$.
\end{theorem}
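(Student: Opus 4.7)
The plan is to leverage boundedness to reduce the problem to a minimum-distance argument between a compact convex set and a closed convex set. Without loss of generality, assume $D_1$ is the bounded one. First I would replace each set by its closure: the sets $\closure{D_1}$ and $\closure{D_2}$ are still convex and, by hypothesis, disjoint. Since $D_1$ is bounded in $\IR^{\numObj}$, its closure $\closure{D_1}$ is closed and bounded, hence compact. A hyperplane strongly separating $\closure{D_1}$ and $\closure{D_2}$ would clearly also strongly separate $D_1$ and $D_2$, so it suffices to prove the statement in the closed setting.

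Next I would show that the distance $\delta = \inf\{\|\vect - \vectB\|_2 \mid \vect \in \closure{D_1},\ \vectB \in \closure{D_2}\}$ is positive and attained. Positivity follows from the standard fact that a compact set and a disjoint closed set in $\IR^{\numObj}$ are at positive distance: fixing any $\vect_0 \in \closure{D_1}$ and $\vectB_0 \in \closure{D_2}$, the minimisation restricts to $\vect \in \closure{D_1}$ and $\vectB \in \closure{D_2} \cap \overline{\ball{\vect_0}{\|\vect_0 - \vectB_0\|_2}}$, both compact, so continuity of $\|\!\cdot\!\|_2$ gives a minimiser $(\vect^\ast, \vectB^\ast)$ with $\|\vect^\ast - \vectB^\ast\|_2 = \delta > 0$.

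I would then construct the separating linear form directly from the minimising pair: set $\linForm(\vect) = \scalarProd{\vect}{\vect^\ast - \vectB^\ast}$. To prove that $\linForm(\vect) \geq \linForm(\vect^\ast)$ for every $\vect \in \closure{D_1}$, I would use convexity: for $t \in \ccInt{0}{1}$, the point $\vect^\ast + t(\vect - \vect^\ast)$ lies in $\closure{D_1}$, so expanding $\|\vect^\ast + t(\vect - \vect^\ast) - \vectB^\ast\|_2^2 \geq \delta^2$ and letting $t \to 0^+$ yields $\scalarProd{\vect^\ast - \vectB^\ast}{\vect - \vect^\ast} \geq 0$. The same argument applied to $\closure{D_2}$ gives $\linForm(\vectB) \leq \linForm(\vectB^\ast)$ for all $\vectB \in \closure{D_2}$. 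Combining these with $\linForm(\vect^\ast) - \linForm(\vectB^\ast) = \|\vect^\ast - \vectB^\ast\|_2^2 = \delta^2 > 0$ gives the strong separation inequality.

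The main technical obstacle is the attainment of the infimum, because only one of the two closed sets is compact; that is why the reduction to a bounded region around $\vect_0$ is essential. Once attainment is secured, the rest is a clean first-order optimality computation exploiting convexity. I would expect this proof to be short enough to write out fully, but in the paper the cleanest route is simply to cite the corresponding result from Rockafellar as the authors do.
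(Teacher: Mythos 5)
Your proposal is correct, and there is nothing in the paper to compare it against: the authors state this as a known result and simply cite Rockafellar (Cor.~11.4.2) without proof, so any comparison is with the standard textbook argument, which is exactly what you reproduce (reduce to closures, exploit compactness of one closure to attain the minimum distance, then read off the separating functional from the first-order optimality condition). One small imprecision worth fixing if you write this out: in the attainment step, restricting $\vectB$ to $\closure{D_2}\cap\overline{\ball{\vect_0}{\|\vect_0-\vectB_0\|_2}}$ is slightly too tight, because the minimising pair may involve a point $\vect\in\closure{D_1}$ far from $\vect_0$; the correct observation is that any pair with $\|\vect-\vectB\|_2\leq\|\vect_0-\vectB_0\|_2$ forces $\vectB$ into the closed ball around $\vect_0$ of radius $\|\vect_0-\vectB_0\|_2$ \emph{plus the diameter of} $\closure{D_1}$ (finite since $D_1$ is bounded), which is still compact, so the argument goes through unchanged. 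The rest — the expansion of $\|\vect^\ast+t(\vect-\vect^\ast)-\vectB^\ast\|_2^2\geq\delta^2$, letting $t\to 0^+$, and the gap $\linForm(\vect^\ast)-\linForm(\vectB^\ast)=\delta^2>0$ — is exactly right and yields the strong separation $\inf_{D_1}\linForm>\sup_{D_2}\linForm$ as defined in the paper.
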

Figure~\ref{figure:prelim:separating hyperplane} illustrates a setup in which we can apply the theorem.
We often apply Theorem~\ref{thm:hyperplane:separation} in the case where $D_1$ is a singleton set $\{\payoffVect\}$ (it is bounded) and $\payoffVect\notin\closure{D_2}$.
The next theorem provides a sufficient condition for the existence of a supporting hyperplane at a given point of a convex set.
\begin{theorem}[Supporting hyperplane theorem~{\cite[Thm.~11.6]{DBLP:books/degruyter/Rockafellar70}}]\label{thm:hyperplane:supporting}
  Let $D\subseteq\IR^\numObj$ be convex and $\payoffVect\in D$.
  If $\payoffVect\notin\relInt{D}$, then there exists a hyperplane $\hplane$ supporting $D$ at $\payoffVect$ such that $D\nsubseteq\hplane$.
\end{theorem}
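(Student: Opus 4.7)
The plan is to derive the supporting hyperplane theorem from the strong separation theorem (Theorem~\ref{thm:hyperplane:separation}) via a compactness-and-limit argument carried out inside $\spanAff{D}$, where the hypothesis $\payoffVect \notin \relInt{D}$ can be translated into an honest topological boundary condition.

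First, I would translate coordinates so that $\payoffVect = \zeroVect$, and let $\vectSpace = \spanAff{D}$, which is now a linear subspace of $\IR^\numObj$. Since $\payoffVect \notin \relInt{D}$, the point $\zeroVect$ lies on the boundary of $D$ viewed inside $\vectSpace$ with its induced topology. Using the standard fact that $D$ and $\closure{D}$ share the same relative interior (hence the same relative boundary), I can choose a sequence $(\vect_\indexSequence)_{\indexSequence \in \IN}$ of points of $\vectSpace \setminus \closure{D}$ that converges to $\zeroVect$.

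Next, for each $\indexSequence$, I apply Theorem~\ref{thm:hyperplane:separation} inside $\vectSpace$ to the disjoint convex sets $\{\vect_\indexSequence\}$ (bounded) and $\closure{D}\cap\vectSpace$ (closed). This yields a nonzero linear form $\linForm_\indexSequence$ on $\vectSpace$ with $\linForm_\indexSequence(\vect_\indexSequence) > \sup_{\payoffVectB \in \closure{D}} \linForm_\indexSequence(\payoffVectB)$. After rescaling so that $\|\linForm_\indexSequence\|_2 = 1$, the sequence lives on the unit sphere of the dual of $\vectSpace$, which is compact since $\vectSpace$ is finite-dimensional. I extract a convergent subsequence with limit a unit linear form $\linForm$ on $\vectSpace$, and extend $\linForm$ to a nonzero linear form on $\IR^\numObj$ along a basis completion.

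Finally, for any fixed $\payoffVectB \in D$, the inequality $\linForm_\indexSequence(\vect_\indexSequence) > \linForm_\indexSequence(\payoffVectB)$ passes to the limit as $\linForm(\zeroVect) \geq \linForm(\payoffVectB)$, i.e., $\linForm(\payoffVectB) \leq 0 = \linForm(\payoffVect)$. Hence $\hplane = \{\vect \in \IR^\numObj \mid \linForm(\vect) = 0\}$ supports $D$ at $\payoffVect$. Moreover, since $\linForm$ is nonzero on $\vectSpace = \spanAff{D}$, it cannot vanish on all of $D$ (otherwise it would vanish on $\spanAff{D}$), so $D \not\subseteq \hplane$. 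The delicate step I anticipate is the limit on the supremum when $\closure{D}$ is unbounded; I sidestep this by only taking limits of the pointwise inequalities $\linForm_\indexSequence(\vect_\indexSequence) > \linForm_\indexSequence(\payoffVectB)$ at each individual $\payoffVectB$, and reconstituting the supporting-hyperplane inequality a posteriori.
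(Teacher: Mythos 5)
Your proposal is correct. Note first that the paper does not prove this statement at all: it imports it verbatim from Rockafellar (Thm.~11.6), just as it imports the strong separation theorem (Theorem~\ref{thm:hyperplane:separation}) that you use as your engine. So there is no in-paper argument to match; what you have written is a self-contained derivation of the cited result from the other cited result, and it is sound. The three points that could sink such an argument are all handled: (i) you justify the existence of a sequence in $\spanAff{D}\setminus\closure{D}$ converging to $\payoffVect$ via $\relInt{D}=\relInt{\closure{D}}$, which is exactly what turns the relative-interior hypothesis into a usable boundary condition; (ii) you work inside $\vectSpace=\spanAff{D}$ so that the limit form is a \emph{unit} form on $\vectSpace$, which is what guarantees $D\nsubseteq\hplane$ (a naive application of separation in the ambient $\IR^\numObj$ could, after passing to the limit, produce a form vanishing on all of $\spanAff{D}$); and (iii) you take limits of the pointwise inequalities $\linForm_\indexSequence(\vect_\indexSequence)>\linForm_\indexSequence(\payoffVectB)$ rather than of the suprema, which avoids the genuine trap when $\closure{D}$ is unbounded. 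Two tiny remarks: the degenerate case $\spanAff{D}=\{\payoffVect\}$ is vacuous since then $\payoffVect\in\relInt{D}$, so nothing breaks; and for the final non-containment you implicitly use that, after translation, $\spanAff{D}$ equals the linear span of $D$, so a linear form vanishing on $D$ vanishes on $\vectSpace$ --- worth one sentence if written out. For comparison, Rockafellar's own proof of Thm.~11.6 routes through the notion of \emph{proper} separation (his Thm.~11.3, characterising it by disjointness of relative interiors), whereas your compactness-of-the-dual-sphere argument is the more elementary textbook alternative; both are standard, and yours has the advantage of using only the separation statement the paper already quotes.
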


\subparagraph*{Ordering vectors.}
We consider two order relations on $\IRbar^\numObj$: the component-wise order and the lexicographic order.
Let $\payoffVect = \payoffVectVerbose$ and $\payoffVectB = \payoffVectBVerbose\in\IRbar^\numObj$.
For the component-wise order, we write $\payoffVect\leq \payoffVectB$ if and only if $\payoffComp_\indexPayoff\leq\payoffCompB_\indexPayoff$ for all $1\leq \indexPayoff\leq\numObj$.
For the lexicographic ordering over $\IR^\numObj$, we write $\payoffVect\leLex \payoffVectB$ if and only if $\payoffVect = \payoffVectB$ or $\payoffComp_\indexPayoff\leq\payoffCompB_\indexPayoff$ where $\indexPayoff = \min\{\indexPayoff'\leq\numObj\mid \payoffComp_{\indexPayoff'}\neq\payoffCompB_{\indexPayoff'}\}$.
We write $\payoffVect\lLex \payoffVectB$ if $\payoffVect\leLex\payoffVectB$ and $\payoffVect\neq\payoffVectB$.
We recall that the component-wise order is partial, whereas the lexicographic order is a total order.

Let $D\subseteq\IRbar^\numObj$.
We say that $\payoffVect\in D$ is a \textit{Pareto-optimal} element of $D$ if it is maximal for the component-wise order, i.e., if there does not exist $\payoffVectB\in D$ such that $\payoffVect\leq\payoffVectB$ and $\payoffVect\neq\payoffVectB$.
We say that $D$ is \textit{downward-closed} if for all $\payoffVect\in D$ and $\payoffVectB\in\IRbar^\numObj$, $\payoffVectB\leq\payoffVect$ implies $\payoffVectB\in D$.
We let $\down{D}$ denote the \textit{downward closure} of $D$, which is defined as the smallest (with respect to set inclusion) downward-closed set in which $D$ is included.
A set and its downward closure have the same set of Pareto-optimal elements.

\subparagraph*{Partially observable Markov decision processes.}
A \textit{partially observable Markov decision process} (POMDP) models the interaction of a system that is imperfectly informed of the state of the world with a stochastic environment.
At each step of the process, the system receives an observation of the current state of the process then selects an action, and the state of the process is updated in a stochastic fashion, where the relevant distribution depends only on the current state and chosen action.
We consider systems with \textit{countable} state and action spaces.

Formally, a \textit{POMDP} is a tuple $\pomdp = \pomdpTuple$ where $\mdpStateSpace$ is a countable non-empty set of states, $\mdpActionSpace$ is a countable non-empty set of actions, $\mdpTrans\colon \mdpStateSpace\times\mdpActionSpace\to\dist{\mdpStateSpace}$ is a partial probabilistic transition function, $\obsSpace$ is a countable set of observations and $\obsFun\colon\mdpStateSpace\to\obsSpace$ is an observation function.
We say that $\pomdp$ is finite if $\mdpStateSpace$ and $\mdpActionSpace$ are finite.
For $\mdpState\in\mdpStateSpace$, we let $\mdpActionSpace(\mdpState)$ denote the set of actions $\mdpAction\in\mdpActionSpace$ such that $\mdpTrans(\mdpState, \mdpAction)$ is defined.
If $\mdpAction\in\mdpActionSpace(\mdpState)$, we say that $\mdpAction$ is \textit{enabled} in $\mdpState$.
We assume that MDPs have no deadlocks, i.e., there is an action enabled in each state.
We also require that any two states that share the same observation have the same enabled actions, i.e., for all $\mdpState$, $\mdpStateB\in\mdpStateSpace$, $\obsFun(\mdpState) = \obsFun(\mdpStateB)$ implies $\mdpActionSpace(\mdpState) = \mdpActionSpace(\mdpStateB)$.

A \textit{play} of $\pomdp$ is an infinite sequence $\play = \mdpState_0\mdpAction_0\mdpState_1\mdpAction_1\ldots\in(\mdpStateSpace\mdpActionSpace)^\omega$ such that $\mdpAction_\indexPosition\in \mdpActionSpace(\mdpState_\indexPosition)$ and $\mdpState_{\indexPosition+1}\in\supp{\mdpTrans(\mdpState_{\indexPosition}, \mdpAction_{\indexPosition})}$ for all $\indexPosition \in\IN$.
A \textit{history} is a finite prefix of a play ending in a state.
Let $\play = \mdpState_0\mdpAction_0\mdpState_1\mdpAction_1\ldots$ be a play.
We write $\playPrefix{\play}{\indexLast}$ for the history $\mdpState_0\mdpAction_0\mdpState_1\ldots \mdpAction_{\indexLast-1}\mdpState_\indexLast$.
We also use this notation for prefixes of histories.
We denote the suffix $\mdpState_\indexLast\mdpAction_\indexLast\mdpState_{\indexLast+1}\mdpAction_{\indexLast+1}\ldots$ of $\play$ by $\playSuffix{\play}{\indexLast}$.
For any history $\hist = \mdpState_0\mdpAction_0\mdpState_1\ldots \mdpState_\indexLast$, we let $\last{\hist} = \mdpState_\indexLast$.
We write $\playSet{\mdp}$ and $\histSet{\mdp}$ for the sets of plays and histories of $\mdp$.
We extend the observation function $\obsFun$ to histories as follows: for all $\hist = \mdpState_0\mdpAction_0\mdpState_1\ldots \mdpState_\indexLast$, we let $\obsFun(\hist) = \obsFun(\mdpState_0)\mdpAction_0\obsFun(\mdpState_1)\ldots\obsFun(\mdpState_\indexLast)$.

A \textit{Markov decision process} (MDP) is a POMDP in which the observation function is the identity function.
We denote MDPs as tuples $\mdp = \mdpTuple$, i.e., we drop the observation space and observation function from the notation.

We fix a POMDP $\pomdp = \pomdpTuple$ for the rest of the section.

\subparagraph*{Behavioural strategies.}
A strategy is a function that describes how to select actions based on the observation history of the ongoing play.
A strategy may resort to randomisation.
Formally, a \textit{behavioural strategy} of $\pomdp$ is a function $\stratMDP\colon\obsFun(\histSet{\pomdp})\to\dist{\mdpActionSpace}$ such that for all $\hist\in\histSet{\pomdp}$, $\supp{\stratMDP(\obsFun(\hist))}\subseteq\mdpActionSpace(\last{\hist})$.
To lighten notation, we abusively write $\stratMDP(\hist)$ for $\stratMDP(\obsFun(\hist))$ in the sequel.

A strategy is \textit{pure} if it uses no randomisation, i.e., if it maps all observation histories to Dirac distributions.
We view pure strategies as functions  $\stratMDP\colon\obsFun(\histSet{\pomdp})\to \mdpActionSpace$.
A strategy $\stratMDP$  is \textit{memoryless} if the distribution it assigns to a history depends only on its last state, i.e., if for all histories $\hist, \hist'\in\histSet{\pomdp}$, $\last{\hist} = \last{\hist'}$ implies $\stratMDP(\hist) = \stratMDP(\hist')$.
Memoryless (resp.~pure memoryless) strategies can be viewed as functions $\obsSpace\to \dist{\mdpActionSpace}$ (resp.~$\obsSpace\to\mdpActionSpace$).
We let $\stratClassAll{\pomdp}$ and $\stratClassPure{\pomdp}$ respectively denote the set of all strategies of $\pomdp$ and the set of pure strategies of $\pomdp$.

A play $\play = \mdpState_0\mdpAction_0\mdpState_1\ldots$ is \textit{consistent} with a strategy $\stratMDP$ if for all $\indexPosition\in\IN$, we have $\mdpAction_\indexPosition\in\supp{\stratMDP(\playPrefix{\play}{\indexPosition})}$.
An \textit{outcome} of a strategy is any play consistent with the strategy.
Consistency of a history with a strategy is defined similarly.

We now define the distribution over plays induced by a strategy from an initial state.
First, we specify the relevant sigma-algebra.
A \textit{cylinder set} is a set of continuations of a history; given $\hist = \mdpState_0\mdpAction_0\mdpState_1\ldots\mdpState_\indexLast\in\histSet{\pomdp}$, we let $\cyl{\hist} = \{\play\in\playSet{\pomdp}\mid \playPrefix{\play}{\indexLast} = \hist\}$.
We let $\pomdpSigmaAlgebra$ denote the sigma-algebra generated by the cylinder sets.
Let $\stratMDP\in\stratClassAll{\pomdp}$ and $\mdpState\in\mdpStateSpace$.
We define $\proba_{\mdpState}^{\stratMDP}\in\dist{\playSet{\pomdp}, \pomdpSigmaAlgebra}$ over cylinder sets as follows.
For $\hist = \mdpState_0\mdpAction_0\mdpState_1\ldots\mdpState_\indexLast\in\histSet{\pomdp}$, if $\mdpState_0 = \mdpState$, we let
\[\proba_{\mdpState}^{\stratMDP}(\cyl{\hist}) = \prod_{\indexPosition= 0}^{\indexLast-1}\stratMDP(\playPrefix{\hist}{\indexPosition})(\mdpAction_\indexPosition)\cdot \mdpTrans(\mdpState_\indexPosition, \mdpAction_\indexPosition)(\mdpState_{\indexPosition+1}),\]
and otherwise, if $\mdpState_0\neq\mdpState$, we let $\proba_{\mdpState}^{\stratMDP}(\cyl{\hist}) = 0$.
This distribution extends in a unique fashion to $\pomdpSigmaAlgebra$ by the Ionescu-Tulcea extension theorem~\cite[Thm.~8.24]{Kallenberg2021}.

\subparagraph*{Mixed strategies.}
There exists an alternative definition of randomised strategies: a \textit{mixed strategy} is a distribution over pure strategies.
Intuitively, when playing following a mixed strategy, a pure strategy is selected at random at the start of the process and then the play proceeds according to this pure strategy.

We first introduce a sigma-algebra on $\stratClassPure{\pomdp}$.
We view $\stratClassPure{\pomdp}$ as (a subset of) the product $\mdpActionSpace^{\obsFun(\histSet{\pomdp})}$.
We let $\stratSigmaAlgebra$ be the Borel sigma-algebra of $\stratClassPure{\pomdp}$ seen as the previous product equipped with the product topology, where each copy of $\mdpActionSpace$ is endowed with the discrete topology.
A \textit{mixed strategy} is a distribution in $\dist{\stratClassPure{\pomdp}, \stratSigmaAlgebra}$.
The distribution over $\playSet{\pomdp}$ induced by a mixed strategy $\mixedStrat$ from an initial state $\mdpState\in\mdpStateSpace$ is denoted by $\proba_{\mdpState}^\mixedStrat$ and is defined by
\begin{equation}\label{equation:prelim:mixed distribution}
  \proba_{\mdpState}^\mixedStrat(\objective) = \int_{\stratBMDP\in\stratClassPure{\pomdp}}\proba_{\mdpState}^{\stratBMDP}(\objective)\ud\mixedStrat(\stratBMDP)
\end{equation}
for all measurable $\objective\in\pomdpSigmaAlgebra$.
This definition assumes that the mapping $\stratClassPure{\pomdp}\to\ccInt{0}{1}\colon\stratBMDP\mapsto\proba_{\mdpState}^\stratBMDP(\objective)$ is measurable for all $\mdpState\in\mdpStateSpace$ and $\objective\in\pomdpSigmaAlgebra$.
A proof of this property is provided in Appendix~\ref{appendix:prelim:mixed}.

\subparagraph*{Kuhn's theorem.}
We briefly comment on the expressiveness of behavioural and mixed strategies.
We say that two randomised strategies are \textit{outcome-equivalent} if, from all initial states, they induce the same distributions over plays.
Kuhn's theorem~\cite{Aumann64} states that in the perfect recall setting, for all mixed strategies, there exists an outcome-equivalent behavioural strategy and vice-versa.
Perfect recall holds if decision makers never forget their prior knowledge and can observe their actions.
In our setting, it follows from our definition of POMDPs (actions are observable) and strategies (they take in account the previous actions).

The derivation of mixed strategies from behavioural strategies is non-trivial: mixed strategies that are equivalent to some behavioural strategies may have to randomise over an uncountable set of pure strategies.
Furthermore, even when the behavioural strategy uses finite memory (i.e., has a finite automaton-based representation), an equivalent mixed strategy may have to randomise over infinite-memory strategies~\cite{DBLP:journals/iandc/MainR24}.
In particular, the class of finite-support mixed strategies, which is at the core of our results, is a strict subclass of the class of mixed strategies.

Due to the equivalent expressiveness of both strategy models, mixed and behavioural strategies can be used interchangeably.
For the sake of conciseness, unless otherwise stated, by strategy, we mean a behavioural strategy.

\section{Payoffs and multi-objective POMDPs}\label{section:payoffs}
Payoffs are functions that quantify the quality of plays.
Definitions are given in Section~\ref{section:payoffs:definitions}.
Section~\ref{section:payoffs:pure and randomised} discusses the expectation of a payoff under a mixed strategy.
Some classical payoffs are provided in Section~\ref{section:payoffs:classical}.
Continuous payoffs, which are the main object of study of Section~\ref{section:continuous}, are introduced in Section~\ref{section:payoffs:continuous}.
Section~\ref{section:payoffs:multi} presents notation for POMDPs with multiple payoffs.

We fix a POMDP $\pomdp = \pomdpTuple$ for the whole section.

\subsection{Payoff functions and objectives}\label{section:payoffs:definitions}
A \textit{payoff function} (or payoff for short) is a measurable function $\payoff\colon \playSet{\pomdp}\to\IRbar$.
An \textit{objective} is a measurable subset of the set of plays, i.e., an element of $\pomdpSigmaAlgebra$.
An objective represents a qualitative specification, e.g., reaching a target set of states.
To each objective $\objective$,  we associate the payoff $\indic{\objective}$.

Let $\payoff$ be a payoff, $\stratMDP\in\stratClassAll{\pomdp}$ be a strategy and $\mdpState\in\mdpStateSpace$ be an initial state.
The $\proba^{\stratMDP}_\mdpState$-integral of $\payoff$ is only formally defined whenever $\payoff$ is non-negative, non-positive or $\proba^{\stratMDP}_\mdpState$-integrable.
If $\payoff$ is such a payoff, we let $\expectancy_{\mdpState}^{\stratMDP}(\payoff) = \int_{\play\in\playSet{\pomdp}}\payoff(\play)\ud\proba^{\stratMDP}_{\mdpState}(\play)$; $\expectancy_{\mdpState}^{\stratMDP}(\payoff)$ is the expected payoff of the strategy $\stratMDP$ from $\mdpState$ (for $\payoff$).
We also generalise the notion of expected payoff to a broader class of payoff functions as follows.
Let $\payoff^+ = \max(\payoff, 0)$ and $\payoff^- = \max(-\payoff, 0)$ denote the non-negative and non-positive parts of $\payoff$.
We say that $\payoff$ has an \textit{unambiguous $\proba^{\stratMDP}_\mdpState$-integral} if $\expectancy^{\stratMDP}_\mdpState(\payoff^+)\in\IR$ or $\expectancy^{\stratMDP}_\mdpState(\payoff^-)\in\IR$.
If $\payoff$ has an unambiguous $\proba^{\stratMDP}_\mdpState$-integral, we abuse notation and let $\expectancy^{\stratMDP}_\mdpState(\payoff) = \expectancy^{\stratMDP}_\mdpState(\payoff^+) - \expectancy^{\stratMDP}_\mdpState(\payoff^-)$.
We reserve the notation $\expectancy$ for the expectation of payoffs, and use integrals for other probability spaces.

In the sequel, we only consider payoffs for which the expected payoff is (unambiguously) defined for all strategies from all initial states.
We say that a payoff $\payoff$ is \textit{universally unambiguously integrable} if $\payoff$ has an unambiguous $\proba^{\stratMDP}_{\mdpState}$-integral for all $\stratMDP\in\stratClassAll{\pomdp}$ and $\mdpState\in\mdpStateSpace$.
In particular, all non-negative and non-positive payoffs are universally unambiguously integrable.
In this work, we focus on universally unambiguously integrable payoffs: all payoffs considered from this point on are assumed to be universally unambiguously integrable.

Payoffs that are integrable no matter the strategy and initial state are of particular interest in the sequel.
We say that a payoff $\payoff$ is \textit{universally integrable} if it is $\proba^{\stratMDP}_{\mdpState}$-integrable for all $\stratMDP\in\stratClassAll{\pomdp}$ and $\mdpState\in\mdpStateSpace$.
All bounded functions (in particular indicators) are universally integrable.

Given a payoff function $\payoff$, we say that a strategy $\stratMDP$ is \textit{optimal} from a state $\mdpState\in\mdpStateSpace$ if for all strategies $\stratBMDP$, we have $\expectancy_{\mdpState}^\stratMDP(\payoff) \geq \expectancy_{\mdpState}^\stratBMDP(\payoff)$.
Playing optimally with respect to an indicator thus equates to maximising the probability of the underlying objective.
In spite of this maximisation-oriented viewpoint for optimisation, we can capture minimisation by considering $-\payoff$ in lieu of $\payoff$: $\payoff$ is universally unambiguously integrable if and only if $-\payoff$ is, and minimising a payoff $\payoff$ is equivalent to maximising the payoff $-\payoff$.

\subsection{Relating payoffs of randomised and pure strategies}\label{section:payoffs:pure and randomised}
One of our main goals is to show that a simpler class of randomised strategies, i.e., finite-support mixed strategies, suffices when considering POMDPs with multiple payoffs.
To prove these results, we rely on a generalisation of Equation~\eqref{equation:prelim:mixed distribution}.
Equation~\eqref{equation:prelim:mixed distribution} states for all $\mdpState\in\mdpStateSpace$, the probability of an objective $\objective$ under a mixed strategy $\mixedStrat$ from $\mdpState$ is $\int_{\stratBMDP\in\stratClassPure{\pomdp}}\proba_{\mdpState}^{\stratBMDP}(\objective)\ud\mixedStrat(\stratBMDP)$.
We generalise this to the expectation of a payoff under a mixed strategy.
We prove this by considering payoffs of increasing complexity, starting from the indicator of objectives, analogously to the construction of the Lebesgue integral.
In the following statement, we impose restrictions on $\payoff$ that ensure that we obtain a well-defined integral.

\begin{restatable}{lemma}{lemExpectancyPureIntegral}\label{lem:expectancy:pure integral}
  Let $\mixedStrat$ be a mixed strategy, $\mdpState\in\mdpStateSpace$ and $\payoff$ be a universally unambiguously integrable payoff.
  If $\inf_{\stratBMDP}\expectancy^{\stratBMDP}_\mdpState(\payoff)\geq 0$ or $\sup_{\stratBMDP}\expectancy^{\stratBMDP}_\mdpState(\payoff)\leq 0$ or $\payoff$ is $\proba^{\mixedStrat}_\mdpState$-integrable, then the mapping $\stratClassPure{\pomdp}\to\IRbar\colon \stratBMDP\mapsto\expectancy^{\stratBMDP}_\mdpState(\payoff)$ is measurable and
  \[
    \expectancy^\mixedStrat_\mdpState(\payoff) = \int_{\stratBMDP\in\stratClassPure{\pomdp}}\expectancy^{\stratBMDP}_\mdpState(\payoff)\ud\mixedStrat(\stratBMDP).
  \]
\end{restatable}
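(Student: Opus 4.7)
The plan is to extend the identity from the definition of $\proba^\mixedStrat_\mdpState$ (Equation~(2.1)) to broader payoffs via the standard layer-cake construction of the Lebesgue integral. For the base case $\payoff = \indic{\objective}$ with $\objective\in\pomdpSigmaAlgebra$, the map $\stratBMDP\mapsto\expectancy^{\stratBMDP}_\mdpState(\payoff) = \proba^{\stratBMDP}_\mdpState(\objective)$ is measurable by the property underlying the definition of mixed strategies (established in Appendix~\ref{appendix:prelim:mixed}), and the required equality is exactly Equation~(2.1). Linearity of expectation and of $\mixedStrat$-integration then promotes both properties to non-negative simple payoffs.

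To pass to arbitrary non-negative measurable payoffs, I would fix $\payoff\geq 0$ and approximate it pointwise by an increasing sequence $(\payoff_n)$ of non-negative simple payoffs. Three applications of the monotone convergence theorem carry out the lifting: one under $\proba^{\stratBMDP}_\mdpState$ for each pure $\stratBMDP$ (yielding $\expectancy^{\stratBMDP}_\mdpState(\payoff_n)\uparrow\expectancy^{\stratBMDP}_\mdpState(\payoff)$, which both establishes measurability of the limit map and delivers monotone convergence of the integrand on the right-hand side), one under $\mixedStrat$ to commute the limit with the outer integral, and one under $\proba^\mixedStrat_\mdpState$ on the left-hand side. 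Combined with the simple-function identity, this yields the claim for every non-negative $\payoff$.

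For a general universally unambiguously integrable $\payoff$, the last step would be to decompose $\payoff = \payoff^+-\payoff^-$ and invoke the non-negative case on each part. Unambiguous integrability under every pure strategy ensures that the pointwise map $\stratBMDP\mapsto\expectancy^{\stratBMDP}_\mdpState(\payoff)$ is well-defined, and its measurability transfers from $\payoff^{\pm}$. The remaining task is to verify that the two candidate subtractions, $\expectancy^\mixedStrat_\mdpState(\payoff^+)-\expectancy^\mixedStrat_\mdpState(\payoff^-)$ and $\int\expectancy^{\stratBMDP}_\mdpState(\payoff^+)\,\ud\mixedStrat(\stratBMDP) - \int\expectancy^{\stratBMDP}_\mdpState(\payoff^-)\,\ud\mixedStrat(\stratBMDP)$, are both legal under each of the three listed hypotheses. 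When $\payoff$ is $\proba^\mixedStrat_\mdpState$-integrable, both $\expectancy^\mixedStrat_\mdpState(\payoff^{\pm})$ are finite and the non-negative case transfers finiteness to the outer integrals. When $\inf_{\stratBMDP}\expectancy^{\stratBMDP}_\mdpState(\payoff)\geq 0$, the pointwise bound $\expectancy^{\stratBMDP}_\mdpState(\payoff^-)\leq\expectancy^{\stratBMDP}_\mdpState(\payoff^+)$ holds for every pure $\stratBMDP$, and if $\int\expectancy^{\stratBMDP}_\mdpState(\payoff^-)\,\ud\mixedStrat(\stratBMDP)$ were $+\infty$ this bound together with the non-negative identity would force $\expectancy^\mixedStrat_\mdpState(\payoff^+)=+\infty$ as well, contradicting the unambiguous integrability of $\payoff$ under the behavioural strategy equivalent to $\mixedStrat$ (Kuhn's theorem). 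The hypothesis $\sup_{\stratBMDP}\expectancy^{\stratBMDP}_\mdpState(\payoff)\leq 0$ is handled symmetrically.

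The main obstacle, as signalled above, is the final bookkeeping: while the measure-theoretic backbone is standard, I expect some care will be needed to thread the unambiguous-integrability hypothesis from the pure-strategy level (where it makes the integrand well-defined) to the mixed-strategy level (where it controls the outer integral) without ever producing an $\infty-\infty$ expression, and each of the three listed hypotheses is tailored to close one such potential gap.
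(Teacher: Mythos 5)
Your proposal is correct and follows essentially the same route as the paper: the layer-cake induction (indicators via the definition of $\proba^{\mixedStrat}_{\mdpState}$, simple functions by linearity, non-negative payoffs by repeated monotone convergence), followed by the $\payoff^{+}-\payoff^{-}$ decomposition with a case analysis matching the three hypotheses. The only cosmetic difference is in the $\inf_{\stratBMDP}\expectancy^{\stratBMDP}_{\mdpState}(\payoff)\geq 0$ case, where you use the pointwise bound $\expectancy^{\stratBMDP}_{\mdpState}(\payoff^{-})\leq\expectancy^{\stratBMDP}_{\mdpState}(\payoff^{+})$ to rule out an infinite outer integral of the negative part, while the paper reaches the same contradiction by assuming the outer integral of $\expectancy^{\stratBMDP}_{\mdpState}(\payoff)$ is finite; both close the same $\infty-\infty$ gap.
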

\begin{proof}
  Fix a state $\mdpState\in\mdpStateSpace$ and let $\payoff$ be a universally unambiguously integrable payoff.
  Throughout this proof, we use $\stratBMDP$ to (implicitly) denote pure strategies.
  We show the result for payoffs of increasing complexity.
  First, we prove it for indicators of objectives.
  Second, we show that it also holds for non-negative simple functions (i.e., linear combinations of indicators) by linearity of the integral.
  Third, we deal with non-negative payoffs with the monotone convergence theorem.
  Fourth, we consider $\proba^{\mixedStrat}_{\mdpState}$-integrable payoffs.
  Finally, we close the proof by considering universally unambiguously integrable payoffs such that $\inf_{\stratBMDP}\expectancy^{\stratBMDP}_\mdpState(\payoff)\geq 0$ or $\sup_{\stratBMDP}\expectancy^{\stratBMDP}_\mdpState(\payoff)\leq 0$.

  If $\payoff$ is the indicator of an objective, the result follows by definition of the distribution induced by a mixed strategy.
  
  For the second step of the argument, we assume that $\payoff$ is a non-negative simple function.
  Let $\scalar_1, \ldots, \scalar_\indexSequence\geq 0$ and $\objective_1, \ldots, \objective_\indexSequence\subseteq\playSet{\pomdp}$ be objectives such that $\payoff =\sum_{\indexPayoff=1}^\indexSequence\scalar_\indexPayoff\indic{\objective_\indexPayoff}$.
  The function $\stratClassPure{\pomdp}\to\IR\colon\stratBMDP\mapsto\expectancy^{\stratBMDP}_\mdpState(\payoff)$ is measurable: it is a non-negative linear combination of measurable functions by the above.
  It follows from the result for indicators applied for all $1\leq\indexPayoff\leq\indexSequence$ and the linearity of the Lebesgue integral that
  \begin{align*}
    \expectancy^{\mixedStrat}_\mdpState(\payoff)
    & =  \sum_{\indexPayoff=1}^\indexSequence 
      \scalar_\indexPayoff\proba^{\mixedStrat}_\mdpState(\objective_\indexPayoff)
      \\
    & = \sum_{\indexPayoff=1}^\indexSequence\scalar_\indexPayoff
      \int_{\stratBMDP\in\stratClassPure{\pomdp}}\proba^{\stratBMDP}_{\mdpState}(\objective_\indexPayoff)\ud\mixedStrat(\stratBMDP)
    \\
& = \int_{\stratBMDP\in\stratClassPure{\pomdp}}\expectancy^{\stratBMDP}_{\mdpState}(\payoff)\ud\mixedStrat(\stratBMDP).
  \end{align*}

  Third, we assume that $\payoff$ is a non-negative measurable function.
  Let $(\payoff_\indexSequence)_{\indexSequence\in\IN}$ be a sequence of measurable simple functions increasing to $\payoff$ (i.e., for all plays $\play\in\playSet{\pomdp}$, $\payoff_{\indexSequence}(\play) \leq \payoff_{\indexSequence+1}(\play)$ and $\lim_{\indexSequence\to\infty}\payoff_{\indexSequence}(\play) = \payoff(\play)$).
  By the monotone convergence theorem and the previous point on simple functions, we have
  \begin{equation}\label{equation:lemma:pure integral:2}
    \expectancy^{\mixedStrat}_\mdpState(\payoff)
    = \lim_{\indexSequence\to\infty}\expectancy^{\mixedStrat}_\mdpState(\payoff_\indexSequence)
    = \lim_{\indexSequence\to\infty}
      \int_{\stratBMDP\in\stratClassPure{\pomdp}}\expectancy^{\stratBMDP}_\mdpState(\payoff_\indexSequence)\ud\mixedStrat(\stratBMDP).
  \end{equation}
  For all pure strategies $\stratBMDP$, the monotone convergence theorem implies that $\lim_{\indexSequence\to\infty}\expectancy^{\stratBMDP}_\mdpState(\payoff_\indexSequence) = \expectancy^{\stratBMDP}_\mdpState(\payoff)$.
  Therefore, the sequence of functions $(\stratBMDP\mapsto\expectancy^{\stratBMDP}_\mdpState(\payoff_\indexSequence))_{\indexSequence\in\IN}$ over $\stratClassPure{\pomdp}$ increases (i.e., is non-decreasing and converges pointwise)  to $\stratBMDP\mapsto\expectancy^{\stratBMDP}_\mdpState(\payoff)$, implying that this function is measurable.
  The monotone convergence theorem allows us to exchange the limit and integral in the rightmost term of Equation~\eqref{equation:lemma:pure integral:2}, and implies that:
  \[\expectancy^{\mixedStrat}_\mdpState(\payoff) =
    \int_{\stratBMDP\in\stratClassPure{\pomdp}}\lim_{\indexSequence\to\infty}
    \expectancy^{\stratBMDP}_\mdpState(\payoff_\indexSequence)\ud\mixedStrat(\stratBMDP) =
    \int_{\stratBMDP\in\stratClassPure{\pomdp}}
    \expectancy^{\stratBMDP}_\mdpState(\payoff)\ud\mixedStrat(\stratBMDP)
    .\]

  We introduce some notation for the two last cases.
  We let $\payoff^+ = \max(\payoff, 0)$ and $\payoff^- = \max(-\payoff, 0)$ denote the non-negative and non-positive parts of $\payoff$ respectively; we have $\payoff = \payoff^+ - \payoff^-$.
    From the above, we obtain that for all universally unambiguously integrable payoffs, the function $\stratBMDP\mapsto\expectancy^{\stratBMDP}_\mdpState(\payoff)$ over $\stratClassPure{\pomdp}$ is measurable; it is the difference of the measurable non-negative functions $\stratBMDP\mapsto\expectancy^{\stratBMDP}_\mdpState(\payoff^+)$ and $\stratBMDP\mapsto\expectancy^{\stratBMDP}_\mdpState(\payoff^-)$.
    
  For the second-to-last case, we assume that $\payoff$ is $\proba^{\mixedStrat}_\mdpState$-integrable.
  We prove that the mappings $\stratBMDP\mapsto\expectancy^{\stratBMDP}_\mdpState(\payoff^+)$ and $\stratBMDP\mapsto\expectancy^{\stratBMDP}_\mdpState(\payoff^-)$ are $\mixedStrat$-integrable.
  We proceed by bounding these functions by a $\mixedStrat$-integrable function.
  For all $\stratBMDP\in\stratClassPure{\pomdp}$, we have
  \(\expectancy^{\stratBMDP}_\mdpState(\payoff^+), \expectancy^{\stratBMDP}_\mdpState(\payoff^-)\leq
  \expectancy^{\stratBMDP}_\mdpState(|\payoff|)
  \).
  By the above, we have $\int_{\stratBMDP\in\stratClassPure{\pomdp}}\expectancy^{\stratBMDP}_\mdpState(|\payoff|)\ud\mixedStrat(\stratBMDP) = \expectancy^{\mixedStrat}_\mdpState(|\payoff|)$, which is a real number since $\payoff$ is $\proba^{\mixedStrat}_\mdpState$-integrable.
  We have shown that $\stratBMDP\mapsto\expectancy^{\stratBMDP}_\mdpState(\left|\payoff\right|)$ is $\mixedStrat$-integrable, which implies that $\stratBMDP\mapsto\expectancy^{\stratBMDP}_\mdpState(\payoff^+)$ and $\stratBMDP\mapsto\expectancy^{\stratBMDP}_\mdpState(\payoff^-)$ also are.
  It follows that $\stratBMDP\mapsto\expectancy^{\stratBMDP}_\mdpState(\payoff)$ is $\mixedStrat$-integrable.

  By definition, we have $\expectancy^{\mixedStrat}_{\mdpState}(\payoff) = \expectancy^{\mixedStrat}_{\mdpState}(\payoff^+) - \expectancy^{\mixedStrat}_{\mdpState}(\payoff^-)$ and $\expectancy^{\stratBMDP}_{\mdpState}(\payoff) = \expectancy^{\stratBMDP}_{\mdpState}(\payoff^+) - \expectancy^{\stratBMDP}_{\mdpState}(\payoff^-)$ for all strategies $\stratBMDP\in\stratClassPure{\pomdp}$.
  Combining this with the linearity of the Lebesgue integral and the result for non-negative payoffs yields the following sequence of equalities:
  \begin{align*}
    \expectancy^{\mixedStrat}_\mdpState(\payoff)
    & = \expectancy^{\mixedStrat}_\mdpState(\payoff^+) -
      \expectancy^{\mixedStrat}_\mdpState(\payoff^-)
      \\
    & = \int_{\stratBMDP\in\stratClassPure{\pomdp}}\expectancy^{\stratBMDP}_\mdpState
      (\payoff^+)\ud\mixedStrat(\stratBMDP) -
      \int_{\stratBMDP\in\stratClassPure{\pomdp}}\expectancy^{\stratBMDP}_\mdpState
      (\payoff^-)\ud\mixedStrat(\stratBMDP) \\
    & =
      \int_{\stratBMDP\in\stratClassPure{\pomdp}}
      \expectancy^{\stratBMDP}_\mdpState(\payoff^+) -
      \expectancy^{\stratBMDP}_\mdpState(\payoff^-)\ud\mixedStrat(\stratBMDP) \\
    & =
      \int_{\stratBMDP\in\stratClassPure{\pomdp}}\expectancy^{\stratBMDP}_\mdpState
      (\payoff)\ud\mixedStrat(\stratBMDP).
  \end{align*}

  To deal with the last case, we assume that $\inf_{\stratBMDP}\expectancy^\stratBMDP_\mdpState(\payoff)\geq 0$.
  The analogous case $\sup_{\stratBMDP}\expectancy^{\stratBMDP}_\mdpState(\payoff)\leq 0$ can be recovered from the case $\inf_{\stratBMDP}\expectancy^\stratBMDP_\mdpState(\payoff)\geq 0$ by considering $-\payoff$ as the payoff function.
  We assume that $\payoff$ is not $\proba^{\mixedStrat}_\mdpState$-integrable as this case has been examined above.
  It follows that $\expectancy^{\mixedStrat}_\mdpState(\payoff) = +\infty$.
  The integral $\int_{\stratBMDP\in\stratClassPure{\pomdp}}\expectancy^{\stratBMDP}_\mdpState(\payoff)\ud\mixedStrat(\stratBMDP)$ is formally well-defined by the assumption that $\inf_{\stratBMDP}\expectancy^\stratBMDP_\mdpState(\payoff)\geq 0$.
  To end the argument, we must show that this integral is $+\infty$.
  Assume towards a contradiction that this is not the case.
  This implies that $\stratBMDP\mapsto\expectancy^{\stratBMDP}_\mdpState(\payoff)$ is $\mixedStrat$-integrable.
  From the result for non-negative payoffs, we obtain that $\int_{\stratBMDP\in\stratClassPure{\pomdp}}\expectancy^{\stratBMDP}_\mdpState(\payoff^+)\ud\mixedStrat(\stratBMDP) = \expectancy^{\mixedStrat}_\mdpState(\payoff^+) = +\infty$ and $\int_{\stratBMDP\in\stratClassPure{\pomdp}}\expectancy^{\stratBMDP}_\mdpState(\payoff^-)\ud\mixedStrat(\stratBMDP) = \expectancy^{\mixedStrat}_\mdpState(\payoff^-)\in\IR$.
  By linearity of the Lebesgue integral (for $\mixedStrat$-integrable payoffs), we obtain that
  \begin{align*}
    \expectancy^{\mixedStrat}_{\mdpState}(\payoff^+)
    & =\int_{\stratBMDP\in\stratClassPure{\pomdp}}
    \expectancy^{\stratBMDP}_\mdpState(\payoff^+)\ud\mixedStrat(\stratBMDP)
    \\
    & =
      \int_{\stratBMDP\in\stratClassPure{\pomdp}}(\expectancy^{\stratBMDP}_\mdpState(\payoff) +
      \expectancy^{\stratBMDP}_\mdpState(\payoff^-))\ud\mixedStrat(\stratBMDP) \\
    & = 
    \int_{\stratBMDP\in\stratClassPure{\pomdp}}
    \expectancy^{\stratBMDP}_\mdpState(\payoff)\ud\mixedStrat(\stratBMDP) +
    \int_{\stratBMDP\in\stratClassPure{\pomdp}}
    \expectancy^{\stratBMDP}_\mdpState(\payoff^-)\ud\mixedStrat(\stratBMDP).
  \end{align*}
  This is a contradiction: on the one hand, we have $\expectancy^{\mixedStrat}_{\mdpState}(\payoff^+) = +\infty$ and, on the other hand, the sum in the last term is a real number.
  This ends the argument for the case $\inf_{\stratBMDP}\expectancy^\stratBMDP_\mdpState(\payoff)\geq 0$.
\end{proof}

We highlight two major consequences of Lemma~\ref{lem:expectancy:pure integral}.
On the one hand, for all strategies whose expected payoff is real, there exists a pure strategy with a greater expected payoff.
On the other hand, if there exists a strategy with an infinite expected payoff,  then there are pure strategies with arbitrarily large expected payoffs in absolute value.
We note that even if a randomised strategy has an infinite expectation, there need not exist a pure strategy with infinite expectation.
This is analogous to the fact that real-valued random variables can have an infinite expectation.

Using Lemma~\ref{lem:expectancy:pure integral}, we can obtain a characterisation of universally integrable payoffs.
Let $\stratClass\in\{\stratClassAll{\pomdp}, \stratClassPure{\pomdp}\}$.
A payoff $\payoff$ is universally integrable if and only if for all $\mdpState\in\mdpStateSpace$, $\sup_{\stratMDP\in\stratClass}\expectancy^{\stratMDP}_\mdpState(|\payoff|)$ is real.
The non-trivial part of the proof is showing that the definition of universally integrable and the property when the supremum ranges over pure strategies both imply the property with the supremum ranging over all strategies.
We show the contrapositive of both implications.
We assume that for some $\mdpState\in\mdpStateSpace$, $\sup_{\stratMDP\in\stratClassAll{\pomdp}}\expectancy^{\stratMDP}_\mdpState(|\payoff|)=+\infty$.
Lemma~\ref{lem:expectancy:pure integral} then implies that there are pure strategies $\stratBMDP$ with arbitrarily large $\expectancy^{\stratBMDP}_\mdpState(|\payoff|)$, which implies the validity of one of the implications.
For the other, we mix these pure strategies to construct a mixed strategy $\mixedStrat$ such that $\expectancy^{\mixedStrat}_\mdpState(|\payoff|)=+\infty$.

\begin{restatable}{lemma}{lemUICharacterisation}\label{lem:ui:characterisation}
  Let $\payoff$ be a payoff.
  Let $\mdpState\in\mdpStateSpace$.
  The following assertions are equivalent.
  \begin{enumerate}
  \item $\payoff$ is $\proba^{\stratMDP}_\mdpState$-integrable for all $\stratMDP\in\stratClassAll{\pomdp}$.\label{item:ui:characterisation:1}
  \item We have $\sup\{\expectancy^{\stratMDP}_{\mdpState}(|\payoff|)\mid \stratMDP\in\stratClassAll{\pomdp}\}\in\IR$.\label{item:ui:characterisation:2}
  \item We have $\sup\{\expectancy^{\stratMDP}_{\mdpState}(|\payoff|)\mid \stratMDP\in\stratClassPure{\pomdp}\}\in\IR$.\label{item:ui:characterisation:3}
  \end{enumerate}
  In particular, $\payoff$ is universally integrable if and only if Item~\ref{item:ui:characterisation:2} (resp.~\ref{item:ui:characterisation:3}) holds for all $\mdpState\in\mdpStateSpace$.
\end{restatable}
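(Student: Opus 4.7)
The plan is to prove the three-way equivalence by identifying Item~\ref{item:ui:characterisation:2} as the ``central'' condition: the implications $\ref{item:ui:characterisation:2}\Rightarrow\ref{item:ui:characterisation:1}$ and $\ref{item:ui:characterisation:2}\Rightarrow\ref{item:ui:characterisation:3}$ are immediate from definitions, while the two converses $\ref{item:ui:characterisation:1}\Rightarrow\ref{item:ui:characterisation:2}$ and $\ref{item:ui:characterisation:3}\Rightarrow\ref{item:ui:characterisation:2}$ are proved together via a single contrapositive argument. I would first dispatch the easy part in one line each: if $\sup_{\stratMDP\in\stratClassAll{\pomdp}}\expectancy^{\stratMDP}_{\mdpState}(|\payoff|) = M \in \IR$, then $\expectancy^{\stratMDP}_{\mdpState}(|\payoff|) \leq M$ for every $\stratMDP$, giving $\proba^{\stratMDP}_{\mdpState}$-integrability and hence Item~\ref{item:ui:characterisation:1}; and since $\stratClassPure{\pomdp}\subseteq\stratClassAll{\pomdp}$, the supremum over pure strategies is bounded by $M$, giving Item~\ref{item:ui:characterisation:3}.

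For the contrapositive part, assume $\sup_{\stratMDP\in\stratClassAll{\pomdp}}\expectancy^{\stratMDP}_{\mdpState}(|\payoff|) = +\infty$ and aim to contradict both Item~\ref{item:ui:characterisation:1} and Item~\ref{item:ui:characterisation:3}. By the assumption, for every $k \in \IN_0$ there exists a behavioural strategy $\stratMDP_k$ with $\expectancy^{\stratMDP_k}_{\mdpState}(|\payoff|) \geq 4^k$. By Kuhn's theorem, each $\stratMDP_k$ is outcome-equivalent to some mixed strategy $\mixedStrat_k$. Applying Lemma~\ref{lem:expectancy:pure integral} to the non-negative payoff $|\payoff|$ (which places us in the case $\inf_{\stratBMDP}\expectancy^{\stratBMDP}_{\mdpState}(|\payoff|) \geq 0$) yields
\[
  4^k \leq \expectancy^{\stratMDP_k}_{\mdpState}(|\payoff|) = \int_{\stratBMDP\in\stratClassPure{\pomdp}}\expectancy^{\stratBMDP}_{\mdpState}(|\payoff|)\ud\mixedStrat_k(\stratBMDP).
\]
Since the integral of a non-negative function is bounded above by the essential supremum of its integrand, there must exist a pure strategy $\stratBMDP_k$ in the support of $\mixedStrat_k$ with $\expectancy^{\stratBMDP_k}_{\mdpState}(|\payoff|) \geq 4^k$. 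This already refutes Item~\ref{item:ui:characterisation:3}.

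To refute Item~\ref{item:ui:characterisation:1}, I would combine these pure strategies into a single mixed strategy: define $\mixedStrat$ supported on $\{\stratBMDP_k\mid k\in\IN_0\}$ by $\mixedStrat(\stratBMDP_k) = 2^{-k}$ (taking distinct representatives if needed, and absorbing any duplicates into their total mass; the support is countable and lies in $\stratClassPure{\pomdp}$, hence is measurable in $\stratSigmaAlgebra$). Applying Lemma~\ref{lem:expectancy:pure integral} once more to $|\payoff|$ gives
\[
  \expectancy^{\mixedStrat}_{\mdpState}(|\payoff|) = \int_{\stratBMDP\in\stratClassPure{\pomdp}}\expectancy^{\stratBMDP}_{\mdpState}(|\payoff|)\ud\mixedStrat(\stratBMDP) \geq \sum_{k=1}^{\infty}2^{-k}\cdot 4^k = \sum_{k=1}^{\infty}2^k = +\infty,
\]
so $\payoff$ is not $\proba^{\mixedStrat}_{\mdpState}$-integrable, and an outcome-equivalent behavioural strategy (again via Kuhn's theorem) witnesses the failure of Item~\ref{item:ui:characterisation:1}. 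The final ``in particular'' statement is then just the equivalence quantified over all $\mdpState\in\mdpStateSpace$.

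The main obstacle is the step deducing the existence of pure strategies with arbitrarily large expected $|\payoff|$ from the existence of behavioural strategies with the same property: it crucially relies on viewing behavioural strategies through Kuhn's theorem as mixed strategies so that Lemma~\ref{lem:expectancy:pure integral} is applicable, and on the subtle point that the integrand $\stratBMDP\mapsto\expectancy^{\stratBMDP}_{\mdpState}(|\payoff|)$ must be measurable and bounded above on the support in order to conclude---both of which are guaranteed precisely by Lemma~\ref{lem:expectancy:pure integral} applied to the non-negative function $|\payoff|$.
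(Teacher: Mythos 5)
Your proof is correct and follows essentially the same route as the paper's: both treat Item~\ref{item:ui:characterisation:2} as central, dispatch the easy implications, and prove the converses by contrapositive, using Kuhn's theorem together with Lemma~\ref{lem:expectancy:pure integral} applied to $|\payoff|$ to extract pure strategies with arbitrarily large expectation and then mixing them with geometrically decaying weights to produce a strategy with infinite expectation. The only nitpick is that an integral exceeding $4^k$ gives a pure strategy with expectation exceeding $4^k-\varepsilon$ (the essential supremum need not be attained), but this changes nothing in the divergence of the final sum.
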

\begin{proof}
    Item~\ref{item:ui:characterisation:2} directly implies the other two items.
  We now show that the other two items imply Item~\ref{item:ui:characterisation:2} via the contrapositive of these implications.

  Assume that Item~\ref{item:ui:characterisation:2} does not hold, i.e., that $\sup\{\expectancy^{\stratMDP}_{\mdpState}(|\payoff|)\mid \stratMDP\in\stratClassAll{\mdp}\} = +\infty$.
  If there exists a pure strategy $\stratMDP$ such that $\expectancy^{\stratMDP}_{\mdpState}(|\payoff|) = +\infty$, the negations of Item~\ref{item:ui:characterisation:1} and Item~\ref{item:ui:characterisation:3} follow directly.
  In the remainder of the proof, we assume that this is not the case.
  
  First, we show that Item~\ref{item:ui:characterisation:3} does not hold.
  By Lemma~\ref{lem:expectancy:pure integral} (and Kuhn's theorem), for all strategies $\stratMDP\in\stratClassAll{\mdp}$, if $\expectancy^{\stratMDP}_{\mdpState}(|\payoff|)\in\IR$, there exists a pure strategy $\stratBMDP$ such that $\expectancy^{\stratBMDP}_{\mdpState}(|\payoff|)\geq\expectancy^{\stratMDP}_{\mdpState}(|\payoff|)$ and, otherwise, if $\expectancy^{\stratMDP}_{\mdpState}(|\payoff|)=+\infty$, then for all $M\in\IR$, there exists a pure strategy $\stratBMDP$ such that $\expectancy^{\stratBMDP}_{\mdpState}(|\payoff|)\geq M$.
  In particular, Item~\ref{item:ui:characterisation:3} does not hold.

  We now construct a strategy $\stratMDP$ such that $\expectancy^{\stratMDP}_{\mdpState}(|\payoff|)=+\infty$ from the pure strategies above.
  For all $\indexLast\in\IN$, let $\stratBMDP_\indexLast$ be a pure strategy such that $\expectancy^{\stratBMDP_\indexLast}_{\mdpState}(|\payoff|)\geq 2^\indexLast$.
  Let $\mixedStrat$ be the mixed strategy that randomises over the set $\{\stratBMDP_\indexLast\mid\indexLast\in\IN\}$ and selects strategy $\stratBMDP_\indexLast$ with probability $\frac{1}{2^{\indexLast+1}}$.
  We obtain that $\expectancy^{\mixedStrat}_{\mdpState}(|\payoff|)=+\infty$.
  This shows that Item~\ref{item:ui:characterisation:1} does not hold.
\end{proof}

Let $\payoff$ be a universally unambiguously integrable payoff.
It may be the case that Lemma~\ref{lem:expectancy:pure integral} cannot be directly applied to $\payoff$, e.g., if $\payoff$ is not universally integrable and strategies with a negative expected payoff coexist with strategies with a positive expected payoff.
Nonetheless, we can show that there exists a constant $\alpha\in\IR$ such that $\payoff+\alpha$ satisfies the assumptions of Lemma~\ref{lem:expectancy:pure integral}.
It suffices to establish that for all initial states $\mdpState$, there either exists a lower or upper bound on the expectation of strategies from $\mdpState$.
We provide a proof that relies on Lemma~\ref{lem:ui:characterisation} below.

\begin{restatable}{lemma}{lemUnambiguousBound}\label{lem:unambiguous:bound}
  Let $\payoff$ be a universally unambiguously integrable payoff function.
  For all $\mdpState\in\mdpStateSpace$, we have $\inf_{\stratMDP\in\stratClassAll{\pomdp}}\expectancy^{\stratMDP}_\mdpState(\payoff)\in\IR$ or $\sup_{\stratMDP\in\stratClassAll{\pomdp}}\expectancy^{\stratMDP}_\mdpState(\payoff)\in\IR$.
\end{restatable}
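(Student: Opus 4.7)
The plan is to partition on the finiteness of two suprema. Fix a state $\mdpState\in\mdpStateSpace$ and set
\[
A = \sup_{\stratMDP\in\stratClassAll{\pomdp}}\expectancy^{\stratMDP}_\mdpState(\payoff^+),
\qquad
B = \sup_{\stratMDP\in\stratClassAll{\pomdp}}\expectancy^{\stratMDP}_\mdpState(\payoff^-).
\]
The two easy cases are immediate. If $A\in\IR$, then from $\payoff\leq\payoff^+$ pointwise one gets $\expectancy^{\stratMDP}_\mdpState(\payoff)\leq A$ for every strategy whose expected payoff is defined, so $\sup_\stratMDP\expectancy^{\stratMDP}_\mdpState(\payoff)\in\IR$. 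Symmetrically, if $B\in\IR$, then $\expectancy^{\stratMDP}_\mdpState(\payoff)\geq -B$ for every strategy, giving $\inf_\stratMDP\expectancy^{\stratMDP}_\mdpState(\payoff)\geq -B\in\IR$.

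The crux, and the main obstacle, is to rule out the remaining case $A=B=+\infty$. Here I would derive a contradiction with the universal unambiguous integrability of $\payoff$. Since $\payoff^+$ is a non-negative payoff (hence universally unambiguously integrable), Lemma~\ref{lem:ui:characterisation} applies to $\payoff^+$ with $|\payoff^+|=\payoff^+$, and the contrapositive of its equivalence yields that $A=+\infty$ forces the existence of a strategy $\stratMDP^+\in\stratClassAll{\pomdp}$ with $\expectancy^{\stratMDP^+}_\mdpState(\payoff^+)=+\infty$ (for a non-negative payoff, non-integrability is equivalent to having integral $+\infty$). Symmetrically, $B=+\infty$ yields a strategy $\stratMDP^-$ with $\expectancy^{\stratMDP^-}_\mdpState(\payoff^-)=+\infty$.

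To conclude, I would use Kuhn's theorem to view $\stratMDP^+$ and $\stratMDP^-$ as outcome-equivalent mixed strategies $\mixedStrat^+$ and $\mixedStrat^-$ over pure strategies, then form their equal-probability convex combination $\mixedStrat=\tfrac{1}{2}\mixedStrat^+ + \tfrac{1}{2}\mixedStrat^-$, which is itself a mixed strategy. Applying Lemma~\ref{lem:expectancy:pure integral} to the non-negative payoff $\payoff^+$ (whose infimum over pure strategies is $\geq 0$) together with the linearity of the Lebesgue integral gives
\[
\expectancy^{\mixedStrat}_\mdpState(\payoff^+)
= \tfrac{1}{2}\expectancy^{\mixedStrat^+}_\mdpState(\payoff^+) + \tfrac{1}{2}\expectancy^{\mixedStrat^-}_\mdpState(\payoff^+)
= +\infty,
\]
using that the first summand is already $+\infty$ and the second is non-negative. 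The same reasoning applied to $\payoff^-$ yields $\expectancy^{\mixedStrat}_\mdpState(\payoff^-)=+\infty$. Hence $\payoff$ has no unambiguous $\proba^{\mixedStrat}_\mdpState$-integral, contradicting the universal unambiguous integrability of $\payoff$ and closing the proof.
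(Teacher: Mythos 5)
Your proof is correct and follows essentially the same route as the paper's: both reduce to the case where $\sup_{\stratMDP}\expectancy^{\stratMDP}_\mdpState(\payoff^+)$ and $\sup_{\stratMDP}\expectancy^{\stratMDP}_\mdpState(\payoff^-)$ are both infinite, invoke Lemma~\ref{lem:ui:characterisation} (plus Kuhn's theorem) to extract strategies with $\expectancy(\payoff^+)=+\infty$ and $\expectancy(\payoff^-)=+\infty$, and mix them with equal weight to contradict unambiguous integrability. The only cosmetic difference is that you organise the case split around the finiteness of these two suprema rather than around $\inf/\sup$ of $\expectancy^{\stratMDP}_\mdpState(\payoff)$ directly, which lands in the identical contradiction.
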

\begin{proof}
  Let $\mdpState\in\mdpStateSpace$.
  Let $\payoff^+ = \max(\payoff, 0)$ and $\payoff^- = \max(0, -\payoff)$.
  Assume towards a contradiction that $\inf_{\stratMDP\in\stratClassAll{\pomdp}}\expectancy^{\stratMDP}_\mdpState(\payoff)\notin\IR$ and $\sup_{\stratMDP\in\stratClassAll{\pomdp}}\expectancy^{\stratMDP}_\mdpState(\payoff)\notin\IR$.
  Because $\stratClassAll{\pomdp}$ is non-empty, we have $\inf_{\stratMDP\in\stratClassAll{\pomdp}}\expectancy^{\stratMDP}_\mdpState(\payoff)=-\infty$ and $\sup_{\stratMDP\in\stratClassAll{\pomdp}}\expectancy^{\stratMDP}_\mdpState(\payoff)=+\infty$.
  We show that this implies that $\payoff$ is not universally unambiguously integrable, i.e., there exists a strategy $\stratMDP\in\stratClassAll{\pomdp}$ such that $\expectation_{\mdpState}^{\stratMDP}(\payoff^+) = \expectation_{\mdpState}^{\stratMDP}(\payoff^-) = +\infty$.

  We observe that for all $\stratMDP\in\stratClassAll{\pomdp}$, we have $\expectancy^{\stratMDP}_\mdpState(\payoff)\leq\expectancy^{\stratMDP}_\mdpState(\payoff^+)$ and $\expectancy^{\stratMDP}_\mdpState(-\payoff)\leq\expectancy^{\stratMDP}_\mdpState(\payoff^-)$.
  It follows from Lemma~\ref{lem:ui:characterisation} and Kuhn's theorem that there exists a mixed strategy $\mixedStrat_+$ (resp.~$\mixedStrat_-$) such that $\payoff^+$ (resp.~$\payoff^-$) is not $\proba^{\mixedStrat_+}_\mdpState$-integrable (resp.~$\proba^{\mixedStrat_-}_\mdpState$-integrable).
  In particular, we obtain that $\expectancy^{\mixedStrat_+}_\mdpState(\payoff^+) = \expectancy^{\mixedStrat_-}_\mdpState(\payoff^-)=+\infty$.
  The mixed strategy $\mixedStrat = \frac{1}{2}\mixedStrat_+ + \frac{1}{2}\mixedStrat_-$  satisfies $\expectancy^{\mixedStrat}_\mdpState(\payoff^+) = \expectancy^{\mixedStrat}_\mdpState(\payoff^-)=+\infty$.
  This shows that $\payoff$ is not universally unambiguously integrable: $\payoff$ does not have an unambiguous $\proba^{\mixedStrat}_\mdpState$-integral.
\end{proof}

\subsection{Some classical objectives and payoffs}\label{section:payoffs:classical}

We now present some classical objectives and payoffs functions.
First, we define reachability objectives.
A reachability objective requires that a set of target states be visited.
Formally, given a set of target states (target for short) $\target\subseteq\mdpStateSpace$, we define the \textit{reachability objective} $\reach{\target}$ as the set $\{\mdpState_0\mdpAction_0\mdpState_1\mdpAction_1\ldots\in\playSet{\pomdp}\mid\exists\,\indexPosition\in\IN,\,\mdpState_\indexPosition\in\target\}$.

The other payoff functions we introduce are defined in a context where transitions of the POMDP have a numerical weight assigned to them.
Formally, a weight function is a function $\weight\colon\mdpStateSpace\times\mdpActionSpace\to\IR$.
We let $\weight$ denote a weight function.
A \textit{discounted-sum payoff} is defined as an accumulated sum of weights multiplied by powers of a discount factor in $\coInt{0}{1}$.
Formally, given a discount factor $\discFactor\in\coInt{0}{1}$, we let $\discSum{\discFactor}{\weight}\colon\playSet{\pomdp}\to\IR$ be the payoff function defined by $\discSum{\discFactor}{\weight}(\play) = \sum_{\indexPosition=0}^\infty\discFactor^\indexPosition\weight(\mdpState_\indexPosition,\mdpAction_\indexPosition)$ for all plays $\play = \mdpState_0\mdpAction_0\mdpState_1\mdpAction_1\ldots\in\playSet{\pomdp}$.
A discounted-sum payoff is bounded and thus universally integrable.

A \textit{total-reward} payoff corresponds to the accumulated weights along a play with no discounting.
Formally, we let $\totrew{\weight}\colon\playSet{\pomdp}\to\IRbar$ be defined, for all plays $\play = \mdpState_0\mdpAction_0\mdpState_1\mdpAction_1\ldots$, by $\liminf_{\indexLast\to\infty}\sum_{\indexPosition=0}^\indexLast\weight(\mdpState_\indexPosition,\mdpAction_\indexPosition)$.
We use a limit-inferior to ensure that this payoff is well-defined for all plays, as the series of weights along a play need not converge.

A \textit{shortest-path} payoff can be seen as a quantitative variant of reachability and is defined as the accumulated sum of weights up to the first visit of a set of targets.
Formally, given a target $\target$, we let $\spath{\target}{\weight}\colon\playSet{\pomdp}\to\IRbar$ be defined by, for all plays $\play = \mdpState_0\mdpAction_0\mdpState_1\mdpAction_1\ldots$, $\spath{\target}{\weight}(\play) = +\infty$ if $\play\notin\reach{\target}$ and, otherwise, $\spath{\target}{\weight}(\play) = \sum_{\indexPosition=0}^{\indexLast-1}\weight(\mdpState_\indexPosition, \mdpAction_\indexPosition)$ where $\indexLast = \min\{\indexPosition\in\IN\mid\mdpState_\indexPosition\in\target\}$.
Traditionally, the goal is to minimise the shortest-path payoff (i.e., it is a cost function rather than a payoff), hence the infinite payoff whenever the target is not visited.

All total-reward and shortest-path payoffs we consider are built on a non-negative weight function and are thus universally unambiguously integrable.

\subsection{Continuous payoffs}\label{section:payoffs:continuous}
The set of plays of $\pomdp$ can be equipped with a natural topology that is metrisable (and also compact if $\pomdp$ is finite).
This allows us to define continuous and uniformly continuous payoffs.
We defer a formal presentation of the topology of $\playSet{\pomdp}$ to Section~\ref{appendix:topology:plays}, and instead provide direct definitions here.

Intuitively, a payoff is continuous if plays with a long common prefix have a close payoff.
Let $\payoff\colon\playSet{\pomdp}\to\IRbar$ be a payoff and let $\play\in\playSet{\pomdp}$.
If $\payoff(\play)\in\IR$, then $\payoff$ is continuous at $\play$ if and only if for all $\varepsilon > 0$, there exists $\indexPosition\in\IN$ such that for all $\play'\in\cyl{\playPrefix{\play}{\indexPosition}}$, $|\payoff(\play)-\payoff(\play')| < \varepsilon$.
If $\payoff(\play) = +\infty$ (resp.~$-\infty$), then $\payoff$ is continuous at $\play$ if and only if for all $M\in\IR$, there exists $\indexPosition\in\IN$ such that for all plays $\play'\in\cyl{\playPrefix{\play}{\indexPosition}}$, $\payoff(\play')\geq M$ (resp.~$\payoff(\play')\leq -M$).
The payoff $\payoff$ is \textit{continuous} if it is continuous at all plays.

If $\payoff$ is real-valued, then $\payoff$ is \textit{uniformly continuous} if for all $\varepsilon > 0$, there exists $\indexPosition\in\IN$ such that for all plays $\play$, $\play'\in\playSet{\pomdp}$, $\playPrefix{\play}{\indexPosition}= \playPrefix{\play'}{\indexPosition}$ implies that $|\payoff(\play) - \payoff(\play')| < \varepsilon$.
If $\pomdp$ is finite, then $\playSet{\pomdp}$ is compact and all continuous real-valued payoffs of $\pomdp$ are uniformly continuous.

Discounted-sum payoffs are uniformly continuous whenever they are built on a weight function that is bounded in absolute value.
Any shortest-path function built on a weight function $\weight\geq\varepsilon$ for some $\varepsilon > 0$ is continuous.
We prove these claims and provide characterisation in finite POMDPs of continuous indicators and continuous prefix-independent payoffs in Appendix~\ref{appendix:continuous payoffs}.

\subsection{Multiple payoffs}\label{section:payoffs:multi}
We now present terminology and notation for multi-objective POMDPs, i.e., POMDPs with multiple payoffs.
Let $\numObj\in\IN_0$.
We summarise $\numObj$ payoffs $f_1, \ldots, f_\numObj\colon\playSet{\pomdp}\to\IRbar$ as a multi-dimensional payoff $\payoffTuple\colon \playSet{\pomdp}\to\IRbar^\numObj$, and write $\payoffTuple = (\payoff_\indexPayoff)_{1\leq\indexPayoff\leq\numObj}$.

Let $\payoffTuple = (\payoff_\indexPayoff)_{1\leq\indexPayoff\leq\numObj}$ and let $\mdpState\in\mdpStateSpace$.
We say that $\payoffTuple$ is universally (resp.~unambiguously) integrable whenever $\payoff_\indexPayoff$ is universally (resp.~unambiguously) integrable for all $1\leq\indexPayoff\leq\numObj$.
We now assume that $\payoffTuple$ is universally unambiguously integrable (recall that we focus on such payoffs).
Given a set of strategies $\stratClass\subseteq\stratClassAll{\pomdp}$, we let $\paySetClass{\payoffTuple}{\mdpState}{\stratClass} = \{\expectancy^{\stratMDP}_{\mdpState}(\payoffTuple)\mid\stratMDP\in\stratClass\}$ denote the set of expected payoffs of strategies in $\stratClass$ from $\mdpState$.
We let $\paySet{\payoffTuple}{\mdpState}$ and $\paySetPure{\payoffTuple}{\mdpState}$ be shorthand for $\paySetClass{\payoffTuple}{\mdpState}{\stratClassAll{\pomdp}}$ and $\paySetClass{\payoffTuple}{\mdpState}{\stratClassPure{\pomdp}}$ respectively.
We refer to elements of $\paySetPure{\payoffTuple}{\mdpState}$ as pure expected payoffs.
A set of expected payoffs need not have a maximum for the component-wise order, e.g., there can be several Pareto-optimal payoffs.

We now claim that convex combinations of elements of $\paySet{\payoffTuple}{\mdpState}$ are well-defined under the convention that $0\cdot(+\infty) = 0\cdot (-\infty) = 0$.
This convention essentially boils down to ignoring terms of convex combinations that have a zero coefficient.
These convex combinations are well-defined because $\paySet{\payoffTuple}{\mdpState}\subseteq \prod_{\indexPayoff=1}^\numObj \ccInt{\inf_{\stratMDP\in\stratClassAll{\pomdp}}\expectancy^{\stratMDP}_\mdpState(\payoff_\indexPayoff)}{\sup_{\stratMDP\in\stratClassAll{\pomdp}}\expectancy^{\stratMDP}_\mdpState(\payoff_\indexPayoff)}$.
By Lemma~\ref{lem:unambiguous:bound}, each interval in this product has a finite (upper or lower) bound and thus no terms of the form $+\infty - \infty$ occur in any component of convex combinations of expected payoffs.
It follows that any convex combination of elements of $\paySet{\payoffTuple}{\mdpState}$ is well-defined and is an element of $\paySet{\payoffTuple}{\mdpState}$.
Furthermore, $\paySet{\payoffTuple}{\mdpState}\cap\IR^\numObj$ is a convex set.
The last two properties can be established directly using mixed strategies.

In multi-objective optimisation, the goal is to ensure a given threshold on each dimension.
This is formalised by the notion of \textit{achievable vectors}.
A vector $\payoffVect\in\IRbar^\numObj$ is achievable (from $\mdpState$) if there exists a strategy $\stratMDP$ such that $\payoffVect\leq\expectancy^{\stratMDP}_{\mdpState}(\payoffTuple)$.
In this case, we say that $\stratMDP$ witnesses that $\payoffVect$ is achievable.
For any class of strategies $\stratClass\subseteq\stratClassAll{\pomdp}$, we let $\achSetClass{\payoffVect}{\mdpState}{\stratClass} = \down{\paySetClass{\payoffTuple}{\mdpState}{\stratClass}}$ denote the set of vectors such that a strategy of $\stratClass$ witnesses that they are achievable.
We define $\achSet{\payoffTuple}{\mdpState}$ and $\achSetPure{\payoffTuple}{\mdpState}$ as above.

We also consider the lexicographic optimisation of multiple objectives.
For this case, we define an analogue of optimal strategies from the one-dimensional setting.
A strategy $\stratMDP$ is \textit{lexicographically optimal} if $\expectancy^{\stratMDP}_{\mdpState}(\payoffTuple)$ is the lexicographic maximum of $\paySet{\payoffTuple}{\mdpState}$.

\section{An introductory example}\label{section:running}
The goal of this section is to illustrate that expected payoff sets in multi-objective MDPs may be complex.
We provide a two-dimensional example illustrating that sets of expected payoffs are not necessarily polytopes, even when the payoffs are universally integrable.
We introduce our example and comment on several of its properties in Section~\ref{section:running:overview}.
We formally prove that these properties hold in Section~\ref{section:running:proofs}.

\subsection{Example overview}\label{section:running:overview}

We consider the MDP $\mdp$ depicted in Figure~\ref{figure:running:mdp} and let $\weight$ denote the two-dimensional weight function from the illustration.
On this MDP, we consider the two-dimensional payoff $\payoffTuple = (\payoff_1, \payoff_2)$ given by the discounted-sum payoffs $\payoff_1=\discSum{3/4}{\weight_1}$ and $\payoff_2=\discSum{1/2}{\weight_2}$.
MDPs with several discounted-sum payoffs with different discount factors have previously been studied in~\cite{DBLP:conf/lpar/ChatterjeeFW13}.

\begin{figure}
  \begin{subfigure}[t]{0.48\textwidth}
    \centering
    \scalebox{0.85}{
      \begin{tikzpicture}[node distance=9mm]
        \node[state, align=center, initial left] (s0) {$\mdpState_0$};
        \node[state, above = of s0] (s1) {$\mdpState_1$};
        \node[state, below = of s0] (s2) {$\mdpState_2$};
        \node[state, right = of s0] (s3) {$\mdpState_3$};
        
        \path[->] (s0) edge node[left] {$\mdpActionC$} node[right] {$(0, 1)$} (s1);
        \path[->] (s0) edge node[left] {$\mdpAction$} node[right] {$(1, 1)$} (s2);
        \path[->] (s0) edge node[above] {$\mdpActionB$} node[below] {$(2, 0)$} (s3);
        \path[->] (s2) edge[bend right] node[align=center,below right,yshift=3mm] {$\mdpActionB$\\$(1, 0)$} (s3);
        \path[->] (s1) edge[loop left] node[align=center,left] {$\mdpAction$\\$(0, 1)$} (s1);
        \path[->] (s2) edge[loop left] node[align=center,left] {$\mdpAction$\\$(0, 1)$} (s2);
        \path[->] (s3) edge[loop right] node[align=center,right] {$\mdpAction$\\$(1, 0)$} (s3);
      \end{tikzpicture}}
    \caption{An MDP with deterministic transitions.
    Pairs next to actions represent two-dimensional weights.}\label{figure:running:mdp}
  \end{subfigure}\hfill \begin{subfigure}[t]{0.48\textwidth}
    \centering
    \scalebox{1}{
      \begin{tikzpicture}[scale=1.3]
        \draw[-stealth] (-0.25,0) -- (3,0) node[below] {$\expectancy(\payoff_1)$}; 
        \draw[-stealth] (0,-0.5) -- (0,2.5) node[left] {$\expectancy(\payoff_2)$};
\coordinate (x) at (0.5, 0);
        \coordinate (y) at (0, 1);
        \node[xshift=-5] at (y) {$1$};
        \node[yshift=-10] at (x) {$1$};
        \node[stochastics] at (x) (qx){};
        \node[stochastics] at (y) (qy){};
        
        \foreach \n in {0,...,10}
        {
          \coordinate (c\n) at ({0.5+(3*(3/4)^(\n-1))/2},{(2 - (1/2)^(\n-1))});
          \node[draw,red,circle,fill,inner sep=0.05mm,minimum size=0.4mm] at (c\n) (q\n){};
        }
        \coordinate (cinf) at (0.5,2);
        \coordinate (cnew) at (0,2);
\draw[red] (cnew) --(c0) -- (c1) -- (c2) -- (c3) -- (c4) -- (c5) -- (c6) -- (c7) -- (c8) -- (c9) -- (c10) -- (cinf) -- cycle;
        
        \begin{scope}[on background layer]
          \fill[red!20] (cnew) --(c0) -- (c1) -- (c2) -- (c3) -- (c4) -- (c5) -- (c6) -- (c7) -- (c8) -- (c9) -- (c10) -- (cinf) -- cycle;
        \end{scope}
        
\draw[dotted, orange] (-0.25, 1.8125) -- (2.5, 2.5);
        \draw[dotted, orange] (-0.25, 2.1875) -- (3, 1.375);
        
        \node[draw,blue,circle,fill,inner sep=0.05mm,minimum size=0.6mm] at (cinf) (qfin){};
        \node[blue,yshift=3mm] at (cinf) {$(1, 2)$};
        \node[draw,red,circle,fill,inner sep=0.05mm,minimum size=0.6mm] at (cnew) (qnew){};
        \node[draw,red,circle,fill,inner sep=0.05mm,minimum size=0.6mm] at (c0) (q0){};
        
      \end{tikzpicture}
    }
    \caption{The set of expected payoffs for the MDP of Figure~\ref{figure:running:mdp} for the payoff $\payoff_1 = \discSum{3/4}{\weight_1}$ and $\payoff_2 = \discSum{1/2}{\weight_2}$.}\label{figure:running:payoff set}
  \end{subfigure}
  \caption{An MDP with a two-dimensional discounted-sum payoff $\payoffTuple$ such that $\corners{\paySet{\payoffTuple}{\mdpState_0}}$ is infinite.}
\end{figure}

Due to the absence of randomness in transitions, the expected payoff of any pure strategy from $\mdpState_0$ is the payoff of a play from $\mdpState_0$.
Therefore, we obtain that
\[{\paySetPure{\payoffTuple}{\mdpState_0}} =
  \{(0, 2), (1, 2)\}\cup\left\{\left(1+\frac{3^\indexLast}{4^{\indexLast-1}}, 2-\frac{1}{2^{\indexLast-1}}\right)\mid\indexLast\in\IN\right\}.\]
On the one hand, the payoffs $(0, 2)$ and $(1, 2)$ are obtained by moving from $\mdpState_0$ to $\mdpState_1$ and $\mdpState_2$ respectively and looping in these states forever.
On the other hand, for all $\indexLast\in\IN$, the payoff $\left(1+\frac{3^\indexLast}{4^{\indexLast-1}}, 2-\frac{1}{2^{\indexLast-1}}\right)$ is obtained by spending $\indexLast$ rounds in $\mdpState_2$ then moving to $\mdpState_3$; for $\indexLast=0$, we move from $\mdpState_0$ to $\mdpState_3$ directly.

We approximately illustrate $\paySet{\payoffTuple}{\mdpState_0}$ in Figure~\ref{figure:running:payoff set}.
This illustration is based on the equality $\paySet{\payoffTuple}{\mdpState_0}=\convex{\paySetPure{\payoffTuple}{\mdpState_0}}$.
This equality follows from Thm.~\ref{thm:mixing:exact}, which states that if $\payoffTuple$ is universally integrable, then this equality holds.
In the interest of keeping this section self-contained, we outline a direct argument for this simple example that does not rely on Theorem~\ref{thm:mixing:exact}.
Because $\mdp$ has countably many plays, the expected payoff of any (randomised) strategy can be seen as an infinite convex combination of payoffs of plays of $\mdp$.
Any such infinite convex combination can be shown to be in the closure of $\convex{\paySetPure{\payoffTuple}{\mdpState_0}}$.
Furthermore, $\convex{\paySetPure{\payoffTuple}{\mdpState_0}}$ is closed because the convex hull of a compact set is compact and $\paySetPure{\payoffTuple}{\mdpState_0}$ is compact.
This shows that $\paySet{\payoffTuple}{\mdpState_0}=\convex{\paySetPure{\payoffTuple}{\mdpState_0}}$ and justifies the illustration.

Since $\paySet{\payoffTuple}{\mdpState_0}=\convex{\paySetPure{\payoffTuple}{\mdpState_0}}$, we conclude that all extreme points of $\paySet{\payoffTuple}{\mdpState_0}$ can be obtained by using pure strategies.
Indeed, any vector of $\paySet{\payoffTuple}{\mdpState_0}$ that cannot be obtained by a pure strategy is a convex combination of the expected payoffs of pure strategies, and thus is not extreme.
There exists another means of showing that a point can be obtained by a pure strategy.
This technique involves reducing to a single-dimensional setting by composing a linear form with $\payoffTuple$.
However, this approach does not work for $(1, 2)\in\paySet{\payoffTuple}{\mdpState_0}$.

First, let us explain how to establish that a point can be obtained by a pure strategy via a reduction to a single-dimensional payoff.
Let $\payoffVect\in\paySet{\payoffTuple}{\mdpState_0}$.
To prove that $\payoffVect\in\paySetPure{\payoffTuple}{\mdpState_0}$, one considers a linear form $\linForm$ such that $\linForm(\payoffVect)>\linForm(\payoffVectB)$ for all $\payoffVectB\in\paySet{\payoffTuple}{\mdpState_0}$.
The existence of such a form is equivalent to the existence of a hyperplane $\hplane$ supporting $\paySet{\payoffTuple}{\mdpState_0}$ at $\payoffVect$ such that $\paySet{\payoffTuple}{\mdpState_0}\cap\hplane=\{\payoffVect\}$.
By linearity of the expectation, $\linForm\circ\payoffTuple$ is universally integrable and any strategy $\stratMDP$ that maximises the expectation of $\linForm\circ\payoffTuple$ from $\mdpState_0$ must satisfy $\expectancy^{\stratMDP}_{\mdpState_0}(\payoffTuple) = \payoffVect$ (such a strategy exists because $\payoffVect\in\paySet{\payoffTuple}{\mdpState_0}$).
Lemma~\ref{lem:expectancy:pure integral} then implies that $\payoffVect$ can be obtained with a pure strategy.
This argument can be used whenever the set of expected payoffs is a compact polytope to prove that all vertices can be attained by pure strategies.

We now demonstrate that this argument is not applicable to show that $(1, 2)\in\corners{\paySet{\payoffTuple}{\mdpState_0}}$ is the payoff of a pure strategy.
We observe that the only hyperplane that supports $\paySet{\payoffTuple}{\mdpState_0}$ at $(1, 2)$ is the line $\hplane$ carrying the segment $\ccInt{(0, 2)}{(1, 2)}$.
This is highlighted by the orange dotted lines in Figure~\ref{figure:running:payoff set}: angling $\hplane$ in any way yields a line that is not a supporting hyperplane.
Furthermore, this hyperplane is not suitable for the above argument because $\hplane\cap\paySet{\payoffTuple}{\mdpState_0} = \ccInt{(0, 2)}{(1, 2)}$.
This implies that there is no hyperplane supporting $\paySet{\payoffTuple}{\mdpState_0}$ at $(1, 2)$ that does not contain other points of $\paySet{\payoffTuple}{\mdpState_0}$.
Therefore, we cannot show that $(1, 2)\in\paySetPure{\payoffTuple}{\mdpState_0}$ by following the argument outlined above.
In Section~\ref{section:achievable}, we reason on MDPs with lexicographic optimisation to prove Theorem~\ref{thm:mixing:exact}, and as a corollary, obtain that all extreme points of a set of expected payoffs can be attained by a pure strategy.

We conclude from the above that $\paySet{\payoffTuple}{\mdpState_0}$ is not a polytope.
Indeed, the vertices of a polytope can be isolated with supporting hyperplanes.
Thus, if $\paySet{\payoffTuple}{\mdpState_0}$ were a polytope, we would have been able to conclude that $(1, 2)$ is the payoff of a pure strategy via the described reduction to a single dimension.
This implies that $\paySet{\payoffTuple}{\mdpState_0}$ has infinitely many extreme points.
In fact, we can show that $\corners{\paySet{\payoffTuple}{\mdpState_0}}=\paySetPure{\payoffTuple}{\mdpState_0}$ and that all of these payoffs aside from $(0, 2)$ are Pareto-optimal.
This illustrates that even for classical payoffs, the expected payoff and achievable sets can be quite complex.

Finally, we comment on the strategy complexity required to obtain certain expected payoffs.
More precisely, we briefly discuss memory requirements.
Memory requirements in games and MDPs are thoroughly studied (see Section~\ref{section:intro}) and constitute a central measure of complexity for strategies in the literature.
All but three extreme points of $\paySet{\payoffTuple}{\mdpState_0}$ are obtained by moving from $\mdpState_0$ to $\mdpState_2$ and looping there finitely many times before moving to $\mdpState_3$.
In other words, these extreme points require pure strategies that count up to some arbitrarily large number.
In fact, in this case, we can only obtain these payoffs by using these specific pure strategies.
Intuitively, the expected payoff of a randomised strategy that induces more than one play is a non-trivial convex combination of payoffs of several plays, and therefore not in $\corners{\paySet{\payoffTuple}{\mdpState_0}}$.
This implies that some expected payoffs need strategies with arbitrarily large albeit finite memory to be obtained in this instance.

\subsection{Proofs and details}\label{section:running:proofs}
We now prove all statements that were made in Section~\ref{section:running:overview}.
First, we prove that the description of $\paySetPure{\payoffTuple}{\mdpState_0}$ given above is accurate.
Second, we show that $\paySet{\payoffTuple}{\mdpState_0}=\convex{\paySetPure{\payoffTuple}{\mdpState_0}}$ in the context of our example.
When then prove that we cannot conclude that $(1, 2)\in\corners{\paySet{\payoffTuple}{\mdpState_0}}$ by reducing to a single-dimensional payoff via a linear form.
Next, we establish that all pure payoffs are extreme points and that all of these points except $(0, 2)$ are Pareto-optimal.
Finally, we close the section by proving that all payoffs of pure strategies can \textit{only} be obtained by playing without randomisation, and comment on the consequences in terms of strategy complexity.

Throughout this section, $\mdp$ refers to the MDP of Figure~\ref{figure:running:mdp}.

\subsubsection{Determining the set of payoffs of pure strategies}\label{section:running:pure payoffs}

We prove that the description of $\paySetPure{\payoffTuple}{\mdpState_0}$ given in Section~\ref{section:running:overview} is correct.
The argument is based on the fact that there are no randomised transitions in the MDP we consider.
Therefore, the payoff of a pure strategy from $\mdpState_0$ is the payoff of a single play.
In the following proof, we directly compute the payoff of each play from $\mdpState_0$.
\begin{lemma}\label{lemma:running:pure:description}
  We have ${\paySetPure{\payoffTuple}{\mdpState_0}} =
  \{(0, 2), (1, 2)\}\cup\left\{\left(1+\frac{3^\indexLast}{4^{\indexLast-1}}, 2-\frac{1}{2^{\indexLast-1}}\right)\mid\indexLast\in\IN\right\}.$
\end{lemma}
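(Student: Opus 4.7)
The plan is to exploit the fact that $\mdp$ has deterministic transitions, so that a pure strategy from $\mdpState_0$ is fully characterised (as far as its expected payoff is concerned) by the unique play it induces. Consequently, $\paySetPure{\payoffTuple}{\mdpState_0}$ is exactly the set of payoff vectors $\payoffTuple(\play)$ for $\play\in\playSet{\mdp}$ starting in $\mdpState_0$, and the lemma reduces to enumerating these plays and computing their two discounted sums.

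First, I would exhaustively describe the plays from $\mdpState_0$. Inspection of Figure~\ref{figure:running:mdp} shows that any play from $\mdpState_0$ begins with one of the three actions $\mdpActionC$, $\mdpAction$, or $\mdpActionB$. Choosing $\mdpActionC$ forces the play $\mdpState_0\mdpActionC\mdpState_1(\mdpAction\mdpState_1)^{\omega}$; choosing $\mdpActionB$ forces the play $\mdpState_0\mdpActionB\mdpState_3(\mdpAction\mdpState_3)^{\omega}$; choosing $\mdpAction$ sends the play to $\mdpState_2$, from which the only remaining choices are whether and when to switch from $\mdpAction$ (staying in $\mdpState_2$) to $\mdpActionB$ (moving to $\mdpState_3$, after which only the $\mdpAction$-loop is available). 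Thus the plays from $\mdpState_0$ passing through $\mdpState_2$ are exactly, for each $\indexLast\in\IN_0$, the play $\play_\indexLast$ that stays in $\mdpState_2$ for $\indexLast$ consecutive steps (i.e.\ performs $\mdpAction$ at positions $1,\dots,\indexLast-1$), then fires $\mdpActionB$ at position $\indexLast$, plus the unique play $\play_\infty$ that loops forever in $\mdpState_2$.

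Next, I would compute $\payoffTuple(\play)$ for each case using standard geometric sums. For $\mdpState_0\mdpActionC\mdpState_1(\mdpAction\mdpState_1)^{\omega}$, every $\weight_1$ is $0$ and every $\weight_2$ is $1$, yielding $(0,\sum_{i=0}^{\infty}(1/2)^i)=(0,2)$. For $\play_\infty$, the initial step contributes $(1,1)$ and each subsequent step contributes $(0,1)$, giving $\payoff_1=1$ and $\payoff_2=\sum_{i=0}^{\infty}(1/2)^i=2$. For $\play_0=\mdpState_0\mdpActionB\mdpState_3(\mdpAction\mdpState_3)^{\omega}$, we get $\payoff_1 = 2 + \sum_{i=1}^{\infty}(3/4)^i = 5 = 1 + 3^0/4^{-1}$ and $\payoff_2 = 0 = 2 - 1/2^{-1}$. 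Finally, for $\indexLast\in\IN_0$, the play $\play_\indexLast$ contributes weights $\weight_1$ equal to $1$ at position $0$, $0$ at positions $1,\dots,\indexLast-1$, and $1$ at positions $\geq\indexLast$, so
\[
\payoff_1(\play_\indexLast)=1+(3/4)^\indexLast\sum_{i=0}^{\infty}(3/4)^i = 1+4\cdot(3/4)^\indexLast = 1+\frac{3^\indexLast}{4^{\indexLast-1}},
\]
while $\weight_2$ equals $1$ at positions $0,\dots,\indexLast-1$ and $0$ afterwards, giving $\payoff_2(\play_\indexLast)=\sum_{i=0}^{\indexLast-1}(1/2)^i = 2-1/2^{\indexLast-1}$.

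The main (and only) obstacle is simply indexing the geometric series cleanly so that $\indexLast=0$ (the play that never enters $\mdpState_2$) and $\indexLast\geq 1$ fit a single closed form. Once the plays are enumerated and the sums computed, union-ing the results yields exactly the claimed description of $\paySetPure{\payoffTuple}{\mdpState_0}$.
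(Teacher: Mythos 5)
Your proposal is correct and follows essentially the same route as the paper: use the determinism of the transitions to identify pure expected payoffs with payoffs of individual plays, enumerate the plays from $\mdpState_0$, and evaluate the two discounted geometric sums for each, with the case $\indexLast=0$ (never entering $\mdpState_2$) checked separately against the closed form. The computations all check out.
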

\begin{proof}
  There are no randomised transitions in $\mdp$.
  For this reason, any pure strategy induces a single play in $\mdp$ from any starting state.
  We compute the payoff of all plays of $\mdp$ from $\mdpState_0$ to obtain the desired result.

  We first consider the three plays $\mdpState_0\mdpActionC(\mdpState_1\mdpAction)^\omega$, $\mdpState_0\mdpAction(\mdpState_2\mdpAction)^\omega$ and $\mdpState_0\mdpActionB(\mdpState_3\mdpAction)^\omega$ that never leave their second state once it is reached.
  By definition of discounted-sum payoff functions, we have $\payoffTuple(\mdpState_0\mdpActionC(\mdpState_1\mdpAction)^\omega) = (0, \sum_{\indexPosition=0}^\infty\frac{1}{2^\indexPosition}) = (0, 2)$, $\payoffTuple(\mdpState_0\mdpAction(\mdpState_2\mdpAction)^\omega) = (1, \sum_{\indexPosition=0}^\infty\frac{1}{2^\indexPosition}) = (1, 2)$ and $\payoffTuple(\mdpState_0\mdpActionB(\mdpState_3\mdpAction)^\omega) = (1+\sum_{\indexPosition=0}^\infty(\frac{3}{4})^\indexPosition, 0) = (5, 0) = (1+\frac{3^0}{4^{0-1}}, 2-\frac{1}{2^{0-1}})$.

  It remains to deal with the plays that move from $\mdpState_0$ to $\mdpState_2$ and then eventually move to $\mdpState_3$.
  It suffices to show that for all $\indexLast\geq 1$, we have $\payoffTuple(\mdpState_0(\mdpAction\mdpState_2)^\indexLast\mdpActionB(\mdpState_3\mdpAction)^\omega) = (1+\frac{3^\indexLast}{4^{\indexLast-1}}, 2-\frac{1}{2^{\indexLast-1}})$.
  Let $\indexLast\geq 1$.
  We obtain, by definition of discounted-sum payoff functions, that
  \begin{equation*}
    \payoff_1(\mdpState_0(\mdpAction\mdpState_2)^\indexLast\mdpActionB(\mdpState_3\mdpAction)^\omega)
    =
      1+\sum_{\indexPosition=\indexLast}^{\infty}\frac{3^\indexPosition}{4^{\indexPosition}} =
     1+\frac{3^\indexLast}{4^{\indexLast}}\cdot\sum_{\indexPosition=0}^{\infty}\frac{3^\indexPosition}{4^{\indexPosition}} =
   1+\frac{3^\indexLast}{4^{\indexLast-1}},
 \end{equation*}
 and
   \begin{equation*}
    \payoff_2(\mdpState_0(\mdpAction\mdpState_2)^\indexLast\mdpActionB(\mdpState_3\mdpAction)^\omega)
    =
    \sum_{\indexPosition=0}^{\indexLast-1}\frac{1}{2^{\indexPosition}} =
    \frac{1-\frac{1}{2^{\indexLast}}}{1-\frac{1}{2}} =
    2-\frac{1}{2^{\indexLast-1}}.
  \end{equation*}
  This proves the required equality to end the proof.
\end{proof}

\subsubsection{Convex combinations of pure payoffs suffice}\label{section:running:proofs:mixing}
Our goal is to prove that $\paySet{\payoffTuple}{\mdpState_0}=\convex{\paySetPure{\payoffTuple}{\mdpState_0}}$.
The inclusion $\convex{\paySetPure{\payoffTuple}{\mdpState_0}}\subseteq\paySet{\payoffTuple}{\mdpState_0}$ is direct by convexity of sets of expected payoffs (see Section~\ref{section:payoffs:multi}).
For the other inclusion, we break the proof down into three arguments like in the sketch above.
First, we show that $\paySetPure{\payoffTuple}{\mdpState_0}$ is closed.
\begin{lemma}\label{lemma:running:pure:closed}
  The set $\paySetPure{\payoffTuple}{\mdpState_0}\subseteq\IR^2$ is closed.
\end{lemma}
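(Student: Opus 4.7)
The plan is to use the explicit description of $\paySetPure{\payoffTuple}{\mdpState_0}$ given by Lemma~\ref{lemma:running:pure:description} and show that this countable set contains all its limit points. For brevity, denote $\payoffVect_\infty = (1,2)$, $\payoffVect_\bot = (0,2)$ and $\payoffVect_\indexLast = \left(1+\frac{3^\indexLast}{4^{\indexLast-1}}, 2-\frac{1}{2^{\indexLast-1}}\right)$ for $\indexLast \in \IN$, so that $\paySetPure{\payoffTuple}{\mdpState_0} = \{\payoffVect_\bot, \payoffVect_\infty\} \cup \{\payoffVect_\indexLast \mid \indexLast \in \IN\}$.

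The first step is a routine computation: since $\frac{3^\indexLast}{4^{\indexLast-1}} = 4 \cdot (3/4)^\indexLast \to 0$ and $\frac{1}{2^{\indexLast-1}} \to 0$ as $\indexLast \to \infty$, the sequence $(\payoffVect_\indexLast)_{\indexLast\in\IN}$ converges to $\payoffVect_\infty = (1,2)$, which is already in $\paySetPure{\payoffTuple}{\mdpState_0}$. Crucially, the map $\indexLast \mapsto \payoffVect_\indexLast$ is injective (e.g., the second coordinate is strictly increasing), so the values $\payoffVect_\indexLast$ are all distinct.

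To establish closedness, I will take an arbitrary convergent sequence $(\payoffVectB_\indexSequence)_{\indexSequence\in\IN}$ in $\paySetPure{\payoffTuple}{\mdpState_0}$ with limit $\payoffVectB^*$ and show $\payoffVectB^* \in \paySetPure{\payoffTuple}{\mdpState_0}$ via a case analysis on which points are hit infinitely often. If $\payoffVectB_\indexSequence = \payoffVect_\bot$ or $\payoffVectB_\indexSequence = \payoffVect_\infty$ for infinitely many $\indexSequence$, a constant subsequence forces $\payoffVectB^*$ to equal that point. Otherwise, eventually $\payoffVectB_\indexSequence \in \{\payoffVect_\indexLast \mid \indexLast \in \IN\}$, so we can write $\payoffVectB_\indexSequence = \payoffVect_{\indexLast(\indexSequence)}$ for some index function $\indexLast\colon\IN \to \IN$. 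If $\indexLast(\indexSequence)$ is bounded, then the sequence takes only finitely many values, and one of them is taken infinitely often, giving $\payoffVectB^* = \payoffVect_\indexLast$ for some $\indexLast$. If $\indexLast(\indexSequence)$ is unbounded, we can extract a subsequence with $\indexLast(\indexSequence) \to \infty$; by the first step, this subsequence converges to $\payoffVect_\infty$, forcing $\payoffVectB^* = \payoffVect_\infty \in \paySetPure{\payoffTuple}{\mdpState_0}$.

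The argument is not intricate: since the only accumulation point of $\paySetPure{\payoffTuple}{\mdpState_0}$ is $\payoffVect_\infty$ and this point is explicitly listed in the set, closedness is essentially immediate once the limit computation is carried out. If anything, the main care needed is to remember to handle the case where the sequence lands on the isolated points $\payoffVect_\bot$ or some $\payoffVect_\indexLast$ infinitely often, rather than jumping straight to the sequence along $\{\payoffVect_\indexLast\}_\indexLast$.
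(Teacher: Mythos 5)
Your proof is correct and follows essentially the same route as the paper: both arguments take an arbitrary convergent sequence in $\paySetPure{\payoffTuple}{\mdpState_0}$ and run a case analysis whose substance is that $(1,2)$ is the unique accumulation point of the set and is itself a member, while every other point is isolated. The paper organises the cases by the value of the second coordinate of the limit rather than by boundedness of your index function $\indexLast(\indexSequence)$, but this is a cosmetic difference.
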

\begin{proof}
  Let $\payoffVect = (\payoffComp_1, \payoffComp_2)\in\IR^2$ such that there exists a sequence $(\payoffVectB^{(\indexSequence)})_{\indexSequence\in\IN}\subseteq\paySetPure{\payoffTuple}{\mdpState}$ such that $\lim_{\indexSequence\to\infty}\payoffVectB^{(\indexSequence)}=\payoffVect$.
  We show that $\payoffVect\in\paySetPure{\payoffTuple}{\mdpState}$.
  We observe that necessarily, $\payoffVect\in\ccInt{0}{5}\times\ccInt{0}{2}$ because $\paySetPure{\payoffTuple}{\mdpState}\subseteq\ccInt{0}{5}\times\ccInt{0}{2}$ (this inclusion follows from Lemma~\ref{lemma:running:pure:description}).

  We distinguish three cases.
  First, we assume that $\payoffVectB^{(\indexSequence)}=(0, 2)$ infinitely often.
  In this case, the constant sequence $((0, 2))_{\indexSequence\in\IN}$ is a subsequence of $(\payoffVectB^{(\indexSequence)})_{\indexSequence\in\IN}$.
  Since a convergent sequence has the same limit as its subsequences, this implies that $\payoffVect=(0, 2)$.
  We assume, for the remainder of the proof, that $(0, 2)$ does not occur in the sequence $(\payoffVectB^{(\indexSequence)})_{\indexSequence\in\IN}$.
  
  Second, we assume that $\payoffComp_2\neq 2$.
  In this case, there exists $\indexLast\in\IN$ such that $\payoffComp_2 \leq 2 - \frac{1}{2^{\indexLast-1}}$.
  This, along with Lemma~\ref{lemma:running:pure:description}, implies that there is a suffix of $(\payoffVectB^{(\indexSequence)})_{\indexSequence\in\IN}$ that is included in the finite subset $\left\{\left(1+\frac{3^\indexPosition}{4^{\indexPosition-1}}, 2-\frac{1}{2^{\indexPosition-1}}\right)\mid\indexPosition\leq\indexLast\right\}$ of $\paySetPure{\payoffTuple}{\mdpState}$.
  We conclude that $\payoffVect\in\paySetPure{\payoffTuple}{\mdpState}$ as all finite sets are closed.

  Finally, we assume that $\payoffComp_2 = 2$ and show that $\payoffVect = (1, 2)$.
  If there are infinitely many $\indexSequence\in\IN$ such that $\payoffVectB^{(\indexSequence)}=(1, 2)$, then $\payoffVect = (1, 2)$ (refer to the first case).
  We therefore assume that $(1, 2)$ does not occur in $(\payoffVectB^{(\indexSequence)})_{\indexSequence\in\IN}$.
  For all $\indexSequence\in\IN$, we let $\indexPosition_\indexSequence\in\IN$ such that $\payoffVectB^{(\indexSequence)} = \left(1+\frac{3^{\indexPosition_\indexSequence}}{4^{\indexPosition_\indexSequence-1}}, 2-\frac{1}{2^{\indexPosition_\indexSequence-1}}\right)$.
  From $\payoffComp_2 = 2$, we obtain that $\lim_{\indexSequence\to\infty}\frac{1}{2^{\indexPosition_\indexSequence-1}} = 0$.
  It follows that
  \[\payoffComp_1 =
    \lim_{\indexSequence\to\infty}3\cdot
    \left(\frac{3}{4}\right)^{\indexPosition_\indexSequence-1} =
    \lim_{\indexSequence\to\infty}3\cdot
    \left(\frac{1}{2}\right)^{(\indexPosition_\indexSequence-1)\cdot\log_2(4/3)} = 0.\]
  This implies that $\payoffVect = (1, 2)\in\paySetPure{\payoffTuple}{\mdpState}$.
  This ends the proof that $\paySetPure{\payoffTuple}{\mdpState}$ is closed.
\end{proof}

We now prove a general result that implies that $\convex{\paySetPure{\payoffTuple}{\mdpState_0}}$ is closed.
We show that the convex hull of any compact subset of $\IR^\numObj$ is also compact.
Lemma~\ref{lemma:running:pure:closed} allows us to apply it to $\paySetPure{\payoffTuple}{\mdpState_0}$.
Recall that a subset of $\IR^\numObj$ is compact if and only if it is closed and bounded.

\begin{lemma}\label{lemma:convex hull of compact}
  Let $\numObj\geq 1$.
  Let $D\subseteq\IR^\numObj$.
  If $D$ is compact, then $\convex{D}$ is also compact.
\end{lemma}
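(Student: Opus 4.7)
The plan is to combine Carathéodory's theorem (Theorem~\ref{theorem:caratheodory:convex}) with the classical fact that continuous images of compact sets are compact. By Carathéodory's theorem, every point of $\convex{D}$ is a convex combination of at most $\numObj+1$ points of $D$, so $\convex{D}$ is exactly the image of the continuous map
\[
\Phi\colon \Delta_{\numObj+1}\times D^{\numObj+1}\to\IR^\numObj,\quad \Phi\bigl((\scalar_\indexSequenceB)_{\indexSequenceB=1}^{\numObj+1},(\vect_\indexSequenceB)_{\indexSequenceB=1}^{\numObj+1}\bigr) = \sum_{\indexSequenceB=1}^{\numObj+1}\scalar_\indexSequenceB\vect_\indexSequenceB,
\]
where $\Delta_{\numObj+1}=\{(\scalar_\indexSequenceB)_{\indexSequenceB=1}^{\numObj+1}\in\ccInt{0}{1}^{\numObj+1}\mid \sum_{\indexSequenceB=1}^{\numObj+1}\scalar_\indexSequenceB = 1\}$ is the standard simplex of convex combination coefficients with $\numObj+1$ entries.

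First, I would check that the domain $\Delta_{\numObj+1}\times D^{\numObj+1}$ is compact. The simplex $\Delta_{\numObj+1}$ is closed (as the intersection of the closed half-spaces $\{\scalar_\indexSequenceB\geq 0\}$ with the closed hyperplane $\{\sum \scalar_\indexSequenceB = 1\}$) and bounded (contained in $\ccInt{0}{1}^{\numObj+1}$), hence compact by Heine--Borel. The set $D^{\numObj+1}$ is compact as a finite product of compact sets. Therefore $\Delta_{\numObj+1}\times D^{\numObj+1}$ is compact.

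Second, I would observe that $\Phi$ is continuous: it is a polynomial map in the coordinates of its arguments. By the standard result that continuous images of compact sets are compact, $\Phi(\Delta_{\numObj+1}\times D^{\numObj+1})$ is compact. Finally, I would verify that $\Phi(\Delta_{\numObj+1}\times D^{\numObj+1}) = \convex{D}$: the inclusion $\subseteq$ is immediate from the definition of convex combinations, and the inclusion $\supseteq$ follows from Theorem~\ref{theorem:caratheodory:convex}, which provides for any $\payoffVect\in\convex{D}$ a subset of at most $\numObj+1$ points of $D$ whose convex hull contains $\payoffVect$ (padding with repetitions and zero coefficients if strictly fewer than $\numObj+1$ points are used). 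This yields compactness of $\convex{D}$, and no real obstacle arises; the only minor care needed is the padding step when invoking Carathéodory to guarantee exactly $\numObj+1$ summands in the definition of $\Phi$.
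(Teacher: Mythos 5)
Your proof is correct. The only point that deserves explicit care is the degenerate case $D=\emptyset$ (where $\convex{D}=\emptyset$ is trivially compact and your product domain is empty), and the padding step you already flag, both of which are harmless.

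Your route differs from the paper's in how compactness is extracted. The paper verifies the two halves of the Heine--Borel characterisation separately: boundedness via the triangle inequality applied to an arbitrary convex combination, and closedness by taking a sequence in $\convex{D}$ converging to some $\payoffVect$, invoking Carath\'{e}odory to write each term with at most $\numObj+1$ points of $D$, and extracting simultaneous convergent subsequences of the coefficients and the points by compactness of $\ccInt{0}{1}$ and $D$. You instead package that entire subsequence argument into the single general fact that a continuous image of a compact set is compact, exhibiting $\convex{D}$ as $\Phi(\Delta_{\numObj+1}\times D^{\numObj+1})$. Both proofs hinge on the same key lemma (Carath\'{e}odory, to cap the number of summands at $\numObj+1$); yours is more conceptual and shorter, since boundedness and closedness come for free from compactness of the image, while the paper's is more elementary and self-contained in that it does not need to set up the product space and the evaluation map. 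Either is a perfectly acceptable proof of the lemma.
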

\begin{proof}
  Assume that $D$ is compact.
  We assume that $D$ is non-empty, as otherwise the result is direct.

  We first show that $\convex{D}$ is bounded.
  Let $\payoffVect\in\convex{D}$.
  Let $\scalar_1$, \ldots, $\scalar_{\indexSequence}$ be convex combination coefficients and let $\payoffVectB_1$, \ldots, $\payoffVectB_{\indexSequence}\in D$ such that $\payoffVect = \sum_{\indexSequenceB=1}^{\indexSequence}\scalar_\indexSequenceB\payoffVectB_\indexSequenceB$.
  By triangulation, we obtain that $\|\payoffVect\|_2 \leq \sum_{\indexSequenceB=1}^{\indexSequence}\scalar_\indexSequenceB\|\payoffVectB_\indexSequenceB\|_2 \leq \sup\{\|\payoffVectB\|_2\mid\payoffVectB\in D\}\in\IR$, where the second inequality is a consequence of $\sum_{\indexSequenceB=1}^{\indexSequence}\scalar_\indexSequenceB=1$.
  It follows that $\convex{D}$ is bounded.

  We now show that $\convex{D}$ is closed.
  Let $\payoffVect\in\IR^\numObj$ such that there exists a sequence $(\payoffVect^{(\indexSequence)})_{\indexSequence\in\IN}\subseteq\convex{D}$ such that $\lim_{\indexSequence\to\infty}\payoffVect^{(\indexSequence)} = \payoffVect$.
  By Carath\'{e}odory's theorem for convex hulls (Theorem~\ref{theorem:caratheodory:convex}), all elements of the sequence $(\payoffVect^{(\indexSequence)})_{\indexSequence\in\IN}$ are a convex combination of no more than $\numObj+1$ elements of $D$.
  For all $\indexSequence\in\IN$, let $\scalar_1^{(\indexSequence)}$, \ldots, $\scalar_{\numObj+1}^{(\indexSequence)}$ be convex combination coefficients and let $\payoffVectB_1^{(\indexSequence)}$, \ldots, $\payoffVectB_{\numObj+1}^{(\indexSequence)}$ such that $\payoffVect^{(\indexSequence)} = \sum_{\indexPayoff=1}^{\numObj+1}\scalar_\indexPayoff^{(\indexSequence)}\payoffVectB_\indexPayoff^{(\indexSequence)}$.
  By compactness of $\ccInt{0}{1}$ and $D$, we obtain an increasing sequence of natural numbers $(\indexSequence_\indexSequenceB)_{\indexSequenceB\in\IN}$, convex combination coefficients $\scalar_1$, \ldots, $\scalar_{\numObj+1}$ and $\payoffVectB_1$, \ldots, $\payoffVectB_{\numObj+1}\in D$ such that for all $1\leq\indexPayoff\leq\numObj+1$, $\lim_{\indexSequenceB\to\infty}\scalar_{\indexPayoff}^{(\indexSequence_\indexSequenceB)} = \scalar_\indexPayoff$ and $\lim_{\indexSequenceB\to\infty}\payoffVectB_{\indexPayoff}^{(\indexSequence_\indexSequenceB)} = \payoffVectB_\indexPayoff$.
  It follows (from the uniqueness of the limit) that $\payoffVect = \sum_{\indexPayoff=1}^{\numObj+1}\scalar_\indexPayoff\payoffVectB_\indexPayoff$.
  This shows that $\payoffVect\in\convex{D}$ and ends the proof that $\convex{D}$ is closed.
\end{proof}

Finally, we show that $\paySet{\payoffTuple}{\mdpState_0}\subseteq\convex{\paySetPure{\payoffTuple}{\mdpState_0}}$ by establishing that all elements of $\paySet{\payoffTuple}{\mdpState_0}$ are included in the closure of $\convex{\paySetPure{\payoffTuple}{\mdpState_0}}$.
We use the argument based on infinite convex combinations that was mentioned in Section~\ref{section:running:overview}.

\begin{lemma}\label{lemma:running:finite mixing}
  We have $\paySet{\payoffTuple}{\mdpState_0} = \convex{\paySetPure{\payoffTuple}{\mdpState_0}}$.
\end{lemma}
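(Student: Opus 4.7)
One inclusion is immediate: $\convex{\paySetPure{\payoffTuple}{\mdpState_0}}\subseteq\paySet{\payoffTuple}{\mdpState_0}$ follows from $\paySetPure{\payoffTuple}{\mdpState_0}\subseteq\paySet{\payoffTuple}{\mdpState_0}$ together with the convexity of $\paySet{\payoffTuple}{\mdpState_0}$ noted in Section~\ref{section:payoffs:multi} (since $\payoffTuple$ is bounded and hence universally integrable, the intersection with $\IR^2$ is just $\paySet{\payoffTuple}{\mdpState_0}$). My plan for the reverse inclusion is in two stages: first close up $\convex{\paySetPure{\payoffTuple}{\mdpState_0}}$ topologically, then realise any expected payoff as a limit of finite convex combinations of pure payoffs, at which point closedness captures the limit.

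For the topological step, I would observe that $\paySetPure{\payoffTuple}{\mdpState_0}$ is bounded (it sits inside $\ccInt{0}{5}\times\ccInt{0}{2}$ by the explicit description of Lemma~\ref{lemma:running:pure:description}) and closed by Lemma~\ref{lemma:running:pure:closed}, hence compact. Lemma~\ref{lemma:convex hull of compact} then yields that $\convex{\paySetPure{\payoffTuple}{\mdpState_0}}$ is itself compact, and in particular closed in $\IR^2$.

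For the main step, I would exploit the fact that $\mdp$ has deterministic transitions, so that the set $\cyl{\mdpState_0}$ of plays from $\mdpState_0$ is \emph{countable}: by inspection of Figure~\ref{figure:running:mdp}, the only such plays are $\mdpState_0\mdpActionC(\mdpState_1\mdpAction)^\omega$, $\mdpState_0\mdpActionB(\mdpState_3\mdpAction)^\omega$, $\mdpState_0\mdpAction(\mdpState_2\mdpAction)^\omega$ and the family $\mdpState_0(\mdpAction\mdpState_2)^{k+1}\mdpActionB(\mdpState_3\mdpAction)^\omega$ for $k\in\IN$. Enumerate them as $\play_1,\play_2,\ldots$; each singleton $\{\play_i\}$ is measurable (as a countable intersection of cylinders) and each $\play_i$ is the unique outcome of some pure strategy, so $\payoffTuple(\play_i)\in\paySetPure{\payoffTuple}{\mdpState_0}$. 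For any strategy $\stratMDP$, I would note that $\sum_{i\geq 1}\proba^\stratMDP_{\mdpState_0}(\{\play_i\})=1$ and, using the boundedness of the two-dimensional discounted-sum payoff $\payoffTuple$ to get absolute convergence, derive the identity
\[
  \expectancy^\stratMDP_{\mdpState_0}(\payoffTuple) = \sum_{i=1}^{\infty}\proba^\stratMDP_{\mdpState_0}(\{\play_i\})\,\payoffTuple(\play_i).
\]
Setting $Q_N=\sum_{i=1}^N\proba^\stratMDP_{\mdpState_0}(\{\play_i\})$, the normalised partial sums $\frac{1}{Q_N}\sum_{i=1}^N\proba^\stratMDP_{\mdpState_0}(\{\play_i\})\,\payoffTuple(\play_i)$ are finite convex combinations of elements of $\paySetPure{\payoffTuple}{\mdpState_0}$ for all $N$ large enough that $Q_N>0$, and therefore lie in $\convex{\paySetPure{\payoffTuple}{\mdpState_0}}$. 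Since $Q_N\to 1$ and the unnormalised partial sums converge to $\expectancy^\stratMDP_{\mdpState_0}(\payoffTuple)$, so do the normalised ones, and closedness from the previous step places $\expectancy^\stratMDP_{\mdpState_0}(\payoffTuple)$ in $\convex{\paySetPure{\payoffTuple}{\mdpState_0}}$.

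The main subtlety I anticipate is the normalisation: the partial sums of the expectation series are only \emph{sub}-convex, so one has to rescale by $Q_N$ before they qualify as points of $\convex{\paySetPure{\payoffTuple}{\mdpState_0}}$, and then argue that the rescaling is harmless in the limit. Absolute convergence of the series, guaranteed by the boundedness of $\payoffTuple$, is exactly what makes the two limits $Q_N\to 1$ and $\sum_{i=1}^N\proba^\stratMDP_{\mdpState_0}(\{\play_i\})\,\payoffTuple(\play_i)\to\expectancy^\stratMDP_{\mdpState_0}(\payoffTuple)$ compose cleanly; this is morally the same mechanism that underlies the general Theorem~\ref{thm:mixing:exact}.
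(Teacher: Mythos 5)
Your proposal is correct and follows essentially the same route as the paper's proof: both establish that $\convex{\paySetPure{\payoffTuple}{\mdpState_0}}$ is closed via compactness (Lemmas~\ref{lemma:running:pure:closed} and~\ref{lemma:convex hull of compact}), write $\expectancy^{\stratMDP}_{\mdpState_0}(\payoffTuple)$ as an infinite convex combination over the countably many plays from $\mdpState_0$, and then approximate it by finite convex combinations of pure payoffs. The only cosmetic difference is the device for making the partial sums convex: you renormalise by $Q_N$, whereas the paper assigns the residual mass $1-Q_N$ to the coefficient of $\payoffTuple(\play_0)$; both are valid.
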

\begin{proof}
  By Lemma~\ref{lemma:running:pure:closed} and Lemma~\ref{lemma:convex hull of compact}, $\convex{\paySetPure{\payoffTuple}{\mdpState_0}}$ is closed.
  To end the proof, it thus suffices to show that $\paySet{\payoffTuple}{\mdpState_0}\subseteq\closure{\convex{\paySetPure{\payoffTuple}{\mdpState_0}}}$.
  We recall that there are countably many plays in $\mdp$.
  Let $(\play_\indexPosition)_{\indexPosition\in\IN}$ be an enumeration of the set of plays of $\mdp$ from $\mdpState_0$.
  Let $\stratMDP\in\stratClassAll{\mdp}$ and let $\payoffVect=\expectancy^{\stratMDP}_{\mdpState_0}(\payoffTuple)$.
  We can write
  \begin{equation}\label{equation:running:finite mixing}
    \payoffVect = \sum_{\indexPosition=0}^\infty\proba^{\stratMDP}_{\mdpState_0}(\{\play_\indexPosition\})\cdot\payoffTuple(\play_\indexPosition).
  \end{equation}

  Next, we observe that $\paySetPure{\payoffTuple}{\mdpState_0}$ is the set of payoffs of plays from $\mdpState_0$ in $\mdp$ due to the absence of random transitions.
  Therefore, informally, Equation~\eqref{equation:running:finite mixing} implies that $\payoffVect$ is an infinite convex combination of elements of $\paySetPure{\payoffTuple}{\mdpState_0}$.
  To end the proof, it suffices to show that we can approach this infinite convex combination with finite convex combinations of elements of $\paySetPure{\payoffTuple}{\mdpState_0}$.
  For all $\indexLast\in\IN$, let
  \[\payoffVectB^{(\indexLast)} =
    \left(
      \proba^{\stratMDP}_{\mdpState_0}(\{\play_0\}) +
      \sum_{\indexPosition=\indexLast+1}^\infty\proba^{\stratMDP}_{\mdpState_0}(\{\play_\indexPosition\})
    \right)\cdot
    \payoffTuple(\play_0) +
    \sum_{\indexPosition=1}^\indexLast\proba^{\stratMDP}_{\mdpState_0}(\{\play_\indexPosition\})\cdot
    \payoffTuple(\play_\indexPosition).
  \]
  For all $\indexLast\in\IN$, $\payoffVectB^{(\indexLast)}$ is derived from the partial sum of the series in Equation~\eqref{equation:running:finite mixing} up to index $\indexLast$ by increasing the coefficient of $\payoffTuple(\play_0)$ enough to obtain a convex combination of pure payoffs.
  For all $\indexLast\in\IN$, we have $\payoffVectB^{(\indexLast)}\in\convex{\paySetPure{\payoffTuple}{\mdpState_0}}$ and $\payoffVect = \lim_{\indexLast\to\infty}\payoffVectB^{(\indexLast)}$.
  This shows that $\payoffVect\in\closure{\convex{\paySetPure{\payoffTuple}{\mdpState_0}}}$ and ends the proof.
\end{proof}

\subsubsection{One dimension is not enough to show that extreme points are pure}
We formalise the intuition given on Figure~\ref{figure:running:payoff set} that the only hyperplane supporting $\paySet{\payoffTuple}{\mdpState_0}$ at $(1, 2)$ is the line carrying the segment $\ccInt{(0, 2)}{(1, 2)}$.
We reach a contradiction by assuming that there is a linear form $\linForm$ such that $\linForm((1, 2)) > \linForm((0, 2))$ and $\linForm((1, 2))\geq\linForm(\payoffVectB)$ for all $\payoffVectB\in\paySet{\payoffTuple}{\mdpState_0}$.
\begin{lemma}
  Let $\payoffVect = (1, 2)$.
  The unique supporting hyperplane of $\paySet{\payoffTuple}{\mdpState_0}$ containing $\payoffVect$ is the line defined by the equation $y=2$.
\end{lemma}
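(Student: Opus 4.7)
The plan is to characterise all non-zero linear forms $\linForm$ on $\IR^2$ whose restriction to $\paySet{\payoffTuple}{\mdpState_0}$ attains its maximum at $\payoffVect = (1,2)$, and to show that any such form is (up to a positive scalar) the projection $(x, y) \mapsto y$. Once this is established, the associated supporting hyperplane $\{\vect \in \IR^2 \mid \linForm(\vect) = \linForm(\payoffVect)\}$ must equal the line of equation $y = 2$. Since, by Lemma~\ref{lemma:running:finite mixing}, $\paySet{\payoffTuple}{\mdpState_0} = \convex{\paySetPure{\payoffTuple}{\mdpState_0}}$ and linear forms attain their extrema on a convex hull at the generating set, it suffices to test the supporting condition against the explicit description of $\paySetPure{\payoffTuple}{\mdpState_0}$ furnished by Lemma~\ref{lemma:running:pure:description}.

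Concretely, write $\linForm(x, y) = ax + by$ with $(a, b) \neq (0, 0)$ and assume $\linForm(x, y) \leq a + 2b$ for every $(x, y) \in \paySetPure{\payoffTuple}{\mdpState_0}$. Evaluating at $(0, 2)$ gives $2b \leq a + 2b$, i.e. $a \geq 0$. Evaluating at the family $\left(1 + \tfrac{3^\indexLast}{4^{\indexLast-1}}, \, 2 - \tfrac{1}{2^{\indexLast-1}}\right)_{\indexLast \in \IN}$ yields, after simplification, the family of inequalities
\[
  a \cdot \frac{3^\indexLast}{4^{\indexLast-1}} \leq b \cdot \frac{1}{2^{\indexLast-1}},
  \quad \text{i.e.} \quad
  b \geq 2a \cdot \left(\tfrac{3}{2}\right)^\indexLast.
\]
Since $(3/2)^\indexLast \to +\infty$ while $b \in \IR$ is fixed, these simultaneously force $a \leq 0$, hence $a = 0$.

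With $a = 0$, the supporting condition reduces to $by \leq 2b$ for every second coordinate $y$ appearing in $\paySetPure{\payoffTuple}{\mdpState_0}$. Evaluating at $(5, 0)$ gives $0 \leq 2b$, and since $b \neq 0$ (otherwise $\linForm$ vanishes), we deduce $b > 0$. Thus $\linForm$ is a positive multiple of $(x, y) \mapsto y$, and the hyperplane $\{\linForm = \linForm(\payoffVect)\}$ is exactly the line $y = 2$. Conversely, this line is a genuine supporting hyperplane because all pure payoffs have second coordinate at most $2$, and convex combinations preserve this bound. The main subtlety in the argument is exploiting the accumulation of pure payoffs at $(1, 2)$: it is precisely the unboundedness of $(3/2)^\indexLast$ that rules out any hyperplane tilted away from the horizontal, reflecting the fact that $(1, 2)$ is approached tangentially along the upper boundary of $\paySet{\payoffTuple}{\mdpState_0}$.
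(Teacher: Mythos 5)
Your proof is correct and uses essentially the same argument as the paper: both exploit the family of pure payoffs $\bigl(1+\tfrac{3^\indexLast}{4^{\indexLast-1}},\,2-\tfrac{1}{2^{\indexLast-1}}\bigr)$ accumulating at $(1,2)$ to derive the inequalities $b\geq 2a(3/2)^\indexLast$, whose unboundedness forces the coefficient of $x$ to vanish (the paper phrases this as a contradiction from assuming the hyperplane misses $(0,2)$). Your write-up is in fact slightly more complete, since you also pin down the sign of $b$ and explicitly verify that $y=2$ is indeed a supporting hyperplane, steps the paper leaves implicit.
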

\begin{proof}
  We assume towards a contradiction that there exists a linear form $\linForm\colon\IR^2\to\IR$ such that for all $\payoffVectB\in\paySet{\payoffTuple}{\mdpState_0}$, $\linForm(\payoffVect) \geq\linForm(\payoffVectB)$ and $\linForm(\payoffVect) \neq\linForm((0, 2))$.
  Let $\scalar, \scalarB\in\IR$ such that for all $\vect=(\vectComp_1, \vectComp_2)\in\IR^2$, $\linForm(\vect) = \scalar\vectComp_1+\scalarB\vectComp_2$.

  We observe that $\linForm(\payoffVect) > \linForm((0, 2))$ is equivalent to $\scalar > 0$.
  We also have, for all $\indexPosition\in\IN$, $\linForm(\payoffVect) \geq \linForm((1+\frac{3^\indexPosition}{4^{\indexPosition-1}}, 2 - \frac{1}{2^{\indexPosition-1}}))$, i.e., $\scalarB\geq\frac{3^\indexPosition}{2^{\indexPosition-1}}\cdot\scalar$.
  Together, these statements imply that $\scalarB$ must be greater than all real numbers, which is a contradiction.
\end{proof}

We have shown that we cannot necessarily isolate extreme points of sets of expected payoffs by intersecting them with a single hyperplane.

\subsubsection{Extreme points and Pareto-optimality}
We now show that $\corners{\paySet{\payoffTuple}{\mdpState_0}} = \paySetPure{\payoffTuple}{\mdpState_0}$ in the context of this example (this property does not hold in full generality).
It follows from $\paySet{\payoffTuple}{\mdpState_0} = \convex{\paySetPure{\payoffTuple}{\mdpState_0}}$ that all extreme points of $\paySet{\payoffTuple}{\mdpState_0}$ are the payoff of a pure strategy.
It remains to show that $\paySetPure{\payoffTuple}{\mdpState_0}\subseteq\corners{\paySet{\payoffTuple}{\mdpState_0}}$.
We first observe that $(0, 2)\in\corners{\paySet{\payoffTuple}{\mdpState_0}}$ because it is with the least first component among all elements of $\paySetPure{\payoffTuple}{\mdpState_0}$.
The main difficulty lies with the elements of $\paySetPure{\payoffTuple}{\mdpState_0}\setminus\{(0, 2)\}$.

We handle the remaining points with a two-part argument.
First, we show that any non-extreme element of the boundary of $\paySet{\payoffTuple}{\mdpState_0}$ is a convex combination of two extreme points.
This implies that all Pareto-optimal elements of $\paySet{\payoffTuple}{\mdpState_0}$ are convex combinations of no more than two vectors.
Second, we prove that for all vectors $\payoffVect\in\paySetPure{\payoffTuple}{\mdpState_0}\setminus\{(0, 2)\}$, $\payoffVect$ is either incomparable or strictly greater (with respect to the component-wise ordering) to convex combinations of any two vectors of $\paySet{\payoffTuple}{\mdpState_0}\setminus\{\payoffVect\}$.
We prove this by reasoning on a strictly concave real function whose graph includes $\paySetPure{\payoffTuple}{\mdpState_0}\setminus\{(0, 2)\}$.
The graphical intuition is as follows: any segment joining two points of the graph of a strictly concave function is beneath the curve, so any points on the segment that are comparable to $\payoffVect$ must be smaller.
This approach also yields that all vectors of $\paySetPure{\payoffTuple}{\mdpState_0}\setminus\{(0, 2)\}$ are Pareto-optimal elements of $\paySetPure{\payoffTuple}{\mdpState_0}$.

We now prove a generalisation of the first property formulated above: given a compact set $D\subseteq\IR^2$, any vector in the boundary of $\convex{D}$ is a convex combination of no more than two elements of $D$.
This can be seen as a refinement of Carathéodory's theorem for convex hulls (Theorem~\ref{theorem:caratheodory:convex}) when considering the boundary of the convex hull of a compact set in a two-dimensional setting.
\begin{lemma}\label{lemma:running:convex:border}
  Let $D\subseteq\IR^2$ be compact.
  For all $\payoffVect\in\border{\convex{D}}$, either $\payoffVect\in\corners{\convex{D}}$ or $\payoffVect$ is a convex combination of two vectors of $D\setminus\{\payoffVect\}$.
\end{lemma}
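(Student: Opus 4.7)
The plan is to intersect $\convex{D}$ with a suitable line through $\payoffVect$, reducing the argument to a one-dimensional problem where the compactness of $D$ yields the two desired vectors as endpoints of a segment.

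Assume $\payoffVect \notin \corners{\convex{D}}$. First I would produce a line $\hplane$ through $\payoffVect$ such that $\convex{D} \cap \hplane = \convex{D \cap \hplane}$. If $\spanAff{\convex{D}} = \IR^2$, relative and topological interior coincide; since $\convex{D}$ is compact (Lemma~\ref{lemma:convex hull of compact}) and $\payoffVect \in \border{\convex{D}}$, we have $\payoffVect \notin \relInt{\convex{D}}$, so Theorem~\ref{thm:hyperplane:supporting} yields a supporting hyperplane $\hplane$ at $\payoffVect$. Otherwise $\spanAff{\convex{D}}$ is a line (the point case would make $\payoffVect$ extreme), and I take $\hplane = \spanAff{\convex{D}}$. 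In both cases, any decomposition $\payoffVect = \sum_i \scalar_i \vect_i$ with $\vect_i \in D$ and $\scalar_i > 0$ must satisfy $\vect_i \in \hplane$ for every $i$: this is trivial when $\convex{D} \subseteq \hplane$, and in the supporting case the underlying linear form is maximized over $\convex{D}$ at $\payoffVect$, forcing each positive-weight summand to saturate the bound. Hence $\payoffVect \in \convex{D \cap \hplane}$.

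I would then exploit the one-dimensional structure. The set $D \cap \hplane$ is compact, so identifying $\hplane$ with $\IR$ via a parametrization, it has a minimum $\vect_a$ and a maximum $\vect_b$, both belonging to $D$. Therefore $\convex{D \cap \hplane} = \ccInt{\vect_a}{\vect_b}$ and $\payoffVect = \scalar \vect_a + (1 - \scalar) \vect_b$ for some $\scalar \in \ccInt{0}{1}$.

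The main obstacle is ruling out $\payoffVect \in \{\vect_a, \vect_b\}$, needed so that the two vectors lie in $D \setminus \{\payoffVect\}$. Suppose by contradiction that $\payoffVect = \vect_a$. Non-extremality of $\payoffVect$ in $\convex{D}$ yields $\vectB_1, \vectB_2 \in \convex{D}$ with $\vectB_1 \neq \vectB_2$ and $\payoffVect \in \ooInt{\vectB_1}{\vectB_2}$. The same hyperplane argument applied to $\vectB_1$ and $\vectB_2$ forces $\vectB_1, \vectB_2 \in \convex{D} \cap \hplane = \ccInt{\vect_a}{\vect_b}$, but $\vect_a$ is an endpoint of this segment and cannot lie strictly between two of its points; contradiction. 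A symmetric argument excludes $\payoffVect = \vect_b$, so $\vect_a, \vect_b \in D \setminus \{\payoffVect\}$ and $\payoffVect$ is a convex combination of them.
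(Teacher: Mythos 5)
Your proof is correct and follows the same core route as the paper's: both arguments split on whether $\spanAff{\convex{D}}$ is all of $\IR^2$ or a line, invoke the supporting hyperplane theorem (Theorem~\ref{thm:hyperplane:supporting}) in the former case to produce a line $\hplane$ through $\payoffVect$, and observe that every positive-coefficient atom of a decomposition of a point of $\convex{D}\cap\hplane$ must itself lie in $\hplane$, reducing everything to a one-dimensional problem. Where you diverge is in the endgame. The paper takes an arbitrary decomposition $\payoffVect=\sum_\indexSequenceB\scalar_\indexSequenceB\payoffVectB^{(\indexSequenceB)}$ with $\payoffVectB^{(\indexSequenceB)}\in D$, writes $\payoffVectB^{(\indexSequenceB)}=\payoffVect+\scalarB_\indexSequenceB\vect$ along the line, and picks one atom with $\scalarB_\indexSequenceB\geq 0$ and one with $\scalarB_{\indexSequenceB'}\leq 0$; it concludes only that $\payoffVect$ is a convex combination of at most two elements of $D$, and it never invokes the hypothesis $\payoffVect\notin\corners{\convex{D}}$ — so, read literally, it does not rule out that the chosen atoms coincide with $\payoffVect$ (e.g.\ for the trivial decomposition when $\payoffVect\in D$). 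You instead take the two extreme endpoints $\vect_a,\vect_b$ of the compact set $D\cap\hplane$, so that $\convex{D}\cap\hplane=\ccInt{\vect_a}{\vect_b}$, and use non-extremality of $\payoffVect$ to place it in the open segment $\ooInt{\vectB_1}{\vectB_2}$ for two points of $\convex{D}\cap\hplane$, which excludes $\payoffVect\in\{\vect_a,\vect_b\}$. This extra step is exactly what the later applications (Lemmas~\ref{lemma:running:pareto} and~\ref{lemma:running:pure extreme}) need, so your version delivers the stated dichotomy in full where the paper's write-up leaves that last detail implicit.
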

\begin{proof}
  If $D$ is empty or a singleton set, the result is direct.
  We thus assume that $D$ has at least two elements.
  Let $\payoffVect\in\border{\convex{D}}$.
  Because $D$ is compact, $\convex{D}$ is closed (Lemma~\ref{lemma:convex hull of compact}) and thus $\border{\convex{D}}\subseteq\convex{D}$.
  We let $\payoffVectB^{(1)}$, \ldots, $\payoffVectB^{(\indexSequence)}\in D$ and $\scalar_1$, \ldots, $\scalar_\indexSequence\in\ocInt{0}{1}$ be non-zero convex combination coefficients such that $\payoffVect = \sum_{\indexSequenceB=1}^\indexSequence\scalar_\indexSequence\payoffVectB^{(\indexSequenceB)}$.

  We first show that the vectors $\payoffVect$, $\payoffVectB^{(1)}$, \ldots, $\payoffVectB^{(\indexSequence)}$ lie on a single line.
  If $\spanAff{D}$ is a line, then this is direct.
  We thus assume that $\spanAff{D}$ is not a line.
  We obtain that $\spanAff{D}=\IR^2$: it contains a line because $D$ has at least two elements, and therefore its dimension must be two.
  This implies that $\relInt{D} = \interior{D}$, and thus that $\payoffVect\notin\relInt{D}$.
  By the supporting hyperplane theorem (Theorem~\ref{thm:hyperplane:supporting}), there exists a non-zero linear form $\linForm$ such that for all $\payoffVectB\in\convex{D}$, $\linForm(\payoffVect)\geq\linForm(\payoffVectB)$.
  It follows that for all $1\leq\indexSequenceB\leq\indexSequence$, we have $\linForm(\payoffVectB^{(\indexSequenceB)}) = \linForm(\payoffVect)$ (by linearity, because $\scalar^{(\indexSequenceB)}\neq 0$).
  In other words, $\payoffVect$ and the $\payoffVectB^{(\indexSequenceB)}$ lie on the line $(\linForm)^{-1}(\linForm(\payoffVect))$.
  
  We have shown that $\payoffVect$ and the $\payoffVectB^{(\indexSequenceB)}$ lie on a single line.
  There exists a non-zero vector $\vect\in\IR^2$ such that for all $1\leq\indexSequenceB\leq\indexSequence$, there exists $\scalarB_{\indexSequenceB}\in\IR$ such that $\payoffVectB^{(\indexSequenceB)} = \payoffVect + \scalarB_{\indexSequenceB}\cdot\vect$.
  There must exist $1\leq{\indexSequenceB},{\indexSequenceB'}\leq\indexSequence$ such that $\scalarB_{\indexSequenceB}\geq0$ and $\scalarB_{\indexSequenceB'}\leq0$.
  We conclude that $\payoffVect\in\ccInt{\payoffVectB^{(\indexSequenceB)}}{\payoffVectB^{(\indexSequenceB')}}$.
  We have shown that $\payoffVect$ is a convex combination of at most two elements of $D$.
\end{proof}

We now introduce the strictly concave function $\mathcal{F}\colon\ccInt{1}{5}\to\IR$ used in the remainder of our argument.
For all $x\in\ccInt{1}{5}$, we let
\[\mathcal{F}(x) = 2 - \left(\frac{x-1}{3}\right)^{\log_{4/3}(2)} \]
We observe that $\mathcal{F}$ is well-defined in $0$ because $\log_{4/3}(2) > 0$.
We recall that $\mathcal{F}$ is strictly concave if and only if  for all $x, x'\in\ccInt{1}{5}$ such that $x<x'$ and all $\scalarB\in\ooInt{0}{1}$, we have $\scalarB\mathcal{F}(x) + (1-\scalarB)\mathcal{F}(x')<\mathcal{F}(\scalarB x + (1-\scalarB)x')$.
We now show that $\mathcal{F}$ is decreasing and strictly concave.

\begin{lemma}\label{lemma:running:function:concave}
  The function $\mathcal{F}$ is decreasing and strictly concave.
\end{lemma}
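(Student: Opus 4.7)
The plan is to decompose $\mathcal{F}$ via an affine change of variable and then appeal to the strict convexity of a power function with exponent strictly greater than one. Set $\alpha = \log_{4/3}(2)$ and observe that $\alpha > 1$ since $4/3 < 2$. Write $\mathcal{F}(x) = 2 - g(h(x))$, where $h\colon\ccInt{1}{5}\to\ccInt{0}{4/3}$ is the strictly increasing affine bijection $x\mapsto (x-1)/3$ and $g\colon\ccInt{0}{4/3}\to\IR\colon y\mapsto y^\alpha$ is the power function. Strict monotonicity and strict convexity are preserved under composition with a strictly increasing affine map, and the outer map $z\mapsto 2-z$ turns strict increase into strict decrease and strict convex into strict concave. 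Hence both desired properties of $\mathcal{F}$ reduce to showing that $g$ is strictly increasing and strictly convex on $\ccInt{0}{4/3}$.

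On the open interval $\ooInt{0}{4/3}$, I would invoke standard calculus: $g'(y) = \alpha y^{\alpha-1} > 0$ gives strict monotonicity, and $g''(y) = \alpha(\alpha-1)y^{\alpha-2} > 0$ (here is where $\alpha > 1$ is used) gives strict convexity. Strict monotonicity extends to the left endpoint $y=0$ for free, since $g(0) = 0 < g(y)$ for all $y > 0$.

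The only step requiring real care is extending strict convexity to the endpoint $y=0$, since a second-derivative argument does not directly apply there ($g''$ vanishes or is undefined at $0$). I would handle this with the elementary inequality: for any $y_2\in\ocInt{0}{4/3}$ and $\lambda\in\ooInt{0}{1}$,
\[g\bigl((1-\lambda)\cdot 0 + \lambda y_2\bigr) = \lambda^\alpha y_2^\alpha < \lambda y_2^\alpha = (1-\lambda)g(0) + \lambda g(y_2),\]
where $\lambda^\alpha < \lambda$ precisely because $\alpha > 1$ and $\lambda\in\ooInt{0}{1}$. Combined with strict convexity on $\ooInt{0}{4/3}$, this yields strict convexity on $\ccInt{0}{4/3}$ and completes the argument. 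No genuine obstacle arises; the endpoint check is the only subtlety.
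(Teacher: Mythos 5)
Your proposal is correct and follows essentially the same route as the paper: both arguments reduce to the sign of the first and second derivatives of the power function $y\mapsto y^{\alpha}$ with $\alpha=\log_{4/3}(2)>1$, the affine change of variable and the outer map $z\mapsto 2-z$ being only bookkeeping. Your explicit treatment of the endpoint $y=0$ via $\lambda^{\alpha}<\lambda$ is a welcome refinement (the paper works only on the open interval $\ooInt{1}{5}$ and implicitly relies on continuity to pass to the closed interval), but it does not change the substance of the argument.
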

\begin{proof}
  To prove that $\mathcal{F}$ is decreasing (resp.~strictly concave), it suffices to show that it is differentiable over $\ooInt{1}{5}$ and its derivative $\mathcal{F}'$ is strictly negative (resp.~decreasing).
  For the strict concavity of $\mathcal{F}$, in practice, we show that the second derivative $\mathcal{F}''$ of $\mathcal{F}$ (defined over $\ooInt{1}{5}$) is negative.
  For all $x\in\ooInt{1}{5}$, we have
  \[\mathcal{F}'(x) = - \log_{4/3}(2) \cdot \left(\frac{x-1}{3}\right)^{\log_{4/3}(2)-1}\]
  and
  \[\mathcal{F}''(x) = - \log_{4/3}(2) \cdot \left(\log_{4/3}(2) - 1\right) \cdot \left(\frac{x-1}{3}\right)^{\log_{4/3}(2)-2}.\]
  To prove that $\mathcal{F}'$ and $\mathcal{F}''$ are negative over $\ooInt{1}{5}$, it suffices to show that $\log_{4/3}(2)>0$ and  $\log_{4/3}(2) - 1>0$.
  The second inequality implies the first and can be shown to be equivalent to $2 > \frac{4}{3}$.
  This ends the proof that $\mathcal{F}$ is decreasing and strictly concave.
\end{proof}

Next, we prove that $\paySetPure{\payoffTuple}{\mdpState_0}\setminus\{(0, 2)\}$ is included in the graph of $\mathcal{F}$.
\begin{lemma}\label{lemma:running:graph:inclusion}
  The graph of $\mathcal{F}$ includes $\paySetPure{\payoffTuple}{\mdpState_0}\setminus\{(0, 2)\}$, i.e., for all $(x, y)\in\paySetPure{\payoffTuple}{\mdpState_0}$, $\mathcal{F}(x) = y$.
\end{lemma}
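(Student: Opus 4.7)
The plan is to invoke the explicit description of $\paySetPure{\payoffTuple}{\mdpState_0}$ obtained in Lemma~\ref{lemma:running:pure:description} and check by direct computation that every point in this set except $(0,2)$ lies on the graph of $\mathcal{F}$. Since $\paySetPure{\payoffTuple}{\mdpState_0}\setminus\{(0,2)\}=\{(1,2)\}\cup\bigl\{\bigl(1+\tfrac{3^\indexLast}{4^{\indexLast-1}},\,2-\tfrac{1}{2^{\indexLast-1}}\bigr)\mid \indexLast\in\IN\bigr\}$, there are two cases to dispatch.

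First, I would treat the outlier $(1,2)$: substituting $x=1$ gives $\mathcal{F}(1)=2-(0/3)^{\log_{4/3}(2)}=2$, using that $\log_{4/3}(2)>0$ so the power of $0$ vanishes.

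For the main family, I would substitute $x=1+\tfrac{3^\indexLast}{4^{\indexLast-1}}$ into $\mathcal{F}$, getting
\[
  \mathcal{F}(x) \;=\; 2 - \left(\frac{3^{\indexLast-1}}{4^{\indexLast-1}}\right)^{\log_{4/3}(2)}
  \;=\; 2 - \left(\frac{3}{4}\right)^{(\indexLast-1)\log_{4/3}(2)}.
\]
The key identity I need is $(3/4)^{\log_{4/3}(2)}=1/2$, which is immediate from the change of base formula: $(3/4)^{\log_{4/3}(2)}=\exp\bigl(\log_{4/3}(2)\cdot \ln(3/4)\bigr)=\exp\bigl(-\ln(2)\bigr)=1/2$, using $\ln(3/4)=-\ln(4/3)$ and $\log_{4/3}(2)=\ln(2)/\ln(4/3)$. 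Plugging this back yields $\mathcal{F}(x)=2-(1/2)^{\indexLast-1}=2-\tfrac{1}{2^{\indexLast-1}}$, matching the second coordinate given by Lemma~\ref{lemma:running:pure:description}.

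There is no real obstacle here; the definition of $\mathcal{F}$ was essentially reverse-engineered from the pure payoffs. The only subtlety worth stating carefully is the change-of-base manipulation $(3/4)^{\log_{4/3}(2)}=1/2$, which is what makes the exponent $\log_{4/3}(2)$ the right one to ensure that doubling $\indexLast-1$ in the second coordinate matches the factor of $3/4$ in the first. Once this identity is recorded, the proof is just substitution, so I would present it as a short verification split into the two cases above.
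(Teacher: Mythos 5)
Your proof is correct and follows essentially the same route as the paper's: verify $\mathcal{F}(1)=2$ directly, then substitute $x=1+\tfrac{3^\ell}{4^{\ell-1}}$ and use the identity $(3/4)^{\log_{4/3}(2)}=1/2$ (the paper writes it as $\bigl((4/3)^{\log_{4/3}(2)}\bigr)^{-(\ell-1)}=2^{-(\ell-1)}$, which is the same manipulation). Nothing is missing.
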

\begin{proof}
  First, we observe that $\mathcal{F}(1) = 2$, i.e., $(1, 2)$ is in the graph of $\mathcal{F}$.
  It remains to show that for all $\indexPosition\in\IN$, we have
  \[\mathcal{F}\left(1+\frac{3^\indexPosition}{4^{\indexPosition-1}}\right) = 2-\frac{1}{2^{\indexPosition-1}}.\]
  We let $\indexPosition\in\IN$.
  We obtain the following equalities:
  \begin{align*}
    \mathcal{F}\left(1+\frac{3^\indexPosition}{4^{\indexPosition-1}}\right)
    & = 2 - \left(\frac{3^{\indexPosition-1}}{4^{\indexPosition-1}}\right)^{\log_{4/3}(2)}
    \\
    & = 2 - \left(\frac{3}{4}\right)^{(\indexPosition-1)\cdot\log_{4/3}(2)} \\
    & = 2 - \left(\left(\frac{4}{3}\right)^{\log_{4/3}(2)}\right)^{- (\indexPosition-1)} \\
    & = 2 - \frac{1}{2^{\indexPosition-1}}.
  \end{align*}
  We have shown that all elements of $\paySetPure{\payoffTuple}{\mdpState_0}\setminus\{(0, 2)\}$ are in the graph of $\mathcal{F}$.
\end{proof}

We now prove that all vectors $\payoffVect\in\paySetPure{\payoffTuple}{\mdpState_0}\setminus\{(0, 2)\}$ are not smaller than convex combinations of two elements of $\paySetPure{\payoffTuple}{\mdpState_0}\setminus\{\payoffVect\}$.
\begin{lemma}\label{lemma:running:two-convex}
  Let $\payoffVect\in\paySetPure{\payoffTuple}{\mdpState_0}\setminus\{(0, 2)\}$.
  For all $\payoffVectB_1, \payoffVectB_2\in\paySetPure{\payoffTuple}{\mdpState_0}\setminus\{\payoffVect\}$ and all $\scalar\in\ccInt{0}{1}$, $\payoffVect$ is incomparable to or strictly greater than $\scalar\cdot\payoffVectB_1+(1-\scalar)\cdot\payoffVectB_2$.
\end{lemma}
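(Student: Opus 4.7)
The plan is to argue by contradiction: suppose $\payoffVect \leq \scalar\payoffVectB_1 + (1-\scalar)\payoffVectB_2$ component-wise, and derive a contradiction from the strict concavity and strict monotonicity of $\mathcal{F}$ on $\ccInt{1}{5}$ established by Lemma~\ref{lemma:running:function:concave}. By Lemma~\ref{lemma:running:graph:inclusion}, every element of $\paySetPure{\payoffTuple}{\mdpState_0}\setminus\{(0, 2)\}$ lies on the graph of $\mathcal{F}$, and the point $(0, 2)$ extends this graph naturally if we define $\tilde{\mathcal{F}}\colon\ccInt{0}{5}\to\IR$ by $\tilde{\mathcal{F}}(x) = 2$ for $x\in\ccInt{0}{1}$ and $\tilde{\mathcal{F}}(x) = \mathcal{F}(x)$ for $x\in\ccInt{1}{5}$. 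I would write $\payoffVect = (x_0, \tilde{\mathcal{F}}(x_0))$ with $x_0 \geq 1$ (since $\payoffVect \neq (0, 2)$) and $\payoffVectB_i = (x_i, \tilde{\mathcal{F}}(x_i))$ with $x_i \neq x_0$ (two distinct points on the graph of a function must have distinct abscissae), and denote the convex combination by $(\bar{x}, \bar{y})$.

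The key technical step is a strict chord inequality for $\tilde{\mathcal{F}}$: if $x_1 \neq x_2$, $\scalar \in \ooInt{0}{1}$, and $\max(x_1, x_2) > 1$, then $\bar{y} < \tilde{\mathcal{F}}(\bar{x})$. When both $x_1, x_2 \in \ccInt{1}{5}$, this is immediate from strict concavity of $\mathcal{F}$. When one abscissa equals $0$ and the other, say $x_2$, exceeds $1$, I would rewrite $\tilde{\mathcal{F}}(0) = 2 = \mathcal{F}(1)$ and split on $\bar{x}$: if $\bar{x} \leq 1$, the inequality reduces to $\mathcal{F}(x_2) < 2$; if $\bar{x} > 1$, strict concavity of $\mathcal{F}$ applied to the points $1$ and $x_2$ yields $\scalar \cdot 2 + (1-\scalar)\mathcal{F}(x_2) < \mathcal{F}(\scalar + (1-\scalar)x_2)$, and the strict monotonicity of $\mathcal{F}$ combined with $\bar{x} = (1-\scalar)x_2 < \scalar + (1-\scalar)x_2$ transfers this bound to $\mathcal{F}(\bar{x})$.

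Given the chord inequality, the main case proceeds as follows: the hypothesis $\tilde{\mathcal{F}}(x_0) \leq \bar{y}$ together with the chord bound yields $\tilde{\mathcal{F}}(x_0) < \tilde{\mathcal{F}}(\bar{x})$. Since $x_0 \geq 1$, either strict monotonicity of $\mathcal{F}$ on $\ccInt{1}{5}$ (if $\bar{x} \geq 1$) or the trivial inequality $\bar{x} < 1 \leq x_0$ forces $x_0 > \bar{x}$, contradicting $x_0 \leq \bar{x}$. The remaining degenerate situations, namely $\scalar \in \{0, 1\}$, $x_1 = x_2$, or $\scalar \in \ooInt{0}{1}$ with $x_1, x_2 \in \ccInt{0}{1}$, reduce either to comparing $\payoffVect$ with a single pure payoff $\payoffVectB \neq \payoffVect$ lying on the graph of $\tilde{\mathcal{F}}$ (where weak decreasingness combined with $\payoffVect \neq (0, 2)$ rules out $\payoffVect \leq \payoffVectB$), or to observing that $\bar{x} \leq 1$ while $x_0 \geq 1$ with strict inequality somewhere because $\payoffVectB_i \neq \payoffVect$.

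I expect the main obstacle to be the chord inequality when one of the abscissae equals $0$: the extension $\tilde{\mathcal{F}}$ is only weakly concave across the break point $x = 1$, so the strict chord bound cannot come from concavity alone and must be engineered by combining strict concavity on $\ccInt{1}{5}$ with strict monotonicity, using the identity $\mathcal{F}(1) = \tilde{\mathcal{F}}(0)$ as a bridge to lift the problem onto the strictly concave piece.
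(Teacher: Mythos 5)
Your proof is correct, and it rests on the same two pillars as the paper's (Lemma~\ref{lemma:running:graph:inclusion} placing the pure payoffs on the graph of $\mathcal{F}$, and the strict concavity and strict decrease of $\mathcal{F}$ from Lemma~\ref{lemma:running:function:concave}), but it is organised differently at both places where the paper does extra work. First, the paper keeps $(0,2)$ off the graph and deals with it in a coda, replacing it by the dominating point $(1,2)$ to fall back on the main case; you instead absorb $(0,2)$ into the extended function $\tilde{\mathcal{F}}$ and prove a strict chord inequality $\bar{y}<\tilde{\mathcal{F}}(\bar{x})$ directly across the break point $x=1$ — and you correctly identify that strictness there cannot come from concavity of $\tilde{\mathcal{F}}$ (which is only weakly concave across $x=1$) and must be recovered by routing through $\tilde{\mathcal{F}}(0)=\mathcal{F}(1)$ and the strict decrease of $\mathcal{F}$. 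Second, in the main case the paper splits on whether the abscissa of $\payoffVect$ lies outside or inside $\ccInt{x_1}{x_2}$, using monotonicity alone in the first sub-case and, in the second, a linear form aligned with the chord (increasing for the component-wise order because the chord slope is negative) to manufacture the contradiction $\linForm(\vect)<\linForm(\payoffVect)<\linForm(\payoffVectB)=\linForm(\vect)$; your single argument — from $\payoffVect\leq\payoffVectB$ deduce $\tilde{\mathcal{F}}(x_0)\leq\bar{y}<\tilde{\mathcal{F}}(\bar{x})$, then force $x_0>\bar{x}$ by strict decrease, contradicting $x_0\leq\bar{x}$ — handles both positions uniformly and dispenses with the linear form entirely. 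Your route is the more economical one; the paper's buys a geometric picture (chords and supporting lines) consistent with the hyperplane arguments used elsewhere in Section~\ref{section:running} and Section~\ref{section:achievable}. One cosmetic remark: in your single-point degenerate case you invoke ``weak decreasingness,'' but since $x_0\geq 1$ forces both abscissae into $\ccInt{1}{5}$ there, it is really the strict decrease of $\mathcal{F}$ that closes that case; the conclusion is unaffected.
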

\begin{proof}
  Let $\payoffVect = (\payoffComp_1, \payoffComp_2)$.
  Let $\payoffVectB_1, \payoffVectB_2\in\paySetPure{\payoffTuple}{\mdpState_0}\setminus\{\payoffVect\}$.
  We fix $\scalar\in\ooInt{0}{1}$; the above statement for $\scalar\in\{0, 1\}$ is covered by the cases $\payoffVectB_1=\payoffVectB_2$.
  We let $\payoffVectB = \scalar\cdot\payoffVectB_1+(1-\scalar)\cdot\payoffVectB_2$.
  In the first part of this proof, we assume that $\payoffVectB_1\neq (0, 2)$ and $\payoffVectB_2\neq(0, 2)$ and discuss the case when this assumption is lifted at the end of the proof.

  By Lemma~\ref{lemma:running:graph:inclusion}, we have $\payoffComp_2=\mathcal{F}(\payoffComp_1)$ and we can write $\payoffVectB_1 = (x_1, \mathcal{F}(x_1))$ and $\payoffVectB_2 = (x_2, \mathcal{F}(x_2))$ where $x_1$ and $x_2$ are the first components of $\payoffVectB_1$ and $\payoffVectB_2$ respectively.
  We assume without loss of generality that $x_1\leq x_2$.
  All elements of $\paySetPure{\payoffTuple}{\mdpState_0}$ have different $x$-components, hence  $x_1\neq\payoffComp_1$ and $x_2\neq\payoffComp_1$.

  We first assume that $x_1\leq x_2< \payoffComp_1$.
  In this case, by Lemma~\ref{lemma:running:function:concave} ($\mathcal{F}$ is strictly decreasing), we have $\mathcal{F}(x_1) \geq \mathcal{F}(x_2) > \payoffComp_1$.
  It follows that $\payoffVect$ and $\payoffVectB$ are incomparable: the first component of $\payoffVect$ is greater than that of $\payoffVectB$ but the second component of $\payoffVect$ is smaller than that of $\payoffVectB$.
  In the case that $\payoffComp_1 <x_1\leq x_2$, we obtain in a similar fashion that $\payoffVect$ and $\payoffVectB$ are incomparable.

  We now assume that $x_1 < \payoffComp_1 < x_2$.
  We let $\scalarB\in\ooInt{0}{1}$ such that $\payoffComp_1 = \scalarB\cdot x_1+(1-\scalarB) x_2$, which exists because $\payoffComp_1\in\ooInt{x_1}{x_2}$.
  If $\payoffVectB$ is not comparable to $\payoffVect$, there is nothing to show.
  We assume that these two vectors are comparable and show that $\payoffVectB < \payoffVect$.
  We proceed by contradiction and assume that $\payoffVectB \geq \payoffVect$.
  
  We consider the linear form $\linForm\colon\IR^2\to\IR$ defined by $\linForm(\vect) = - \frac{\mathcal{F}(x_2) - \mathcal{F}(x_1)}{x_2-x_1}\cdot\vectComp_1 + \vectComp_2$ for all $\vect=(\vectComp_1, \vectComp_2)\in\IR^2$.
  We have $\linForm(\payoffVectB_1)=\linForm(\payoffVectB_2)$ and thus $(\linForm)^{-1}(\linForm(\payoffVectB_1))$ is the line carrying the segment $\ccInt{\payoffVectB_1}{\payoffVectB_2}$.
  Because $\mathcal{F}$ is decreasing (Lemma~\ref{lemma:running:function:concave}), we have $\frac{\mathcal{F}(x_2) - \mathcal{F}(x_1)}{x_2-x_1} < 0$.
  This implies that $\linForm$ is increasing in the sense that for all $\vect_1, \vect_2\in\IR^2$, $\vect_1 < \vect_2$ implies that $\linForm(\vect_1) < \linForm(\vect_2)$.

  Let $\vect = (\payoffComp_1, \scalarB\cdot \mathcal{F}(x_1)+(1-\scalarB) \mathcal{F}(x_2))$.
  We have $\vect < \payoffVect$ by strict concavity of $\mathcal{F}$ (Lemma~\ref{lemma:running:function:concave}).
  Furthermore, we have $\linForm(\vect) = \linForm(\payoffVectB)$ because both of these vectors are in the segment $\ccInt{\payoffVectB_1}{\payoffVectB_2}$.
  We obtain, because $\linForm$ is increasing and $\vect <\payoffVect < \payoffVectB$, that $\linForm(\vect) < \linForm(\payoffVect) < \linForm(\payoffVectB) = \linForm(\vect)$.
  This is a contradiction.
  This ends the proof in the case that $\payoffVectB_1\neq(0, 2)$ and $\payoffVectB_2\neq(0, 2)$.

  Next, we assume that $\payoffVectB_1=\payoffVectB_2 = (0, 2)$.
  We obtain that $\payoffVectB = (0, 2)$, which is smaller than $(1, 2)$ and incomparable to all elements of $\paySetPure{\payoffTuple}{\mdpState_0}\setminus\{(1, 2), (0, 2)\}$.
  Finally, we assume that only one of $\payoffVectB_1$ or $\payoffVectB_2$ are $(0, 2)$.
  We assume without loss of generality that $\payoffVectB_1=(0, 2)$.
  If $\payoffVectB > \payoffVect$, then we would have $\scalar\cdot(1, 2) +(1-\scalar)\cdot\payoffVectB_2 > \payoffVectB > \payoffVect$, which would contradict the first part of the proof.
\end{proof}

With Lemma~\ref{lemma:running:convex:border} and Lemma~\ref{lemma:running:two-convex}, we directly obtain that all elements of $\paySetPure{\payoffTuple}{\mdpState_0}\setminus\{(0, 2)\}$ are Pareto-optimal elements of $\paySet{\payoffTuple}{\mdpState_0}$: if they were not Pareto-optimal, then they would be dominated by an element of the boundary of $\paySet{\payoffTuple}{\mdpState_0}$ which would be the convex combination of two vectors in $\paySetPure{\payoffTuple}{\mdpState_0}$.
\begin{lemma}\label{lemma:running:pareto}
  Let $\payoffVect\in\paySetPure{\payoffTuple}{\mdpState_0}\setminus\{(0, 2)\}$.
  Then $\payoffVect$ is a Pareto-optimal element of $\paySet{\payoffTuple}{\mdpState_0}$.
\end{lemma}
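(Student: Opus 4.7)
The plan is to argue by contradiction: assume there exists $\payoffVectB\in\paySet{\payoffTuple}{\mdpState_0}$ with $\payoffVectB\geq\payoffVect$ and $\payoffVectB\neq\payoffVect$, push $\payoffVectB$ outward along the ray from $\payoffVect$ to a boundary point $\payoffVectC$ of $\paySet{\payoffTuple}{\mdpState_0}$, and then invoke Lemma~\ref{lemma:running:convex:border} to decompose $\payoffVectC$ as a convex combination of at most two elements of $\paySetPure{\payoffTuple}{\mdpState_0}$ so that Lemma~\ref{lemma:running:two-convex} can be applied to reach a contradiction.

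Concretely, since $\paySet{\payoffTuple}{\mdpState_0}=\convex{\paySetPure{\payoffTuple}{\mdpState_0}}$ (Lemma~\ref{lemma:running:finite mixing}) and $\paySetPure{\payoffTuple}{\mdpState_0}$ is compact (Lemma~\ref{lemma:running:pure:closed}, together with its inclusion in $\ccInt{0}{5}\times\ccInt{0}{2}$), Lemma~\ref{lemma:convex hull of compact} gives that $\paySet{\payoffTuple}{\mdpState_0}$ is compact. Hence $t^{\ast}=\max\{t\geq 1\mid \payoffVect+t(\payoffVectB-\payoffVect)\in\paySet{\payoffTuple}{\mdpState_0}\}$ is well-defined; set $\payoffVectC=\payoffVect+t^{\ast}(\payoffVectB-\payoffVect)$. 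Then $\payoffVectC\in\border{\paySet{\payoffTuple}{\mdpState_0}}$ (any interior point would allow pushing further), $\payoffVectC\geq\payoffVect$ and $\payoffVectC\neq\payoffVect$.

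By Lemma~\ref{lemma:running:convex:border} applied to $D=\paySetPure{\payoffTuple}{\mdpState_0}$, either $\payoffVectC\in\corners{\paySet{\payoffTuple}{\mdpState_0}}$ (in which case $\payoffVectC\in\paySetPure{\payoffTuple}{\mdpState_0}$, as extreme points of a convex hull must lie in the generating set), or $\payoffVectC=\scalar\payoffVectB_1+(1-\scalar)\payoffVectB_2$ with $\payoffVectB_1,\payoffVectB_2\in\paySetPure{\payoffTuple}{\mdpState_0}\setminus\{\payoffVectC\}$ and $\scalar\in\ccInt{0}{1}$. In the first case, $\payoffVectC\neq(0,2)$ (otherwise $\payoffVect\leq(0,2)$ would force the first component of $\payoffVect$ to be non-positive, contradicting Lemma~\ref{lemma:running:pure:description}), so Lemma~\ref{lemma:running:two-convex} applied to the trivial combination $\payoffVectC=1\cdot\payoffVectC+0\cdot\payoffVectC$ says $\payoffVect$ is incomparable with or strictly greater than $\payoffVectC$, contradicting $\payoffVect<\payoffVectC$.

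The main obstacle is the second case, because Lemma~\ref{lemma:running:two-convex} requires both $\payoffVectB_1,\payoffVectB_2$ to lie in $\paySetPure{\payoffTuple}{\mdpState_0}\setminus\{\payoffVect\}$, while a priori one of them might equal $\payoffVect$. If both differ from $\payoffVect$, Lemma~\ref{lemma:running:two-convex} applies directly and contradicts $\payoffVectC\geq\payoffVect$, $\payoffVectC\neq\payoffVect$. Otherwise, say $\payoffVectB_1=\payoffVect$; then $\scalar\in\ooInt{0}{1}$ (the endpoints would give $\payoffVectC=\payoffVect$ or $\payoffVectC=\payoffVectB_2$, both excluded), and subtracting $\scalar\payoffVect$ from the inequality $\payoffVectC\geq\payoffVect$ yields $\payoffVectB_2\geq\payoffVect$ with $\payoffVectB_2\neq\payoffVect$ (else $\payoffVectC=\payoffVect$). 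As above, excluding $\payoffVectB_2=(0,2)$ by the first-component argument, we now apply Lemma~\ref{lemma:running:two-convex} to $\payoffVectB_2$ alone and reach the same contradiction. This exhausts all cases and proves the Pareto-optimality of $\payoffVect$.
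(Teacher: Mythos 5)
Your proof is correct and follows essentially the same route as the paper's: assume a dominating point exists, push it out to the boundary of the compact set $\paySet{\payoffTuple}{\mdpState_0}$, and derive a contradiction from Lemmas~\ref{lemma:running:convex:border} and~\ref{lemma:running:two-convex}. You are in fact slightly more careful than the paper, which leaves implicit the case where the two-point decomposition of the boundary point involves $\payoffVect$ itself; your reduction to the single vector $\payoffVectB_2\geq\payoffVect$ handles that case cleanly.
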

\begin{proof}
  Assume towards a contradiction that there exists $\payoffVectB\in\paySet{\payoffTuple}{\mdpState_0}$ such that $\payoffVect < \payoffVectB$.
  We show that there exists $\payoffVectB'\in\border{\paySet{\payoffTuple}{\mdpState_0}}$  such that $\payoffVectB' > \payoffVect$.
  This yields a contradiction with Lemma~\ref{lemma:running:convex:border} and Lemma~\ref{lemma:running:two-convex}.

  If $\payoffVectB\in\border{\paySet{\payoffTuple}{\mdpState_0}}$, then we let $\payoffVectB'=\payoffVectB$.
  We thus assume that $\payoffVectB\in\interior{\paySet{\payoffTuple}{\mdpState_0}}$.
  Let $\scalar=\sup\{\scalarB\geq 0\mid \payoffVectB + \scalarB\oneVect\in\paySet{\payoffTuple}{\mdpState_0}\}$.
  We have $\scalar\in\IR$ because $\payoffVectB\in\paySet{\payoffTuple}{\mdpState_0}$ (i.e., $0\in\{\scalarB\geq 0\mid \payoffVectB + \scalarB\oneVect\in\paySet{\payoffTuple}{\mdpState_0}\}$ thus $\scalar > -\infty$) and $\paySet{\payoffTuple}{\mdpState_0}$ is bounded (thus $\scalar < +\infty)$.
  Furthermore, because $\paySet{\payoffTuple}{\mdpState_0}$ is closed, we have $\payoffVectB + \scalar\oneVect\in\border{\paySet{\payoffTuple}{\mdpState_0}}$.
  We obtain the announced contradiction by letting $\payoffVectB' = \payoffVectB + \scalar\oneVect > \payoffVect$.
\end{proof}

We conclude this section by showing that the set of extreme points of $\paySet{\payoffTuple}{\mdpState_0}$ is the set of pure payoffs.
\begin{lemma}\label{lemma:running:pure extreme}
  We have $\corners{\paySet{\payoffTuple}{\mdpState_0}} =  \paySetPure{\payoffTuple}{\mdpState_0}$.
\end{lemma}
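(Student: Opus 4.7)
The plan is to prove both inclusions separately, leveraging the earlier lemmas established in this section. The forward inclusion $\corners{\paySet{\payoffTuple}{\mdpState_0}} \subseteq \paySetPure{\payoffTuple}{\mdpState_0}$ is essentially free from Lemma~\ref{lemma:running:finite mixing}: since $\paySet{\payoffTuple}{\mdpState_0} = \convex{\paySetPure{\payoffTuple}{\mdpState_0}}$, any extreme point $\payoffVect$ must already lie in $\paySetPure{\payoffTuple}{\mdpState_0}$. Otherwise $\payoffVect$ would be a nontrivial convex combination of pure payoffs all distinct from $\payoffVect$, contradicting extremality (i.e.\ $\payoffVect\in\convex{\paySet{\payoffTuple}{\mdpState_0}\setminus\{\payoffVect\}}$).

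For the reverse inclusion $\paySetPure{\payoffTuple}{\mdpState_0} \subseteq \corners{\paySet{\payoffTuple}{\mdpState_0}}$, I would split on whether or not $\payoffVect = (0,2)$. The point $(0,2)$ is handled directly: by Lemma~\ref{lemma:running:pure:description}, it is the only pure payoff with first component $0$, every other pure payoff having first component at least $1$. Since every element of $\paySet{\payoffTuple}{\mdpState_0} = \convex{\paySetPure{\payoffTuple}{\mdpState_0}}$ is a convex combination of pure payoffs, and any such combination has first component equal to $0$ only when all involved pure payoffs (with positive weight) are $(0,2)$ itself, no element of $\paySet{\payoffTuple}{\mdpState_0}\setminus\{(0,2)\}$ can have vanishing first component. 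Thus $(0,2)$ cannot be written as a convex combination of elements of $\paySet{\payoffTuple}{\mdpState_0}\setminus\{(0,2)\}$, i.e., $(0,2) \in \corners{\paySet{\payoffTuple}{\mdpState_0}}$.

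For $\payoffVect \in \paySetPure{\payoffTuple}{\mdpState_0}\setminus\{(0,2)\}$, I would combine Pareto-optimality with the two-dimensional boundary lemma. By Lemma~\ref{lemma:running:pareto}, $\payoffVect$ is Pareto-optimal in $\paySet{\payoffTuple}{\mdpState_0}$, which forces $\payoffVect \in \border{\paySet{\payoffTuple}{\mdpState_0}}$: any interior point admits a neighbourhood contained in $\paySet{\payoffTuple}{\mdpState_0}$, which would contain a strictly larger vector, contradicting Pareto-optimality. Since $\paySetPure{\payoffTuple}{\mdpState_0}$ is compact (Lemma~\ref{lemma:running:pure:closed} and boundedness), Lemma~\ref{lemma:running:convex:border} applied to $D = \paySetPure{\payoffTuple}{\mdpState_0}$ gives the dichotomy: either $\payoffVect \in \corners{\paySet{\payoffTuple}{\mdpState_0}}$ (which is what we want), or $\payoffVect$ equals a convex combination of two vectors of $\paySetPure{\payoffTuple}{\mdpState_0}\setminus\{\payoffVect\}$. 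The latter is ruled out by Lemma~\ref{lemma:running:two-convex}, which asserts that any such convex combination is incomparable to or strictly smaller than $\payoffVect$ (and hence in particular different from it).

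The main obstacle is ensuring the Pareto-optimal-implies-boundary step is cleanly argued and that the hypotheses of Lemma~\ref{lemma:running:convex:border} are met; beyond that, the statement is essentially a recombination of the results already proved in Section~\ref{section:running:proofs}, so no new computation is required.
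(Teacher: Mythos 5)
Your proposal is correct and follows essentially the same route as the paper's proof: the forward inclusion via $\paySet{\payoffTuple}{\mdpState_0}=\convex{\paySetPure{\payoffTuple}{\mdpState_0}}$ (Lemma~\ref{lemma:running:finite mixing}), the special treatment of $(0,2)$ via its first coordinate, and the contradiction for the remaining pure payoffs obtained by combining Lemma~\ref{lemma:running:pareto}, Lemma~\ref{lemma:running:convex:border} and Lemma~\ref{lemma:running:two-convex}. Your explicit justification that Pareto-optimality forces boundary membership is a step the paper leaves implicit, but it is the same argument overall.
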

\begin{proof}
  First, we show that $\corners{\paySet{\payoffTuple}{\mdpState_0}} \subseteq \paySetPure{\payoffTuple}{\mdpState_0}$.
  It suffices to show that all vectors $\payoffVect\in\paySet{\payoffTuple}{\mdpState_0}\setminus\paySetPure{\payoffTuple}{\mdpState_0}$ are not extreme points of $\paySet{\payoffTuple}{\mdpState_0}$.
  Let $\payoffVect\in\paySet{\payoffTuple}{\mdpState_0}\setminus\paySetPure{\payoffTuple}{\mdpState_0}$.
  By Lemma~\ref{lemma:running:finite mixing}, we have $\payoffVect\in\paySet{\payoffTuple}{\mdpState_0} = \convex{\paySetPure{\payoffTuple}{\mdpState_0}}$.
  This implies that $\payoffVect\in\convex{\paySet{\payoffTuple}{\mdpState_0}\setminus\{\payoffVect\}}$ because $\payoffVect\notin\paySetPure{\payoffTuple}{\mdpState_0}$, and thus $\payoffVect\notin\corners{\paySet{\payoffTuple}{\mdpState_0}}$.

  Conversely, let $\payoffVect\in\paySetPure{\payoffTuple}{\mdpState_0}$.
  First, we assume that $\payoffVect = (0, 2)$.
  The first coordinate of all other vectors of $\paySetPure{\payoffTuple}{\mdpState_0}$ is greater than or equal to $1$.
  This therefore also applies to any convex combination of vectors in $\paySetPure{\payoffTuple}{\mdpState_0}\setminus\{(0, 2)\}$.
  It follows that $\payoffVect\in\corners{\paySet{\payoffTuple}{\mdpState_0}}$.
  Second, we assume that $\payoffVect\neq(0, 2)$.
  Assume towards a contradiction that $\payoffVect\notin\corners{\paySet{\payoffTuple}{\mdpState_0}}$.
  By Lemma~\ref{lemma:running:pareto}, $\payoffVect$ is a Pareto-optimal element of $\paySet{\payoffTuple}{\mdpState_0}$, and thus lies on the boundary of this set.
  We obtain that $\payoffVect$ is the convex combination of two elements of $\paySetPure{\payoffTuple}{\mdpState_0}\setminus\{\payoffVect\}$ by Lemma~\ref{lemma:running:convex:border} (which is applicable here by Lemmas~\ref{lemma:running:finite mixing} and~\ref{lemma:running:pure:closed}) and the assumption that $\payoffVect\notin\corners{\paySet{\payoffTuple}{\mdpState_0}}$.
  This yields a contradiction with Lemma~\ref{lemma:running:two-convex}, which states that there is no convex combination of two vectors of $\paySetPure{\payoffTuple}{\mdpState_0}\setminus\{\payoffVect\}$ that is greater than or equal to $\payoffVect$.
\end{proof}

\subsubsection{Memory cannot be traded for randomness}
We prove that the only way to obtain an expected payoff in $\paySetPure{\payoffTuple}{\mdpState_0}$ is through a strategy that induces a single play from $\mdpState_0$ (i.e., intuitively, a strategy that is pure in practice).

\begin{lemma}\label{lemma:running:strategies}
  Let $\stratMDP\in\stratClassAll{\mdp}$ be a strategy such that at least two plays starting in $\mdpState_0$ are consistent with $\stratMDP$.
  Then $\expectancy^{\stratMDP}_{\mdpState_0}(\payoffTuple)\notin\paySetPure{\payoffTuple}{\mdpState}$.
\end{lemma}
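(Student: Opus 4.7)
The plan is to exploit the identity $\corners{\paySet{\payoffTuple}{\mdpState_0}} = \paySetPure{\payoffTuple}{\mdpState_0}$ established in Lemma~\ref{lemma:running:pure extreme}: if I can express $\expectancy^{\stratMDP}_{\mdpState_0}(\payoffTuple)$ as a non-trivial convex combination of two distinct elements of $\paySet{\payoffTuple}{\mdpState_0}$, then the extreme-point property rules out membership in $\paySetPure{\payoffTuple}{\mdpState_0}$.

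First I would pick two distinct plays $\play_1,\play_2$ from $\mdpState_0$ consistent with $\stratMDP$ and occurring with positive probabilities $\alpha_i = \proba^{\stratMDP}_{\mdpState_0}(\{\play_i\}) \in \ooInt{0}{1}$. Because $\mdp$ has deterministic transitions, there are only countably many plays from $\mdpState_0$; since $\payoffTuple$ is bounded, for each $i \in \{1,2\}$ I can split the corresponding absolutely convergent series and write
\[
  \expectancy^{\stratMDP}_{\mdpState_0}(\payoffTuple) \;=\; \alpha_i \payoffTuple(\play_i) + (1-\alpha_i)\payoffVectC_i, \qquad \payoffVectC_i \;=\; \frac{1}{1-\alpha_i}\sum_{\play \neq \play_i} \proba^{\stratMDP}_{\mdpState_0}(\{\play\})\payoffTuple(\play).
\]
Each $\payoffVectC_i$ is a countable convex combination of vectors of $\paySetPure{\payoffTuple}{\mdpState_0}$. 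Using the partial-sum approximation argument from the proof of Lemma~\ref{lemma:running:finite mixing}, together with the closedness of $\convex{\paySetPure{\payoffTuple}{\mdpState_0}}$ (which follows from Lemma~\ref{lemma:running:pure:closed} and Lemma~\ref{lemma:convex hull of compact}), I obtain $\payoffVectC_i \in \convex{\paySetPure{\payoffTuple}{\mdpState_0}} = \paySet{\payoffTuple}{\mdpState_0}$.

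Suppose towards a contradiction that $\payoffVect := \expectancy^{\stratMDP}_{\mdpState_0}(\payoffTuple) \in \paySetPure{\payoffTuple}{\mdpState_0}$. By Lemma~\ref{lemma:running:pure extreme}, $\payoffVect$ is an extreme point of $\paySet{\payoffTuple}{\mdpState_0}$. The decomposition with $i=1$ exhibits $\payoffVect$ as a non-trivial convex combination of two elements of $\paySet{\payoffTuple}{\mdpState_0}$, so extremality forces $\payoffVect = \payoffTuple(\play_1)$; the decomposition with $i=2$ likewise forces $\payoffVect = \payoffTuple(\play_2)$. However, the enumeration of $\paySetPure{\payoffTuple}{\mdpState_0}$ in Lemma~\ref{lemma:running:pure:description} shows that distinct plays from $\mdpState_0$ yield pairwise distinct vectors (the second coordinate $2-\tfrac{1}{2^{\indexLast-1}}$ is strictly increasing in $\indexLast$, the case $\indexLast = 0$ has second coordinate $0$, and the two special plays give second coordinate $2$ with different first coordinates). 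Hence $\payoffTuple(\play_1) \neq \payoffTuple(\play_2)$, a contradiction.

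The only subtle step is justifying $\payoffVectC_i \in \paySet{\payoffTuple}{\mdpState_0}$; the rest is bookkeeping. An alternative to the closed-convex-hull argument would be to directly exhibit a strategy realising $\payoffVectC_i$ by conditioning $\stratMDP$ on not producing $\play_i$, but chaining the already established Lemmas~\ref{lemma:running:pure:closed},~\ref{lemma:convex hull of compact} and~\ref{lemma:running:finite mixing} gives a shorter argument.
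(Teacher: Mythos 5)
Your argument is correct in substance and reaches the conclusion by a somewhat cleaner route than the paper, though it rests on the same three ingredients: the countable-play decomposition of the expectation, the closedness of $\convex{\paySetPure{\payoffTuple}{\mdpState_0}}$ (Lemmas~\ref{lemma:running:pure:closed} and~\ref{lemma:convex hull of compact}), and the fact that every pure payoff is an extreme point (Lemma~\ref{lemma:running:pure extreme}). The paper instead fixes a candidate pure payoff $\payoffTuple(\play_\indexLast)$, treats $(0,2)$ and $(1,2)$ by a separate direct computation on the components, and for the remaining points rewrites the expectation as a convex combination over the plays other than $\play_\indexLast$; this forces it to argue in addition that $\payoffTuple(\play_\indexLast)$ is an isolated point of $\paySetPure{\payoffTuple}{\mdpState_0}$, so that the convex hull of the punctured set is still closed. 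Your symmetric two-play decomposition, combined with the observation that $\play\mapsto\payoffTuple(\play)$ is injective on the plays from $\mdpState_0$, handles all cases uniformly and only needs closedness of the full convex hull, which is already available. The extremality step is stated tersely but is sound: if the normalised remainder equalled the expectation itself, the decomposition would collapse to $\expectancy^{\stratMDP}_{\mdpState_0}(\payoffTuple)=\payoffTuple(\play_i)$, and otherwise extremality is violated outright.

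One step is asserted rather than proved: the hypothesis gives two plays \emph{consistent} with $\stratMDP$, whereas you need two plays of \emph{positive probability}, and consistency does not imply positive probability (a consistent play can have probability zero when the strategy's action probabilities along it have a vanishing infinite product). The gap closes in two lines for this MDP: two consistent plays diverge at some history $\hist$; $\cyl{\hist}$ has positive probability because transitions are deterministic and every action of $\hist$ lies in the support of $\stratMDP$; hence both one-step continuations of $\hist$ carry positive probability; and since the play space is countable, each of these two disjoint cylinders has probability equal to the sum of the probabilities of the plays it contains, so each contains a play of positive probability. With that justification added, the proof is complete.
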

\begin{proof}
  We consider the following enumeration of the set of plays of $\mdp$ that start in $\mdpState_0$.
  We let $\play_{-2}=\mdpState_0\mdpActionC(\mdpState_1\mdpAction)^\omega$, $\play_{-1}=\mdpState_0\mdpAction(\mdpState_2\mdpAction)^\omega$, and, for all $\indexLast\in\IN$, we let $\play_{\indexLast}=\mdpState_0(\mdpAction\mdpState_2)^\indexLast\mdpActionB(\mdpState_3\mdpAction)^\omega$.
  We have $\payoffTuple(\play_{-2}) = (0, 2)$, $\payoffTuple(\play_{-1}) = (1, 2)$ and, for all $\indexLast\in\IN$, $\payoffTuple(\play_{\indexLast}) = \left(1+\frac{3^{\indexLast}}{4^{\indexLast-1}}, 2-\frac{1}{2^{\indexLast-1}}\right)$ (refer to the proof of Lemma~\ref{lemma:running:pure:description} for the relevant computations).
  Due to the absence of randomised transitions in $\mdp$, $\paySetPure{\payoffTuple}{\mdpState_0}$ is the set of payoffs of the plays starting in $\mdpState_0$.
  Therefore, to end the proof, we must show that for all $\indexLast\geq-2$, $\expectancy^{\stratMDP}_{\mdpState_0}(\payoffTuple)\neq\payoffTuple(\play_\indexLast)$.
  
  For all $\indexLast\in\{-1, -2\}\cup\IN$, let $\scalar_\indexLast = \proba^{\stratMDP}_{\mdpState_0}(\{\play_\indexLast\})$.
  Through this notation, we obtain that
  \begin{equation}\label{equation:running:strategies}
    \expectancy^{\stratMDP}_{\mdpState_0}(\payoffTuple) = \sum_{\indexLast\geq-2}\scalar_\indexLast\payoffTuple(\play_\indexLast).
  \end{equation}

  First, we show that $\expectancy^{\stratMDP}_{\mdpState_0}(\payoffTuple)\notin\{(0, 2), (1, 2)\}$.
  If there exists $\indexLast\in\IN$ such that $\scalar_\indexLast >0$, then we have $\expectancy^{\stratMDP}_{\mdpState_0}(\payoff_2) < 2$, which implies that $\expectancy^{\stratMDP}_{\mdpState_0}(\payoffTuple)\notin\{(0, 2), (1, 2)\}$.
  Next, assume that $\scalar_\indexLast=0$ for all $\indexLast\in\IN$.
  Then $\scalar_{-2}$ and $\scalar_{-1}$ must sum to one.
  Furthermore, since $\stratMDP$ has at least two outcomes, both $\scalar_{-2}$ and $\scalar_{-1}$ must be non-zero.
  It follows that $\expectancy^{\stratMDP}_{\mdpState_0}(\payoff_1) \in\ooInt{0}{1}$.
  We conclude that $\expectancy^{\stratMDP}_{\mdpState_0}(\payoffTuple)\notin\{(0, 2), (1, 2)\}$ in this case as well.

  We now fix $\indexLast\in\IN$ and show that $\expectancy^{\stratMDP}_{\mdpState_0}(\payoffTuple) \neq\payoffTuple(\play_\indexLast)$.
  We proceed by contradiction.
  Assume towards a contradiction that $\expectancy^{\stratMDP}_{\mdpState_0}(\payoffTuple)=\payoffTuple(\play_\indexLast)$.
  Our goal is to contradict Lemma~\ref{lemma:running:pure extreme}, i.e., to show that $\payoffTuple(\play_\indexLast)$ is not an extreme point of $\paySet{\payoffTuple}{\mdpState_0}$.
  We do so in two steps.
  First, we show that $\expectancy^{\stratMDP}_{\mdpState_0}\in\closure{\convex{\paySetPure{\payoffTuple}{\mdpState_0}\setminus\{\payoffTuple(\play_\indexLast)\}}}$. Next, we prove that $\convex{\paySetPure{\payoffTuple}{\mdpState_0}\setminus\{\payoffTuple(\play_\indexLast)\}}$ is closed.
  Together, these statements imply that $\payoffTuple(\play_\indexLast)\in\convex{\paySetPure{\payoffTuple}{\mdpState_0}\setminus\{\payoffTuple(\play_\indexLast)\}}$, which is the sought contradiction.

  We now establish that $\expectancy^{\stratMDP}_{\mdpState_0}\in\closure{\convex{\paySetPure{\payoffTuple}{\mdpState_0}\setminus\{\payoffTuple(\play_\indexLast)\}}}$.
  It follows from $\expectancy^{\stratMDP}_{\mdpState_0}(\payoffTuple) = \payoffTuple(\play_{\indexLast})$ and Equation~\eqref{equation:running:strategies} that $\expectancy^{\stratMDP}_{\mdpState_0}(\payoffTuple) = \sum_{\indexPosition\neq\indexLast}\frac{\scalar_{\indexPosition}}{1 - \scalar_\indexLast}\payoffTuple(\play_{\indexPosition})$.
  By an argument analogous to the one in the proof of Lemma~\ref{lemma:running:finite mixing}, we conclude from this last equality that $\expectancy^{\stratMDP}_{\mdpState_0}(\payoffTuple)\in\closure{\convex{\paySetPure{\payoffTuple}{\mdpState_0}\setminus\{\payoffTuple(\play_\indexLast)\}}}$.

  It remains to show that $\convex{\paySetPure{\payoffTuple}{\mdpState_0}\setminus\{\payoffTuple(\play_\indexLast)\}}$ is closed.  
  The vector $\payoffTuple(\play_\indexLast)$ is an isolated point of $\paySetPure{\payoffTuple}{\mdpState_0}$: all other elements of $\paySetPure{\payoffTuple}{\mdpState_0}$ are at distance at least $2^{-\indexLast}$ of $\payoffTuple(\play_\indexLast)$.
  Therefore, $\paySetPure{\payoffTuple}{\mdpState_0}\setminus\{\payoffTuple(\play_\indexLast)\}$ is closed (when removing an isolated point from a closed set, the resulting set is still closed), and thus $\convex{\paySetPure{\payoffTuple}{\mdpState_0}\setminus\{\payoffTuple(\play_\indexLast)\}}$ is closed by Lemma~\ref{lemma:convex hull of compact}.
\end{proof}

We close this section by commenting on the significance of Lemma~\ref{lemma:running:strategies} in terms of strategy complexity with respect to memory requirements.
When we only consider pure strategies, some Pareto-optimal payoffs may require strategies with an arbitrarily large memory, in the sense that we may need to count to some high (but finite) counter value to enact a given number of loops in $\mdpState_2$ to obtain certain expected payoffs.
Lemma~\ref{lemma:running:strategies} implies that we cannot substitute this counting by randomisation, even when only considering Pareto-optimal payoffs.
In particular, for this example, we obtain that although all expected payoffs can be obtained with strategies that only count up to a finite number of steps (i.e., to count loops in $\mdpState_2$), we cannot bound this number of steps uniformly for all expected payoff vectors.

\section{Lexicographic POMDPs}\label{section:lexico}
In this section, we consider lexicographic optimisation of multiple payoff functions in POMDPs.
We first show that for universally integrable payoffs, randomness does not provide additional power for lexicographic optimisation.
We then show that this property does not generalise to the whole class of universally unambiguously integrable payoffs, i.e., randomness may be required for lexicographically optimal strategies in general.

We fix a POMDP $\pomdp = \pomdpTuple$ and a $\numObj$-dimensional payoff function $\payoffTuple = (\payoff_\indexPayoff)_{1\leq\indexPayoff\leq\numObj}$.
We assume that $\payoffTuple$ is universally integrable and prove that for all initial states $\mdpState\in\mdpStateSpace$ and all strategies $\stratMDP$, there exists a pure strategy $\stratBMDP$ such that $\expectancy^{\stratMDP}_\mdpState(\payoffTuple)\leLex\expectancy^{\stratBMDP}_\mdpState(\payoffTuple)$.
The crux of the proof is showing that the Lebesgue integral is compatible with the lexicographic order over $\IRbar^\numObj$.
Once this is shown, we assume towards a contradiction that there is no suitable pure strategy $\stratBMDP$.
By Kuhn's theorem and Lemma~\ref{lem:expectancy:pure integral}, we can write $\expectancy^{\stratMDP}_\mdpState(\payoffTuple)$ as an integral over pure expected payoffs and then reach the contradiction that $\expectancy^{\stratMDP}_\mdpState(\payoffTuple)\lLex\expectancy^{\stratMDP}_\mdpState(\payoffTuple)$.
We formalise this argument below.

\begin{restatable}{theorem}{thmLexicoPure}\label{thm:lexico:pure}
  Assume that $\payoffTuple$ is universally integrable.
  Let $\stratMDP$ be a strategy and $\mdpState\in\mdpStateSpace$.
  There exists a pure strategy $\stratBMDP$ such that $\expectancy^{\stratMDP}_{\mdpState}(\payoffTuple) \leLex \expectancy^{\stratBMDP}_{\mdpState}(\payoffTuple)$.
\end{restatable}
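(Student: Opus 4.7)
The plan is to reduce the theorem to a compatibility property between the Lebesgue integral and the lexicographic order, namely: if $(\probaSpace, \sigmaAlgebra, \proba)$ is a probability space and $\rv = (\rv_\indexPayoff)_{1 \leq \indexPayoff \leq \numObj}$ is an integrable $\IR^\numObj$-valued random vector such that $\rv(\probaElement) \lLex \payoffVect$ for every $\probaElement \in \probaSpace$ (for some fixed $\payoffVect \in \IR^\numObj$), then $(\expectancy(\rv_1), \ldots, \expectancy(\rv_\numObj)) \lLex \payoffVect$. Granted this lemma, the theorem will follow by contradiction: by Kuhn's theorem I would fix a mixed strategy $\mixedStrat$ outcome-equivalent to $\stratMDP$ and set $\payoffVect = \expectancy^{\mixedStrat}_\mdpState(\payoffTuple) = \expectancy^{\stratMDP}_\mdpState(\payoffTuple)$, which lies in $\IR^\numObj$ because $\payoffTuple$ is universally integrable. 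Supposing no pure strategy $\stratBMDP$ satisfies $\payoffVect \leLex \expectancy^{\stratBMDP}_\mdpState(\payoffTuple)$, totality of $\leLex$ gives $\expectancy^{\stratBMDP}_\mdpState(\payoffTuple) \lLex \payoffVect$ for every pure $\stratBMDP$.

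The next step is to apply Lemma~\ref{lem:expectancy:pure integral} component-wise: for each $1 \leq \indexPayoff \leq \numObj$ the payoff $\payoff_\indexPayoff$ is $\proba^{\mixedStrat}_\mdpState$-integrable (by universal integrability), so the map $\stratBMDP \mapsto \expectancy^{\stratBMDP}_\mdpState(\payoff_\indexPayoff)$ is $\mixedStrat$-measurable, $\mixedStrat$-integrable, and satisfies $\int \expectancy^{\stratBMDP}_\mdpState(\payoff_\indexPayoff) \ud \mixedStrat(\stratBMDP) = \payoffVect_\indexPayoff$. Viewing $\rv : \stratBMDP \mapsto \expectancy^{\stratBMDP}_\mdpState(\payoffTuple)$ as a random vector on $(\stratClassPure{\pomdp}, \stratSigmaAlgebra, \mixedStrat)$, its integral is $\payoffVect$ and $\rv \lLex \payoffVect$ everywhere by the contradiction hypothesis. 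The compatibility lemma then delivers $\payoffVect \lLex \payoffVect$, the desired contradiction.

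It remains to prove the compatibility lemma, which I would do by induction on $\numObj$. For $\numObj = 1$, if $\rv < \payoffVect$ almost surely and $\rv$ is integrable, then $\expectancy(\rv) < \payoffVect$ (for example, by continuity of measure $\proba(\rv \leq \payoffVect - 1/n) \to 1$, so some $n$ yields $\proba(\rv \leq \payoffVect - 1/n) > 0$, whence $\expectancy(\rv) < \payoffVect$). For the step, $\rv \lLex \payoffVect$ pointwise implies $\rv_1 \leq \payoffVect_1$ pointwise. If $\proba(\rv_1 < \payoffVect_1) > 0$ then $\expectancy(\rv_1) < \payoffVect_1$, which already gives $\expectancy(\rv) \lLex \payoffVect$. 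Otherwise $\rv_1 = \payoffVect_1$ almost surely, so $\expectancy(\rv_1) = \payoffVect_1$; and on the full-measure event $\{\rv_1 = \payoffVect_1\}$ the pointwise condition $\rv \lLex \payoffVect$ forces $(\rv_2, \ldots, \rv_\numObj) \lLex (\payoffVect_2, \ldots, \payoffVect_\numObj)$, so the induction hypothesis applied to the restriction yields $(\expectancy(\rv_2), \ldots, \expectancy(\rv_\numObj)) \lLex (\payoffVect_2, \ldots, \payoffVect_\numObj)$, which combined with $\expectancy(\rv_1) = \payoffVect_1$ gives $\expectancy(\rv) \lLex \payoffVect$. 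The principal obstacle is precisely this compatibility lemma and the careful bookkeeping of the inductive restriction to $\{\rv_1 = \payoffVect_1\}$; everything else is routine once Lemma~\ref{lem:expectancy:pure integral} supplies the measurability and the integral representation of $\expectancy^{\mixedStrat}_\mdpState(\payoffTuple)$.
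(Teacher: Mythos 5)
Your proposal is correct and follows essentially the same route as the paper: fix an outcome-equivalent mixed strategy via Kuhn's theorem, use Lemma~\ref{lem:expectancy:pure integral} to write $\expectancy^{\stratMDP}_{\mdpState}(\payoffTuple)$ as the integral of $\stratBMDP\mapsto\expectancy^{\stratBMDP}_{\mdpState}(\payoffTuple)$, and derive a contradiction from a compatibility property of the Lebesgue integral with the strict lexicographic order. The only difference is presentational: the paper proves that compatibility claim by partitioning $\stratClassPure{\pomdp}$ according to the first component witnessing strictness and isolating the least index of positive measure, whereas you prove the (constant-comparand special case of the) same claim by induction on $\numObj$; the two arguments identify exactly the same critical index.
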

\begin{proof}
  Let $\mixedStrat$ be a mixed strategy that is outcome-equivalent to $\stratMDP$ (whose existence follows from Kuhn's theorem).
  To prove the theorem, we reason on the $\mixedStrat$-integral of random variables of $(\stratClassPure{\pomdp}, \stratSigmaAlgebra)$ through Lemma~\ref{lem:expectancy:pure integral}.
  For any real or multivariate random variable $\rvB$ over $\stratClassPure{\pomdp}$, we write $\int\rvB\ud\mixedStrat$ for $\int_{\stratBMDP\in\stratClassPure{\pomdp}}\rvB(\stratBMDP)\ud\mixedStrat(\stratBMDP)$ to lighten notation.

  For all $1\leq\indexPayoff\leq\numObj$, we consider the real random variable  $\rv_\indexPayoff\colon \stratBMDP\mapsto \expectancy^{\stratBMDP}_\mdpState(\payoff_\indexPayoff)$ over $\stratClassPure{\pomdp}$.
  We let $\rvVect = (\rv_1, \ldots, \rv_\numObj)$.
  Because $\payoffTuple$ is universally integrable, $\rvVect$ is $\mixedStrat$-integrable and, by Lemma~\ref{lem:expectancy:pure integral}, we have $\expectation_{\mdpState}^{\mixedStrat}(\payoffTuple) = \int\rvVect\ud\mixedStrat$.

  Let $\rvBVect = (\rvB_1, \ldots, \rvB_\numObj)$ be an integrable multi-variate real random variable over $\stratClassPure{\pomdp}$.
  We first show that if $\rvVect\lLex\rvBVect$, then $\expectancy^{\stratMDP}_\mdpState(\payoffTuple) = \int\rvVect\ud\mixedStrat\lLex\int\rvBVect\ud\mixedStrat$.
  We use this claim below to prove the theorem.
  
  Assume that $\rvVect\lLex\rvBVect$.
  We partition $\stratClassPure{\pomdp}$ as follows.
  For all $1\leq\indexPayoff\leq\numObj$, we let
  \[
    \probaEvent_\indexPayoff = \left\{\stratBMDP\in\stratClassPure{\pomdp}\mid \rv_\indexPayoff(\stratBMDP) < \rvB_\indexPayoff(\stratBMDP)\text{ and } \forall \indexPayoff' <\indexPayoff,\, \rv_{\indexPayoff'}(\stratBMDP) = \rvB_{\indexPayoff'}(\stratBMDP)\right\}.\]
  Intuitively, $\probaEvent_\indexPayoff$ is the set of elements such that the strict lexicographic ordering of their respective images by $\rvVect$ and $\rvBVect$ is witnessed in component $\indexPayoff$.
  The sets $\probaEvent_1$, \ldots, $\probaEvent_\numObj$ partition $\stratClassPure{\pomdp}$ because $\rvVect\lLex\rvBVect$.
  It follows that, for $\rvCVect\in\{\rvVect, \rvBVect\}$, we have $\int\rvCVect\ud\mixedStrat = \sum_{\indexPayoff = 1}^\numObj\int\rvCVect\cdot\indic{\probaEvent_\indexPayoff}\ud\mixedStrat$.
  
  Let $\indexPayoff^\star = \min\{1\leq\indexPayoff\leq\numObj\mid\mixedStrat(\probaEvent_\indexPayoff) > 0\}$.
  To obtain that $\int\rvVect\ud\mixedStrat\lLex\int\rvBVect\ud\mixedStrat$, we show that, for $\indexPayoff < \indexPayoff^\star$, we have $\int\rv_\indexPayoff\ud\mixedStrat = \int\rvB_\indexPayoff\ud\mixedStrat$ and $\int\rv_{\indexPayoff^\star}\ud\mixedStrat < \int\rvB_{\indexPayoff^\star}\ud\mixedStrat$.
  To prove these relations, we formulate two observations.
  First, we observe that for all $1\leq\indexPayoff<\indexPayoff'\leq\numObj$, $\rv_\indexPayoff$ and $\rvB_\indexPayoff$ agree over $\probaEvent_{\indexPayoff'}$ (by definition), and thus we have
  \begin{equation}\label{equation:lexicographic pure:1}
    \int\rv_\indexPayoff\cdot\indic{\probaEvent_{\indexPayoff'}}\ud\mixedStrat = \int\rvB_\indexPayoff\cdot\indic{\probaEvent_{\indexPayoff'}}\ud\mixedStrat.
  \end{equation}
  Second, since $\mixedStrat(\probaEvent_{\indexPayoff'}) = 0$ for all $\indexPayoff'<\indexPayoff^\star$, it follows that for all $1\leq\indexPayoff\leq\numObj$ and all $\rvC_\indexPayoff\in\{\rv_\indexPayoff, \rvB_\indexPayoff\}$ that
  \begin{equation}\label{equation:lexicographic pure:2}
    \int\rvC_\indexPayoff\ud\mixedStrat=
    \sum_{\indexPayoff' =\indexPayoff^\star}^\numObj
    \int\rvC_\indexPayoff\cdot\indic{\probaEvent_{\indexPayoff'}}\ud\mixedStrat.
  \end{equation}

  Let $1\leq\indexPayoff< \indexPayoff^\star$.
  By combining Equations~\eqref{equation:lexicographic pure:1} and~\eqref{equation:lexicographic pure:2}, we obtain that
  \[\int\rv_\indexPayoff\ud\mixedStrat=
    \sum_{\indexPayoff' =\indexPayoff^\star}^\numObj
    \int\rv_\indexPayoff\cdot\indic{\probaEvent_{\indexPayoff'}}\ud\mixedStrat =
    \sum_{\indexPayoff' =\indexPayoff^\star}^\numObj
    \int\rvB_\indexPayoff\cdot\indic{\probaEvent_{\indexPayoff'}}\ud\mixedStrat =
    \int\rvB_\indexPayoff\ud\mixedStrat.
  \]
  To end the proof that $\int\rvVect\ud\mixedStrat\lLex\int\rvBVect\ud\mixedStrat$, it remains to show that $\int\rv_{\indexPayoff^\star}\ud\mixedStrat<\int\rvB_{\indexPayoff^\star}\ud\mixedStrat$.
  This inequality is equivalent to $\int\rv_{\indexPayoff^\star}\cdot\indic{\probaEvent_{\indexPayoff^\star}}\ud\mixedStrat
  < \int\rvB_{\indexPayoff^\star}\cdot\indic{\probaEvent_{\indexPayoff^\star}}\ud\mixedStrat$ by Equations~\eqref{equation:lexicographic pure:1} and~\eqref{equation:lexicographic pure:2}.
  This inequality holds by compatibility of the Lebesgue integral with the order of $\IR$ (recall that $\rv_{\indexPayoff^\star}(\stratBMDP) < \rvB_{\indexPayoff^\star}(\stratBMDP)$ for all $\stratBMDP\in\probaEvent_{\indexPayoff^\star}$) and because $\mixedStrat(\probaEvent_{\indexPayoff^\star}) > 0$.
  We have shown that $\int\rvVect\ud\mixedStrat\lLex\int\rvBVect\ud\mixedStrat$ whenever $\rvVect\lLex\rvBVect$ holds.

  Now assume that for all pure strategies $\stratBMDP$, we have $\expectancy^{\stratBMDP}_\mdpState(\payoffTuple) \lLex \expectancy^{\stratMDP}_\mdpState(\payoffTuple)$ towards a contradiction.
  It follows that $\rvVect(\stratBMDP) \lLex \expectancy^{\stratMDP}_\mdpState(\payoffTuple)$ for all $\stratBMDP\in\stratClassPure{\pomdp}$.
  From the above, we obtain that $\expectancy^{\stratMDP}_\mdpState(\payoffTuple) = \int\rvVect\ud\mixedStrat\lLex \expectancy^{\stratMDP}_\mdpState(\payoffTuple)$, which is a contradiction.
\end{proof}

A direct corollary of Theorem~\ref{thm:lexico:pure} is the following.
\begin{corollary}
  Assume that $\payoffTuple$ is universally integrable.
  For all $\mdpState\in\mdpStateSpace$, if there exists a lexicographically optimal strategy from $\mdpState$, then there exists a pure lexicographically optimal strategy.
\end{corollary}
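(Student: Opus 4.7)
The plan is to invoke Theorem~\ref{thm:lexico:pure} directly with the hypothesised lexicographically optimal strategy as input, and then use the maximality definition to pin down that the pure witness produced by the theorem must itself be lexicographically optimal.

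More precisely, suppose $\stratMDP$ is a lexicographically optimal strategy from $\mdpState$, so that $\expectancy^{\stratMDP}_{\mdpState}(\payoffTuple)$ is the lexicographic maximum of $\paySet{\payoffTuple}{\mdpState}$. Applying Theorem~\ref{thm:lexico:pure} to $\stratMDP$ yields a pure strategy $\stratBMDP$ with
\[
  \expectancy^{\stratMDP}_{\mdpState}(\payoffTuple) \leLex \expectancy^{\stratBMDP}_{\mdpState}(\payoffTuple).
\]
Since $\expectancy^{\stratBMDP}_{\mdpState}(\payoffTuple)\in\paySet{\payoffTuple}{\mdpState}$ and $\expectancy^{\stratMDP}_{\mdpState}(\payoffTuple)$ is the lexicographic maximum of that set, we also have the reverse inequality $\expectancy^{\stratBMDP}_{\mdpState}(\payoffTuple)\leLex\expectancy^{\stratMDP}_{\mdpState}(\payoffTuple)$. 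Antisymmetry of $\leLex$ (it is a total order) then gives equality, so $\expectancy^{\stratBMDP}_{\mdpState}(\payoffTuple)$ is also the lexicographic maximum of $\paySet{\payoffTuple}{\mdpState}$, i.e., $\stratBMDP$ is a pure lexicographically optimal strategy.

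There is essentially no obstacle here: the corollary is a one-line consequence of Theorem~\ref{thm:lexico:pure} together with the definition of lexicographic optimality and the fact that $\leLex$ is a total (hence antisymmetric) order. The only subtlety worth a sentence of caution is to note that universal integrability of $\payoffTuple$ is exactly the hypothesis required to invoke Theorem~\ref{thm:lexico:pure}; no additional assumption about existence of suprema over pure strategies is needed because the theorem directly produces a pure witness that dominates $\stratMDP$ in the lexicographic order.
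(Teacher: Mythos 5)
Your proof is correct and is exactly the argument the paper intends: the corollary is stated there as a direct consequence of Theorem~\ref{thm:lexico:pure}, obtained by applying the theorem to the hypothesised lexicographically optimal strategy and using maximality together with the totality of $\leLex$ to conclude that the pure witness is also optimal. Nothing is missing.
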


We now present an example that shows that the conclusion of Theorem~\ref{thm:lexico:pure} does not hold for all universally unambiguously integrable payoffs already in finite MDPs.

    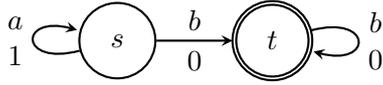
\begin{figure}
    \centering
    \begin{tikzpicture}
      \node[state, align=center] (s0) {$\mdpState$};
      \node[state, accepting, right = of s0] (s1) {$\mdpStateB$};
      \path[->] (s0) edge node[above,align=center] {$\mdpActionB$} node[below,align=center] {$0$} (s1);
      \path[->] (s0) edge[loop left] node[left,align=center] {$\mdpAction$\\$1$} (s0);
      \path[->] (s1) edge[loop right] node[right,align=center] {$\mdpActionB$\\$0$} (s1);

    \end{tikzpicture}
    \caption{An MDP with deterministic transitions. The doubly circled state denotes a target for a reachability objective. }\label{fig:mixing:approx:1}
  \end{figure}

\begin{example}\label{ex:lexico:general}
    We consider the MDP $\mdp$ depicted in Figure~\ref{fig:mixing:approx:1}.
  Let $\weight$ denote the illustrated weight function.
  We consider the two-dimensional payoff function $\payoffTuple  = (\payoff_1, \payoff_2)$ such that $\payoff_1 = \indic{\reach{\{\mdpStateB\}}}$ and $\payoff_2 = \totrew{\weight}$.
  We prove that randomisation is necessary to play lexicographically optimally from $\mdpState$.
  
  Since transitions of $\mdp$ are deterministic, $\paySetPure{\payoffTuple}{\mdpState}$ is the set of payoffs of plays from $\mdpState$.
  We introduce notation for these plays: for all $\indexLast\in\IN$, let $\play_\indexLast = (\mdpState\mdpAction)^\indexLast\mdpState(\mdpActionB\mdpStateB)^\omega$ and let $\play_\infty= (\mdpState\mdpAction)^\omega$.
  It follows that $\paySetPure{\payoffTuple}{\mdpState} = \{\payoffTuple(\play_\indexLast)\mid\indexLast\in\IN\cup\{\infty\}\} = \{(1, \indexLast)\mid\indexLast\in\IN\}\cup\{(0, +\infty)\}$.
  In particular, no pure strategy has an expected payoff of $(1, +\infty)$, which is the greatest that could occur with $\payoffTuple$.

  However, there exists a randomised strategy whose expected payoff from $\mdpState$ is $(1, +\infty)$.
  For each $\indexLast\in\IN$, let $\stratBMDP_\indexLast$ be a pure strategy whose outcome from $\mdpState$ is $\play_\indexLast$.
  We consider the mixed strategy $\mixedStrat$ such that, for all $\indexLast\in\IN$, $\mixedStrat$ assigns probability $2^{-(\indexLast+1)}$ to $\stratBMDP_{2^\indexLast}$.
  It follows that $\expectation_{\mdpState}^{\mixedStrat}(\payoffTuple) = \sum_{\indexLast=0}^\infty\mixedStrat(\stratBMDP_{2^{\indexLast}})\cdot\payoffTuple(\play_{2^\indexLast}) = (1, \infty)$, i.e., $\mixedStrat$ is lexicographically optimal from $\mdpState$.
  \hfill$\lhd$
\end{example}

\section{The structure of expected payoff sets}\label{section:achievable}
In this section, we study the relationship between the set of expected payoffs of pure strategies and of general strategies.
In Section~\ref{section:achievable:ui}, we prove that for universally integrable payoffs, the set of expected payoffs (for general strategies) is the convex hull of the set of expected payoffs of pure strategies.
In other words, \textit{mixed strategies with a finite support can match the expected payoff of any strategy} when considering universally integrable payoffs.
We provide a variant of this result in Section~\ref{section:achievable:general} for the more general universally unambiguously integrable payoffs: we show that any expected payoff can be approximated (in the sense of the topology of $\IRbar^\numObj$) by the expected payoff of a finite-support mixed strategy.
Finally, in Section~\ref{section:achievable:support}, we provide bounds on support sizes of finite-support mixed strategies analogous to Carath\'{e}odory's theorem for convex hulls (Theorem~\ref{theorem:caratheodory:convex}).

We fix, for this whole section, a POMDP $\pomdp = \pomdpTuple$, $\numObj\in\IN_0$ and a $\numObj$-dimensional payoff function $\payoffTuple = (\payoff_\indexPayoff)_{1\leq\indexPayoff\leq\numObj}$.

\subsection{Universally integrable payoffs}\label{section:achievable:ui}

Throughout this section, we assume that $\payoffTuple$ is universally integrable.
Our goal is to show that, for all $\mdpState\in\mdpStateSpace$, $\paySet{\payoffTuple}{\mdpState} = \convex{\paySetPure{\payoffTuple}{\mdpState}}$.
We first provide an overview of the proof, and then provide a formal proof.

\paragraph{Proof overview.}
Let $\mdpState\in\mdpStateSpace$.
By convexity of $\paySet{\payoffTuple}{\mdpState}$, the inclusion $\convex{\paySetPure{\payoffTuple}{\mdpState}}\subseteq\paySet{\payoffTuple}{\mdpState}$ holds.
Therefore, the main difficulty of the proof is to prove the other inclusion.
Let $\stratMDP\in\stratClassAll{\pomdp}$ and $\payoffVect = \expectation_{\mdpState}^{\stratMDP}(\payoffTuple)$.
We show that $\payoffVect$ is a convex combination of expected payoffs of pure strategies that are lexicographically optimal from $\mdpState$ for the composition of a linear map $\linMap_\payoffVect$ with $\payoffTuple$ for which $\stratMDP$ is also lexicographically optimal.

The first step of the proof is to construct a linear map $\linMap_\payoffVect\colon\IR^\numObj\to\IR^{\numObj'}$ with $\numObj'\leq \numObj$ such that (i)~$\linMap_\payoffVect(\payoffVect)$ is the lexicographic maximum of $\linMap_\payoffVect(\paySet{\payoffTuple}{\mdpState})$ and (ii)~$\payoffVect\in\relInt{\paySet{\payoffTuple}{\mdpState}\cap\vectSpace}$ where $\vectSpace = \linMap_\payoffVect^{-1}(\linMap_\payoffVect(\payoffVect))$ denotes the set of vectors that share their image by $\linMap_\payoffVect$ with $\payoffVect$.
Property~(ii) implies that we cannot find a mapping satisfying~(i) such that the set of vectors sharing their image with $\payoffVect$ is smaller (with respect to set inclusion).

The mapping $\linMap_\payoffVect$ is constructed as follows (Examples~\ref{ex:mixing:exact:1} and~\ref{ex:mixing:exact:2} below illustrate this construction).
If $\payoffVect$ is in the relative interior of $\paySet{\payoffTuple}{\mdpState}$, we let $\linMap_\payoffVect$ be the zero-valued linear form.
Otherwise, by the supporting hyperplane theorem (Theorem~\ref{thm:hyperplane:supporting}), there exists a linear form $\linForm_1$ such that $\linForm_1(\payoffVect)\geq\linForm_1(\payoffVectB)$ for all $\payoffVectB\in\paySet{\payoffTuple}{\mdpState}$.
Let $\hplane_1 = (\linForm_1)^{-1}(\linForm_1(\payoffVect))$ denote the supporting hyperplane given by $\linForm_1$.
We check whether $\payoffVect$ is in the relative interior of $\paySet{\payoffVect}{\mdpState}\cap\hplane_1$.
If it is the case, we obtain the desired linear mapping by letting $\linMap_\payoffVect=\linForm_1$.
Otherwise, we continue: there is a linear form $\linForm_2$ describing a hyperplane $\hplane_2$ that supports $\paySet{\payoffVect}{\mdpState}\cap\hplane_1$ at $\payoffVect$.
We choose $\linForm_2$ such that $\linForm_1$ and $\linForm_2$ have distinct kernels, i.e., $\hplane_1\neq\hplane_2$.
To ensure that the kernels are distinct, we construct $\linForm_2$ as an extension to $\IR^\numObj$ of a non-zero linear form over $\ker{\linForm_1}$.
By induction, we continue constructing linear forms with pairwise distinct kernels (i.e., defining pairwise distinct supporting hyperplanes) until $\payoffVect\in\relInt{\paySet{\payoffTuple}{\mdpState}\cap\bigcap_{1\leq\indexPayoff\leq\numObj'}\hplane_\indexPayoff}$.
When this condition is satisfied, we define, for all $\vect\in\IR^\numObj$, $\linMap_\payoffVect(\vect)=(\linForm_1(\vect), \ldots, \linForm_{\numObj'}(\vect))$.
The invocation of the supporting hyperplane theorem at each iteration guarantees that $\linMap_\payoffVect(\payoffVect)$ is a lexicographic maximum of $\linMap_\payoffVect(\paySet{\payoffTuple}{\mdpState})$.
We remark that, when the stopping condition is fulfilled, the supporting hyperplane theorem is no longer applicable.

Let $\vectSpace = \linMap_\payoffVect^{-1}(\linMap_\payoffVect(\payoffVect))$.
By construction, $\payoffVect\in\relInt{\vectSpace\cap\paySet{\payoffTuple}{\mdpState}}$.
To conclude, we establish that $\relInt{\vectSpace\cap\paySet{\payoffTuple}{\mdpState}}=\relInt{\vectSpace\cap\convex{\paySetPure{\payoffTuple}{\mdpState}}}$.
This suffices because the latter set is a subset of $\convex{\paySetPure{\payoffTuple}{\mdpState}}$.
This is equivalent to showing that $\closure{\vectSpace\cap\paySet{\payoffTuple}{\mdpState}}=\closure{\vectSpace\cap\convex{\paySetPure{\payoffTuple}{\mdpState}}}$: convex subsets of $\IR^\numObj$ have the same relative interior as their closure~\cite[Thm.~6.3]{DBLP:books/degruyter/Rockafellar70}.

The inclusion $\convex{\paySetPure{\payoffTuple}{\mdpState}}\subseteq\paySet{\payoffTuple}{\mdpState}$ implies that $\closure{\vectSpace\cap\convex{\paySetPure{\payoffTuple}{\mdpState}}}\subseteq\closure{\vectSpace\cap\paySet{\payoffTuple}{\mdpState}}$.
For the other inclusion, we need only show that $\vectSpace\cap\paySet{\payoffTuple}{\mdpState}\subseteq\closure{\vectSpace\cap\convex{\paySetPure{\payoffTuple}{\mdpState}}}$.
We assume towards a contradiction that this is not the case and let $\payoffVectB\in\vectSpace\cap\paySet{\payoffTuple}{\mdpState}\setminus\closure{\vectSpace\cap\convex{\paySetPure{\payoffTuple}{\mdpState}}}$.
Using the hyperplane separation theorem (Theorem~\ref{thm:hyperplane:separation}), we obtain $\linForm_\star$ such that $\linForm_\star(\payoffVectB) > \linForm_\star(\vect)$ for all $\vect\in\closure{\vectSpace\cap\convex{\paySetPure{\payoffTuple}{\mdpState}}}$.
Because $\payoffVectB\in\paySet{\payoffTuple}{\mdpState}$, Theorem~\ref{thm:lexico:pure} implies that there exists a pure strategy $\stratBMDP$ such that $(\linMap_\payoffVect(\payoffVectB), \linForm_\star(\payoffVectB))\leLex(\linMap_\payoffVect(\expectancy^{\stratBMDP}_\mdpState(\payoffTuple)), \linForm_\star(\expectancy^{\stratBMDP}_\mdpState(\payoffTuple)))$. Furthermore, $\payoffVectB\in\vectSpace$ implies that $\linMap_\payoffVect(\payoffVectB)$ is the lexicographic maximum of $\linMap_\payoffVect(\paySet{\payoffTuple}{\mdpState})$, and thus so is $\linMap_\payoffVect(\expectancy^{\stratBMDP}_\mdpState(\payoffTuple))$, i.e., $\expectancy^{\stratBMDP}_\mdpState(\payoffTuple)\in\vectSpace\cap\paySetPure{\payoffTuple}{\mdpState}$.
It follows that $\linForm_\star(\payoffVectB)\leq\linForm_\star(\expectancy^{\stratBMDP}_\mdpState(\payoffTuple))$, which is contradictory.
This closes the argument that $\closure{\vectSpace\cap\paySet{\payoffTuple}{\mdpState}}=\closure{\vectSpace\cap\convex{\paySetPure{\payoffTuple}{\mdpState}}}$, which implies that $\payoffVect\in\convex{\paySetPure{\payoffTuple}{\mdpState}}$, ending the sketch.

We complement the sketch above with two examples that illustrate the construction of the mapping $\linMap_\payoffVect$.
In the first example, we select $\payoffVect$ as an extreme point of the set of expected payoffs.
In this case, the constructed linear mapping $\linMap_\payoffVect$ is such that $\payoffVect$ is the unique vector $\payoffVectB\in\IR^\numObj$ such that $\linMap_\payoffVect(\payoffVectB)$ is the lexicographic maximum of $\linMap_\payoffVect(\paySet{\payoffTuple}{\mdpState})$.
In our second example, $\paySet{\payoffTuple}{\mdpState}$ is not closed and we choose $\payoffVect$ as a non-extreme point such that no pure strategy has expected payoff $\payoffVect$.
In this case, we observe that the uniqueness property of the first example cannot be obtained no matter the constructed $\linMap_\payoffVect$.

\begin{example}[Extreme point]\label{ex:mixing:exact:1}
  We consider the MDP depicted in Figure~\ref{fig:mixing:exact:1:mdp}.
  Throughout this example, we (implicitly) consider $\mdpState_0$ as the initial state.
    We study indicators of reachability objectives: we let $\payoffTuple = (\indic{\reach{\target_1}}, \indic{\reach{\target_2}})$ where $\target_1 = \{\mdpState_1, \mdpState_4\}$ and $\target_2 = \{\mdpState_2, \mdpState_4\}$.
  The set $\paySet{\payoffTuple}{\mdpState_0}$ is depicted in Figure~\ref{fig:mixing:exact:1:graphs}: it is the convex hull of the expected payoffs of the pure strategies that play one of the actions $\mdpAction$, $\mdpActionB$ or $\mdpActionC$ in $\mdpState_0$.
  We focus on the extreme point $\payoffVect = (\frac{3}{4}, \frac{3}{4})$ in the remainder of this example.

  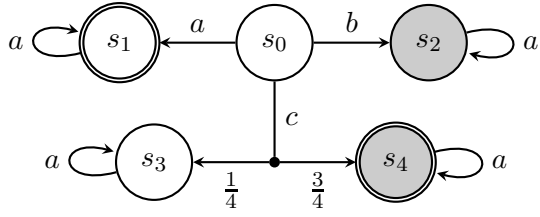
\begin{figure}
    \centering
    \begin{tikzpicture}
      \node[state, align=center] (s0) {$\mdpState_0$};
      \node[state, accepting, left = of s0] (s1) {$\mdpState_1$};
      \node[state, fill=black!20, right = of s0] (s2) {$\mdpState_2$};
      \node[stochasticc, below = of s0] (s0a) {};
      \node[state, left = of s0a] (s3) {$\mdpState_3$};
      \node[state, fill=black!20, accepting, right = of s0a] (s4) {$\mdpState_4$};

      \path[->] (s0) edge node[above,align=center] {$\mdpAction$} (s1);
      \path[->] (s0) edge node[above,align=center] {$\mdpActionB$} (s2);
      \path[-] (s0) edge node[right,align=center] {$\mdpActionC$} (s0a);
      \path[->] (s0a) edge node[below,align=center] {$\frac{1}{4}$} (s3);
      \path[->] (s0a) edge node[below,align=center] {$\frac{3}{4}$} (s4);
      \path[->] (s1) edge[loop left] node[left,align=center] {$\mdpAction$} (s1);
      \path[->] (s2) edge[loop right] node[right,align=center] {$\mdpAction$} (s2);
      \path[->] (s3) edge[loop left] node[left,align=center] {$\mdpAction$} (s3);
      \path[->] (s4) edge[loop right] node[right,align=center] {$\mdpAction$} (s4);

    \end{tikzpicture}
    \caption{An MDP. The doubly circled and filled states respectively highlight the targets of the reachability objectives $\reach{\{\mdpState_1, \mdpState_4\}}$ and $\reach{\{\mdpState_2, \mdpState_4\}}$.}\label{fig:mixing:exact:1:mdp}
  \end{figure}
  
  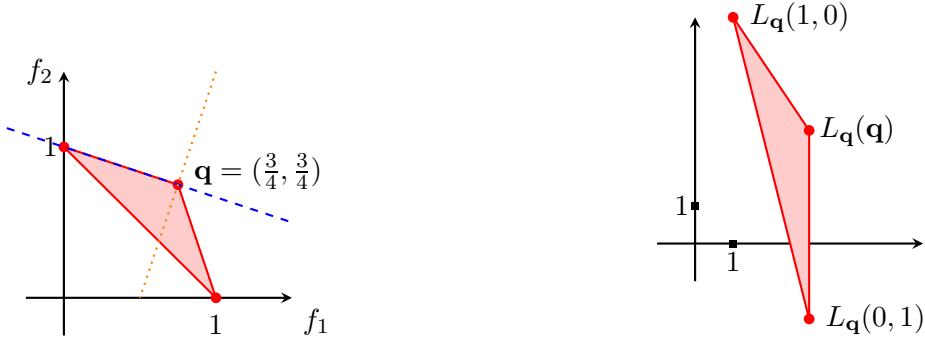
\begin{figure}
    \centering
    \begin{subfigure}[t]{0.45\textwidth}
      \centering
      \begin{tikzpicture}
        \draw[-stealth] (-0.5,0) -- (3,0) node[below right] {$\payoff_1$}; 
        \draw[-stealth] (0,-0.5) -- (0,3) node[left] {$\payoff_2$};
        \coordinate (1x) at (2,0);
        \coordinate (1y) at (0,2);
        \coordinate (m) at (1.5,1.5);
        \coordinate (l1) at (-0.75,2.25);
        \coordinate (l2) at (3,1);
        \coordinate (p1) at (2, 3);
        \coordinate (p2) at (1, 0);
        
        \fill[red!20] (1x) --(1y) -- (m) -- (1x);
        \draw[red] (1x) --(1y) -- (m) -- (1x);
        
        \node[stochasticc,red] at (1y) (qy){};
        \node[xshift=-5] at (1y) {$1$};
        \node[stochasticc,red] at (1x) (qx){};
        \node[yshift=-10] at (1x) {$1$};
        \node[stochasticc,red] at (m) (qxy) {};
        \node[xshift=30,yshift=5] at (m) {$\payoffVect=(\frac{3}{4}, \frac{3}{4})$};
        
\draw[-, thick, dashed, color=blue]  (l1) -- (l2);
        \draw[-, thick, dotted, color=orange]  (p1) -- (p2);        
\end{tikzpicture}
      \caption{
        The blue dashed line ($x+3y=3$) is a hyperplane supporting $\paySet{\payoffTuple}{\mdpState}$ at $\payoffVect$.
        The orange dotted line ($6x-2y=3$) is a hyperplane obtained from an extension of the linear form defining the hyperplane $\{\payoffVect\}$ of the blue line.
      }\label{fig:mixing:exact:1:graph:a}
    \end{subfigure}\hfill \begin{subfigure}[t]{0.45\textwidth}
      \centering
      \begin{tikzpicture}
        \draw[-stealth] (-0.5,0) -- (3,0);\draw[-stealth] (0,-0.5) -- (0,3);\coordinate (x) at (0.5, 0);
        \coordinate (y) at (0, 0.5);
        \coordinate (ix) at (0.5,3); \coordinate (iy) at (1.5,-1); \coordinate (m) at (1.5,1.5); 

        \fill[red!20] (iy) --(ix) -- (m) -- (iy);
        \draw[red] (iy) --(ix) -- (m) -- (iy);
        
        \node[stochasticc,red] at (iy) (qy){};
        \node[xshift=25] at (iy) {$\linMap_\payoffVect(0, 1)$};
        \node[stochasticc,red] at (ix) (qx){};
        \node[xshift=25] at (ix) {$\linMap_\payoffVect(1, 0)$};
        \node[stochasticc,red] at (m) (qxy) {};
        \node[xshift=18] at (m) {$\linMap_\payoffVect(\payoffVect)$};

        \node[stochastics] at (x) {};
        \node[stochastics] at (y) {};
        \node[yshift=-7] at (x) {$1$};
        \node[xshift=-5] at (y) {$1$};
      \end{tikzpicture}
      \caption{The image of the set on the left by the linear mapping $\linMap_{\payoffVect}\colon(\vectComp_1, \vectComp_2)\mapsto(\vectComp_1+3\vectComp_2, 6\vectComp_1-2\vectComp_2)$ obtained through the equations of the hyperplanes on the left. Remark that the image of $\payoffVect$ is lexicographically optimal.}\label{fig:mixing:exact:1:graph:b}
    \end{subfigure}
    \caption{The set of expected payoffs for Example~\ref{ex:mixing:exact:1} and its image by a linear mapping.}\label{fig:mixing:exact:1:graphs}
  \end{figure}

  The construction of $\linMap_\payoffVect$ is in two steps.
First, we consider the linear form $\linForm_1$ defined by, for all $\vect\in\IR^\numObj$, $\linForm_1(\vect) = \vectComp_1 + 3\vectComp_2$.
  The hyperplane $\hplane = (\linForm_1)^{-1}(3)$ supports $\paySet{\payoffTuple}{\mdpState_0}$ at $\payoffVect$ and is depicted by the blue dashed line in Figure~\ref{fig:mixing:exact:1:graph:a}.
  It is not satisfactory to set $\linMap_\payoffVect=\linForm_1$: $\payoffVect$ is an endpoint of the segment $\ccInt{(0, 1)}{\payoffVect}=\paySet{\payoffTuple}{\mdpState_0}\cap\hplane$, and is therefore not in its relative interior.

  We recall that for any linear form $\linFormB$ of $\ker{\linForm_1}$ (i.e., the vector space corresponding to $\hplane$), there exists $\vect\in\ker{\linForm_1}$ such that $\linFormB(\vectB) = \scalarProd{\vectB}{\vect}$ for all $\vectB\in\ker{\linForm_1}$.
  Since any non-zero linear form of $\ker{\linForm_1}$ is bijective, all of them induce a hyperplane of $\hplane$ supporting $\paySet{\payoffTuple}{\mdpState}\cap\hplane$ at $\payoffVect$.
  We proceed with the linear form $\linForm_2\colon\IR^2\to\IR$ defined by $\linForm_2(\vect) = 6\vectComp_1 - 2 \vectComp_2$ for all $\vect = (\vectComp_1, \vectComp_2)\in\IR^2$ (derived from the vector $\vectB=(6, -2)\in\ker{\linForm_1}$).
  Observe that $\ker{\linForm_1}\cap\ker{\linForm_2}=\{\zeroVect\}$.

  We define $\linMap_{\payoffVect}(\vect) = (\linForm_1(\vect), \linForm_2(\vect))$.
  Since $\linMap_\payoffVect$ is bijective, $\linMap_\payoffVect^{-1}(\linMap_\payoffVect(\payoffVect))$ is a singleton set.
  Therefore, $\payoffVect$ is in the relative interior of $\linMap_\payoffVect^{-1}(\linMap_\payoffVect(\payoffVect))\cap\paySet{\payoffTuple}{\mdpState_0}$.
  By linearity of the expectation, it holds that $\paySet{\linMap_{\payoffVect}\circ\payoffTuple}{\mdpState_0} = \linMap_{\payoffVect}(\paySet{\payoffTuple}{\mdpState_0})$.
  This set is illustrated in Figure~\ref{fig:mixing:exact:1:graph:b}; it is easy to check that $\linMap_\payoffVect(\payoffVect)$ is the lexicographic maximum of this set.

  In this case, $\linMap_\payoffVect$ allows us to deduce that there exists a pure strategy $\stratMDP$ such that $\expectancy^{\stratMDP}_{\mdpState_0}(\payoffTuple) = \payoffVect$.
  On the one hand, by Theorem~\ref{thm:lexico:pure}, there exists a pure strategy $\stratMDP$ that is lexicographically optimal from $\mdpState_0$ for $\linMap_\payoffVect\circ\payoffTuple$.
  On the other hand, the only payoff vector $\payoffVectB\in\paySet{\payoffTuple}{\mdpState_0}$ such that $\linMap_\payoffVect(\payoffVectB)$ is the lexicographic maximum of $\linMap_\payoffVect(\paySet{\payoffTuple}{\mdpState_0})$ is $\payoffVect$.
  This implies that $\payoffVect$ is the payoff of the pure strategy $\stratMDP$, and thus $\payoffVect\in\paySetPure{\payoffTuple}{\mdpState_0}$.\hfill$\lhd$
\end{example}
\begin{remark}[The necessity of induction]\label{remark:ex:mixing:exact:1}
  In Example~\ref{ex:mixing:exact:1}, it is possible to isolate $\payoffVect$ by a supporting hyperplane of $\paySet{\payoffTuple}{\mdpState_0}$.
  We could thus choose $\linMap_\payoffVect$ as a linear form and bypass the induction step of the construction of $\linMap_\payoffVect$ here.
  We present it for the sake of illustration, as we cannot use linear forms to isolate extreme points in general (see Section~\ref{section:running}).
  \hfill$\lhd$
\end{remark}

The construction outlined in Example~\ref{ex:mixing:exact:1} can be generalised to show that all extreme points of the set of expected payoffs of $\payoffTuple$ from a given state are the expected payoff of a pure strategy.
However, this property is not sufficient to show that all expected payoffs are convex combinations of pure expected payoffs.
In particular, the following example highlights that some expected payoffs are not convex combinations of extreme points of the set of expected payoffs.

\begin{example}[Non-extreme point]\label{ex:mixing:exact:2}
  We consider the MDP depicted in Figure~\ref{fig:mixing:exact:2:mdp}.
  We assume that state $\mdpState$ is the initial state throughout this example.
  Let $\weight = (\weight_1,\weight_2)$ denote the two-dimensional weight function given on the illustration.
  We consider a two-dimensional payoff function $\payoffTuple$.
  The payoff of a play, for each dimension, is zero if $\mdpStateB$ is not visited and, otherwise, its payoff is given by a discounted-sum payoff.
  We formalise this as the product of a discounted-sum payoff and an indicator.
  Therefore, formally, $\payoffTuple = (\payoff_1, \payoff_2)$ is such that, for $\indexPayoff\in\{1, 2\}$, $\payoff_\indexPayoff = \indic{\reach{\{\mdpStateB\}}}\cdot\discSum{\frac{3}{4}}{\weight_\indexPayoff}$.
  We observe that, by definition of $\weight$, $\payoff_2 = \discSum{\frac{3}{4}}{\weight_2}$.
  \begin{figure}
    \centering
    \begin{tikzpicture}
      \node[state, align=center] (s0) {$\mdpState$};
      \node[state, accepting, right = of s0] (s1) {$\mdpStateB$};

      \path[->] (s0) edge node[above,align=center] {$\mdpAction$} node[below,align=center] {$(0, 1)$} (s1);
      \path[->] (s0) edge[loop left] node[left,align=center] {$\mdpActionB$\\$(1, 0)$} (s0);
      \path[->] (s1) edge[loop right] node[right,align=center] {$\mdpAction$\\$(0, 1)$} (s1);
    \end{tikzpicture}
    \caption{An MDP with deterministic transitions.
      The pairs beneath actions represent a two-dimensional weight function.
      State $\mdpStateB$ is doubly circled because it is a target.}\label{fig:mixing:exact:2:mdp}
  \end{figure}
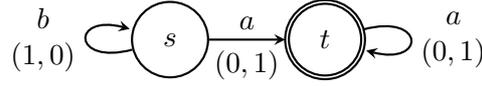
  
  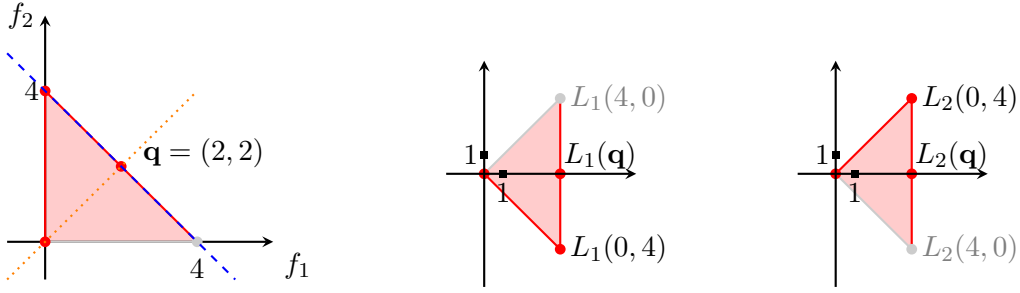
\begin{figure}
    \centering
    \begin{subfigure}[t]{0.38\textwidth}
      \centering
      \begin{tikzpicture}
        \coordinate (og) at (0,0);
        \coordinate (1x) at (2,0);
        \coordinate (1y) at (0,2);
        \coordinate (m) at (1,1);
        \coordinate (l1) at (-0.5,2.5);
        \coordinate (l2) at (2.5,-0.5);
        \coordinate (p1) at (-0.5, -0.5);
        \coordinate (p2) at (2, 2);

        \fill[red!20] (1x) --(1y) -- (og) -- (1x);
        \draw[-stealth] (-0.5,0) -- (3,0) node[below right] {$\payoff_1$}; 
        \draw[-stealth] (0,-0.5) -- (0,3) node[left] {$\payoff_2$};
        \draw[red,thick] (og) -- (1y) --(1x);
        \draw[thick,black!20] (og) -- (1x);

        \node[stochasticc,red] at (og) (qog) {};
        \node[stochasticc,red] at (1y) (qy){};
        \node[xshift=-5] at (1y) {$4$};
        \node[stochasticc,black!20] at (1x) (qx){};
        \node[yshift=-10] at (1x) {$4$};
        \node[stochasticc,red] at (m) (qxy) {};
        \node[xshift=31,yshift=5] at (m) {$\payoffVect=(2, 2)$};

        \draw[-, thick, dashed, color=blue]  (l1) -- (l2);
        \draw[-, thick, dotted, color=orange]  (p1) -- (p2);
      \end{tikzpicture}
      \caption{The blue dashed line ($x+y = 4$) is a hyperplane supporting $\paySet{\payoffTuple}{\mdpState}$ at $\payoffVect$. The orange dotted line ($x-y = 0$) is an orthogonal hyperplane included for reference for the adjacent figures.}\label{fig:mixing:exact:2:graph:a}
    \end{subfigure}\hfill \begin{subfigure}[t]{0.29\textwidth}
      \centering
      \begin{tikzpicture}
        \coordinate (og) at (0,0);
        \coordinate (x) at (0.25, 0);
        \coordinate (y) at (0, 0.25);
        \coordinate (1x) at (1,1);
        \coordinate (1y) at (1,-1);
        \coordinate (m) at (1,0);
        
        \fill[red!20] (1x) --(1y) -- (og) -- (1x);
        \draw[thick,red] (og) -- (1y) --(1x);
        \draw[thick,black!20] (og) -- (1x);

        \node[stochasticc,red] at (og) (qog) {};
        \node[stochasticc,red] at (1y) (qy){};
        \node[xshift=22] at (1y) {$\linMap_1(0, 4)$};
        \node[stochasticc,black!20] at (1x) (qx){};
        \node[black!50, xshift=22] at (1x) {$\linMap_1(4, 0)$};
        \node[stochasticc,red] at (m) (qxy) {};
        \node[xshift=15,yshift=7] at (m) {$\linMap_1(\payoffVect)$};
        
        \node[stochastics] at (x) {};
        \node[stochastics] at (y) {};
        \node[yshift=-7] at (x) {$1$};
        \node[xshift=-5] at (y) {$1$};

        \draw[-stealth] (-0.5,0) -- (2,0);\draw[-stealth] (0,-1.5) -- (0,1.5);\end{tikzpicture}
      \caption{Image of the payoff set in Figure~\ref{fig:mixing:exact:2:graph:a} by the linear mapping $\linMap_1$ such that $(\vectComp_1,\vectComp_2)\mapsto(\vectComp_1+\vectComp_2, \vectComp_1-\vectComp_2)$.}\label{fig:mixing:exact:2:graph:b}
    \end{subfigure}\hfill \begin{subfigure}[t]{0.29\textwidth}
      \centering
      \begin{tikzpicture}
        \coordinate (og) at (0,0);
        \coordinate (x) at (0.25, 0);
        \coordinate (y) at (0, 0.25);
        \coordinate (1x) at (1,-1);
        \coordinate (1y) at (1,1);
        \coordinate (m) at (1,0);

        \fill[red!20] (1x) --(1y) -- (og) -- (1x);
        \draw[thick,red] (og) -- (1y) --(1x);
        \draw[thick,black!20] (og) -- (1x);

        \node[stochasticc,red] at (og) (qog) {};
        \node[stochasticc,red] at (1y) (qy){};
        \node[xshift=22] at (1y) {$\linMap_2(0, 4)$};
        \node[stochasticc,black!20] at (1x) (qx){};
        \node[black!50, xshift=22] at (1x) {$\linMap_2(4, 0)$};
        \node[stochasticc,red] at (m) (qxy) {};
        \node[xshift=15,yshift=7] at (m) {$\linMap_2(\payoffVect)$};
        
        \node[stochastics] at (x) {};
        \node[stochastics] at (y) {};
        \node[yshift=-7] at (x) {$1$};
        \node[xshift=-5] at (y) {$1$};

        \draw[-stealth] (-0.5,0) -- (2,0);\draw[-stealth] (0,-1.5) -- (0,1.5);\end{tikzpicture}
      \caption{Image of the payoff set in Figure~\ref{fig:mixing:exact:2:graph:a} by the linear mapping $\linMap_2$ such that $(\vectComp_1,\vectComp_2)\mapsto(\vectComp_1+\vectComp_2, \vectComp_2-\vectComp_1)$.}\label{fig:mixing:exact:2:graph:c}
    \end{subfigure}
    \caption{The set of expected payoffs for Example~\ref{ex:mixing:exact:2} and its image by two (related) linear functions. The segment $\ocInt{(0, 0)}{(4, 0)}$ in grey does not intersect $\paySet{\payoffTuple}{\mdpState}$. Its image is similarly coloured in the two other illustrations.}\label{fig:mixing:exact:2:graphs}
  \end{figure}

  The set $\paySet{\payoffTuple}{\mdpState}$ is illustrated in Figure~\ref{fig:mixing:exact:2:graph:a}.
  Any vector in $\paySet{\payoffTuple}{\mdpState}$ is a convex combination of $\zeroVect$ and a vector in the segment $\coInt{(0, 4)}{(4, 0)}$.
  In particular, no strategy has an expected payoff of $(4, 0)$ from $\mdpState$.
  We can derive $\paySet{\payoffTuple}{\mdpState}$ from $\paySetPure{\payoffTuple}{\mdpState} = \{\zeroVect\}\cup\{(4-\frac{3^\indexPosition}{4^{\indexPosition-1}}, \frac{3^\indexPosition}{4^{\indexPosition-1}})\mid\indexPosition\in\IN\}$.
  To obtain $\paySetPure{\payoffTuple}{\mdpState}$, we note that any pure strategy in this MDP induces a single play from $\mdpState$, because all transitions are deterministic.
  On the one hand, we can obtain the payoff $\zeroVect$ with the play $(\mdpState\mdpActionB)^\omega$ (the payoff is zero on the first dimension because $\mdpStateB$ is not visited).
  On the other hand, for all $\indexPosition\in\IN$, we have $\payoffTuple((\mdpState\mdpActionB)^\indexPosition\mdpState(\mdpAction\mdpStateB)^\omega) = \left(4- \frac{3^\indexPosition}{4^{\indexPosition-1}}, \frac{3^\indexPosition}{4^{\indexPosition-1}}\right)$.
  
  We consider the payoff vector $\payoffVect = (2, 2)$ and construct $\linMap_\payoffVect$.
  We remark that the vector $\payoffVect$ is not a convex combination of extreme points of $\paySet{\payoffTuple}{\mdpState}$.
  Therefore, is not possible to conclude that $\payoffVect\in\convex{\paySetPure{\payoffTuple}{\mdpState}}$ by adapting the argument of Example~\ref{ex:mixing:exact:1} to deal with all extreme points.
  The only hyperplane $\hplane$ that support $\paySet{\payoffTuple}{\mdpState}$ at $\payoffVect$ is the line depicted in blue in Figure~\ref{fig:mixing:exact:2:graph:a}.
  We let $\linForm_1\colon\IR^2\to\IR$ be the linear form defined by $\linForm_2(\vect) = \vectComp_1+\vectComp_2$ for all $\vect=(\vectComp_1, \vectComp_2)\in\IR^2$.
  We have $\hplane = (\linForm_1)^{-1}(4)$.
  We observe (via Figure~\ref{fig:mixing:exact:2:graph:a}) that $\payoffVect$ is in the relative interior of $\paySet{\payoffTuple}{\mdpState}\cap\hplane$.
  We define $\linMap_\payoffVect=\linForm_1$.

  To close this example, we provide an argument based on $\linMap_\payoffVect^{-1}(\linMap_\payoffVect(\payoffVect))$ being a line to show that $\payoffVect$ is a convex combination of expected payoffs of pure strategies.
While this argument differs from the general proof provided below, it can be generalised to show that $\payoffVect\in\convex{\paySetPure{\payoffTuple}{\mdpState}}$ whenever $\linMap_\payoffVect^{-1}(\linMap_\payoffVect(\payoffVect))$ is a line.
  This argument consists in showing that there are payoffs of pure strategies on either side of $\payoffVect$ on the line segment $\paySet{\payoffTuple}{\mdpState}\cap\hplane = \coInt{(0, 4)}{(4, 0)}$.
  This implies that $\payoffVect\in\convex{\paySetPure{\payoffTuple}{\mdpState}}$.

  We fix a direction vector $\vect_\hplane=(1, -1)$ of $\hplane$.
  For all vectors $\payoffVectB$ of $\paySet{\payoffTuple}{\mdpState}$ in the segment $\coInt{\payoffVect}{(4, 0)}$ (resp.~$\ccInt{\payoffVect}{(0, 4)}$), we have $\scalarProd{\payoffVectB}{\vect_\hplane}\geq\scalarProd{\payoffVect}{\vect_\hplane}$ (resp.~$\scalarProd{\payoffVectB}{-\vect_\hplane}\geq\scalarProd{\payoffVect}{-\vect_\hplane}$).
  Consider the linear mappings $\linMap_1\colon\vectB\mapsto(\linForm_1(\vectB), \scalarProd{\vectB}{\vect_\hplane})$ and $\linMap_2\colon\vectB\mapsto(\linForm_1(\vectB), \scalarProd{\vectB}{-\vect_\hplane})$ over $\IR^2$.
  We illustrate the image of $\paySet{\payoffTuple}{\mdpState}$ by $\linMap_1$ and $\linMap_2$ respectively in Figure~\ref{fig:mixing:exact:2:graph:b} and Figure~\ref{fig:mixing:exact:2:graph:c}.

  Theorem~\ref{thm:lexico:pure} implies that, for $i=1, 2$, there exists a pure strategy $\stratMDP_i$ such that $\linMap_i(\payoffVectB_i)\geLex\linMap_i(\payoffVect)$ where $\payoffVectB_i = \expectancy^{\stratMDP_i}_{\mdpState}(\payoffTuple)$.
  We obtain, by definition of $\linMap_i$, that $\linForm_1(\payoffVectB_i)=\linForm_1(\payoffVect)$ for $i=1, 2$, as $\linForm_1$ supports $\paySet{\payoffTuple}{\mdpState}$ at $\payoffVect$.
  Therefore, $\scalar_1\coloneqq \scalarProd{\payoffVectB_1}{\vect_\hplane}\geq 0$ and $\scalar_2\coloneqq \scalarProd{\payoffVectB_2}{\vect_\hplane} \leq 0$.
  Furthermore, for $i=1, 2$, it holds that $\payoffVectB_i = \payoffVect + \frac{\scalar_i}{\|\vect_\hplane\|_2}\vect_\hplane$ because ($\frac{1}{\|\payoffVect\|_2}\payoffVect, \frac{1}{\|\vect_\hplane\|_2}\vect_\hplane)$ is an orthonormal basis of $\IR^2$, $\scalarProd{\payoffVectB_i}{\frac{1}{\|\payoffVect\|_2}\payoffVect} = \frac{2}{\|\payoffVect\|_2}\cdot\linForm_1(\payoffVectB_i) = 2\sqrt{2} = \|\payoffVect\|_2$ (as $\linForm_1(\payoffVectB_i)=\linForm_1(\payoffVect)$) and $\scalarProd{\payoffVectB_i}{\frac{1}{\|\vect_\hplane\|_2}\vect_\hplane} = \frac{\scalar_i}{\|\vect_\hplane\|_2}$ (by definition of $\scalar_i$).
Thus, $\payoffVect\in\ccInt{\payoffVectB_1}{\payoffVectB_2}\subseteq\convex{\paySetPure{\payoffTuple}{\mdpState}}$.
  \hfill$\lhd$
\end{example}

\paragraph{Formal statement and proof.}
We formally state the main theorem of this section and prove it.

\begin{restatable}{theorem}{thmMixingExact}\label{thm:mixing:exact}
  Assume that $\payoffTuple$ is universally integrable.
  For all $\mdpState\in\mdpStateSpace$, we have $\paySet{\payoffTuple}{\mdpState} = \convex{\paySetPure{\payoffTuple}{\mdpState}}$.
  In other words, the expected payoff of any strategy is also the expected payoff of a finite-support mixed strategy.
\end{restatable}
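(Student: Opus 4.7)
The plan is to establish both inclusions. The inclusion $\convex{\paySetPure{\payoffTuple}{\mdpState}} \subseteq \paySet{\payoffTuple}{\mdpState}$ is immediate: any convex combination $\sum_{\indexSequenceB=1}^{\indexSequence} \scalar_\indexSequenceB \expectancy^{\stratBMDP_\indexSequenceB}_\mdpState(\payoffTuple)$ coincides, by Lemma~\ref{lem:expectancy:pure integral} applied coordinate-wise, with the expected payoff of the finite-support mixed strategy assigning probability $\scalar_\indexSequenceB$ to $\stratBMDP_\indexSequenceB$ (the boundedness hypothesis of the lemma holds because each component of $\payoffTuple$ is universally integrable).

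For the reverse inclusion, fix $\payoffVect \in \paySet{\payoffTuple}{\mdpState}$. The core step is to construct a linear map $\linMap_\payoffVect \colon \IR^\numObj \to \IR^{\numObj'}$ with $\numObj' \leq \numObj$ such that \textbf{(i)} $\linMap_\payoffVect(\payoffVect)$ is the lexicographic maximum of $\linMap_\payoffVect(\paySet{\payoffTuple}{\mdpState})$, and \textbf{(ii)} writing $\vectSpace = \linMap_\payoffVect^{-1}(\linMap_\payoffVect(\payoffVect))$, we have $\payoffVect \in \relInt{\vectSpace \cap \paySet{\payoffTuple}{\mdpState}}$. This is achieved inductively: if $\payoffVect \in \relInt{\paySet{\payoffTuple}{\mdpState}}$ take $\linMap_\payoffVect = 0$; otherwise Theorem~\ref{thm:hyperplane:supporting} supplies a linear form $\linForm_1$ supporting $\paySet{\payoffTuple}{\mdpState}$ at $\payoffVect$, and we iterate on the slice $\paySet{\payoffTuple}{\mdpState} \cap \linForm_1^{-1}(\linForm_1(\payoffVect))$, applying the supporting hyperplane theorem inside its affine hull and extending each form arbitrarily to $\IR^\numObj$. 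The clause $D \nsubseteq \hplane$ in Theorem~\ref{thm:hyperplane:supporting} ensures that the affine dimension of the intersection strictly decreases at each step, so the procedure terminates after at most $\numObj$ iterations. Property (i) survives arbitrary extension because any $\payoffVectB \in \paySet{\payoffTuple}{\mdpState}$ that ties with $\payoffVect$ on the first $\indexPayoff-1$ coordinates must lie in $\bigcap_{\indexPayoffB < \indexPayoff} \hplane_\indexPayoffB$, where each extended form agrees with the supporting form it extends.

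With $\linMap_\payoffVect$ in hand, the plan is to show that $\vectSpace \cap \paySet{\payoffTuple}{\mdpState} \subseteq \closure{\vectSpace \cap \convex{\paySetPure{\payoffTuple}{\mdpState}}}$; combined with the trivial reverse inclusion, this yields equality of closures and hence of relative interiors via the standard identity $\relInt{C} = \relInt{\closure{C}}$ for convex $C$~\cite[Thm.~6.3]{DBLP:books/degruyter/Rockafellar70}, so that (ii) places $\payoffVect$ in $\convex{\paySetPure{\payoffTuple}{\mdpState}}$. Suppose for contradiction some $\payoffVectB \in \vectSpace \cap \paySet{\payoffTuple}{\mdpState}$ lies outside $\closure{\vectSpace \cap \convex{\paySetPure{\payoffTuple}{\mdpState}}}$. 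Theorem~\ref{thm:hyperplane:separation} then produces a linear form $\linForm_\star$ with $\linForm_\star(\payoffVectB) > \linForm_\star(\vect)$ for every $\vect$ in that closed convex set. Applying Theorem~\ref{thm:lexico:pure} to the universally integrable $(\numObj'+1)$-dimensional payoff $(\linMap_\payoffVect, \linForm_\star) \circ \payoffTuple$ at a strategy realising $\payoffVectB$ yields a pure strategy $\stratBMDP$ whose expected payoff is lexicographically at least $(\linMap_\payoffVect(\payoffVectB), \linForm_\star(\payoffVectB))$. Since $\linMap_\payoffVect(\payoffVectB)$ is already the lex-maximum of $\linMap_\payoffVect(\paySet{\payoffTuple}{\mdpState})$, equality must hold on the first $\numObj'$ coordinates, placing $\expectancy^{\stratBMDP}_\mdpState(\payoffTuple)$ in $\vectSpace \cap \paySetPure{\payoffTuple}{\mdpState}$ and forcing $\linForm_\star(\expectancy^{\stratBMDP}_\mdpState(\payoffTuple)) \geq \linForm_\star(\payoffVectB)$, contradicting the strong separation.

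The main obstacle I expect is the rigorous bookkeeping in the inductive construction of $\linMap_\payoffVect$: one must check that the form produced by the supporting hyperplane theorem inside $\spanAff{\paySet{\payoffTuple}{\mdpState} \cap \bigcap_{\indexPayoffB \leq \indexPayoff} \hplane_\indexPayoffB}$ can be extended to $\IR^\numObj$ while preserving both the strict decrease of dimension (so termination is guaranteed) and property (i). The lexicographic structure handles the latter essentially for free, but articulating this cleanly is the subtle point; once done, invoking Theorem~\ref{thm:lexico:pure} on the augmented payoff is precisely what converts the hyperplane separation into a contradiction.
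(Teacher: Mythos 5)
Your proposal is correct and follows essentially the same route as the paper's proof: the inductive construction of $\linMap_\payoffVect$ via repeated applications of the supporting hyperplane theorem, the reduction to equality of closures of $\vectSpace\cap\paySet{\payoffTuple}{\mdpState}$ and $\vectSpace\cap\convex{\paySetPure{\payoffTuple}{\mdpState}}$ through~\cite[Thm.~6.3]{DBLP:books/degruyter/Rockafellar70}, and the final contradiction obtained by combining strong hyperplane separation with Theorem~\ref{thm:lexico:pure} applied to the augmented payoff $(\linMap_\payoffVect,\linForm_\star)\circ\payoffTuple$. The only cosmetic difference is that the paper translates to $D=\paySet{\payoffTuple}{\mdpState}-\payoffVect$ and tracks a nested chain of kernels to handle the extension and termination bookkeeping you flag as the subtle point, whereas you work directly with affine slices; both resolve that bookkeeping in the same way.
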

\begin{proof}
  Throughout this proof, we assume that for all $1\leq\indexPayoff\leq\numObj$, $\payoff_\indexPayoff$ is a real-valued payoff.
  This is without loss of generality: these payoffs are universally integrable,  and thus are $\proba^{\stratMDP}_{\mdpState}$-almost-surely real-valued for all $\stratMDP\in\stratClassAll{\mdp}$ and $\mdpState\in\mdpStateSpace$.
  
  It is sufficient to show that $\paySet{\payoffTuple}{\mdpState}\subseteq\convex{\paySetPure{\payoffTuple}{\mdpState}}$.
  Let $\payoffVect\in\paySet{\payoffTuple}{\mdpState}$.
  We construct the linear mapping $\linMap_\payoffVect$ as explained in the sketch, i.e., such that $\linMap_\payoffVect(\payoffVect)$ is the lexicographic maximum of $\linMap_\payoffVect(\paySet{\payoffTuple}{\mdpState})$ and $\payoffVect\in\relInt{\linMap_\payoffVect^{-1}(\linMap_\payoffVect(\payoffVect))\cap\paySet{\payoffTuple}{\mdpState}}$.

  Let $D = \paySet{\payoffTuple}{\mdpState} - \payoffVect$.
  We observe that $\linMap_\payoffVect$ satisfies the conditions above if and only if $\linMap_\payoffVect(\zeroVect)$ is the lexicographic maximum of $\linMap_\payoffVect(D)$ and $\zeroVect$ is in the relative interior of $D\cap\ker{\linMap_\payoffVect}$.
  We construct $\linMap_\payoffVect$ by working with $D$ and $\zeroVect$ instead of $\paySet{\payoffTuple}{\mdpState}$ and $\payoffVect$.
  This allows us to work with vector sub-spaces instead of affine spaces, overall simplifying the presentation.

  We let $\linFormB_0\colon\IR^\numObj\to\IR$ be the constant zero function.
  If $\zeroVect\in\relInt{D}$, we let $\linMap_\payoffVect=\linFormB_0$.
  This function satisfies the desired properties.
  We now assume that $\zeroVect\notin\relInt{D}$.
  We inductively define a sequence of non-zero linear forms $\linFormB_1, \ldots, \linFormB_{\numObj'}$ such that $\linFormB_\indexPayoff\colon\ker{\linFormB_{\indexPayoff-1}}\to\IR$  for all $1\leq\indexPayoff\leq\numObj'$.
  Next, for all $1\leq\indexPayoff\leq\numObj'$, we extend $\linFormB_\indexPayoff$ to a form $\linForm_\indexPayoff\colon\IR^\numObj\to\IR$.
  Finally, we define the mapping $\linMap_\payoffVect$ as $\linMap_\payoffVect(\vect) = (\linForm_1(\vect), \ldots, \linForm_{\numObj'}(\vect))$ for all $\vect\in\IR^\numObj$ and show that it satisfies the desired properties.

  Let $\indexPayoff\geq 1$.
  By induction, assume that $\linFormB_{\indexPayoff-1}$ is defined (this is the case even for $\indexPayoff=1$).
  We distinguish two cases.
  If $\zeroVect\in\relInt{\ker{\linFormB_{\indexPayoff-1}}\cap D}$, we stop the construction.
  We remark that if $\indexPayoff = \numObj+1$, then $\ker{\linFormB_{\indexPayoff-1}}$ is a singleton set and we are necessarily in this case (i.e., $\numObj'\leq\numObj$).
  Now, assume that $\zeroVect\notin\relInt{\ker{\linFormB_{\indexPayoff-1}}\cap D}$.
  The supporting hyperplane theorem (Theorem~\ref{thm:hyperplane:supporting}) implies that there exists a linear form $\linFormB_\indexPayoff\colon\ker{\linFormB_{\indexPayoff-1}}\to\IR$ such that for all $\payoffVectB\in D\cap\ker{\linFormB_{\indexPayoff-1}}$, we have $\linFormB_\indexPayoff(\payoffVectB)\leq 0 = \linFormB_\indexPayoff(\zeroVect)$.
  This allows us to continue with the induction.

  Assume that the procedure above has provided linear forms $\linFormB_1$,\ldots,$\linFormB_{\numObj'}$.
  We now extend them to $\IR^\numObj$.
  Let $1\leq\indexPayoff\leq\numObj'$.
  There exists $\vectB_\indexPayoff\in\ker{\linFormB_{\indexPayoff-1}}\subseteq\IR^\numObj$ such that for all $\vect\in\ker{\linFormB_{\indexPayoff-1}}$, we have $\linFormB_\indexPayoff(\vect) = \scalarProd{\vect}{\vectB_\indexPayoff}$.
  We define $\linForm_\indexPayoff\colon\IR^\numObj\to\IR$ by, for all $\vect\in\IR^\numObj$, $\linForm_\indexPayoff(\vect) = \scalarProd{\vect}{\vectB_\indexPayoff}$.
  We let $\linMap_\payoffVect\colon\IR^\numObj\to\IR^{\numObj'}$ be such that $\linMap_\payoffVect(\vect) = (\linForm_1(\vect), \ldots, \linForm_{\numObj'}(\vect))$ for all $\vect\in\IR^\numObj$.

  We now show that $\linMap_\payoffVect$ satisfies the required properties.
  By construction, for all $\payoffVectB\in D$ and all $1\leq\indexPayoff\leq\numObj'$, if $\linForm_{\indexPayoff'}(\payoffVectB) = 0$ for all $1\leq\indexPayoff'\leq\indexPayoff-1$, then necessarily $\payoffVectB\in\ker{\linFormB_{\indexPayoff-1}}$, and thus $\linForm_{\indexPayoff}(\payoffVectB)\leq 0$.
  This implies that for all $\payoffVectB\in D$, $\linMap_\payoffVect(\payoffVectB)\leLex\linMap_\payoffVect(\zeroVect)$.
  This shows that $\linMap_\payoffVect(\zeroVect)$ is the lexicographic maximum of $\linMap_\payoffVect(D)$.
  Next, we show that $\zeroVect\in\relInt{D\cap\ker{\linMap_\payoffVect}}$.
  This follows from the stopping condition in the construction of $\linMap_\payoffVect$ and the equality $\ker{\linMap_\payoffVect} =\bigcap_{1\leq\indexPayoff\leq\numObj'}\ker{\linForm_{\indexPayoff}} = \ker{\linFormB_{\numObj'}}$ (the second equality follows from $\linForm_1=\linFormB_1$ and $\ker{\linFormB_1}\supsetneq\ldots\supsetneq \ker{\linFormB_{\numObj'}}$).

  Let $\vectSpace = \linMap_\payoffVect^{-1}(\linMap_\payoffVect(\payoffVect))$.
  We have shown that $\payoffVect\in\relInt{\vectSpace\cap\paySet{\payoffTuple}{\mdpState}}$.
  It suffices to show that $\relInt{\vectSpace\cap\paySet{\payoffTuple}{\mdpState}} = \relInt{\vectSpace\cap\convex{\paySetPure{\payoffTuple}{\mdpState}}}$ to conclude that $\payoffVect\in\convex{\paySetPure{\payoffTuple}{\mdpState}}$.
  Since all convex subsets of $\IR^\numObj$ have the same relative interior as their closure~\cite[Theorem 6.3]{DBLP:books/degruyter/Rockafellar70}, the equality of the relative interiors stated before is implied by the relation $\closure{\vectSpace\cap\paySet{\payoffTuple}{\mdpState}} = \closure{\vectSpace\cap\convex{\paySetPure{\payoffTuple}{\mdpState}}}$.
  To end the proof, we show this equality of closures.
  The inclusion $\closure{\vectSpace\cap\convex{\paySetPure{\payoffTuple}{\mdpState}}}\subseteq\closure{\vectSpace\cap\paySet{\payoffTuple}{\mdpState}}$ is direct.
  
  For the other inclusion, it suffices to show that  $\vectSpace\cap\paySet{\payoffTuple}{\mdpState}\subseteq\closure{\vectSpace\cap\convex{\paySetPure{\payoffTuple}{\mdpState}}}$.
  Let $\payoffVectB\in{\vectSpace\cap\paySet{\payoffTuple}{\mdpState}}$.
  Assume, by contradiction, that $\payoffVectB\notin\closure{\vectSpace\cap\convex{\paySetPure{\payoffTuple}{\mdpState}}}$.
  By the hyperplane separation theorem (Theorem~\ref{thm:hyperplane:separation}), there exists a linear form $\linForm$ over $\IR^\numObj$ such that for all $\payoffVectB'\in\vectSpace\cap\convex{\paySetPure{\payoffTuple}{\mdpState}}$, we have $\linForm(\payoffVectB)>\linForm(\payoffVectB')$.
  Let $\linMapB_\payoffVect\colon\IR^\numObj\to\IR^{\numObj'+1}$ such that for all $\vect\in\IR^\numObj$, we have $\linMapB_\payoffVect(\vect) = (\linMap_\payoffVect(\vect), \linForm(\vect))$.

  Let $\stratMDP$ such that $\payoffVectB = \expectancy^{\stratMDP}_{\mdpState}(\payoffTuple)$.
  By Theorem~\ref{thm:lexico:pure}, there exists a pure strategy $\stratBMDP$ such that $\expectancy^{\stratBMDP}_{\mdpState}(\linMapB_\payoffVect\circ\payoffTuple)\geLex \expectancy^{\stratMDP}_{\mdpState}(\linMapB_\payoffVect\circ\payoffTuple) = \linMapB_\payoffVect(\payoffVectB)$.
  We have $\expectancy^\stratBMDP_{\mdpState}(\payoffTuple)\in\vectSpace$ because $\linMap_\payoffVect(\expectancy^{\stratBMDP}_{\mdpState}(\payoffTuple)) = \expectancy^{\stratBMDP}_{\mdpState}(\linMap_\payoffVect\circ\payoffTuple)\geLex\linMap_\payoffVect(\payoffVectB) = \linMap_\payoffVect(\payoffVect)$ and $\linMap_\payoffVect(\payoffVect)$ is lexicographically optimal in $\linMap_\payoffVect(\paySet{\payoffTuple}{\mdpState})$.
  It follows that $\expectancy^{\stratBMDP}_{\mdpState}(\linForm\circ\payoffTuple) =\linForm(\expectancy^{\stratBMDP}_{\mdpState}(\payoffTuple)) \geq\linForm(\payoffVectB)$.
  This is a contradiction with $\linForm$ defining a strongly separating hyperplane.
\end{proof}

We now formulate two corollaries of Theorem~\ref{thm:mixing:exact}.
The first one relates to extreme points of payoffs sets.
\begin{restatable}{corollary}{corMixingExactExtreme}\label{cor:mixing:exact:extreme}
  Assume that $\payoffTuple$ is universally integrable.
  For all $\mdpState\in\mdpStateSpace$, $\corners{\paySet{\payoffTuple}{\mdpState}}\subseteq\paySetPure{\payoffTuple}{\mdpState}$, i.e., all extreme points of $\paySet{\payoffTuple}{\mdpState}$ are payoffs of pure strategies.
\end{restatable}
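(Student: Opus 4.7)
The plan is to derive the corollary as a direct consequence of Theorem~\ref{thm:mixing:exact}, using only the definition of an extreme point. Fix $\mdpState \in \mdpStateSpace$ and let $\payoffVect \in \corners{\paySet{\payoffTuple}{\mdpState}}$. Since $\payoffVect$ is in particular in $\paySet{\payoffTuple}{\mdpState}$, Theorem~\ref{thm:mixing:exact} guarantees that $\payoffVect \in \convex{\paySetPure{\payoffTuple}{\mdpState}}$. The only thing to rule out is the possibility that $\payoffVect$ sits in this convex hull without being one of the points generating it.

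I would proceed by contradiction. Suppose $\payoffVect \notin \paySetPure{\payoffTuple}{\mdpState}$. Then $\paySetPure{\payoffTuple}{\mdpState} \subseteq \paySet{\payoffTuple}{\mdpState} \setminus \{\payoffVect\}$, so by monotonicity of the convex hull,
\[
\payoffVect \in \convex{\paySetPure{\payoffTuple}{\mdpState}} \subseteq \convex{\paySet{\payoffTuple}{\mdpState} \setminus \{\payoffVect\}}.
\]
This contradicts the assumption that $\payoffVect$ is an extreme point of $\paySet{\payoffTuple}{\mdpState}$, which by definition forbids $\payoffVect$ from being a convex combination of elements of $\paySet{\payoffTuple}{\mdpState} \setminus \{\payoffVect\}$. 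Hence $\payoffVect \in \paySetPure{\payoffTuple}{\mdpState}$, concluding the argument.

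There is no real obstacle here: all the work has been done in Theorem~\ref{thm:mixing:exact}, and the corollary is essentially the observation that extreme points of a convex hull of a set $D$ must belong to $D$ itself. The only minor point to be careful about is phrasing the contradiction in terms of the definition of an extreme point given in Section~\ref{section:prelim}, namely $\payoffVect \notin \convex{\paySet{\payoffTuple}{\mdpState} \setminus \{\payoffVect\}}$, rather than invoking any auxiliary characterisation.
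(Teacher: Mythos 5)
Your proposal is correct and follows essentially the same route as the paper: both derive the corollary immediately from Theorem~\ref{thm:mixing:exact} via the observation that an extreme point of $\convex{\paySetPure{\payoffTuple}{\mdpState}}$ must lie in $\paySetPure{\payoffTuple}{\mdpState}$. Your version merely spells out the monotonicity-of-convex-hull step that the paper leaves implicit.
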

\begin{proof}
    By Theorem~\ref{thm:mixing:exact}, we have $\paySet{\payoffTuple}{\mdpState} = \convex{\paySetPure{\payoffTuple}{\mdpState}}$.
  All extreme points of $\convex{\paySetPure{\payoffTuple}{\mdpState}}$ must be in $\paySetPure{\payoffTuple}{\mdpState}$ by definition of the convex hull.
\end{proof}

Second, we establish that, for all $\mdpState\in\mdpStateSpace$, $\paySet{\payoffTuple}{\mdpState}$ is closed whenever $\paySetPure{\payoffTuple}{\mdpState}$ is closed.
We note that, in general, the convex hull of a closed set need not be closed.
However, the convex hull of a compact subset of $\IR^\numObj$ is closed (Lemma~\ref{lemma:convex hull of compact}).
Since the set of expected payoffs of a universally integrable payoff function is bounded, it is thus compact, implying the claimed property.

\begin{restatable}{corollary}{corMixingExactCompact}\label{cor:mixing:exact:compact}
  Assume that $\payoffTuple$ is universally integrable.
  For all $\mdpState\in\mdpStateSpace$, if $\paySetPure{\payoffTuple}{\mdpState}$ is closed, then $\paySet{\payoffTuple}{\mdpState}$ is compact.
\end{restatable}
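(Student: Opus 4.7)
The plan is to combine Theorem~\ref{thm:mixing:exact} with Lemma~\ref{lemma:convex hull of compact}, the bulk of the work being to verify that $\paySetPure{\payoffTuple}{\mdpState}$ is not merely closed but actually compact. Since $\paySet{\payoffTuple}{\mdpState}\subseteq\IR^\numObj$ (universal integrability forces finite expectations), compactness in our setting reduces to closedness plus boundedness.

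First I would invoke Theorem~\ref{thm:mixing:exact} to rewrite $\paySet{\payoffTuple}{\mdpState} = \convex{\paySetPure{\payoffTuple}{\mdpState}}$. Next, I would argue that $\paySetPure{\payoffTuple}{\mdpState}$ is bounded in $\IR^\numObj$. For each coordinate $1\leq\indexPayoff\leq\numObj$, universal integrability of $\payoff_\indexPayoff$ combined with Lemma~\ref{lem:ui:characterisation} (applied to $|\payoff_\indexPayoff|$) yields $\sup_{\stratBMDP\in\stratClassPure{\pomdp}}|\expectancy^{\stratBMDP}_\mdpState(\payoff_\indexPayoff)| \leq \sup_{\stratBMDP\in\stratClassPure{\pomdp}}\expectancy^{\stratBMDP}_\mdpState(|\payoff_\indexPayoff|) \in\IR$, so $\paySetPure{\payoffTuple}{\mdpState}$ is contained in a product of bounded real intervals. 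Combined with the hypothesis that $\paySetPure{\payoffTuple}{\mdpState}$ is closed, this gives that it is a closed and bounded subset of $\IR^\numObj$, hence compact.

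Finally I would apply Lemma~\ref{lemma:convex hull of compact} to conclude that $\convex{\paySetPure{\payoffTuple}{\mdpState}}$ is compact, and the equality from the first step transfers this to $\paySet{\payoffTuple}{\mdpState}$. There is no real obstacle here: the nontrivial content is packaged in the two cited results, and the only thing to check carefully is the boundedness argument, which is immediate once one invokes the characterisation of universal integrability via pure strategies in Lemma~\ref{lem:ui:characterisation}.
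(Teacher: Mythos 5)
Your proposal is correct and follows essentially the same route as the paper: boundedness of $\paySetPure{\payoffTuple}{\mdpState}$ via Lemma~\ref{lem:ui:characterisation}, compactness from closedness plus boundedness, then Theorem~\ref{thm:mixing:exact} and Lemma~\ref{lemma:convex hull of compact}. Your coordinate-wise spelling-out of the boundedness step is just a more explicit version of what the paper states in one line.
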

\begin{proof}
  Let $\mdpState\in\mdpStateSpace$ such that $\paySetPure{\payoffTuple}{\mdpState}$ is closed.
  By Lemma~\ref{lem:ui:characterisation}, $\payoffTuple$ is universally integrable if and only if $\paySetPure{\payoffTuple}{\mdpState}$ is bounded.
  It follows that $\paySetPure{\payoffTuple}{\mdpState}$ is compact.
  Theorem~\ref{thm:mixing:exact} ensures that $\paySet{\payoffTuple}{\mdpState} = \convex{\paySetPure{\payoffTuple}{\mdpState}}$, and thus $\paySet{\payoffTuple}{\mdpState}$ is compact by Lemma~\ref{lemma:convex hull of compact}.
\end{proof}

We close this section by showing that Theorem~\ref{thm:mixing:exact} does not generalise to universally unambiguously integrable payoffs.

\begin{example}[Example~\ref{ex:lexico:general} continued]\label{ex:mixing:approx}
  We consider the MDP $\mdp$ depicted in Figure~\ref{fig:mixing:approx:1} and the payoff function $\payoffTuple = (\indic{\reach{\{\mdpStateB\}}}, \totrew{\weight})$ where $\weight$ is the weight function of Figure~\ref{fig:mixing:approx:1}.
  In Example~\ref{ex:lexico:general}, we have shown that there exists a randomised strategy $\stratMDP$ such that $\expectancy^{\stratMDP}_\mdpState(\payoffTuple) = (1, +\infty)$ and that $\paySetPure{\payoffTuple}{\mdpState} = \{(0, +\infty)\}\cup \{(1, \indexPosition)\mid\indexPosition\in\IN\}$.
  
  We show that $(1, +\infty)\notin\convex{\paySetPure{\payoffTuple}{\mdpState}}$.
  On the one hand, convex combinations of these vectors that give a non-zero coefficient to $(0, +\infty)$ have a first component is not equal to $1$.
  On the other hand, convex combinations that assign a zero coefficient to $(0, +\infty)$ have a finite second component.
  We obtain that $(1, +\infty)\notin\convex{\paySetPure{\payoffTuple}{\mdpState}}$.

  Although we cannot have a payoff of $(1, +\infty)$ with finite-support mixed strategies, we can approximate it with such strategies.
  We generalise this observation in the next section.
  \hfill$\lhd$
\end{example}

\subsection{Universally unambiguously integrable payoffs}\label{section:achievable:general}
We now relax the assumption that $\payoffTuple$ is universally integrable and assume that $\payoffTuple$ is universally unambiguously integrable.
We formulate an approximate variant of Theorem~\ref{thm:mixing:exact}: from a given state, any expected payoff of a strategy can be approached by convex combinations of expected payoffs of pure strategies (in the sense of limits in $\IRbar^\numObj$).

We provide a proof sketch, followed by the formal statement of the result and its proof.

\paragraph{Proof overview.}
Fix $\mdpState\in\mdpStateSpace$ and a mixed strategy $\mixedStrat$.
The goal is to show that all neighbourhoods of $\expectancy^{\mixedStrat}_\mdpState(\payoffTuple)$ intersect $\convex{\paySetPure{\payoffTuple}{\mdpState}}$.
In other words, we must prove that for all $\varepsilon > 0$ and all $M\in\IR$, there exist finitely many pure strategies $\stratBMDP_1$, \ldots, $\stratBMDP_{\indexSequence}$ and convex combination coefficients $\scalar_1, \ldots, \scalar_{\indexSequence}\in\ccInt{0}{1}$ and, for all $1\leq\indexPayoff\leq\numObj$:
\begin{itemize}
\item if $\expectancy^{\mixedStrat}_{\mdpState}(\payoff_\indexPayoff) = + \infty$, then $\sum_{\indexSequenceB=1}^{\indexSequence}\scalar_\indexSequenceB\expectancy^{\stratBMDP_\indexSequenceB}_{\mdpState}(\payoff_\indexPayoff)\geq M$,
\item if $\expectancy^{\mixedStrat}_{\mdpState}(\payoff_\indexPayoff) = - \infty$, then $\sum_{\indexSequenceB=1}^{\indexSequence}\scalar_\indexSequenceB\expectancy^{\stratBMDP_\indexSequenceB}_{\mdpState}(\payoff_\indexPayoff)\leq -M$ and,
\item otherwise, $\sum_{\indexSequenceB=1}^{\indexSequence}\scalar_\indexSequenceB\expectancy^{\stratBMDP_\indexSequenceB}_{\mdpState}(\payoff_\indexPayoff)\geq \expectancy^{\mixedStrat}_{\mdpState}(\payoff_\indexPayoff) - \varepsilon$.
\end{itemize}

We fix $\varepsilon >0$ and $M\in\IR$.
The proof is based on the integral formulation of $\expectancy^{\mixedStrat}_\mdpState(\payoffTuple)$ from Lemma~\ref{lem:expectancy:pure integral}. Even though Lemma~\ref{lem:expectancy:pure integral} is not applicable to all payoffs, Lemma~\ref{lem:unambiguous:bound} implies that there exists a vector $\vect$ such that the payoff $\payoffTuple+\vect$ satisfies the assumptions of Lemma~\ref{lem:expectancy:pure integral}.
We can then recover the result for the original payoff using the linearity of the expectation.
We thus assume without loss of generality that Lemma~\ref{lem:expectancy:pure integral} applies to all payoffs $\payoff_1$, \ldots, $\payoff_\numObj$.
For all $1\leq\indexPayoff\leq\numObj$, we let $\rv_\indexPayoff\colon\stratClassPure{\pomdp}\to\IRbar\colon \stratBMDP\mapsto \expectancy^{\stratBMDP}_\mdpState(\payoff_\indexPayoff)$.
We let $\rvVect = (\rv_1, \ldots, \rv_\numObj)$.
By Lemma~\ref{lem:expectancy:pure integral}, we have $\expectation_{\mdpState}^{\mixedStrat}(\payoffTuple) = \int_{\stratBMDP\in\stratClassPure{\pomdp}}\rvVect(\stratBMDP)\ud\mixedStrat(\stratBMDP)$.
We sketch the proof when the $\rv_j$ are ($\mixedStrat$-almost-surely) real-valued functions.
We comment on the generalisation at the end of the sketch.

The broad idea is as follows.
First, we approximate $\rvVect$ with a multivariate random variable $\rvBVect = (\rvB_1, \ldots, \rvB_\numObj)$ over $\stratClassPure{\pomdp}$.
We then approximate the integral of $\rvBVect$ with a convex combination $\sum_{\indexSequenceB=0}^\indexSequence\scalar_\indexSequenceB\rvBVect(\stratBMDP_\indexSequenceB)\in\convex{\image{\rvBVect}}$.
Finally, we derive the convex combination $\sum_{\indexSequenceB=0}^\indexSequence\scalar_\indexSequenceB\rvVect(\stratBMDP_\indexSequenceB)$ from the previous one.
The successive approximations above ensure that the last convex combination respects the claims of the theorem.

\begin{figure}\centering
  \begin{tikzpicture}[yscale=0.6]
    \draw[very thin,color=gray] (-0.1,-0.1) grid (4.1,8.1);
    \draw[-stealth] (-0.5,0) -- (4.5,0);
    \draw[-stealth] (0,-0.5) -- (0,8);

    \node[left] at (0, 1) {$\frac{1}{4}$};
    \node[left] at (0, 2) {$\frac{1}{2}$};
    \node[left] at (0, 3) {$\frac{3}{4}$};
    \node[left] at (0, 4) {$1$};
    \node[left] at (0, 5) {$\frac{5}{4}$};
    \node[left] at (0, 6) {$\frac{3}{2}$};
    \node[left] at (0, 7) {$\frac{7}{4}$};
    \node[left] at (0, 8) {$2$};

    \node[below] at (4, 0) {$1$};

    \draw[domain=0:4,color=blue,smooth] plot (\x,\x * \x * \x - 6 * \x * \x + 8* \x + 4);
    \draw[domain=0:0.1391941468883,color=red] plot (\x, 4);
    \draw[domain=0.1391941468883:0.3248691294334,color=red] plot (\x, 5);
    \draw[domain=0.3248691294334:0.697224362268,color=red] plot (\x, 6);
    \draw[domain=0.697224362268:1,color=red] plot (\x, 7);
    \draw[domain=1:1.4608111271891,color=red] plot (\x, 6);
    \draw[domain=1.4608111271891:1.7458983116349,color=red] plot (\x, 5);
    \draw[domain=1.7458983116349:2,color=red] plot (\x, 4);
    \draw[domain=2:2.2541016883651,color=red] plot (\x, 3);
    \draw[domain=2.2541016883651:2.5391888728109,color=red] plot (\x, 2);
    \draw[domain=2.5391888728109:3,color=red] plot (\x, 1);
    \draw[domain=3:3.302775637732,color=red,ultra thick] plot (\x, 0);
    \draw[domain=3.302775637732:3.6751308705666,color=red] plot (\x, 1);
    \draw[domain=3.6751308705666:3.8608058531117,color=red] plot (\x, 2);
    \draw[domain=3.8608058531117:4,color=red] plot (\x, 3);

  \end{tikzpicture}
  \vspace{-3mm}
  \caption{Illustration of the approach used to construct the approximation $\rvBVect$ of $\rvVect$ adapted to a function over $\ccInt{0}{1}$.
    We round the blue function down to the closest multiple of $\frac{1}{4}$ to obtain the red function.
    This yields a linear combination of indicators that is $\frac{1}{4}$-close in all points to the function in blue.
  }\label{figure:approximation scheme}
\end{figure}
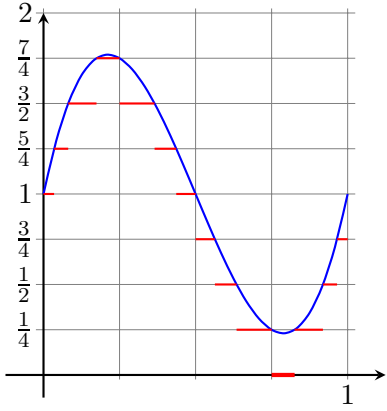

We now expand this idea.
First, we construct $\rvBVect$ such that $\rvVect - \frac{\varepsilon}{3}\oneVect\leq\rvBVect\leq\rvVect+\frac{\varepsilon}{3}\oneVect$.
It follows that, for all $1\leq\indexPayoff\leq\numObj$, $\int_{\stratBMDP\in\stratClassPure{\pomdp}}\rvB_\indexPayoff\ud\mixedStrat(\stratBMDP)$ is equal to $\expectancy^{\mixedStrat}_\mdpState(\payoff_\indexPayoff)$ whenever $\expectancy^{\mixedStrat}_\mdpState(\payoff_\indexPayoff)\in\{-\infty, +\infty\}$ and otherwise is $\frac{\varepsilon}{3}$-close.
Intuitively, we construct $\rvBVect$ as an infinite linear combination of indicators, following the rounding idea illustrated in Figure~\ref{figure:approximation scheme} (where the rounding precision depends on $\varepsilon$).
Its integral is thus (informally) an infinite convex combination of images of $\rvBVect$: there are sequences $(\scalarB_\indexSequenceB)_{\indexSequenceB\in\IN}$ and $(\stratBMDP_\indexSequenceB)_{\indexSequenceB\in\IN}$ respectively of coefficients and elements of $\stratClassPure{\pomdp}$ such that $\sum_{\indexSequenceB=0}^\infty\beta_\indexSequenceB=1$ and $\int_{\stratBMDP\in\stratClassPure{\pomdp}}\rvBVect\ud\mixedStrat(\stratBMDP) = \sum_{\indexSequenceB\in\IN}\scalarB_\indexSequenceB\rvBVect(\stratBMDP_\indexSequenceB)$.
We derive a sequence $(\payoffVectB^{(\indexSequence)})_{\indexSequence\in\IN}$ in $\convex{\image{\rvBVect(\stratBMDP_\indexSequenceB)}}$ that converges to $\int_{\stratBMDP\in\stratClassPure{\pomdp}}\rvBVect\ud\mixedStrat(\stratBMDP)$ from this series: we let
\[\payoffVectB^{(\indexSequence)} = \sum_{\indexSequenceB=0}^\indexSequence\scalarB_\indexSequenceB\rvBVect(\stratBMDP_\indexSequenceB) + \left(1-\sum_{\indexSequenceB=0}^\indexSequence\scalarB_\indexSequenceB\right)\rvBVect(x_0)\] for all $\indexSequence\in\IN$.

Fix $\indexSequence\in\IN$ large enough such that, for all $1\leq\indexPayoff\leq\numObj$, component $\indexPayoff$ of $\payoffVectB^{(\indexSequence)}$ is $\frac{\varepsilon}{3}$-close to $\int_{\stratBMDP\in\stratClassPure{\pomdp}}\rvB_\indexPayoff\ud\mixedStrat(\stratBMDP)$ if it is a real number or greater than $M + \varepsilon$ in absolute value otherwise.
The convex combination $\payoffVect = \sum_{\indexSequenceB=0}^\indexSequence\scalarB_\indexSequenceB\expectancy^{\stratBMDP_\indexSequenceB}_\mdpState(\payoffTuple) + (1-\sum_{\indexSequenceB=0}^\indexSequence\scalarB_\indexSequenceB)\expectancy^{\stratBMDP_{x_0}}_\mdpState(\payoffTuple)$ is a satisfactory convex combination with respect to the claim of the theorem.
Let $1\leq\indexPayoff\leq\numObj$.
If $\expectancy^{\mixedStrat}_\mdpState(\payoff_\indexPayoff) = +\infty$, we obtain that $\payoffComp_\indexPayoff\geq M$.
Similarly, if $\expectancy^{\mixedStrat}_\mdpState(\payoff_\indexPayoff) = -\infty$, we obtain that $\payoffComp_\indexPayoff\leq -M$.
Otherwise, we have that $\payoffComp_\indexPayoff$ is $\varepsilon$-close to $\expectancy^{\mixedStrat}_\mdpState(\payoff_\indexPayoff)$.

We now briefly discuss the case where some $\rv_\indexPayoff$ is not $\mixedStrat$-almost-surely real-valued.
For the sake of illustration, we assume that this only applies to $\indexPayoff=\numObj$ and that $\rv_\numObj\geq 0$.
Therefore, we have $\expectancy^{\mixedStrat}_\mdpState(\payoff_\numObj) = +\infty$ and there exists some $\stratBMDP\in\stratClassPure{\pomdp}$ such that $\expectancy^{\stratBMDP}_\mdpState(\payoff_\numObj) = +\infty$ and $\expectancy^{\stratBMDP}_\mdpState(\payoff_\indexPayoff)\in\IR$ for all $1\leq\indexPayoff\leq\numObj-1$.
Let $\stratBMDP_1, \ldots, \stratBMDP_\indexSequence$ and $\scalar_1, \ldots, \scalar_\indexSequence$ given by the theorem for $(\payoff_1, \ldots, \payoff_{\numObj-1})$, $\frac{\varepsilon}{2}$ and $M+\varepsilon$.
Let $\payoffVect=\sum_{\indexSequenceB=1}^\indexSequence\scalar_\indexSequenceB\expectancy^{\stratBMDP_\indexSequenceB}_\mdpState(\payoffTuple)$.
By choosing $\eta\in\ooInt{0}{1}$ such that all components of $\expectancy^{\stratBMDP}_\mdpState(\payoffTuple)$ other than the last have absolute value no more than $\frac{\varepsilon}{2}$ and the finite components of $(1-\eta)\payoffVect$ are $\frac{\varepsilon}{2}$-close to the corresponding components of $\payoffVect$, we obtain a suitable convex combination in the form of $\eta\expectancy^{\stratBMDP}_\mdpState(\payoffTuple) + (1-\eta)\payoffVect$.

\paragraph{Formal statement and proof.}
We state our theorem below and prove it.

\begin{restatable}{theorem}{thmMixingApprox}\label{thm:mixing:approx}
    Assume that $\payoffTuple$ is universally unambiguously integrable.
  Let $\mdpState\in\mdpStateSpace$.
  We have $\closure{\paySet{\payoffTuple}{\mdpState}} = \closure{\convex{\paySetPure{\payoffTuple}{\mdpState}}}$.
  In particular, for all strategies $\stratMDP$, all $\varepsilon > 0$ and all $M\in\IR$, there exist finitely many pure strategies $\stratBMDP_1$, \ldots, $\stratBMDP_{\indexSequence}$ and convex combination coefficients $\scalar_1, \ldots, \scalar_{\indexSequence}\in\ccInt{0}{1}$ such that for all $1\leq\indexPayoff\leq\numObj$:
  \begin{itemize}
  \item if $\expectancy^{\stratMDP}_{\mdpState}(\payoff_\indexPayoff) = + \infty$, then $\sum_{\indexSequenceB=1}^{\indexSequence}\scalar_\indexSequenceB\expectancy^{\stratBMDP_\indexSequenceB}_{\mdpState}(\payoff_\indexPayoff)\geq M$,
  \item if $\expectancy^{\stratMDP}_{\mdpState}(\payoff_\indexPayoff) = - \infty$, then $\sum_{\indexSequenceB=1}^{\indexSequence}\scalar_\indexSequenceB\expectancy^{\stratBMDP_\indexSequenceB}_{\mdpState}(\payoff_\indexPayoff)\leq -M$, and,
  \item otherwise, if $\expectancy^{\stratMDP}_{\mdpState}(\payoff_\indexPayoff)\in\IR$, $\expectancy^{\stratMDP}_{\mdpState}(\payoff_\indexPayoff) - \varepsilon\leq\sum_{\indexSequenceB=1}^{\indexSequence}\scalar_\indexSequenceB\expectancy^{\stratBMDP_\indexSequenceB}_{\mdpState}(\payoff_\indexPayoff)\leq \expectancy^{\stratMDP}_{\mdpState}(\payoff_\indexPayoff) + \varepsilon$.
  \end{itemize}
\end{restatable}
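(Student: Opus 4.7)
The plan is to establish the closure equality by explicitly constructing, from any strategy $\stratMDP$ and tolerance $(\varepsilon,M)$, a finite convex combination of pure expected payoffs satisfying the three component-wise bullets in the statement. The easy inclusion $\closure{\convex{\paySetPure{\payoffTuple}{\mdpState}}}\subseteq\closure{\paySet{\payoffTuple}{\mdpState}}$ follows from the convexity of $\paySet{\payoffTuple}{\mdpState}$ (Section~\ref{section:payoffs:multi}); only the reverse inclusion requires work. Using Kuhn's theorem, $\stratMDP$ may be replaced by an outcome-equivalent mixed strategy $\mixedStrat$. A preparatory step then brings the payoffs within the scope of Lemma~\ref{lem:expectancy:pure integral}: by Lemma~\ref{lem:unambiguous:bound}, at least one of $\inf_\stratBMDP\expectancy^\stratBMDP_\mdpState(\payoff_\indexPayoff)$ and $\sup_\stratBMDP\expectancy^\stratBMDP_\mdpState(\payoff_\indexPayoff)$ is finite for each $\indexPayoff$, so an appropriate scalar translation of $\payoff_\indexPayoff$ makes the random variable $\rv_\indexPayoff\colon\stratBMDP\mapsto\expectancy^\stratBMDP_\mdpState(\payoff_\indexPayoff)$ on $\stratClassPure{\pomdp}$ non-negative or non-positive. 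Such translations shift $\paySet{\payoffTuple}{\mdpState}$ and $\paySetPure{\payoffTuple}{\mdpState}$ by the same vector and therefore preserve the statement, and Lemma~\ref{lem:expectancy:pure integral} then yields $\expectancy^\mixedStrat_\mdpState(\payoff_\indexPayoff)=\int_{\stratClassPure{\pomdp}}\rv_\indexPayoff\,\ud\mixedStrat$ for every $\indexPayoff$.

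The core of the proof is a two-stage approximation. First, I build a countably-valued random variable $\rvBVect=(\rvB_1,\ldots,\rvB_\numObj)$ on $\stratClassPure{\pomdp}$ satisfying $\rvVect-(\varepsilon/3)\oneVect\leq\rvBVect\leq\rvVect+(\varepsilon/3)\oneVect$, obtained by rounding each coordinate of $\rvVect$ to the nearest lattice point of $(\varepsilon/3)\IZ^\numObj$ in the spirit of Figure~\ref{figure:approximation scheme}. This guarantees that $\int\rvB_\indexPayoff\,\ud\mixedStrat=\expectancy^\mixedStrat_\mdpState(\payoff_\indexPayoff)$ whenever the latter is infinite, and that the two values are $(\varepsilon/3)$-close otherwise. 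Since $\rvBVect$ is countably valued, its integral is literally a countable convex combination $\int\rvBVect\,\ud\mixedStrat=\sum_{\indexSequenceB\in\IN}\scalarB_\indexSequenceB\rvBVect(\stratBMDP_\indexSequenceB)$ over representatives $(\stratBMDP_\indexSequenceB)_\indexSequenceB$ of its fibres, with $\sum_\indexSequenceB\scalarB_\indexSequenceB=1$. Second, I truncate: the partial sums $\payoffVectB^{(n)}=\sum_{\indexSequenceB\leq n}\scalarB_\indexSequenceB\rvBVect(\stratBMDP_\indexSequenceB)+\bigl(1-\sum_{\indexSequenceB\leq n}\scalarB_\indexSequenceB\bigr)\rvBVect(\stratBMDP_0)$ lie in $\convex{\image{\rvBVect}}$ and converge coordinatewise in $\IRbar^\numObj$ to $\int\rvBVect\,\ud\mixedStrat$. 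For $n$ large enough, substituting $\rvVect(\stratBMDP_\indexSequenceB)=\expectancy^{\stratBMDP_\indexSequenceB}_\mdpState(\payoffTuple)$ for each $\rvBVect(\stratBMDP_\indexSequenceB)$ (costing at most $\varepsilon/3$ per finite coordinate) and undoing the initial translation produces the required convex combination.

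The main obstacle is enforcing all component-wise bounds with a single truncation index $n$: finite-expectation coordinates demand two-sided $\varepsilon$-precision, whereas coordinates with $\pm\infty$ expectation must be pushed past $M$ in the correct direction. This is handled by noting that each requirement holds individually for all sufficiently large $n$ (convergence of the finite-coordinate partial sums, and monotone divergence of the constant-sign partial sums on the infinite-expectation coordinates), so their finite maximum suffices. A secondary subtlety arises when some pure strategy already has $\pm\infty$ expectation on a coordinate, making $\rvBVect(\stratBMDP_0)$ possibly infinite there; in that case one separately picks a pure strategy $\stratBMDP$ whose coordinate-$\indexPayoff$ expectation is unbounded (using Lemma~\ref{lem:ui:characterisation}) and inserts it into the final combination with a small coefficient $\eta\in\ooInt{0}{1}$ chosen to inflate the problematic coordinate past $M$ while perturbing the other coordinates by at most $\varepsilon/2$, which absorbs cleanly into the $\varepsilon$ budget.
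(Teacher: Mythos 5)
Your proposal is correct and follows essentially the same route as the paper's proof: Kuhn's theorem, a translation via Lemma~\ref{lem:unambiguous:bound} to make each $\stratBMDP\mapsto\expectancy^{\stratBMDP}_\mdpState(\payoff_\indexPayoff)$ sign-definite so that Lemma~\ref{lem:expectancy:pure integral} applies, rounding to a countably-valued approximant, truncating the resulting countable convex combination, and mixing in a dedicated pure strategy for the coordinates whose infinite expectation is realised by pure strategies. The only point to tighten is that the trigger for that last step should be that the set of pure strategies with non-real expectation on coordinate $\indexPayoff$ has \emph{positive $\mixedStrat$-measure} (not merely that one such pure strategy exists, which would wreck the two-sided bound on a finite-expectation coordinate), and the auxiliary pure strategy must be chosen with finite expectation on the remaining coordinates, exactly as in the paper's discharge of its second assumption.
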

\begin{proof}
    The inclusion $\closure{\convex{\paySetPure{\payoffTuple}{\mdpState}}}\subseteq\closure{\paySet{\payoffTuple}{\mdpState}}$ follows from the convexity of $\paySet{\payoffTuple}{\mdpState}$.
  For other inclusion, it suffices to show that $\paySet{\payoffTuple}{\mdpState}\subseteq\closure{\convex{\paySetPure{\payoffTuple}{\mdpState}}}$.
  This inclusion is equivalent to the last property of the theorem statement.
  
  Let $\stratMDP$ be a strategy, $\varepsilon > 0$ and $M\in\IR$.
  Let $\mixedStrat$ be a mixed strategy that is outcome-equivalent to $\stratMDP$ (whose existence follows from Kuhn's theorem).
  To prove the theorem, we reason on the $\mixedStrat$-integral of random variables of $(\stratClassPure{\pomdp}, \stratSigmaAlgebra)$.
  For any real or multivariate random variable $\rvB$ over $\stratClassPure{\pomdp}$, we write $\int\rvB\ud\mixedStrat$ for $\int_{\stratBMDP\in\stratClassPure{\pomdp}}\rvB(\stratBMDP)\ud\mixedStrat(\stratBMDP)$ to lighten notation.

  We make two assumptions without loss of generality.
  We defer the proof that these assumptions are without loss of generality to the end of the proof.
  First, we assume that for all $1\leq\indexPayoff\leq\numObj$, either $\expectancy^{\stratBMDP}_{\mdpState}(\payoff_\indexPayoff)\geq 0$ for all strategies $\stratBMDP$ or $\expectancy^{\stratBMDP}_{\mdpState}(\payoff_\indexPayoff)\leq 0$ for all strategies $\stratBMDP$.
  This assumption guarantees that $\expectancy^{\stratMDP}_\mdpState(\payoffTuple) = \int_{\stratBMDP\in\stratClassPure{\pomdp}}\expectancy^{\stratBMDP}_\mdpState(\payoffTuple)\ud\mixedStrat(\stratBMDP)$ by Lemma~\ref{lem:expectancy:pure integral}.
  Second, we assume that for all $1\leq\indexPayoff\leq\numObj$, the set  $\{\stratBMDP\in\stratClassPure{\pomdp}\mid\expectancy^{\stratBMDP}_\mdpState(\payoff_\indexPayoff)\notin\IR\}$ has $\mixedStrat$-measure zero.

  For all $1\leq\indexPayoff\leq\numObj$, we consider the random variable  $\rv_\indexPayoff\colon \stratBMDP\mapsto \expectancy^{\stratBMDP}_\mdpState(\payoff_\indexPayoff)$ over $\stratClassPure{\pomdp}$.
  We let $\rvVect = (\rv_1, \ldots, \rv_\numObj)$.
  The first assumption above implies that for all $1\leq\indexPayoff\leq\numObj$, $\rv_\indexPayoff$ is a non-negative or non-positive random variable.
  The second assumption implies that $\rvVect$ is almost-surely $\IR^\numObj$-valued.  
  We thus interpret the random variable $\rvVect$ as a function $\stratClassPure{\pomdp}\to\IR^\numObj$.

  We now prove the result under the assumptions above.
  The first step of the proof is to construct a random variable $\rvBVect = (\rvB_1, \ldots, \rvB_\numObj)\colon\stratClassPure{\pomdp}\to\IR^\numObj$ which approximates $\rvVect$.
  To ensure that the integral of $\rvBVect$ is an infinite convex combination of images of $\rvBVect$ (in the sense of the proof overview), we define $\rvBVect$ as a function similar to a simple function.
  More precisely, we define $\rvBVect$ as a series of indicators multiplied by coefficients.

  We fix $k\in\IN$ such that $\frac{1}{2^k}\leq\frac{\varepsilon}{3}$.
  We define $\rvBVect$ component by component first, and introduce its series form later.
  Let $1\leq\indexPayoff\leq\numObj$.
  We generalise the construction illustrated in Figure~\ref{figure:approximation scheme}.
  For all $k\in\IN$, we let $\rvB_{\indexPayoff}$ be the random variable over $\stratClassPure{\pomdp}$ such that
  \[\rvB_{\indexPayoff}=
    \sum_{\ell=0}^{\infty}
    \frac{\ell}{2^k}\cdot\indic{\coInt{\frac{\ell}{2^k}}{\frac{\ell+1}{2^k}}}(\rv_\indexPayoff)\]
  if $\rv_\indexPayoff$ is non-negative, i.e., we round $\rv_\indexPayoff$ down to the closest multiple of $\frac{1}{2^k}$, and, otherwise,
  \[\rvB_{\indexPayoff}=
    \sum_{\ell=0}^{\infty}
    \frac{-\ell}{2^k}\cdot\indic{\ocInt{\frac{-\ell-1}{2^k}}{\frac{-\ell}{2^k}}}(\rv_\indexPayoff),\]
  i.e., we round $\rv_\indexPayoff$ up to the closest multiple of $\frac{1}{2^k}$.
  We have $\rv_\indexPayoff-\frac{1}{2^k}\leq \rvB_\indexPayoff\leq \rv_\indexPayoff+\frac{1}{2^k}$.
  This implies that $\rv_\indexPayoff-\frac{\varepsilon}{3}\leq \rvB_\indexPayoff^{(k)}\leq \rv_\indexPayoff+\frac{\varepsilon}{3}$.
  In particular, $\rvB_\indexPayoff$ is integrable if and only if $\rv_\indexPayoff$ is and, and, if both are integrable:
  \begin{equation}\label{equation:mixing:approx:b:4}
    \expectancy^{\stratMDP}_\mdpState(\payoff_\indexPayoff) -\frac{\varepsilon}{3}\leq    
    \int\rvB_\indexPayoff\ud\mixedStrat\leq
    \expectancy^{\stratMDP}_\mdpState(\payoff_\indexPayoff)+
    \frac{\varepsilon}{3}.
  \end{equation}
  We also have $\expectancy^{\stratMDP}_\mdpState(\payoff_\indexPayoff) = +\infty$ (resp.~$-\infty$) if and only if $\int\rvB_\indexPayoff\ud\mixedStrat = +\infty$ (resp.~$-\infty$).

  Now that we have shown that $\rvBVect$ approximates $\rvVect$, we prove that the integral of $\rvBVect$ can be written as an infinite convex combination of elements of $\image{\rvBVect}$.
  To this end, we rewrite $\rvBVect$ in the series form mentioned above.

  Let $1\leq\indexPayoff\leq\numObj$.
  If $\rv_\indexPayoff\geq 0$, we define, for all $\ell\in\IN$, $I_\indexPayoff(\ell) = \coInt{\frac{\ell}{2^k}}{\frac{\ell+1}{2^k}}$ and $\vectComp_{\indexPayoff}(\ell) = \frac{\ell}{2^k}$.
  Otherwise, if $\rv_\indexPayoff\geq 0$ does not hold, we define, for all $\ell\in\IN$, $I_\indexPayoff(\ell) = \ocInt{\frac{-\ell-1}{2^k}}{\frac{-\ell}{2^k}}$ and $\vectComp_{\indexPayoff}(\ell) = \frac{-\ell}{2^k}$.
  We fix an enumeration $(\bar{\ell}^{(\indexSequenceB)})_{\indexSequenceB\in\IN}$ of $\IN^\numObj$.
  For all $\indexSequenceB\in\IN$, we let $\bar{\ell}^{(\indexSequenceB)} = (\ell_1^{(\indexSequenceB)}, \ldots, \ell_\numObj^{(\indexSequenceB)})\in\IN^\numObj$.
  We define $B_\indexSequenceB = \prod_{\indexPayoff=1}^\numObj I_\indexPayoff(\ell_\indexPayoff^{(\indexSequenceB)})$ and $\vect_\indexSequenceB = (\vectComp_\indexPayoff(\ell_\indexPayoff^{(\indexSequenceB)}))_{1\leq\indexPayoff\leq\numObj}$.
  We can rewrite $\rvBVect$ as follows, using this notation:
  \[\rvBVect = \sum_{\indexSequenceB\in\IN}\vect_\indexSequenceB\indic{B_\indexSequenceB}(\rvVect).\]
  Through this, we obtain that the integral of $\rvBVect$ is an infinite convex combination: the monotone convergence theorem ensures that
  \[\int\rvBVect\ud\mixedStrat =
    \sum_{\indexSequenceB\in\IN}
    \vect_\indexSequenceB
    \mixedStrat
    \left(\rvVect^{-1}(B_\indexSequenceB)\right).
  \]

  Next, to determine the coefficients and pure strategies we seek, we define a sequence in $\convex{\image{\rvBVect}}$ converging to $\int\rvBVect\ud\mixedStrat$.
  For all $\indexSequenceB\in\IN$, we let $\scalarB_\indexSequenceB = \mixedStrat\left(\rvVect^{-1}(B_\indexSequenceB)\right)$ and let $\stratBMDP_\indexSequenceB\in\stratClassPure{\pomdp}$ such that $\vect_\indexSequenceB=\rvBVect(\stratBMDP_\indexSequenceB)$ if $\scalarB_\indexSequenceB\neq 0$, and $\stratBMDP_\indexSequenceB$ is left arbitrary otherwise.
  We consider the sequence $(\payoffVectB^{(\indexSequence)})_{\indexSequence\in\IN}$ defined by, for all $\indexSequence\in\IN$,
  \[\payoffVectB^{(\indexSequence)} = \sum_{\indexSequenceB=1}^\indexSequence\scalarB_\indexSequenceB\rvBVect(\stratBMDP_\indexSequenceB) + \left(1 - \sum_{\indexSequenceB=1}^\indexSequence\scalarB_\indexSequenceB\right)\rvBVect(\stratBMDP_0).\]
  We have $\lim_{\indexSequence\to\infty}\payoffVectB^{(\indexSequence)} = \int\rvBVect\ud\mixedStrat$ and for all $\indexSequence\in\IN$, $\payoffVectB^{(\indexSequence)}\in\convex{\image{\rvBVect}}$.
  For all $\indexSequence\in\IN$, we let $\payoffVectB^{(\indexSequence)} = (\payoffCompB_1^{(\indexSequence)}, \ldots, \payoffCompB_\numObj^{(\indexSequence)})$.
  
  We now fix $\indexSequence$ such that, for all $1\leq\indexPayoff\leq\numObj$, $\int\rvB_\indexPayoff\ud\mixedStrat\in\IR$ implies that
  \begin{equation}\label{equation:mixing:approx:b:1}
    -\frac{\varepsilon}{3} \leq  \payoffCompB_\indexPayoff^{(\indexSequence)} - \int\rvB_\indexPayoff\ud\mixedStrat \leq \frac{\varepsilon}{3},
  \end{equation} $\int\rvB_\indexPayoff\ud\mixedStrat=+\infty$ implies that $\payoffCompB_\indexPayoff^{(\indexSequence)}\geq M+\varepsilon$ and $\int\rvB_\indexPayoff\ud\mixedStrat=-\infty$ implies that $\payoffCompB_\indexPayoff^{(\indexSequence)}\leq -M-\varepsilon$.
  We set $\scalar_0 = 1 - \sum_{\indexSequenceB=1}^\indexSequence\scalarB_\indexSequenceB$ and for $1\leq\indexSequenceB\leq\indexSequence$, $\scalar_\indexSequenceB = \scalarB_\indexSequenceB$.
  We remark that $\payoffVectB^{(\indexSequence)} = \sum_{\indexSequenceB=0}^\indexSequence\scalar_\indexSequenceB\rvBVect(\stratBMDP_\indexSequenceB)$.
  We show that the convex combination $\payoffVect= \sum_{\indexSequenceB=0}^\indexSequence\scalar_\indexSequenceB\rvVect(\stratBMDP_\indexSequenceB) = \sum_{\indexSequenceB=0}^\indexSequence\scalar_\indexSequenceB\expectancy^{\stratBMDP_{\indexSequenceB}}_\mdpState(\payoffTuple)$ satisfies the claim of the theorem.
  
  We write $\payoffVect = (\payoffComp_1, \ldots, \payoffComp_\numObj)$.
  It follows from the inequalities $\rvB_\indexPayoff-\frac{\varepsilon}{3}\leq \rv_\indexPayoff\leq \rvB_\indexPayoff+\frac{\varepsilon}{3}$ for all $1\leq\indexPayoff\leq\numObj$ and the definitions of $\payoffVect$ and $\payoffVectB^{(\indexSequence)}$ that
  \begin{equation}\label{equation:mixing:approx:b:2}
    \payoffVectB^{(\indexSequence)} - \frac{\varepsilon}{3}\oneVect\leq \payoffVect \leq\payoffVectB^{(\indexSequence)}+\frac{\varepsilon}{3}\oneVect.
  \end{equation}

  Let $1\leq\indexPayoff\leq\numObj$.
  First, assume that $\expectancy^{\stratMDP}_\mdpState(\payoff_\indexPayoff) = +\infty$ (i.e., $\int\rvB_\indexPayoff\ud\mixedStrat=+\infty$).
  In this case, it follows from Equation~\eqref{equation:mixing:approx:b:2} and $\payoffCompB_\indexPayoff^{(\indexSequence)}\geq M + \varepsilon$ that 
  \[
    \payoffComp_\indexPayoff = 
    \sum_{\indexSequenceB=0}^\indexSequence\scalar_\indexSequenceB
    \expectancy^{\stratBMDP_\indexSequenceB}_\mdpState(\payoff_\indexPayoff)\geq
    \payoffCompB_\indexPayoff^{(\indexSequence)} - \frac{\varepsilon}{3} \geq
    M + \frac{2\varepsilon}{3}\geq
    M.
  \]
  The argument for the case $\expectancy^{\stratMDP}_\mdpState(\payoff_\indexPayoff) = -\infty$ follows an analogous reasoning and is omitted.
  
  We now assume that $\expectancy^{\stratMDP}_\mdpState(\payoff_\indexPayoff) \in\IR$ (i.e., $\int\rvB_\indexPayoff\ud\mixedStrat\in\IR$).
  By applying Equations~\eqref{equation:mixing:approx:b:2},~\eqref{equation:mixing:approx:b:1} and~\eqref{equation:mixing:approx:b:4} in succession, we obtain that
  \begin{align*}
    \payoffComp_\indexPayoff
    & \leq \payoffCompB_\indexPayoff^{(\indexSequence)} + \frac{\varepsilon}{3} \\
    & \leq \int\rvB_\indexPayoff\ud\mixedStrat + \frac{2\varepsilon}{3} \\
    & \leq \expectancy^{\stratMDP}_\mdpState(\payoff_\indexPayoff) + \varepsilon.
  \end{align*}
  A similar succession of inequalities (referring to the same equations), yields $\payoffComp_\indexPayoff\geq \expectancy^{\stratMDP}_\mdpState(\payoff_\indexPayoff) - \varepsilon$.
  This ends the argument that $\payoffVect$ satisfies the conditions outlined in the statement of the theorem.

  To end the proof, it remains to show that the assumptions made above are without loss of generality.
  We recall them first:
  \begin{enumerate}
  \item for all $1\leq\indexPayoff\leq\numObj$, $\expectancy^{\stratBMDP}_{\mdpState}(\payoff_\indexPayoff)\geq 0$ for all $\stratBMDP\in\stratClassPure{\pomdp}$ or $\expectancy^{\stratBMDP}_{\mdpState}(\payoff_\indexPayoff)\leq 0$ for all $\stratBMDP\in\stratClassPure{\pomdp}$;\label{item:mixing:approx:assumption:1}
  \item for all $1\leq\indexPayoff\leq\numObj$, $\mixedStrat(\{\stratBMDP\in\stratClassPure{\pomdp}\mid\rv_\indexPayoff(\stratBMDP)\notin\IR\}) = 0$.\label{item:mixing:approx:assumption:2}
  \end{enumerate}
  In the above, we have shown that the claim of the theorem holds with Assumptions~\ref{item:mixing:approx:assumption:1} and~\ref{item:mixing:approx:assumption:2}.
  In the following, we first show that the theorem with both Assumptions~\ref{item:mixing:approx:assumption:1} and~\ref{item:mixing:approx:assumption:2} implies the theorem with only Assumption~\ref{item:mixing:approx:assumption:2}.
  After this, we show that the theorem with Assumption~\ref{item:mixing:approx:assumption:2} implies the theorem with neither additional assumption.

  Assume that Assumption~\ref{item:mixing:approx:assumption:2} holds and let us show that the claim of the theorem holds.
  To obtain the result for $\payoffTuple$, we derive a payoff $\payoffTupleB = (\payoffB_1, \ldots, \payoffB_\numObj)$ from $\payoffTuple$ such that $\payoffTupleB$ satisfies the conditions outlined in Assumptions~\ref{item:mixing:approx:assumption:1} and~\ref{item:mixing:approx:assumption:2}, so that we can apply the variant of the theorem with Assumptions~\ref{item:mixing:approx:assumption:1} and~\ref{item:mixing:approx:assumption:2} to $\payoffTupleB$.
  
  Let $1\leq\indexPayoff\leq\numObj$.
  If $\inf_{\stratBMDP\in\stratClassAll{\pomdp}}\expectancy^{\stratBMDP}_\mdpState(\payoff_\indexPayoff)\in\IR$, we let $\payoffB_\indexPayoff = \payoff_\indexPayoff - \inf_{\stratBMDP\in\stratClassAll{\pomdp}}\expectancy^{\stratBMDP}_\mdpState(\payoff_\indexPayoff)$.
  We obtain that $\expectancy^{\stratBMDP}_\mdpState(\payoffB_\indexPayoff)\geq 0$ for all strategies $\stratBMDP$.
  Indeed, for all strategies $\stratBMDP$, this follows by linearity of $\expectancy$ if $\payoff_\indexPayoff$ is $\proba^{\stratBMDP}_\mdpState$-integrable (which is equivalent to $\payoffB_\indexPayoff$ being $\proba^{\stratBMDP}_\mdpState$-integrable) and otherwise the non-negative parts of $\payoff_\indexPayoff$ and $\payoffB_\indexPayoff$ are $|\inf_{\stratBMDP'\in\stratClassAll{\pomdp}}\expectancy^{\stratBMDP'}_\mdpState(\payoff_\indexPayoff)|$-close to one another, thus share their infinite integral and we obtain $\expectancy^{\stratBMDP}_\mdpState(\payoffB_\indexPayoff) = \expectancy^{\stratBMDP}_\mdpState(\payoff_\indexPayoff)=+\infty\geq 0$.
  Otherwise, by Lemma~\ref{lem:unambiguous:bound}, we have $\sup_{\stratBMDP\in\stratClassAll{\pomdp}}\expectancy^{\stratBMDP}_\mdpState(\payoff_\indexPayoff)\in\IR$ and we let $\payoffB_\indexPayoff = \payoff_\indexPayoff - \sup_{\stratBMDP\in\stratClassAll{\pomdp}}\expectancy^{\stratBMDP}_\mdpState(\payoff_\indexPayoff)$.
  By adapting the argument of the previous case, we obtain that for all strategies $\stratBMDP$, we have $\expectancy^{\stratBMDP}_\mdpState(\payoffB_\indexPayoff)\leq 0$ and the equivalence $\expectancy^{\stratBMDP}_\mdpState(\payoffB_\indexPayoff)=-\infty$ if and only if $\expectancy^{\stratBMDP}_\mdpState(\payoff_\indexPayoff)=-\infty$.

  Let $M' = M + \max_{1\leq\indexPayoff\leq\numObj}|\scalarC_\indexPayoff|$ where $\scalarC_\indexPayoff$ is the constant such that $\payoffB_\indexPayoff = \payoff_\indexPayoff - \scalarC_\indexPayoff$ for all $1\leq\indexPayoff\leq\numObj$.
  We let $\stratBMDP_1$, \ldots, $\stratBMDP_\indexSequence$ be the pure strategies and $\scalar_1$, \ldots, $\scalar_\indexSequence\in\ccInt{0}{1}$ be the coefficients given by the theorem with Assumptions~\ref{item:mixing:approx:assumption:1} and~\ref{item:mixing:approx:assumption:2} for $\payoffTupleB$, $\stratMDP$, $\varepsilon$ and $M'$.
  Let $1\leq\indexPayoff\leq\numObj$.
  For all $1\leq\indexSequenceB\leq\indexSequence$, $\payoff_\indexPayoff$ is $\proba^{\stratBMDP_\indexSequenceB}_\mdpState$-integrable (by the remaining additional assumption).
  We thus have $\sum_{\indexSequenceB=1}^\indexSequence\scalar_\indexSequenceB\expectancy^{\stratBMDP_\indexSequenceB}(\payoff_\indexPayoff) = \scalarC_\indexPayoff + \sum_{\indexSequenceB=1}^\indexSequence\scalar_\indexSequenceB\expectancy^{\stratBMDP_\indexSequenceB}(\payoffB_\indexPayoff)$.
  If $\expectancy^{\stratMDP}_\mdpState(\payoff_\indexPayoff)\in\IR$, i.e., $\payoff_\indexPayoff$ is $\proba^{\stratMDP}_\mdpState$-integrable, then so is $\payoffB_\indexPayoff$ and we directly obtain $\expectancy^{\stratMDP}_{\mdpState}(\payoff_\indexPayoff) - \varepsilon\leq\sum_{\indexSequenceB=1}^{\indexSequence}\scalar_\indexSequenceB\expectancy^{\stratBMDP_\indexSequenceB}_{\mdpState}(\payoff_\indexPayoff)\leq \expectancy^{\stratMDP}_{\mdpState}(\payoff_\indexPayoff) + \varepsilon$ from the similar inequality for $\payoffB_\indexPayoff$.
  Next, assume that $\expectancy^{\stratMDP}_\mdpState(\payoff_\indexPayoff)=+\infty$.
  By the above, this implies that $\expectancy^{\stratMDP}_\mdpState(\payoffB_\indexPayoff)=+\infty$.
  We obtain (from the application of the theorem to $\payoffB_\indexPayoff$) that $\sum_{\indexSequenceB=1}^{\indexSequence}\scalar_\indexSequenceB\expectancy^{\stratBMDP_\indexSequenceB}_{\mdpState}(\payoff_\indexPayoff)\geq\scalarC_\indexPayoff + M'\geq M$.
  Finally, if $\expectancy^{\stratMDP}_\mdpState(\payoffB_\indexPayoff)=-\infty$, we obtain that $\sum_{\indexSequenceB=1}^{\indexSequence}\scalar_\indexSequenceB\expectancy^{\stratBMDP_\indexSequenceB}_{\mdpState}(\payoff_\indexPayoff)\leq\scalarC_\indexPayoff - M'\leq -M$ in the same way as the previous case.

  It remains to show that the theorem with Assumption~\ref{item:mixing:approx:assumption:2} implies the theorem with no assumption.
  For convenience of notation, we assume that there exists $1\leq\numObj'\leq\numObj$ such that, for all $1\leq\indexPayoff\leq\numObj'$, $\mixedStrat(\{\stratBMDP\in\stratClassPure{\pomdp}\mid\expectancy^{\stratBMDP}_\mdpState(\payoff_\indexPayoff)\notin\IR\}) > 0$ and, for all $\numObj'+1\leq\indexPayoff\leq\numObj$ and $\mixedStrat(\{\stratBMDP\in\stratClassPure{\pomdp}\mid\expectancy^{\stratBMDP}_\mdpState(\payoff_\indexPayoff)\in\IR\}) = 1$.

  For all $1\leq\indexPayoff\leq\numObj'$, $\expectancy^{\stratMDP}_\mdpState(\payoff_\indexPayoff)$ is infinite and there exists $\stratBMDP_\indexPayoff\in\stratClassPure{\pomdp}$ such that $\expectancy^{\stratBMDP_{\indexPayoff}}_\mdpState(\payoff_\indexPayoff) = \expectancy^{\stratMDP}_\mdpState(\payoff_\indexPayoff)$ (because $\mixedStrat(\{\stratBMDP\in\stratClassPure{\pomdp}\mid\expectancy^{\stratBMDP}_\mdpState(\payoff_\indexPayoff)\notin\IR\})>0$) and, for all $\numObj'+1\leq\indexPayoff\leq\numObj$, $\expectancy^{\stratBMDP}_\mdpState(\payoff_\indexPayoff)\in\IR$ (because $\mixedStrat(\{\stratBMDP\in\stratClassPure{\pomdp}\mid\expectancy^{\stratBMDP}_\mdpState(\payoff_{\indexPayoff'})\in\IR\}) = 1$).
  If $\numObj' = \numObj$, we conclude using the convex combination $\sum_{\indexPayoff=1}^{\numObj}\frac{1}{\numObj}\expectancy^{\stratBMDP_\indexPayoff}_\mdpState(\payoffTuple) = \expectancy^{\stratMDP}_\mdpState(\payoffTuple)$.
  Therefore, we assume that $\numObj' < \numObj$.
  The payoff $(\payoff_{\numObj'+1}, \ldots, \payoff_\numObj)$ satisfies Assumption~\ref{item:mixing:approx:assumption:2}.
  We apply the theorem with Assumption~\ref{item:mixing:approx:assumption:2} to $(\payoff_{\numObj'+1}, \ldots, \payoff_\numObj)$, strategy $\stratMDP$, $\frac{\varepsilon}{3}$ and $M+\varepsilon$ to obtain pure strategies $\stratBMDP_{\numObj'+1}$, \ldots, $\stratBMDP_{\indexSequence}$ and convex combination coefficients $\scalarB_{\numObj'+1}$, \ldots, $\scalarB_\indexSequence$ that satisfy the implications given in the last property of the statement of the theorem.

  We now define convex combination coefficients $\scalar_1$, \ldots, $\scalar_\indexSequence$ to obtain the claim of the theorem for $\payoffTuple$.
  Fix $\eta\in\ooInt{0}{1}$ such that (i)~for all $1\leq\indexPayoff\leq\numObj'$, the real components of $\eta\expectancy^{\stratBMDP_\indexPayoff}_\mdpState(\payoffTuple)$ are no more than $\frac{\varepsilon}{3}$ in absolute value and (ii)~$(1-\eta)\sum_{\indexSequenceB=\numObj'+1}^\indexSequence\scalarB_\indexSequenceB\expectancy^{\stratBMDP_{\indexSequenceB}}_\mdpState((\payoff_{\numObj'+1}, \ldots, \payoff_\numObj))$ is $\frac{\varepsilon}{3}$-close to $\sum_{\indexSequenceB=\numObj'+1}^\indexSequence\scalarB_\indexSequenceB\expectancy^{\stratBMDP_{\indexSequenceB}}_\mdpState((\payoff_{\numObj'+1}, \ldots, \payoff_\numObj))$.
  For all $1\leq\indexSequenceB\leq\numObj'$, we set $\scalar_\indexSequenceB = \frac{\eta}{\numObj'}$ and for $\numObj'+ 1\leq\indexSequenceB\leq\indexSequence$, we set $\scalar_\indexSequenceB=(1-\eta)\scalarB_\indexSequenceB$.
  It follows from $\eta\in\ooInt{0}{1}$ that $1-\eta > 0$ and $\sum_{\indexSequenceB=1}^{\indexSequence}\scalar_\indexSequenceB = 1$.

  We show that the pure strategies $\stratBMDP_1$, \ldots, $\stratBMDP_\indexSequenceB$ and coefficients $\scalar_1$, \ldots, $\scalar_\indexSequence$ are witnesses to the implications in the theorem statement.
  Let $1\leq\indexPayoff\leq\numObj$.
  If $\indexPayoff\leq\numObj'$, it follows from $\eta >0$ that $\sum_{\indexSequenceB=1}^\indexSequence\scalar_\indexSequenceB\expectancy^{\stratBMDP_\indexSequenceB}_\mdpState(\payoff_\indexPayoff) = \eta \expectancy^{\stratBMDP_\indexPayoff}_\mdpState(\payoff_\indexPayoff) = \expectancy^{\stratMDP}_\mdpState(\payoff_\indexPayoff)\in\{-\infty, +\infty\}$, and the required inequality is trivially satisfied.
  We assume from here that $\indexPayoff\geq\numObj'+1$.
  It follows from properties (i) and (ii) above that $\sum_{\indexSequenceB=1}^\indexSequence\scalar_\indexSequenceB\expectancy^{\stratBMDP_\indexSequenceB}_\mdpState(\payoff_\indexPayoff)$ is $\frac{2\cdot\varepsilon}{3}$-close to $\sum_{\indexSequenceB=\numObj'+1}^\indexSequence\scalarB_\indexSequenceB\expectancy^{\stratBMDP_\indexSequenceB}_\mdpState(\payoff_\indexPayoff)$.
  Assume that $\expectancy^{\stratMDP}_\mdpState(\payoff_\indexPayoff) = +\infty$.
  Then, we have
  \[
    \sum_{\indexSequenceB=1}^\indexSequence\scalar_\indexSequenceB\expectancy^{\stratBMDP_\indexSequenceB}_\mdpState(\payoff_\indexPayoff) \geq
    \sum_{\indexSequenceB=\numObj'+1}^\indexSequence\scalarB_\indexSequenceB\expectancy^{\stratBMDP_\indexSequenceB}_\mdpState(\payoff_\indexPayoff) - \frac{2\cdot\varepsilon}{3} \geq
    M + \varepsilon - \frac{2\cdot\varepsilon}{3}\geq M.
  \]
  We obtain that $\expectancy^{\stratMDP}_\mdpState(\payoff_\indexPayoff) = -\infty$ implies $\sum_{\indexSequenceB=1}^\indexSequence\scalar_\indexSequenceB\expectancy^{\stratBMDP_\indexSequenceB}_\mdpState(\payoff_\indexPayoff) \leq -M$ similarly.
  Finally, assume that $\expectancy^{\stratMDP}_\mdpState(\payoff_\indexPayoff)\in\IR$.
  We recall that $|\sum_{\indexSequenceB=\numObj'+1}^\indexSequence\scalarB_\indexSequenceB\expectancy^{\stratBMDP_\indexSequenceB}_\mdpState(\payoff_\indexPayoff)-\expectancy^{\stratMDP}_\mdpState(\payoff_\indexPayoff)|\leq\frac{\varepsilon}{3}$ by definition of the strategies $\stratBMDP_{\numObj'+1}$, \ldots, $\stratBMDP_{\indexSequence}$ and coefficients $\scalarB_{\numObj'+1}$, \ldots, $\scalarB_{\indexSequence}$.
  We obtain, by the triangular inequality, that
  \begin{align*}
    \left|
    \sum_{\indexSequenceB=1}^\indexSequence
    \scalar_\indexSequenceB
    \expectancy^{\stratBMDP_\indexSequenceB}_\mdpState(\payoff_\indexPayoff)
    - \expectancy^{\stratMDP}_\mdpState(\payoff_\indexPayoff)\right|
    & \leq \left|\sum_{\indexSequenceB=1}^\indexSequence
      \scalar_\indexSequenceB
      \expectancy^{\stratBMDP_\indexSequenceB}_\mdpState(\payoff_\indexPayoff) -
      \sum_{\indexSequenceB=\numObj'+1}^\indexSequence
      \scalarB_\indexSequenceB
      \expectancy^{\stratBMDP_\indexSequenceB}_\mdpState(\payoff_\indexPayoff)\right| \\
    & + \left|
      \sum_{\indexSequenceB=\numObj'+1}^\indexSequence
      \scalarB_\indexSequenceB
      \expectancy^{\stratBMDP_\indexSequenceB}_\mdpState(\payoff_\indexPayoff) -
      \expectancy^{\stratMDP}_\mdpState(\payoff_\indexPayoff)
      \right| \\
    & \leq\frac{2\cdot\varepsilon}{3} + \frac{\varepsilon}{3} = \varepsilon.
  \end{align*}
  This ends the proof that theorem with Assumption~\ref{item:mixing:approx:assumption:2} implies the theorem without any additional assumption.
\end{proof}

\subsection{Bounding the support of mixed strategies}\label{section:achievable:support}

Theorems~\ref{thm:mixing:exact} and~\ref{thm:mixing:approx} state that it suffices to mix finitely many pure strategies to respectively match or approximate the payoff of any strategy.
We provide bounds on the number of pure strategies to mix in this section, in the same vein as Carathéodory's theorem for convex hulls (Theorem~\ref{theorem:caratheodory:convex}).
We show that the expected payoff of a finite-support mixed strategy can be obtained exactly by mixing no more than $\numObj+1$ strategies that are in its support and that a greater or equal payoff can be obtained by mixing no more than $\numObj$ of these strategies.

Let $\mdpState\in\mdpStateSpace$, $\stratMDP_1, \ldots, \stratMDP_\indexSequence$ be pure strategies and $\scalar_1$, \ldots, $\scalar_\indexSequence\in\ccInt{0}{1}$ be convex combination coefficients.
Let $\payoffVect =\payoffVectVerbose= \sum_{\indexSequenceB=1}^\indexSequence\scalar_\indexSequenceB\expectancy^{\stratMDP_\indexSequenceB}_\mdpState(\payoffTuple)$.
First, let us discuss the case when $\payoffVect\in\IR^\numObj$, i.e., when all considered pure strategies have a finite expected payoff on all dimensions.
In this case, the first bound is direct by Carathéodory's theorem for convex hulls.

For the second bound, we observe that $\payoffVect$ is an element of the compact convex polytope $D =\convex{\{\expectancy^{\stratMDP_\indexSequenceB}_\mdpState(\payoffTuple)\mid 1\leq\indexSequenceB\leq\indexSequence\}}$.
In particular, $\payoffVect$ is dominated by an element $\payoffVectB$ lying on a proper face of $D$ (i.e., the intersection of $D$ and a hyperplane).
For instance, we can consider $\payoffVect + \scalarB\cdot\oneVect$ for $\scalarB=\max\{\scalarC\geq 0\mid \payoffVect + \scalarC\oneVect\in D\}$.
Because proper faces have dimension no more than $\numObj-1$, Carathéodory's theorem implies that $\payoffVectB$ is a convex combination of no more than $\numObj$ vectors of the form $\expectancy^{\stratMDP_\indexSequenceB}_\mdpState(\payoffTuple)$, taken among those lying on the considered proper face.

Assume now that $\payoffVect\notin\IR^\numObj$.
In this case, Carathéodory's theorem does not apply directly.
The idea is to reduce ourselves to the previous case.
For each $1\leq\indexPayoff\leq\numObj$, if $\payoffComp_\indexPayoff$ is infinite, there is $1\leq\indexSequenceB\leq\indexSequence$ such that $\expectancy^{\stratMDP_\indexSequenceB}_\mdpState(\payoff_\indexSequenceB)=\payoffComp_\indexPayoff$.
For each infinite component of $\payoffVect$, we fix $\scalar_\indexSequenceB\expectancy^{\stratMDP_\indexSequenceB}_\mdpState(\payoffTuple)$ in the convex combination for one such $\indexSequenceB$.
We then obtain the sought bounds from the above; we reason on the real-valued components of $\payoffVect$ in the non-fixed part of the convex combination (after normalising its coefficients to sum to one).

We formalise our theorem and the previous proof sketch below.
\begin{restatable}{theorem}{thmMixingSupportAll}\label{thm:mixing:support:all}
  Assume that $\payoffTuple$ is universally unambiguously integrable.
  Let $\mdpState\in\mdpStateSpace$, $\stratMDP_1$, \ldots, $\stratMDP_\indexSequence$ be pure strategies and $\scalar_1$, \ldots, $\scalar_\indexSequence\in\ccInt{0}{1}$ be convex combination coefficients.
  There exist convex combination coefficients $\scalarB_1$, \ldots, $\scalarB_\indexSequence\in\ccInt{0}{1}$ with $|\{1\leq\indexSequenceB\leq\indexSequence\mid \scalarB_\indexSequenceB\neq 0\}|\leq\numObj+1$ and convex combination coefficients $\scalarC_1$, \ldots, $\scalarC_\indexSequence\in\ccInt{0}{1}$ with $|\{1\leq\indexSequenceB\leq\indexSequence\mid \scalarC_\indexSequenceB\neq 0\}|\leq\numObj$ such that
  \[\sum_{\indexSequenceB=1}^\indexSequence\scalar_\indexSequenceB\expectancy^{\stratMDP_\indexSequenceB}_\mdpState(\payoffTuple) =
    \sum_{\indexSequenceB=1}^\indexSequence\scalarB_\indexSequenceB\expectancy^{\stratMDP_\indexSequenceB}_\mdpState(\payoffTuple)\leq
    \sum_{\indexSequenceB=1}^\indexSequence\scalarC_\indexSequenceB\expectancy^{\stratMDP_\indexSequenceB}_\mdpState(\payoffTuple).\]
\end{restatable}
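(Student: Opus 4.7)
The plan is to reduce both bounds to Carath\'{e}odory's theorem for convex hulls (Theorem~\ref{theorem:caratheodory:convex}) applied to the real-valued coordinates of $\payoffVect \coloneqq \sum_{\indexSequenceB=1}^\indexSequence \scalar_\indexSequenceB \expectancy^{\stratMDP_\indexSequenceB}_\mdpState(\payoffTuple)$, after isolating the coordinates where $\payoffVect$ is infinite. Writing $\payoffVect = (\payoffComp_\indexPayoff)_{1\leq\indexPayoff\leq\numObj}$, I partition $\{1, \ldots, \numObj\}$ into $I_+ = \{\indexPayoff \mid \payoffComp_\indexPayoff = +\infty\}$, $I_- = \{\indexPayoff \mid \payoffComp_\indexPayoff = -\infty\}$ and $I_\IR = \{\indexPayoff \mid \payoffComp_\indexPayoff \in \IR\}$. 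Lemma~\ref{lem:unambiguous:bound} guarantees that on each coordinate $\indexPayoff$ the family $(\expectancy^{\stratMDP_\indexSequenceB}_\mdpState(\payoff_\indexPayoff))_\indexSequenceB$ shares a common finite one-sided bound, so infinite expectations cannot cancel inside the given convex combination. Consequently, every $\indexSequenceB$ with $\scalar_\indexSequenceB > 0$ satisfies $\expectancy^{\stratMDP_\indexSequenceB}_\mdpState(\payoff_\indexPayoff) \in \IR$ whenever $\indexPayoff \in I_\IR$, and for each $\indexPayoff \in I_+ \cup I_-$ at least one such $\indexSequenceB$ realises $\expectancy^{\stratMDP_\indexSequenceB}_\mdpState(\payoff_\indexPayoff) = \payoffComp_\indexPayoff$.

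I then fix a witness $\indexSequenceB^\star(\indexPayoff)$ for every $\indexPayoff \in I_+ \cup I_-$, collect them into $T = \{\indexSequenceB^\star(\indexPayoff) \mid \indexPayoff \in I_+ \cup I_-\}$, and set $W = \sum_{\indexSequenceB \notin T} \scalar_\indexSequenceB$; by construction $|T| \leq |I_+| + |I_-| = \numObj - |I_\IR|$. When $W = 0$ both claims are immediate, as $\scalar$ itself uses at most $|T| \leq \numObj$ strategies. When $W > 0$, write $\payoffVectB_\indexSequenceB \coloneqq (\expectancy^{\stratMDP_\indexSequenceB}_\mdpState(\payoff_\indexPayoff))_{\indexPayoff \in I_\IR} \in \IR^{|I_\IR|}$ and $\payoffVectB_0 \coloneqq \frac{1}{W} \sum_{\indexSequenceB \notin T} \scalar_\indexSequenceB \payoffVectB_\indexSequenceB \in \convex{V}$ with $V \coloneqq \{\payoffVectB_\indexSequenceB \mid \indexSequenceB \notin T\}$. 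For the $\numObj + 1$ bound, Theorem~\ref{theorem:caratheodory:convex} expresses $\payoffVectB_0$ as a convex combination $\sum_{\indexSequenceB \notin T} \gamma_\indexSequenceB \payoffVectB_\indexSequenceB$ with at most $|I_\IR| + 1$ nonzero $\gamma_\indexSequenceB$; setting $\scalarB_\indexSequenceB = \scalar_\indexSequenceB$ for $\indexSequenceB \in T$ and $\scalarB_\indexSequenceB = W \gamma_\indexSequenceB$ otherwise gives at most $|T| + |I_\IR| + 1 \leq \numObj + 1$ nonzero entries. The resulting mixture equals $\payoffVect$ on $I_\IR$ by construction, and on $I_+ \cup I_-$ because each witness $\stratMDP_{\indexSequenceB^\star(\indexPayoff)}$ retains its original positive weight $\scalar_{\indexSequenceB^\star(\indexPayoff)}$ (the one-sided bound precludes cancellation of infinities).

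For the $\numObj$ bound I push $\payoffVectB_0$ along $\oneVect$: let $\scalar^\star = \sup\{\scalar \geq 0 \mid \payoffVectB_0 + \scalar \oneVect \in \convex{V}\}$, which is finite since $\convex{V}$ is bounded and attained since $\convex{V}$ is closed, and set $\payoffVectB_0' = \payoffVectB_0 + \scalar^\star \oneVect$. Maximality forces $\payoffVectB_0'$ onto a face of $\convex{V}$ of dimension at most $|I_\IR| - 1$ (either $\convex{V}$ is full-dimensional in $\IR^{|I_\IR|}$ and $\payoffVectB_0'$ lies on its boundary, or $\convex{V}$ itself sits in a proper affine subspace of $\IR^{|I_\IR|}$). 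Since any face of a polytope has the form $\convex{V \cap F}$ for $F$ an affine subspace of matching dimension, Carath\'{e}odory inside $F$ writes $\payoffVectB_0'$ as a convex combination of at most $|I_\IR|$ elements of $V$. Defining $\scalarC$ analogously to $\scalarB$ yields at most $|T| + |I_\IR| \leq \numObj$ nonzero coefficients; on $I_\IR$ the resulting mixture equals $\payoffVect|_{I_\IR} + W \scalar^\star \oneVect \geq \payoffVect|_{I_\IR}$, while on $I_+ \cup I_-$ it matches $\payoffVect$ (and in particular dominates it) by the same witness argument as before.

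The main subtlety I anticipate is justifying that $\payoffVectB_0'$ always lies in a face of codimension at least one, especially in the degenerate cases where $\convex{V}$ is not full-dimensional in $\IR^{|I_\IR|}$ or where $\scalar^\star = 0$ because $\payoffVectB_0$ is already on the relative boundary of $\convex{V}$. Both situations are handled uniformly by working inside the affine span of $\convex{V}$, in which every polytope point lies in a face of dimension at most $|I_\IR| - 1$ and therefore admits a Carath\'{e}odory representation by at most $|I_\IR|$ vertices; the rest of the argument is mechanical bookkeeping of weights.
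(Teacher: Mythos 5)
Your proposal is correct and follows essentially the same route as the paper's proof: fix one witness strategy per infinite coordinate and keep its original weight, apply Carath\'{e}odory to the real-valued projection of the remaining mass for the $\numObj+1$ bound, and push along $\oneVect$ to a supporting hyperplane (equivalently, a proper face of the polytope) before applying Carath\'{e}odory in one lower dimension for the $\numObj$ bound. The only loose end is the degenerate case where every coordinate of the target vector is infinite but $W>0$: there your second construction has no vectors of $V$ over which to redistribute the residual weight $W$ while staying within $\numObj$ nonzero coefficients, and---as the paper does---one should instead discard the non-witness strategies entirely and renormalise the witness weights (possible since $T\neq\emptyset$ in that case), which preserves all infinite components because the one-sided bounds still preclude cancellation.
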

\begin{proof}
    For all $1\leq\indexSequenceB\leq\indexSequence$, let $\payoffVectB^{(\indexSequenceB)} = (\payoffCompB^{(\indexSequenceB)}_\indexPayoff)_{1\leq\indexPayoff\leq\numObj} = \expectancy^{\stratMDP_\indexSequenceB}_\mdpState(\payoffTuple)$ and let $\payoffVect=\payoffVectVerbose=\sum_{\indexSequenceB=1}^\indexSequence\scalar_\indexSequenceB\payoffVectB^{(\indexSequenceB)}$.
  We assume that for all $1\leq\indexSequenceB\leq\indexSequence$, $\scalar_\indexSequenceB\neq 0$.

  First, we assume that $\payoffVect\in\IR^\numObj$.
  The existence of coefficients $\scalarB_1$, \ldots, $\scalarB_\indexSequence$  obeying the required conditions is direct by Carathéodory's theorem for convex hulls (Theorem~\ref{theorem:caratheodory:convex}).
  We let $D = \convex{\{\payoffVectB^{(\indexSequenceB)}\mid1\leq\indexSequenceB\leq\indexSequence\}}$, which is a compact set (see Lemma~\ref{lemma:convex hull of compact}).  
  
  We define $\payoffVectB = \payoffVect + \scalarB\cdot\oneVect$ for $\scalarB=\sup\{\scalarC\geq 0\mid \payoffVect + \scalarC\cdot\oneVect\in D\}$.
  We remark that $\scalarB$ is a real number.
  On the one hand, $0\in\{\scalarC\geq 0\mid \payoffVect + \scalarC\cdot\oneVect\in D\}$, and thus $\scalarB\neq-\infty$.
  On the other hand, $D$ is bounded, therefore $\scalarB\neq+\infty$.
  Furthermore, $\payoffVectB\in D$.
  By convexity of $D$, $\payoffVect+\scalarC\cdot\oneVect\in D$ for all $0\leq\scalarC<\scalarB$.
  It follows that $\payoffVectB\in\closure{D}=D$.
  
  We have $\payoffVect\leq\payoffVectB$.
  To end the case $\payoffVect\in\IR^\numObj$, we show that there exists a hyperplane $\hplane$ such that $\payoffVectB$ is a convex combination of the vectors $\payoffVectB^{(\indexSequenceB)}$ that lie in $\hplane$.
  This suffices, because Carathéodory's theorem then ensures that $\payoffVectB$ is a convex combination of no more than $\numObj$ vectors among the $\payoffVectB^{(\indexSequenceB)}$.

  If $D$ is included in a hyperplane, there is nothing to show.
  We thus assume that $D$ is not included in a hyperplane and obtain a hyperplane using the supporting hyperplane theorem (Theorem~\ref{thm:hyperplane:supporting}).
  By construction, $\payoffVectB\notin\interior{D}=\relInt{D}$: for all $\scalarC>0$, $\payoffVectB+\scalarC\cdot\oneVect\notin D$.
  Therefore, there exists a linear form $\linForm\colon\IR^\numObj\to\IR$ such that for all $\vect\in D$, $\linForm(\vect)\leq\linForm(\payoffVectB)$.
  We claim that $\payoffVectB$ is a convex combination of the $\payoffVectB^{(\indexSequenceB)}$ that lie in the hyperplane $(\linForm)^{-1}(\linForm(\payoffVectB))$.
  Write $\payoffVectB$ as a convex combination $\sum_{\indexSequenceB=1}^\indexSequence\scalar'_\indexSequenceB\payoffVectB^{(\indexSequenceB)}$.
  We observe that for all $1\leq\indexSequenceB\leq\indexSequence$, $\scalar'_\indexSequenceB\neq 0$ implies $\linForm(\payoffVectB^{(\indexSequenceB)})=\linForm(\payoffVectB)$, as otherwise we would obtain that $\linForm(\payoffVectB) < \linForm(\payoffVectB)$.
  This ends the proof in the case $\payoffVect\in\IR^\numObj$.

  We now assume that some components of $\payoffVect$ are infinite.
  For convenience of notation, we assume that there is $1\leq\numObj'\leq\numObj$ such that, for all $1\leq\indexPayoff\leq\numObj'$, $\payoffComp_\indexPayoff\in\{-\infty,+\infty\}$ and, for all $\numObj'+1\leq\indexPayoff\leq\numObj$, $\payoffComp_\indexPayoff\in\IR$.
  For all $1\leq\indexPayoff\leq\numObj'$, let $1\leq\indexSequenceB_\indexPayoff\leq\indexSequence$ such that $\payoffCompB^{(\indexSequenceB_\indexPayoff)}_\indexPayoff = \payoffComp_\indexPayoff$.
  If $\numObj'=\numObj$, we have $\payoffVect = \frac{1}{\numObj}\sum_{\indexPayoff=1}^\numObj\payoffVectB^{(\indexSequenceB_\indexPayoff)}$.
  We now assume that $\numObj'<\numObj$.
  
  Let $I_\infty = \{\indexSequenceB_\indexPayoff\mid 1\leq\indexPayoff\leq\numObj'\}$, $I_\IR = \{1, \ldots, \indexSequence\}\setminus I_\infty$.
  We have $|I_\infty|\leq\numObj'\leq\numObj$.
  In particular, if $I_\IR$ is empty, the sought result is direct.
  We assume that $I_\IR\neq\emptyset$.
  Let $\scalar_\IR = \sum_{\indexSequenceB\in I_\IR}\scalar_\indexSequenceB>0$, and let $\proj{>\numObj'}\colon(\IRbar)^\numObj\to(\IRbar)^{\numObj-\numObj'}$ denote the projection of a vector onto its $\numObj - \numObj'$ last components.
  We apply the result for vectors in $\IR^{\numObj-\numObj'}$ to $\proj{>\numObj'}(\sum_{\indexSequenceB\in{I_\IR}}\frac{\scalar_\indexSequenceB}{\scalar_\IR}\payoffVectB^{(\indexSequenceB)})$ with respect to the payoff function $(\payoff_{\numObj'+1}, \ldots, \payoff_\numObj)$.
  It follows that there exist convex combination coefficients $(\scalarB'_\indexSequenceB)_{\indexSequenceB\in I_\IR}$ of which at most $\numObj-\numObj'+1$ are positive and convex combination coefficients $(\scalarC'_\indexSequenceB)_{\indexSequenceB\in I_\IR}$ of which at most $\numObj-\numObj'$ are positive such that
  \[\proj{>\numObj'}\left(
      \sum_{\indexSequenceB\in{I_\IR}}\frac{\scalar_\indexSequenceB}{\scalar_\IR}
      \payoffVectB^{(\indexSequenceB)}
    \right) =
    \sum_{\indexSequenceB\in{I_\IR}}\scalarB'_\indexSequenceB
    \proj{>\numObj'}(\payoffVectB^{(\indexSequenceB)})\leq
    \sum_{\indexSequenceB\in{I_\IR}}\scalarC'_\indexSequenceB
    \proj{>\numObj'}(\payoffVectB^{(\indexSequenceB)}).
\]
  We conclude by observing that
  \begin{align*}
    \payoffVect
    & =
      \sum_{\indexSequenceB\in I_\infty}
      \scalar_\indexSequenceB\payoffVectB^{(\indexSequenceB)} +
      \scalar_\IR\cdot\sum_{\indexSequenceB\in I_\IR}
      \frac{\scalar_\indexSequenceB}{\scalar_\IR}\payoffVectB^{(\indexSequenceB)} \\
    & =
      \sum_{\indexSequenceB\in I_\infty}
      \scalar_\indexSequenceB\payoffVectB^{(\indexSequenceB)} +
      \sum_{\indexSequenceB\in I_\IR}
      \scalar_\IR\scalarB'_\indexSequenceB\payoffVectB^{(\indexSequenceB)} \\
    & \leq
      \sum_{\indexSequenceB\in I_\infty}
      \scalar_\indexSequenceB\payoffVectB^{(\indexSequenceB)} +
      \sum_{\indexSequenceB\in I_\IR}
      \scalar_\IR\scalarC'_\indexSequenceB\payoffVectB^{(\indexSequenceB)},
  \end{align*}
  i.e., for all $\indexSequenceB\in I_\infty$, we let $\scalarB_\indexSequenceB =  \scalarC_\indexSequenceB = \scalar_\indexSequenceB$, and for all $\indexSequenceB\in I_\IR$, we let $\scalarB_\indexSequenceB = \scalar_\IR\cdot\scalarB'_\indexSequenceB$ and $\scalarC_\indexSequenceB = \scalar_\IR\cdot\scalarC'_\indexSequenceB$.
\end{proof}

We now highlight two corollaries of Theorem~\ref{thm:mixing:support:all}.
First, we obtain bounds when dealing with universally integrable payoffs that follow from Theorem~\ref{thm:mixing:exact}.

\begin{restatable}{corollary}{corMixingSupportExact}\label{cor:mixing:support:exact}
  Assume that $\payoffTuple$ is universally integrable.
  Let $\mdpState\in\mdpStateSpace$.
  \begin{itemize}
  \item For all $\payoffVect\in\paySet{\payoffTuple}{\mdpState}$, $\payoffVect$ is a convex combination of at most $\numObj+1$ elements of $\paySetPure{\payoffTuple}{\mdpState}$.
  \item For all $\payoffVect\in\achSet{\payoffTuple}{\mdpState}$, there exists a convex combination $\payoffVectB$ of at most $\numObj$ elements of $\paySetPure{\payoffTuple}{\mdpState}$ such that $\payoffVect\leq\payoffVectB$.
  \end{itemize}
\end{restatable}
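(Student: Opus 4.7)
The plan is to view this corollary as a direct combination of Theorem~\ref{thm:mixing:exact} (which, under the assumption that $\payoffTuple$ is universally integrable, gives $\paySet{\payoffTuple}{\mdpState} = \convex{\paySetPure{\payoffTuple}{\mdpState}}$) and Theorem~\ref{thm:mixing:support:all} (which bounds the support size of a mixing decomposition). Since $\payoffTuple$ is universally integrable, Lemma~\ref{lem:ui:characterisation} ensures that all pure expected payoffs lie in $\IR^\numObj$, so no subtlety arising from infinite components will appear in the application of Theorem~\ref{thm:mixing:support:all}.

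For the first bullet, I would fix $\payoffVect \in \paySet{\payoffTuple}{\mdpState}$. By Theorem~\ref{thm:mixing:exact}, there exist pure strategies $\stratMDP_1, \ldots, \stratMDP_\indexSequence$ and convex combination coefficients $\scalar_1, \ldots, \scalar_\indexSequence \in \ccInt{0}{1}$ such that $\payoffVect = \sum_{\indexSequenceB=1}^\indexSequence \scalar_\indexSequenceB \expectancy^{\stratMDP_\indexSequenceB}_\mdpState(\payoffTuple)$. Applying Theorem~\ref{thm:mixing:support:all} to these strategies and coefficients yields coefficients $\scalarB_1, \ldots, \scalarB_\indexSequence$ with at most $\numObj+1$ non-zero entries such that $\payoffVect = \sum_{\indexSequenceB=1}^\indexSequence \scalarB_\indexSequenceB \expectancy^{\stratMDP_\indexSequenceB}_\mdpState(\payoffTuple)$, which exhibits $\payoffVect$ as a convex combination of at most $\numObj+1$ elements of $\paySetPure{\payoffTuple}{\mdpState}$.

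For the second bullet, I would fix $\payoffVect \in \achSet{\payoffTuple}{\mdpState} = \down{\paySet{\payoffTuple}{\mdpState}}$. By definition of the downward closure, there exists $\payoffVect' \in \paySet{\payoffTuple}{\mdpState}$ such that $\payoffVect \leq \payoffVect'$. Applying Theorem~\ref{thm:mixing:exact} to $\payoffVect'$, I can write it as a convex combination of pure expected payoffs, and then invoke the second set of coefficients $\scalarC_1, \ldots, \scalarC_\indexSequence$ from Theorem~\ref{thm:mixing:support:all} to produce $\payoffVectB = \sum_{\indexSequenceB=1}^\indexSequence \scalarC_\indexSequenceB \expectancy^{\stratMDP_\indexSequenceB}_\mdpState(\payoffTuple)$, a convex combination of at most $\numObj$ elements of $\paySetPure{\payoffTuple}{\mdpState}$ satisfying $\payoffVect' \leq \payoffVectB$. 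Chaining with $\payoffVect \leq \payoffVect'$ yields $\payoffVect \leq \payoffVectB$.

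There is essentially no obstacle here: the corollary is a packaging of two theorems already proved in this section. The only point requiring slight care is recognising that universal integrability is needed precisely to guarantee that the invocation of Theorem~\ref{thm:mixing:exact} applies, and that all involved pure expected payoff vectors are finite, so Theorem~\ref{thm:mixing:support:all} collapses to its classical Carathéodory-style content without needing to handle infinite components separately.
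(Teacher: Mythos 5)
Your proof is correct and follows essentially the same route as the paper: invoke Theorem~\ref{thm:mixing:exact} to write the relevant vector as a convex combination of pure expected payoffs, then apply Theorem~\ref{thm:mixing:support:all} to trim the support to $\numObj+1$ (resp.\ $\numObj$, with domination) elements. Your side remark that universal integrability keeps all pure expected payoffs in $\IR^\numObj$ is accurate but not needed, since Theorem~\ref{thm:mixing:support:all} already handles infinite components.
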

\begin{proof}
  By Theorem~\ref{thm:mixing:exact}, $\paySet{\payoffTuple}{\mdpState} = \convex{\paySetPure{\payoffTuple}{\mdpState}}$.
  Furthermore, we recall that $\payoffVect\in\achSet{\payoffTuple}{\mdpState}$ if and only if there exists a strategy $\stratMDP$ such that $\payoffVect\leq\expectancy^{\stratMDP}_\mdpState(\payoffTuple)$.
  We obtain both claims of the corollary directly by Thm.~\ref{thm:mixing:support:all}.
\end{proof}

Example~\ref{ex:mixing:approx} implies that Corollary~\ref{cor:mixing:support:exact} does not directly extend to universally unambiguously integrable payoffs.
Nonetheless, we can identify a class of vectors that can be achieved by mixing no more than $\numObj$ strategies: the vectors in $\interior{\achSet{\payoffTuple}{\mdpState}\cap\IR^\numObj}$.
These are vectors $\payoffVect\in\achSet{\payoffTuple}{\mdpState}\cap\IR^\numObj$ that can be improved in all dimensions simultaneously, i.e., such that $\payoffVect + \varepsilon\oneVect\in\achSet{\payoffTuple}{\mdpState}$ for some $\varepsilon > 0$.
To prove our result, intuitively, we approximate, via Theorem~\ref{thm:mixing:approx}, the payoff of a strategy achieving $\payoffVect + \varepsilon\oneVect$ with a finite-support mixed strategy then invoke Theorem~\ref{thm:mixing:support:all}.

\begin{restatable}{corollary}{corMixingSupportInterior}\label{cor:mixing:support:interior}
  Assume that $\payoffTuple$ is universally unambiguously integrable.
  Let $\mdpState\in\mdpStateSpace$.
  For $\payoffVect\in\interior{\achSet{\payoffTuple}{\mdpState}\cap\IR^\numObj}$, there exists a convex combination $\payoffVectB$ of at most $\numObj$ elements of $\paySetPure{\payoffTuple}{\mdpState}$ such that $\payoffVect\leq\payoffVectB$.  
\end{restatable}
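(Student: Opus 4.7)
My plan is to lift $\payoffVect$ strictly using the interior hypothesis, then combine Theorems~\ref{thm:mixing:approx} and~\ref{thm:mixing:support:all}. Since $\payoffVect\in\interior{\achSet{\payoffTuple}{\mdpState}\cap\IR^\numObj}$, there exists $\varepsilon>0$ such that $\payoffVect+\varepsilon\oneVect\in\achSet{\payoffTuple}{\mdpState}$, so I can pick a strategy $\stratMDP$ witnessing $\expectancy^{\stratMDP}_\mdpState(\payoffTuple)\geq\payoffVect+\varepsilon\oneVect$. Writing $\payoffVect=(\payoffComp_1,\ldots,\payoffComp_\numObj)\in\IR^\numObj$, this inequality implies $\expectancy^{\stratMDP}_\mdpState(\payoff_\indexPayoff)\geq\payoffComp_\indexPayoff+\varepsilon$ for every $\indexPayoff$; in particular, none of the expected payoffs under $\stratMDP$ equals $-\infty$, so each coordinate is either real (and at least $\payoffComp_\indexPayoff+\varepsilon$) or equal to $+\infty$.

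Next I apply Theorem~\ref{thm:mixing:approx} to $\stratMDP$ with precision $\frac{\varepsilon}{2}$ and threshold $M=\max_{1\leq\indexPayoff\leq\numObj}|\payoffComp_\indexPayoff|+1$. This yields pure strategies $\stratBMDP_1,\ldots,\stratBMDP_\indexSequence$ and convex combination coefficients $\scalar_1,\ldots,\scalar_\indexSequence$ such that, for each $\indexPayoff$, if $\expectancy^{\stratMDP}_\mdpState(\payoff_\indexPayoff)\in\IR$ then
\[
  \sum_{\indexSequenceB=1}^\indexSequence\scalar_\indexSequenceB\expectancy^{\stratBMDP_\indexSequenceB}_\mdpState(\payoff_\indexPayoff)
  \;\geq\;\expectancy^{\stratMDP}_\mdpState(\payoff_\indexPayoff)-\tfrac{\varepsilon}{2}
  \;\geq\;\payoffComp_\indexPayoff+\tfrac{\varepsilon}{2}\;>\;\payoffComp_\indexPayoff,
\]
and if $\expectancy^{\stratMDP}_\mdpState(\payoff_\indexPayoff)=+\infty$ then $\sum_{\indexSequenceB}\scalar_\indexSequenceB\expectancy^{\stratBMDP_\indexSequenceB}_\mdpState(\payoff_\indexPayoff)\geq M\geq\payoffComp_\indexPayoff$. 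In either case, the intermediate convex combination $\payoffVectB'=\sum_{\indexSequenceB=1}^\indexSequence\scalar_\indexSequenceB\expectancy^{\stratBMDP_\indexSequenceB}_\mdpState(\payoffTuple)$ satisfies $\payoffVect\leq\payoffVectB'$.

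Finally, I invoke Theorem~\ref{thm:mixing:support:all} applied to the pure strategies $\stratBMDP_1,\ldots,\stratBMDP_\indexSequence$ and coefficients $\scalar_1,\ldots,\scalar_\indexSequence$: it provides coefficients $\scalarC_1,\ldots,\scalarC_\indexSequence$ with at most $\numObj$ non-zero entries such that $\payoffVectB\coloneqq\sum_{\indexSequenceB=1}^\indexSequence\scalarC_\indexSequenceB\expectancy^{\stratBMDP_\indexSequenceB}_\mdpState(\payoffTuple)\geq\payoffVectB'\geq\payoffVect$. This $\payoffVectB$ is the required convex combination, concluding the proof. The only delicate point is handling dimensions on which $\expectancy^{\stratMDP}_\mdpState(\payoff_\indexPayoff)=+\infty$; this is why I fix the slack $\varepsilon$ and the large threshold $M$ simultaneously in the single invocation of Theorem~\ref{thm:mixing:approx}, so that both types of coordinates are dominated by the approximating convex combination.
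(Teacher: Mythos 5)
Your proposal is correct and follows essentially the same route as the paper's proof: use the interior hypothesis to obtain a strictly dominating strategy with slack $\varepsilon$, apply Theorem~\ref{thm:mixing:approx} with a precision smaller than that slack and a threshold $M$ exceeding every finite component of $\payoffVect$, and then invoke Theorem~\ref{thm:mixing:support:all} to cut the support down to $\numObj$. The only differences are cosmetic choices of the precision and threshold constants and the point at which the support reduction is applied.
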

\begin{proof}
      Let $\payoffVect=\payoffVectVerbose\in\interior{\achSet{\payoffTuple}{\mdpState}\cap\IR^\numObj}$.
  By definition of the interior of a subset of $\IR^\numObj$, there exists $\varepsilon > 0$ such that $\payoffVect + \varepsilon\oneVect\in\achSet{\payoffTuple}{\mdpState}$.
  Therefore, there exists a strategy $\stratMDP$ such that, for all $1\leq\indexPayoff\leq\numObj$, $\payoffComp_\indexPayoff < \expectancy^{\stratMDP}_\mdpState(\payoff_\indexPayoff)$.
  For all $1\leq\indexPayoff\leq\numObj$, we have $\expectancy^{\stratMDP}_\mdpState(\payoff_\indexPayoff)\neq-\infty$ because $\expectancy^{\stratMDP}_\mdpState(\payoff_\indexPayoff) > \payoffComp_\indexPayoff\in\IR$.
  Let $\eta = \min(\{1\}\cup\{\expectancy^{\stratMDP}_\mdpState(\payoff_\indexPayoff) -\payoffComp_\indexPayoff\mid \expectancy^{\stratMDP}_\mdpState(\payoff_\indexPayoff)\in\IR,\, 1\leq\indexPayoff\leq\numObj\})$ and $M = \max(\{1\}\cup\{\payoffComp_\indexPayoff+1 \mid \expectancy^{\stratMDP}_\mdpState(\payoff_\indexPayoff)=+\infty,\, 1\leq\indexPayoff\leq\numObj\})$.

  Theorem~\ref{thm:mixing:approx} implies that there exist pure strategies $\stratBMDP_1$, \ldots, $\stratBMDP_\indexSequence$ and convex combination coefficients $\scalar_1$, \ldots, $\scalar_\indexSequence\in\ccInt{0}{1}$ such that if $\expectancy^{\stratMDP}_\mdpState(\payoff_\indexPayoff)= +\infty$, then $\sum_{\indexSequenceB=1}^\indexSequence\scalar_\indexSequenceB\expectancy^{\stratBMDP_\indexSequenceB}_\mdpState(\payoff_\indexPayoff)\geq M$ and, otherwise, $\sum_{\indexSequenceB=1}^\indexSequence\scalar_\indexSequenceB\expectancy^{\stratBMDP_\indexSequenceB}_\mdpState(\payoff_\indexPayoff) - \expectancy^{\stratMDP}_\mdpState(\payoff_\indexPayoff)\geq-\eta$.
  By Theorem~\ref{thm:mixing:support:all}, we can assume that $\indexSequence\leq\numObj$.
  Let $\payoffVectB = \payoffVectBVerbose = \sum_{\indexSequenceB=1}^\indexSequence\scalar_\indexSequenceB\expectancy^{\stratBMDP_\indexSequenceB}_\mdpState(\payoffTuple)$.

  We show that $\payoffVect\leq \payoffVectB$.
  Let $1\leq\indexPayoff\leq\numObj$.
  First, assume that $\expectancy^{\stratMDP}_\mdpState(\payoff_\indexPayoff) = +\infty$.
  In this case, $M\geq\payoffComp_\indexPayoff$, and we obtain
  $\payoffComp_\indexPayoff\leq M \leq
    \payoffCompB_\indexPayoff.$
  Second, assume that $\expectancy^{\stratMDP}_\mdpState(\payoff_\indexPayoff)\in\IR$.
  In that case, we have $\eta \leq \expectancy^{\stratMDP}_\mdpState(\payoff_\indexPayoff) - \payoffComp_\indexPayoff$, and therefore
  $\payoffComp_\indexPayoff \leq \expectancy^{\stratMDP}_\mdpState(\payoff_\indexPayoff) - \eta \leq
  \payoffCompB_\indexPayoff$.
\end{proof}

\section{Continuous payoffs}\label{section:continuous}
In this section, we study the topological properties of sets of expected payoff in \textit{finite POMDPs} when all payoffs are continuous.
The main result of this section applies to payoffs whose square is universally integrable, or \textit{universally square integrable} for short. We prove that the set of expected payoffs and of achievable vectors are closed for such payoff functions.
In general, the set of expected payoffs need not be closed when considering (universally unambiguously integrable) continuous multi-dimensional payoffs.
Our result for universally square integrable continuous payoffs is a generalisation of~\cite[Lem.~2]{DBLP:conf/lpar/ChatterjeeFW13} for the special case where all payoffs are discounted-sum payoffs.

The Cauchy-Schwarz inequality (e.g.,~\cite[Thm.~1.5.2.]{Dur19}) guarantees that any universally square integrable payoff is universally integrable.
In finite POMDPs, this class subsumes real-valued continuous payoffs because such payoffs are bounded (since the set of plays is compact) and universally integrable continuous shortest-path payoffs (see Appendix~\ref{appendix:square:shortest path}).

We divide this section in three parts.
In Section~\ref{section:continuous:topology}, we introduce a topology on the space of randomised strategies such that the resulting topological space is metrisable and compact to formalise a notion of convergence of strategies.
In Section~\ref{section:continuous:square integrable}, we formulate our result for continuous universally square integrable payoffs.
Section~\ref{section:continuous:infinite} provides examples illustrating that the results of Section~\ref{section:continuous:square integrable} do not hold for continuous payoffs that are not universally integrable.
We leave open whether our results extend to all universally integrable continuous payoffs.
For this section, we fix a finite POMDP $\pomdp=\pomdpTuple$ and a $\numObj$-dimensional continuous payoff $\payoffTuple = (\payoff_\indexPayoff)_{1\leq\indexPayoff\leq\numObj}$.

\subsection{A topology on the space of strategies}\label{section:continuous:topology}

In this section, we define a topology on the space of strategies to formalise a notion of convergence for sequences of behavioural strategies.
We view $\stratClassAll{\pomdp}$ as a subset of the product $\dist{\mdpActionSpace}^{\obsFun(\histSet{\pomdp})}$ equipped with the product topology.
We endow each copy of $\dist{\mdpActionSpace}$ in this product with the topology given by the metric (induced by the Euclidean norm) $\distMetric(\mu, \mu') = \sqrt{\sum_{\mdpAction\in\mdpActionSpace(\mdpState)}|\mu(\mdpAction)-\mu'(\mdpAction)|^2}$ for all $\mu$, $\mu'\in\dist{\mdpActionSpace}$.

We can show that $\stratClassAll{\pomdp}$ is a compact metrisable topological space (it is a closed subset of $\dist{\mdpActionSpace}^{\obsFun(\histSet{\pomdp})}$).
Compactness follows from the assumption that $\pomdp$ is finite: $\dist{\mdpActionSpace}$ is homeomorphic to a closed subset of $\ccInt{0}{1}^{|\mdpActionSpace|}$.
We do not define a metric over strategies, as it is not necessary in the sequel.
Instead, we recall that a sequence of strategies $(\stratMDP^{(\indexSequence)})_{\indexSequence\in\IN}$ converges to a strategy $\stratMDP$ if and only if, for all $\hist\in\histSet{\pomdp}$, $(\stratMDP^{(\indexSequence)}(\hist))_{\indexSequence\in\IN}$ converges to $\stratMDP(\hist)$.

We now formulate a result intuitively stating that close strategies induce distributions that assign similar probabilities to cylinders of histories of bounded length.
We first require the following technical lemma.
\begin{restatable}{lemma}{lemProdDiffIneq}\label{lem:prod:diff:inequality}
  Let $\scalar_1, \ldots, \scalar_\indexSequence, \scalarB_1, \ldots, \scalarB_\indexSequence\in\ccInt{0}{1}$.
  Then
  \[\left|\prod_{\indexSequenceB = 1}^\indexSequence \scalar_\indexSequenceB
      - \prod_{\indexSequenceB = 1}^\indexSequence \scalarB_\indexSequenceB\right|\leq
    \sum_{\indexSequenceB=1}^\indexSequence \left|\scalar_\indexSequenceB - \scalarB_\indexSequenceB\right|.\] 
\end{restatable}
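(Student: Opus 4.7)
The plan is to prove this by induction on $\indexSequence$, exploiting the fact that all scalars lie in $\ccInt{0}{1}$ so that any partial product is also in $\ccInt{0}{1}$. The inductive step uses the classical telescoping trick: insert and subtract $\scalar_\indexSequence \cdot \prod_{\indexSequenceB=1}^{\indexSequence-1}\scalarB_\indexSequenceB$ inside the absolute value, apply the triangle inequality to split it into two pieces, and bound each leading coefficient by $1$.

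For the base case $\indexSequence = 1$, the inequality reduces to $|\scalar_1 - \scalarB_1| \leq |\scalar_1 - \scalarB_1|$, which is trivial. For the inductive step, assuming the result for $\indexSequence - 1$, I would write
\begin{align*}
  \left|\prod_{\indexSequenceB=1}^\indexSequence\scalar_\indexSequenceB - \prod_{\indexSequenceB=1}^\indexSequence\scalarB_\indexSequenceB\right|
  & = \left|\scalar_\indexSequence\prod_{\indexSequenceB=1}^{\indexSequence-1}\scalar_\indexSequenceB - \scalar_\indexSequence\prod_{\indexSequenceB=1}^{\indexSequence-1}\scalarB_\indexSequenceB + \scalar_\indexSequence\prod_{\indexSequenceB=1}^{\indexSequence-1}\scalarB_\indexSequenceB - \scalarB_\indexSequence\prod_{\indexSequenceB=1}^{\indexSequence-1}\scalarB_\indexSequenceB\right| \\
  & \leq \scalar_\indexSequence\left|\prod_{\indexSequenceB=1}^{\indexSequence-1}\scalar_\indexSequenceB - \prod_{\indexSequenceB=1}^{\indexSequence-1}\scalarB_\indexSequenceB\right| + \left(\prod_{\indexSequenceB=1}^{\indexSequence-1}\scalarB_\indexSequenceB\right)\cdot|\scalar_\indexSequence - \scalarB_\indexSequence|,
\end{align*}
then use $\scalar_\indexSequence \leq 1$, $\prod_{\indexSequenceB=1}^{\indexSequence-1}\scalarB_\indexSequenceB \leq 1$, and the induction hypothesis to conclude $\sum_{\indexSequenceB=1}^{\indexSequence-1}|\scalar_\indexSequenceB - \scalarB_\indexSequenceB| + |\scalar_\indexSequence - \scalarB_\indexSequence| = \sum_{\indexSequenceB=1}^\indexSequence|\scalar_\indexSequenceB - \scalarB_\indexSequenceB|$.

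There is no real obstacle here: the key algebraic identity is the telescoping, and the hypothesis $\scalar_\indexSequenceB, \scalarB_\indexSequenceB \in \ccInt{0}{1}$ is used exclusively to bound the two leading coefficients by $1$ after the triangle inequality. The lemma is a standard tool used later in the section to deduce that if a sequence of strategies converges pointwise on histories, then the induced cylinder probabilities converge as well, since each such probability is a finite product of transition probabilities and strategy probabilities, each in $\ccInt{0}{1}$.
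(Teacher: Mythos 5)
Your proof is correct and follows essentially the same route as the paper: induction on $\indexSequence$, reducing the inductive step to a two-factor bound obtained from the triangle inequality and the fact that all factors lie in $\ccInt{0}{1}$. The only (cosmetic) difference is that you use the asymmetric telescoping decomposition $\scalar\scalarB' - \scalarB\scalarB'' = \scalar(\scalarB'-\scalarB'') + (\scalar-\scalarB)\scalarB''$ for the two-factor case, whereas the paper uses the symmetric identity $\scalar_1\scalar_2-\scalarB_1\scalarB_2 = \tfrac{1}{2}(\scalar_1-\scalarB_1)(\scalar_2+\scalarB_2)+\tfrac{1}{2}(\scalar_1+\scalarB_1)(\scalar_2-\scalarB_2)$; both yield the same bound.
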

\begin{proof}
  We proceed by induction.
  The claim trivially holds for $\indexSequence=1$.
  To lighten notation, we prove the case $\indexSequence=2$ separately first.
  We perform the general induction step below by building on this simpler case.
  We have:
  \begin{align*}
    |\scalar_1\scalar_2 - \scalarB_1\scalarB_2|
    & = |\frac{1}{2}(\scalar_1-\scalarB_1)(\scalar_2+\scalarB_2) + \frac{1}{2}(\scalar_1+\scalarB_1)(\scalar_2-\scalarB_2)| \\
    & \leq |\scalar_1 - \scalarB_1|\frac{|\scalar_2+\scalarB_2|}{2} + |\scalar_2 - \scalarB_2|\frac{|\scalar_1+\scalarB_1|}{2} \\
    & \leq |\scalar_1 - \scalarB_1| + |\scalar_2 -\scalarB_2|.
  \end{align*}
  The second line is obtained by triangulation and the last line is obtained by using the fact that $\scalar_1, \scalar_2, \scalarB_1, \scalarB_2\in \ccInt{0}{1}$.

  We now perform the general induction step.
  We assume that the result holds for some $\indexSequence\in\IN_0$ and prove that it holds for $\indexSequence+1$.
  By applying the simpler case proven above and then the induction hypothesis, we obtain that
  \begin{align*}
    \left|\prod_{\indexSequenceB = 1}^{\indexSequence+1} \scalar_\indexSequenceB
    - \prod_{\indexSequenceB = 1}^{\indexSequence+1} \scalarB_\indexSequenceB\right|
    & \leq 
      \left|\prod_{\indexSequenceB = 1}^{\indexSequence} \scalar_\indexSequenceB
      - \prod_{\indexSequenceB = 1}^{\indexSequence} \scalarB_\indexSequenceB\right| + \left|\scalar_{\indexSequence+1}-\scalarB_{\indexSequence+1}\right| \\
    & \leq
    \sum_{\indexSequenceB=1}^{\indexSequence+1} \left|\scalar_\indexSequenceB - \scalarB_\indexSequenceB\right|.
  \end{align*}
\end{proof}

We now state the lemma regarding induced distributions.
\begin{restatable}{lemma}{lemStratDistCloseness}\label{lem:strategy distribution closeness}
  Let $\stratMDP, \stratBMDP$ be two strategies, $\lengthBound\in\IN_0$ and $\eta > 0$.
  Assume that, for all histories $\hist$ that are at most $\lengthBound$ states long,  $\distMetric(\stratMDP(\hist), \stratBMDP(\hist))\leq\frac{\eta}{\lengthBound}$.
  Then, for all histories $\hist$ that are at most $\lengthBound+1$ states long and all $\mdpState\in\mdpStateSpace$, we have $|\proba_{\mdpState}^{\stratMDP}(\cyl{\hist}) - \proba_{\mdpState}^{\stratBMDP}(\cyl{\hist})|\leq \eta$.
\end{restatable}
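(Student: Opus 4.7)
The plan is to directly compute both cylinder probabilities as products using the formula from Section~\ref{section:prelim}, bound the difference by Lemma~\ref{lem:prod:diff:inequality}, and then use the closeness assumption on the strategies. Fix $\stratMDP$, $\stratBMDP$, $\lengthBound$, $\eta$ satisfying the hypothesis, a state $\mdpState\in\mdpStateSpace$, and a history $\hist=\mdpState_0\mdpAction_0\mdpState_1\ldots\mdpState_\indexLast$ of length at most $\lengthBound+1$ states, so $\indexLast\leq\lengthBound$. If $\mdpState_0\neq\mdpState$, both probabilities are zero and the inequality is trivial, so assume $\mdpState_0=\mdpState$. The case $\indexLast=0$ is also trivial since both probabilities equal $1$.

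First I would write
\[
\proba_{\mdpState}^{\stratMDP}(\cyl{\hist})=\prod_{\indexPosition=0}^{\indexLast-1}\stratMDP(\playPrefix{\hist}{\indexPosition})(\mdpAction_\indexPosition)\cdot\mdpTrans(\mdpState_\indexPosition,\mdpAction_\indexPosition)(\mdpState_{\indexPosition+1}),
\]
and likewise for $\stratBMDP$, noting that all factors are in $\ccInt{0}{1}$. Applying Lemma~\ref{lem:prod:diff:inequality} to these two products of $2\indexLast$ factors (where the transition factors $\mdpTrans(\mdpState_\indexPosition,\mdpAction_\indexPosition)(\mdpState_{\indexPosition+1})$ cancel pairwise since they do not depend on the strategy), I obtain
\[
\left|\proba_{\mdpState}^{\stratMDP}(\cyl{\hist}) - \proba_{\mdpState}^{\stratBMDP}(\cyl{\hist})\right| \leq \sum_{\indexPosition=0}^{\indexLast-1}\left|\stratMDP(\playPrefix{\hist}{\indexPosition})(\mdpAction_\indexPosition)-\stratBMDP(\playPrefix{\hist}{\indexPosition})(\mdpAction_\indexPosition)\right|.
\]

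Next, for each $0\leq\indexPosition\leq\indexLast-1$, the prefix $\playPrefix{\hist}{\indexPosition}$ has $\indexPosition+1\leq\indexLast\leq\lengthBound$ states, so the hypothesis applies: $\distMetric(\stratMDP(\playPrefix{\hist}{\indexPosition}),\stratBMDP(\playPrefix{\hist}{\indexPosition}))\leq\frac{\eta}{\lengthBound}$. Since the Euclidean norm dominates each coordinate, $|\stratMDP(\playPrefix{\hist}{\indexPosition})(\mdpAction_\indexPosition)-\stratBMDP(\playPrefix{\hist}{\indexPosition})(\mdpAction_\indexPosition)|\leq\frac{\eta}{\lengthBound}$. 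Summing over the at most $\lengthBound$ terms gives the desired bound of $\eta$.

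The proof is essentially mechanical once Lemma~\ref{lem:prod:diff:inequality} is in hand; there is no real obstacle. The only subtlety is bookkeeping the length of the prefixes $\playPrefix{\hist}{\indexPosition}$ relative to the bound $\lengthBound$ in the hypothesis, which is why the statement permits histories of length up to $\lengthBound+1$ states while constraining strategies on histories of length up to $\lengthBound$ states.
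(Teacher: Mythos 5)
Your proof is correct and follows essentially the same route as the paper: write both cylinder probabilities as products, apply Lemma~\ref{lem:prod:diff:inequality}, and bound each of the at most $\lengthBound$ strategy-factor differences by $\frac{\eta}{\lengthBound}$ via the hypothesis (the paper factors out the transition probabilities and bounds them by $1$ before applying the lemma, whereas you keep them in the product and let their pairwise differences vanish — a cosmetic difference). Your explicit remark that the Euclidean metric $\distMetric$ dominates each coordinate difference is a small point the paper glosses over, and is a welcome addition.
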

\begin{proof}
    Let $\hist = \mdpState_0\mdpAction_0\mdpState_1\ldots\mdpState_\indexLast$ with $\indexLast \leq \lengthBound$.
  We only prove the claim for $\mdpState = \mdpState_0$.
  The other case is direct because, for $\mdpState\in\mdpStateSpace\setminus\{\mdpState_0\}$, we have $\proba_{\mdpState}^{\stratMDP}(\cyl{\hist}) = \proba_{\mdpState}^{\stratBMDP}(\cyl{\hist}) = 0$.
  
  The proof follows from the following sequence of inequations.
  \begin{align*}
    |\proba_{\mdpState_0}^{\stratMDP}(\cyl{\hist}) - \proba_{\mdpState_0}^{\stratBMDP}(\cyl{\hist})|
    & = \prod_{\indexPosition=0}^{\indexLast-1} \mdpTrans(\mdpState_\indexPosition, \mdpAction_\indexPosition)(\mdpState_{\indexPosition+1}) \cdot
      \left|\prod_{\indexPosition=0}^{\indexLast-1} \stratMDP(\playPrefix{\hist}{\indexPosition}) - \prod_{\indexPosition=0}^{\indexLast-1} \stratBMDP(\playPrefix{\hist}{\indexPosition})\right| \\
    & \leq \left|\prod_{\indexPosition=0}^{\indexLast-1} \stratMDP(\playPrefix{\hist}{\indexPosition}) - \prod_{\indexPosition=0}^{\indexLast-1} \stratBMDP(\playPrefix{\hist}{\indexPosition})\right| \\
    & \leq \sum_{\indexPosition=0}^{\indexLast-1} |\stratMDP(\playPrefix{\hist}{\indexPosition}) - \stratBMDP(\playPrefix{\hist}{\indexPosition})| \\
    & \leq \lengthBound\cdot \frac{\eta}{\lengthBound} = \eta.
  \end{align*}
  The first line is by definition of probability measures induced by the strategies from $\mdpState_0$.
  The second line uses the fact that transition probabilities are at most $1$.
  The third line is obtained by Lemma~\ref{lem:prod:diff:inequality}.
  The last line follows from the assumption of the lemma and $\indexLast\leq\lengthBound$.
\end{proof}

We close this section by showing that $\stratClassPure{\pomdp}$ is a closed subset of $\stratClassAll{\pomdp}$.
In the following section, we prove that if $\payoffTuple$ is universally square integrable and $\stratClass\subseteq\stratClassAll{\pomdp}$ is closed, then $\paySetClass{\payoffTuple}{\mdpState}{\stratClass}$ is compact.
Therefore, establishing that $\stratClassPure{\pomdp}$ is closed implies that the result is also applicable to $\stratClassPure{\pomdp}$.
The proof boils down to showing that the limit of a converging sequence of Dirac distributions is a Dirac distribution.
This follows from such sequences being ultimately constant.
\begin{restatable}{lemma}{lemClosedPureStrategies}\label{lem:closed:pure strategies}
  The set $\stratClassPure{\pomdp}$ is a closed subset of $\stratClassAll{\pomdp}$.
\end{restatable}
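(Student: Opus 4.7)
The plan is to establish closedness of $\stratClassPure{\pomdp}$ sequentially, which suffices because $\stratClassAll{\pomdp}$, being (a subset of) a countable product of metrisable factors equipped with the product topology, is metrisable as noted in the paragraph just above the lemma. So I would start from an arbitrary sequence $(\stratMDP^{(\indexSequence)})_{\indexSequence \in \IN}$ in $\stratClassPure{\pomdp}$ converging to some $\stratMDP \in \stratClassAll{\pomdp}$, and aim to show that $\stratMDP$ is itself pure, i.e., that $\stratMDP(\hist)$ is a Dirac distribution for every history $\hist \in \histSet{\pomdp}$.

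The key observation is that, because $\pomdp$ is finite, for every $\hist$ the set $\mdpActionSpace(\last{\hist})$ is finite and the Dirac distributions supported on it form a finite discrete subset of $(\dist{\mdpActionSpace}, \distMetric)$: any two distinct Dirac distributions differ by $1$ in exactly two coordinates, so they lie at Euclidean distance exactly $\sqrt{2}$ from each other. By the characterisation of convergence in the product topology recalled just above the lemma, the assumption $\stratMDP^{(\indexSequence)} \to \stratMDP$ entails that $\stratMDP^{(\indexSequence)}(\hist) \to \stratMDP(\hist)$ in $(\dist{\mdpActionSpace}, \distMetric)$ for each fixed $\hist$. Since the terms $\stratMDP^{(\indexSequence)}(\hist)$ are Dirac distributions and these are pairwise $\sqrt{2}$-separated, any convergent sequence among them must be eventually constant. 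Hence its limit $\stratMDP(\hist)$ is itself that (unique, eventually realised) Dirac distribution.

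Applying this observation to every $\hist \in \histSet{\pomdp}$ yields that $\stratMDP$ is pure, completing the argument. I do not foresee any real obstacle: the whole argument rests on the finite-discrete structure of Dirac distributions in $\dist{\mdpActionSpace}$, which is a direct consequence of the finiteness of $\pomdp$ (i.e., of the action space). The only thing to be slightly careful with is making explicit that convergence in the product topology reduces to coordinate-wise convergence, which is standard and was already recalled in the preamble to the lemma.
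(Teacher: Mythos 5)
Your proof is correct and follows essentially the same route as the paper: reduce to pointwise convergence via the product topology, then observe that distinct Dirac distributions are uniformly separated in $(\dist{\mdpActionSpace}, \distMetric)$, so a convergent sequence of them is eventually constant and its limit is Dirac. The paper phrases the last step via the Cauchy property of the convergent sequence rather than the explicit $\sqrt{2}$-separation, but this is the same argument.
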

\begin{proof}
    Let $(\stratMDP^{(\indexSequence)})_{\indexSequence\in\IN}$ be a sequence of pure strategies that converges to a strategy $\stratMDP$, i.e., for all $\hist\in\histSet{\pomdp}$, the sequence $(\stratMDP^{(\indexSequence)}(\hist))_{\indexSequence\in\IN}$ converges to $\stratMDP(\hist)$.
  We must show that, for all $\hist\in\histSet{\pomdp}$, $\stratMDP(\hist)$ is a Dirac distribution.

  Let $\hist\in\histSet{\pomdp}$.
  Because $(\stratMDP^{(\indexSequence)}(\hist))_{\indexSequence\in\IN}$ is convergent, it is a Cauchy sequence.
  Thus, there exists some $\indexSequence_0\in\IN$ such that for all $\indexSequence, \indexSequenceB\geq \indexSequence_0$, $\distMetric(\stratMDP^{(\indexSequence)}(\hist), \stratMDP^{(\indexSequenceB)}(\hist)) < 1$.
  It follows from $(\stratMDP^{(\indexSequence)}(\hist))_{\indexSequence\in\IN}$ being a sequence of a Dirac distributions and the definition of $\distMetric$ that the sequence $(\stratMDP^{(\indexSequence)}(\hist))_{\indexSequence\geq\indexSequence_0}$ is constant.
  It follows that $\stratMDP(\hist) = \stratMDP^{(\indexSequence_0)}(\hist)$, and thus $\stratMDP(\hist)$ is a Dirac distribution.
\end{proof}

\subsection{Universally square integrable continuous payoffs}\label{section:continuous:square integrable}

We prove that $\payoffTuple$ is universally square integrable, then for all $\mdpState\in\mdpStateSpace$, $\paySet{\payoffTuple}{\mdpState}$ and $\achSet{\payoffTuple}{\mdpState}$ are closed.
Our proof relies on the following property: if $\payoffTuple$ is universally square integrable, then for all sequences $(\stratMDP^{(\indexSequence)})_{\indexSequence\in\IN}$ of strategies converging to a strategy $\stratMDP$, and for all $\mdpState\in\mdpStateSpace$, $(\expectancy^{\stratMDP^{(\indexSequence)}}_\mdpState(\payoffTuple))_{\indexSequence\in\IN}$ converges to $\expectancy^{\stratMDP}_\mdpState(\payoffTuple)$.
It suffices to prove the convergence on each dimension to obtain this property.
Therefore, we need only consider one-dimensional payoffs for now.

We split the proof into two parts: first, we consider the particular case of real-valued continuous payoffs and then generalise to universally square integrable payoffs.
We remark that we may not assume that a universally integrable continuous payoff is real-valued without loss of generality: changing the payoff of plays that have an infinite payoff to a real number will violate the continuity property.
Therefore, we do not generalise the property for real-valued continuous payoffs through such an assumption.

Let $\payoff\colon\playSet{\pomdp}\to\IR$ be a continuous real-valued payoff.
It follows from $\pomdp$ being finite that $\payoff$ is uniformly continuous.
Recall that $\payoff$ is uniformly continuous if for all $\varepsilon>0$, there exists $\indexPosition\in\IN$ such that for all plays $\play$, $\play'$, $\playPrefix{\play}{\indexPosition}=\playPrefix{\play'}{\indexPosition}$ implies that $|\payoff(\play)-\payoff(\play')|<\varepsilon$.
In particular, for all $\varepsilon > 0$, $\payoff$ can be $\varepsilon$-approximated by a linear combination of indicator functions of cylinders of histories of a fixed length.
This provides a means to $\varepsilon$-approximate $\expectancy^{\stratBMDP}_\mdpState(\payoff)$ for any strategy $\stratBMDP$ as a linear combination of probabilities of cylinders of histories of bounded length.
Since Lemma~\ref{lem:strategy distribution closeness} guarantees that the distributions over plays induced by strategies that are close assign similar probabilities to such cylinders, through the approximations above, we can obtain that $\expectancy^{\stratMDP^{(\indexSequence)}}_\mdpState(\payoffTuple)$ is $3\varepsilon$-close $\expectancy^{\stratMDP}_\mdpState(\payoffTuple)$ for all $\mdpState\in\mdpStateSpace$ for large values of $\indexSequence$.
We formalise this argument below.

\begin{restatable}{theorem}{thmContinuousConvergenceReal}\label{thm:continuous:convergence:real}
  Let $\mdpState\in\mdpStateSpace$.
  Assume that $\pomdp$ is finite, $\payoffTuple\colon\playSet{\pomdp}\to\IR^\numObj$ and $\payoffTuple$ is continuous.
  Then the function $\stratClassAll{\pomdp}\to\IR^\numObj\colon\stratMDP\mapsto\expectancy^{\stratMDP}_\mdpState(\payoffTuple)$ is continuous.
  In other words, for all sequences $(\stratMDP^{(\indexSequence)})_{\indexSequence\in\IN}$ of strategies that converge to a strategy $\stratMDP$, $\lim_{\indexSequence\to\infty}\expectancy^{\stratMDP^{(\indexSequence)}}_\mdpState(\payoffTuple)=\expectancy^{\stratMDP}_\mdpState(\payoffTuple)$.
\end{restatable}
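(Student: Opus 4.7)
The plan is to reduce to the one-dimensional case and then exploit uniform continuity of $\payoff$ to control expectations through only \emph{finitely many} cylinder probabilities of bounded length, where Lemma~\ref{lem:strategy distribution closeness} applies. Since continuity of a function into $\IR^\numObj$ reduces to continuity of each component, I can fix an arbitrary $1 \leq \indexPayoff \leq \numObj$ and work with the real-valued payoff $\payoff = \payoff_\indexPayoff$. Because $\pomdp$ is finite, $\playSet{\pomdp}$ is compact (as indicated in Section~\ref{section:payoffs:continuous}) and $\payoff$ is bounded and uniformly continuous; fix $B \in \IR$ with $|\payoff| \leq B$.

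Fix a target $\varepsilon > 0$. The first main step is to approximate $\payoff$ by a step function constant on cylinders of a fixed horizon. By uniform continuity, I pick $\lengthBound \in \IN_0$ such that $\playPrefix{\play}{\lengthBound} = \playPrefix{\play'}{\lengthBound}$ implies $|\payoff(\play) - \payoff(\play')| \leq \varepsilon$. For each history $\hist$ ending in a state with $\lengthBound+1$ states, I select one representative play $\play_\hist \in \cyl{\hist}$ and set $\hat{\payoff} = \sum_{\hist} \payoff(\play_\hist) \cdot \indic{\cyl{\hist}}$, where the sum ranges over the \emph{finitely many} such histories (using that $\mdpStateSpace$ and $\mdpActionSpace$ are finite). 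By construction $|\payoff - \hat\payoff| \leq \varepsilon$ pointwise, hence $|\expectancy^{\stratBMDP}_\mdpState(\payoff) - \expectancy^{\stratBMDP}_\mdpState(\hat\payoff)| \leq \varepsilon$ \emph{uniformly} over $\stratBMDP \in \stratClassAll{\pomdp}$.

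The second step uses Lemma~\ref{lem:strategy distribution closeness} to show that $\stratMDP \mapsto \expectancy^{\stratMDP}_\mdpState(\hat\payoff)$ is continuous. Let $N$ be the (finite) number of histories appearing in $\hat\payoff$ and let $\eta = \varepsilon / (N B)$. Suppose $(\stratMDP^{(\indexSequence)})_{\indexSequence \in \IN}$ converges to $\stratMDP$ in $\stratClassAll{\pomdp}$; since only finitely many histories of length at most $\lengthBound+1$ occur, for $\indexSequence$ large enough we have $\distMetric(\stratMDP^{(\indexSequence)}(\hist), \stratMDP(\hist)) \leq \eta/\lengthBound$ for all such $\hist$. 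Lemma~\ref{lem:strategy distribution closeness} then gives $|\proba^{\stratMDP^{(\indexSequence)}}_\mdpState(\cyl{\hist}) - \proba^{\stratMDP}_\mdpState(\cyl{\hist})| \leq \eta$ for all histories of length at most $\lengthBound+1$, so summing over the $N$ histories yields $|\expectancy^{\stratMDP^{(\indexSequence)}}_\mdpState(\hat\payoff) - \expectancy^{\stratMDP}_\mdpState(\hat\payoff)| \leq N B \eta = \varepsilon$.

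The proof then concludes by a triangle inequality: for $\indexSequence$ large enough,
\[
|\expectancy^{\stratMDP^{(\indexSequence)}}_\mdpState(\payoff) - \expectancy^{\stratMDP}_\mdpState(\payoff)| \leq \varepsilon + \varepsilon + \varepsilon = 3\varepsilon,
\]
so the sequence of expectations converges. The main technical point, which I expect to be routine but does require care, is aligning the precision parameter on strategies (as measured by $\distMetric$ on histories of length $\leq \lengthBound$) with the precision on the resulting expectation. The crucial structural ingredients making everything go through are the finiteness of $\pomdp$ (which yields both uniform continuity of $\payoff$ and finiteness of the set of length-$(\lengthBound+1)$ histories) and Lemma~\ref{lem:strategy distribution closeness} (which transfers closeness of strategies into closeness of cylinder probabilities).
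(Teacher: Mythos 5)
Your proposal is correct and follows essentially the same route as the paper's proof: reduce to one dimension, use compactness to get uniform continuity and boundedness, approximate $\payoff$ by a step function on cylinders of a fixed horizon, and transfer strategy closeness to cylinder-probability closeness via Lemma~\ref{lem:strategy distribution closeness} before concluding with the triangle inequality. The only (cosmetic) differences are bookkeeping of $\varepsilon$ versus $\varepsilon/3$ and the degenerate case $B=0$, which the paper dispatches separately to avoid dividing by zero in the choice of $\eta$.
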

\begin{proof}
    It suffices to prove the theorem in the case $\numObj = 1$ to obtain the general case.
  For this reason, we consider a one-dimensional real-valued continuous payoff $\payoff$ below.
  Let $(\stratMDP^{(\indexSequence)})_{\indexSequence\in\IN}$ be a sequence of strategies converging to a strategy $\stratMDP$.
  We start with some notation.
  By continuity of $\payoff$, $\payoff$ is bounded.
  We let $\|\payoff\|_\infty = \sup_{\play\in\playSet{\pomdp}}|\payoff(\play)|$.
  For any history $\hist = \mdpState_0\mdpAction_0\ldots\mdpState_\indexLast$, we let $|\hist| = \indexLast$  denote the index of the last state of the history.
  We also fix, for all histories $\hist\in\histSet{\pomdp}$, a play $\play(\hist)\in\cyl{\hist}$ which is a continuation of $\hist$.

  We must prove that $(\expectancy^{\stratMDP^{(\indexSequence)}}_\mdpState(\payoff))_{\indexSequence\in\IN}$ converges to $\expectancy^{\stratMDP}_\mdpState(\payoff)$.
  Let $\varepsilon > 0$.
  We assume that $\|\payoff\|_\infty > 0$, as otherwise the result is direct: if $\|\payoff\|_\infty = 0$, then $\expectancy^{\stratMDP^{(\indexSequence)}}_\mdpState(\payoff) = \expectancy^{\stratMDP}_{\mdpState}(\payoff) = 0$ for all $\indexSequence\in\IN$.

  We start by constructing a simple function which $\frac{\varepsilon}{3}$-approximates $\payoff$ by exploiting the uniform continuity of $\payoff$.
  By uniform continuity of $\payoff$, there exists some $\lengthBound\in\IN_0$ such that, for any two plays $\play, \play'\in\playSet{\pomdp}$, if $\playPrefix{\play}{\lengthBound} = \playPrefix{\play'}{\lengthBound}$, then $|\payoff(\play) - \payoff(\play')| \leq \frac{\varepsilon}{3}$.
  It follows that
  \[\left|\payoff - \sum_{|\hist|=\lengthBound}\payoff(\play(\hist))\cdot\indic{\cyl{\hist}}\right|\leq\frac{\varepsilon}{3}.\]
  Since $\payoff$ is universally integrable, it follows that, for all $\stratBMDP\in\stratClassAll{\pomdp}$, 
  \begin{equation}\label{equation:thm:continuous closed:approx}
    \left|\expectancy^{\stratBMDP}_{\mdpState}(\payoff) -
      \sum_{|\hist| = \lengthBound} \payoff(\play(\hist))\cdot
      \proba^{\stratBMDP}_{\mdpState}(\cyl{\hist})\right| \leq
    \frac{\varepsilon}{3}.
  \end{equation}
  
  To end the argument, we now determine $\indexSequence_0$ such that, for all $\indexSequence\geq\indexSequence_0$, we have
  \begin{equation}\label{equation:thm:continuous closed:proba}
    \left|\sum_{|\hist| = \lengthBound}
      \payoff(\play(\hist))\cdot
      \left(
        \proba^{\stratMDP}_{\mdpState}(\cyl{\hist}) -
        \proba^{\stratMDP^{(\indexSequence)}}_{\mdpState}(\cyl{\hist})
      \right)
    \right| \leq\frac{\varepsilon}{3}.
  \end{equation}
  Let $M$ denote the number of histories $\hist$ such that $|\hist| = \lengthBound$.
  Since $\lim_{\indexSequence\to\infty}\stratMDP^{(\indexSequence)} = \stratMDP$, there exists $\indexSequence_0\in\IN$ such that for all histories $\hist$ such that $|\hist|\leq\lengthBound$ (i.e., with at most $\lengthBound+1$ states), we have $\distMetric(\stratMDP^{(\indexSequence)}(\hist), \stratMDP(\hist))\leq\frac{\varepsilon}{3\cdot M\cdot\|\payoff\|_\infty\cdot\lengthBound}$ (here, we use the fact that there are finitely many such histories).
  Equation~\eqref{equation:thm:continuous closed:proba} follows from the triangular inequality and Lemma~\ref{lem:strategy distribution closeness}: we have, for all $\indexSequence\geq\indexSequence_0$,
  \begin{align*}
    \left|\sum_{|\hist| = \lengthBound}
      \payoff(\play(\hist))\cdot
      \left(
        \proba^{\stratMDP}_{\mdpState}(\cyl{\hist}) -
        \proba^{\stratMDP^{(\indexSequence)}}_{\mdpState}(\cyl{\hist})
      \right)
    \right|
    &
      \leq \|\payoff\|_\infty\cdot
      \sum_{|\hist| = \lengthBound}
      \left|
      \proba^{\stratMDP}_{\mdpState}(\cyl{\hist}) -
      \proba^{\stratMDP^{(\indexSequence)}}_{\mdpState}(\cyl{\hist})
      \right|\\
    & \leq
      \|\payoff\|_\infty\cdot
      \sum_{|\hist| = \lengthBound}\frac{\varepsilon}{3\cdot M\cdot\|\payoff\|_\infty}\\
    & \leq\frac{\varepsilon}{3}.
  \end{align*}

  Let $\indexSequence\geq\indexSequence_0$.
  We now show that $|\expectancy^{\stratMDP}_\mdpState(\payoff) - \expectancy^{\stratMDP_n}_\mdpState(\payoff)|\leq\varepsilon$.
  From the triangular inequality, Equation~\eqref{equation:thm:continuous closed:approx} and Equation~\eqref{equation:thm:continuous closed:proba}, we obtain
  \begin{align*}
    \left|\expectancy^{\stratMDP}_\mdpState(\payoff) - \expectancy^{\stratMDP^{(\indexSequence)}}_\mdpState(\payoff)\right|
    & \leq \left|
      \expectancy^{\stratMDP}_\mdpState(\payoff) -
      \sum_{|\hist| = \lengthBound}  \payoff(\play(\hist))\cdot
      \proba^{\stratMDP}_{\mdpState}(\cyl{\hist})\right| \\
    & + \left|
      \sum_{|\hist| = \lengthBound}\payoff(\play(\hist))\cdot
      \left(\proba^{\stratMDP}_{\mdpState}(\cyl{\hist}) -\proba^{\stratMDP^{(\indexSequence)}}_{\mdpState}(\cyl{\hist})\right) \right| \\
    & + \left|
      \sum_{|\hist| = \lengthBound}
      \payoff(\play(\hist))\cdot
      \proba^{\stratMDP^{(\indexSequence)}}_{\mdpState}(\cyl{\hist}) -
      \expectancy^{\stratMDP^{(\indexSequence)}}_\mdpState(\payoff)\right| \\
    & \leq \frac{\varepsilon}{3} + \frac{\varepsilon}{3} + \frac{\varepsilon}{3} = \varepsilon.
  \end{align*}
  We have thus shown that $(\expectancy^{\stratMDP^{(\indexSequence)}}_{\mdpState}(\payoff))_{n\in\IN}$ converges to $\expectancy^{\stratMDP}_{\mdpState}(\payoff)$.
\end{proof}

We now generalise Theorem~\ref{thm:continuous:convergence:real} to universally square integrable payoffs.
We note that the previous proof relies on the boundedness of continuous payoffs and their uniform continuity, and thus is not valid in this more general case.

Similarly to above, it suffices to consider one-dimensional non-negative payoffs; the general case can be recovered by writing payoffs on each dimension as the difference of their non-negative and non-positive parts.
The non-negative and non-positive parts of a continuous payoff are also continuous (see Lemma~\ref{lemma:continuity:max-min} of Appendix~\ref{appendix:prelim:topology}).
We let $\payoff\colon\playSet{\pomdp}\to\IRbar$ be a non-negative continuous square integrable payoff, $\mdpState\in\mdpStateSpace$, let $(\stratMDP^{(\indexSequence)})_{\indexSequence\in\IN}$ be a sequence of strategies that converges to a strategy $\stratMDP$ and $\varepsilon>0$.
Given $M\in\IR$, we abbreviate $\{\payoff(\play)\geq M\mid\play\in\playSet{\pomdp}\}$ by $\{\payoff\geq M\}$, for all strategies $\stratBMDP$, we write $\proba^{\stratBMDP}_\mdpState(\payoff\geq M)$ instead of $\proba^{\stratBMDP}_\mdpState(\{\payoff\geq M\})$ and we let $\min(\payoff, M)$ denote the payoff $\play\mapsto\min\{\payoff(\play), M\}$.

The goal is to show that $|\expectancy^{\stratMDP^{(\indexSequence)}}_\mdpState(\payoff) - \expectancy^{\stratMDP}_\mdpState(\payoff))|\leq\varepsilon$ for all large enough values of $\indexSequence$.
We show that there exists a constant $M\geq 0$ (dependent on $\varepsilon$) such that the real-valued continuous payoff $\min(\payoff, M)$ (whose continuity follows from Lemma~\ref{lemma:continuity:max-min}) satisfies $0\leq\expectancy^{\stratBMDP}_\mdpState(\payoff) - \expectancy^{\stratBMDP}_\mdpState(\min(\payoff, M))\leq\frac{\varepsilon}{3}$ for all strategies $\stratBMDP\in\stratClassAll{\pomdp}$.
By the triangular inequality, $|\expectancy^{\stratMDP^{(\indexSequence)}}_\mdpState(\payoff) - \expectancy^{\stratMDP}_\mdpState(\payoff))|$ is no more than
\begin{align*}
  |\expectancy^{\stratMDP^{(\indexSequence)}}_\mdpState(\payoff) -
  \expectancy^{\stratMDP^{(\indexSequence)}}_\mdpState(\min(\payoff, M))|
  & +
    |\expectancy^{\stratMDP^{(\indexSequence)}}_\mdpState(\min(\payoff, M)) -
    \expectancy^{\stratMDP}_\mdpState(\min(\payoff, M))| \\
  & +
    |\expectancy^{\stratMDP}_\mdpState(\min(\payoff, M)) -
    \expectancy^{\stratMDP}_\mdpState(\payoff)|.
\end{align*}
The first and last terms are no more than $\frac{\varepsilon}{3}$ by choice of $M$, and the second term is smaller than $\frac{\varepsilon}{3}$ for large values of $\indexSequence$ by Theorem~\ref{thm:continuous:convergence:real}.

Thus, the main hurdle of the proof is establishing the existence of a suitable $M$.
The first step consists in showing that the probability of $\payoff$ being large can be made arbitrarily small, in the sense that for all $\eta>0$, there exists $M(\eta)$ such that $\proba^{\stratBMDP}_\mdpState(\payoff\geq M(\eta))\leq\eta$ for all strategies $\stratBMDP$.
The negation of this property would allow the existence of strategies with an arbitrarily large expected payoff from $\mdpState$, contradicting Lemma~\ref{lem:ui:characterisation}.
It then follows from the Cauchy-Schwarz inequality that, for all strategies $\stratBMDP$, $\expectancy^{\stratBMDP}_\mdpState(\payoff-\min(\payoff, M))\leq\sqrt{\expectancy^{\stratBMDP}_\mdpState(\payoff^2)\cdot\eta}$ because $\payoff-\min(\payoff, M) \leq \payoff\cdot\indic{\payoff\geq M}$.
Because $\payoff^2$ is universally integrable, Lemma~\ref{lem:ui:characterisation} guarantees that $\sup_{\stratBMDP}\expectancy^{\stratBMDP}_\mdpState(\payoff^2)$ is finite, i.e., we obtain the desired inequality by choosing $\eta$ small enough.

We provide the details of the above sketch in the following proof.

\begin{restatable}{theorem}{thmContinuousConvergenceSquare}\label{thm:continuous:convergence:square}
  Let $\mdpState\in\mdpStateSpace$.
  Assume that $\pomdp$ is finite and that $\payoffTuple$ is continuous and universally square integrable.
  Then the function $\stratClassAll{\pomdp}\to\IR^\numObj\colon\stratMDP\mapsto\expectancy^{\stratMDP}_\mdpState(\payoffTuple)$ is continuous.
  In other words, for all sequences $(\stratMDP^{(\indexSequence)})_{\indexSequence\in\IN}$ of strategies that converge to a strategy $\stratMDP$, $\lim_{\indexSequence\to\infty}\expectancy^{\stratMDP^{(\indexSequence)}}_\mdpState(\payoffTuple)=\expectancy^{\stratMDP}_\mdpState(\payoffTuple)$.
\end{restatable}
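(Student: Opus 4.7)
The plan is to follow the sketch provided in the paragraphs preceding the statement and reduce to Theorem~\ref{thm:continuous:convergence:real}. As explained, it suffices to prove the result for a single non-negative continuous payoff $\payoff$: the vector-valued case is recovered component by component, and on each component we split $\payoff_\indexPayoff = \payoff_\indexPayoff^+ - \payoff_\indexPayoff^-$, where both parts are continuous (by Lemma~\ref{lemma:continuity:max-min}) and universally square integrable (since $(\payoff_\indexPayoff^\pm)^2 \leq \payoff_\indexPayoff^2$). Fix a sequence $(\stratMDP^{(\indexSequence)})_{\indexSequence\in\IN}$ converging to $\stratMDP$ and $\varepsilon > 0$. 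For any $M > 0$, write $\min(\payoff, M)$; it is real-valued and continuous by Lemma~\ref{lemma:continuity:max-min}. The triangular inequality gives
\[
  \left|\expectancy^{\stratMDP^{(\indexSequence)}}_\mdpState(\payoff) - \expectancy^{\stratMDP}_\mdpState(\payoff)\right|
  \leq 2\cdot\sup_{\stratBMDP\in\stratClassAll{\pomdp}}\expectancy^{\stratBMDP}_\mdpState(\payoff - \min(\payoff, M))
  + \left|\expectancy^{\stratMDP^{(\indexSequence)}}_\mdpState(\min(\payoff, M)) - \expectancy^{\stratMDP}_\mdpState(\min(\payoff, M))\right|.
\]
The last term tends to $0$ as $\indexSequence\to\infty$ by Theorem~\ref{thm:continuous:convergence:real} applied to the bounded continuous payoff $\min(\payoff, M)$, regardless of $M$. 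Therefore the whole argument reduces to showing that, for any $\varepsilon > 0$, we can pick $M$ large enough so that the first term is at most $\varepsilon/2$.

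The main obstacle is this uniform truncation estimate. The plan has two steps. First, I will show that the tail probability of $\payoff$ can be made arbitrarily small uniformly over all strategies: for every $\eta > 0$ there exists $M(\eta) > 0$ such that
\[
  \sup_{\stratBMDP\in\stratClassAll{\pomdp}}\proba^{\stratBMDP}_\mdpState(\payoff \geq M(\eta)) \leq \eta.
\]
If this failed, we could extract, for some $\eta_0 > 0$ and a sequence $M_\indexLast\to+\infty$, strategies $\stratBMDP_\indexLast$ with $\proba^{\stratBMDP_\indexLast}_\mdpState(\payoff \geq M_\indexLast) > \eta_0$, yielding $\expectancy^{\stratBMDP_\indexLast}_\mdpState(\payoff) \geq M_\indexLast\cdot\eta_0 \to +\infty$, contradicting Lemma~\ref{lem:ui:characterisation} applied to the universally integrable $\payoff$ (integrability follows from square integrability via Cauchy--Schwarz). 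Second, I will apply the Cauchy--Schwarz inequality: since $\payoff - \min(\payoff, M) \leq \payoff\cdot\indic{\payoff \geq M}$,
\[
  \expectancy^{\stratBMDP}_\mdpState(\payoff - \min(\payoff, M))
  \leq \sqrt{\expectancy^{\stratBMDP}_\mdpState(\payoff^2)\cdot\proba^{\stratBMDP}_\mdpState(\payoff \geq M)}
  \leq \sqrt{K \cdot \proba^{\stratBMDP}_\mdpState(\payoff \geq M)},
\]
where $K = \sup_{\stratBMDP}\expectancy^{\stratBMDP}_\mdpState(\payoff^2) \in\IR$ by square integrability and Lemma~\ref{lem:ui:characterisation}. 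Choosing $\eta = (\varepsilon/4)^2/K$ and $M = M(\eta)$ makes this uniformly at most $\varepsilon/4$, which is enough to conclude.

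Combining the two pieces, for any $\varepsilon > 0$ we first select $M$ as above so that the truncation error is at most $\varepsilon/2$ uniformly in the strategy, and then choose $\indexSequence_0$ via Theorem~\ref{thm:continuous:convergence:real} so that for $\indexSequence \geq \indexSequence_0$ the continuity term is at most $\varepsilon/2$, giving $|\expectancy^{\stratMDP^{(\indexSequence)}}_\mdpState(\payoff) - \expectancy^{\stratMDP}_\mdpState(\payoff)| \leq \varepsilon$. No additional machinery beyond Lemma~\ref{lem:ui:characterisation}, the Cauchy--Schwarz inequality, and Theorem~\ref{thm:continuous:convergence:real} is required; the delicate part is isolating the uniform-in-strategy tail bound, which is exactly where the square integrability hypothesis is used (plain universal integrability would only yield uniform tail probabilities, not the finite $L^2$-bound needed to control $\expectancy^{\stratBMDP}_\mdpState(\payoff\cdot\indic{\payoff\geq M})$).
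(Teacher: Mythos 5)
Your proposal is correct and follows essentially the same route as the paper's proof: reduce to a single non-negative payoff via the positive/negative parts, truncate at a level $M$ chosen through a uniform-in-strategy tail bound obtained by contradiction from Lemma~\ref{lem:ui:characterisation}, control the truncation error with Cauchy--Schwarz and the finite $L^2$-bound, and conclude with Theorem~\ref{thm:continuous:convergence:real} on the bounded continuous payoff $\min(\payoff, M)$. The only differences are cosmetic (an $\varepsilon/2$--$\varepsilon/4$ split in place of the paper's $\varepsilon/3$, and folding two of the three triangle-inequality terms into a single supremum).
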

\begin{proof}
    It suffices to prove the theorem in the case $\numObj = 1$ to obtain the general case.
  For this reason, we consider a one-dimensional real-valued continuous payoff $\payoff$.
  Let $(\stratMDP^{(\indexSequence)})_{\indexSequence\in\IN}$ be a sequence of strategies that converges to a strategy $\stratMDP$.
  We assume that $\payoff$ is non-negative.
  We show that this implies the general case at the end of the proof.

  We show that $\lim_{\indexSequence\to\infty}(\expectancy^{\stratMDP^{(\indexSequence)}}_\mdpState(\payoffTuple))_{\indexSequence\in\IN}=\expectancy^{\stratMDP}_\mdpState(\payoffTuple)$ by the standard definition of convergence.
  Let $\varepsilon > 0$.
  Our goal is to determine some $M\geq 0$ and some $\indexSequence_0$ such that for all $\indexSequence\geq\indexSequence_0$, we have
  \begin{align*}
    |\expectancy^{\stratMDP^{(\indexSequence)}}_\mdpState(\payoff) -
    \expectancy^{\stratMDP^{(\indexSequence)}}_\mdpState(\min(\payoff, M))| + &
    |\expectancy^{\stratMDP^{(\indexSequence)}}_\mdpState(\min(\payoff, M)) -
    \expectancy^{\stratMDP}_\mdpState(\min(\payoff, M))| \\ 
    & + |\expectancy^{\stratMDP}_\mdpState(\min(\payoff, M)) -
    \expectancy^{\stratMDP}_\mdpState(\payoff)|\leq\varepsilon
  \end{align*}
  This is sufficient: the sum highlighted above is greater or equal to $|\expectancy^{\stratMDP^{(\indexSequence)}}_\mdpState(\payoff) - \expectancy^{\stratMDP}_\mdpState(\payoff)|$ by the triangular inequality.
  We bound each term of the above sum by $\frac{\varepsilon}{3}$ in the following.
  
  The crux of the proof is determining a bound $M\geq 0$ such that for all strategies $\stratBMDP$, $\expectancy^{\stratBMDP}_\mdpState(\payoff) - \expectancy^{\stratBMDP}_\mdpState(\min(\payoff, M))<\frac{\varepsilon}{3}$.
  To establish this, we show the following property: for all $\eta>0$, there exists a bound $M(\eta)\geq 0$ such that, for all strategies $\stratBMDP$, $\proba^{\stratBMDP}_\mdpState(\payoff\geq M(\eta))\leq\eta$.
  This last property is shown by contradiction.
  Assume that there exists some $\eta > 0$ such that for all $M\geq 0$, there exists a strategy $\stratBMDP_M$ such that $\proba^{\stratBMDP_M}_\mdpState(\payoff\geq M)>\eta$.
  Then, we obtain that for all $M\geq 0$, $\expectancy^{\stratBMDP_M}_\mdpState(\payoff) \geq \expectancy^{\stratBMDP_M}_\mdpState(M\cdot\indic{\{\payoff\geq M\}}) = M\cdot\proba^{\stratBMDP_M}_\mdpState(\payoff\geq M)\geq\eta\cdot M$.
  This contradicts the fact that $\payoff$ is universally integrable (Lemma~\ref{lem:ui:characterisation}).

  Let $\scalar = 1 + \sup_{\stratBMDP\in\stratClassAll{\pomdp}}\expectancy^{\stratBMDP}_\mdpState(\payoff^2)>0$.
  The value $\scalar$ is real by universal square integrability of $\payoff$ (Lemma~\ref{lem:ui:characterisation}).
  For the remainder of the proof, we fix $M\geq 0$ such that for all strategies $\stratBMDP$, we have $\proba^{\stratBMDP}_\mdpState(\payoff\geq M)\leq\frac{\varepsilon^2}{9\cdot\scalar}$.
  We prove that $M$ is the bound sought above.
  First, we observe that $\payoff - \min(\payoff, M) = (\payoff - M)\cdot\indic{\{\payoff\geq M\}}\leq \payoff\cdot\indic{\{\payoff\geq M\}}$.
  Because indicators are equal to their square, they are universally square integrable.
  By applying the Cauchy-Schwarz inequality, we obtain that, for all strategies $\stratBMDP$,
  \begin{equation*}
    \expectancy^{\stratBMDP}_\mdpState(\payoff - \min(\payoff, M)) \leq
    \expectancy^{\stratBMDP}_\mdpState(\payoff\cdot\indic{\{\payoff\geq M\}}) \leq
    \sqrt{\expectancy^{\stratBMDP}_\mdpState(\payoff^2)\cdot
      \expectancy^{\stratBMDP}_\mdpState(\indic{\{\payoff\geq M\}})} \leq \frac{\varepsilon}{3}.
  \end{equation*}

  To conclude (for non-negative payoffs), it remains to show that there exists $\indexSequence_0$ such that for all $\indexSequence\geq\indexSequence_0$, we have $|\expectancy^{\stratMDP^{(\indexSequence)}}_\mdpState(\min(\payoff, M)) - \expectancy^{\stratMDP}_\mdpState(\min(\payoff, M))|\leq\frac{\varepsilon}{3}$.
  To this end, we observe that the payoff $\min(\payoff, M)$ is continuous (Lemma~\ref{lemma:continuity:max-min}).
  Theorem~\ref{thm:continuous:convergence:real} then implies that a suitable $\indexSequence_0$ exists ($\min(\payoff, M)$ is real-valued).

  We have shown that the theorem holds for non-negative continuous universally square integrable payoffs.
  We now generalise to arbitrary continuous universally square integrable payoffs.
  Let $\payoff^+ = \max(\payoff, 0)$ and $\payoff^- = \max(-\payoff, 0)$ denote the non-negative and non-positive parts of $\payoff$.
  These payoffs are continuous by Lemma~\ref{lemma:continuity:max-min}.
  We observe that $\payoff^2 = (\payoff^+)^2 + (\payoff^-)^2$, and, in particular, $\payoff^2 \geq (\payoff^+)^2 + (\payoff^-)^2$.
  It follows that $(\payoff^+)^2$ and $(\payoff^-)^2$ are universally integrable.
  We obtain the claim of the theorem from the above and the following sequence of equations:
  \begin{align*}
    \lim_{\indexSequence\to\infty}\expectancy^{\stratMDP^{(\indexSequence)}}_\mdpState(\payoff)
    & =
      \lim_{\indexSequence\to\infty}\expectancy^{\stratMDP^{(\indexSequence)}}_\mdpState(\payoff^+) -
      \lim_{\indexSequence\to\infty}\expectancy^{\stratMDP^{(\indexSequence)}}_\mdpState(\payoff^-) \\
    & = \expectancy^{\stratMDP}_\mdpState(\payoff^+) -
      \expectancy^{\stratMDP}_\mdpState(\payoff^-) \\
    & = \expectancy^{\stratMDP}_\mdpState(\payoff).
  \end{align*}
\end{proof}

We now prove that for multi-dimensional universally square integrable payoffs that are continuous, given a closed set of strategies, its set of expected payoffs and achievable set from a state are compact.
For sets of expected payoffs, this follows from Theorem~\ref{thm:continuous:convergence:square}: since the image of compact set by a continuous function is compact, the result is direct.
For achievable sets, it follows from the property that the downward-closure of any compact set is compact.
\begin{restatable}{theorem}{thmContinuousClosed}\label{thm:continuous:closed}
  Let $\mdpState\in\mdpStateSpace$ and $\stratClass\subseteq\stratClassAll{\pomdp}$ be a closed set of strategies.
  Assume that $\payoffTuple$ is a continuous universally square integrable payoff.
  Then $\paySetClass{\payoffTuple}{\mdpState}{\stratClass}$ and $\achSetClass{\payoffTuple}{\mdpState}{\stratClass}$ are compact subsets of $\IR^\numObj$ and $\IRbar^\numObj$ respectively.
  In particular, $\paySet{\payoffTuple}{\mdpState}$, $\achSet{\payoffTuple}{\mdpState}$, $\paySetPure{\payoffTuple}{\mdpState}$ and $\achSetPure{\payoffTuple}{\mdpState}$ are compact.
\end{restatable}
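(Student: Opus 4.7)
The plan is to derive the theorem from Theorem~\ref{thm:continuous:convergence:square} together with the compactness of the strategy space. Recall from Section~\ref{section:continuous:topology} that $\stratClassAll{\pomdp}$, viewed as a closed subset of $\dist{\mdpActionSpace}^{\obsFun(\histSet{\pomdp})}$ with the product topology, is compact (since $\pomdp$ is finite, each copy of $\dist{\mdpActionSpace}$ is homeomorphic to a closed subset of $\ccInt{0}{1}^{|\mdpActionSpace|}$, and products of compact spaces are compact by Tychonoff). Any closed $\stratClass\subseteq\stratClassAll{\pomdp}$ is therefore compact.

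Next, since $\payoffTuple$ is universally square integrable, the Cauchy–Schwarz inequality guarantees it is universally integrable, so the map $\Phi\colon\stratClassAll{\pomdp}\to\IR^\numObj$ defined by $\Phi(\stratMDP)=\expectancy^{\stratMDP}_\mdpState(\payoffTuple)$ is well-defined with values in $\IR^\numObj$ (not merely $\IRbar^\numObj$). Theorem~\ref{thm:continuous:convergence:square} states that $\Phi$ is continuous. Hence $\paySetClass{\payoffTuple}{\mdpState}{\stratClass}=\Phi(\stratClass)$ is the continuous image of a compact set and is thus a compact subset of $\IR^\numObj$.

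For the achievable set, I would prove the following general fact: for any compact $D\subseteq\IR^\numObj$, the downward closure $\down{D}$ is a compact subset of $\IRbar^\numObj$. Since $\IRbar^\numObj$ is itself compact (a product of compact copies of $\IRbar$), it suffices to show that $\down{D}$ is closed in $\IRbar^\numObj$. Let $(\vect^{(n)})_{n\in\IN}\subseteq\down{D}$ converge to some $\vect\in\IRbar^\numObj$, and pick witnesses $\payoffVect^{(n)}\in D$ with $\vect^{(n)}\leq\payoffVect^{(n)}$. By compactness of $D$ in $\IR^\numObj$, a subsequence $(\payoffVect^{(n_k)})$ converges to some $\payoffVect\in D$. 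Because each coordinate projection $\IRbar^\numObj\to\IRbar$ is continuous and the relation $\leq$ is closed in $\IRbar\times\IRbar$, the componentwise inequality $\vect^{(n_k)}\leq\payoffVect^{(n_k)}$ passes to the limit, giving $\vect\leq\payoffVect$ and hence $\vect\in\down{D}$. Applying this with $D=\paySetClass{\payoffTuple}{\mdpState}{\stratClass}$ yields compactness of $\achSetClass{\payoffTuple}{\mdpState}{\stratClass}$.

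The "in particular" clause then follows by instantiating $\stratClass=\stratClassAll{\pomdp}$ (trivially closed) and $\stratClass=\stratClassPure{\pomdp}$ (closed by Lemma~\ref{lem:closed:pure strategies}). The only real subtlety—which I expect to be the main bookkeeping point rather than a genuine obstacle—is handling limits in $\IRbar^\numObj$ rather than $\IR^\numObj$ when proving closedness of the downward closure, since components of the $\vect^{(n)}$ may escape to $-\infty$; but the argument above works uniformly because $\leq$ is closed in $\IRbar\times\IRbar$ under its usual topology.
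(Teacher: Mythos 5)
Your proposal is correct and follows essentially the same route as the paper: continuity of $\stratMDP\mapsto\expectancy^{\stratMDP}_\mdpState(\payoffTuple)$ from Theorem~\ref{thm:continuous:convergence:square} combined with compactness of the closed set $\stratClass$ gives compactness of the payoff set, and your inline argument that the downward closure of a compact set is compact in $\IRbar^\numObj$ is exactly the paper's Lemma~\ref{lem:down:closed} (proved there by the same subsequence-of-witnesses argument). The only difference is cosmetic: you prove the downward-closure fact on the spot, whereas the paper delegates it to an appendix lemma.
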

\begin{proof}
    Since $\stratClass$ is closed and $\stratClassAll{\pomdp}$ is compact, we obtain that $\stratClass$ is compact.
  It follows from Theorem~\ref{thm:continuous:convergence:real} that $\paySetClass{\payoffTuple}{\mdpState}{\stratClass}$ is the image of $\stratClass$ by a continuous function, and thus is compact.
  The claim regarding $\paySetPure{\payoffTuple}{\mdpState}$ follows from Lemma~\ref{lem:closed:pure strategies}.
  The claims related to achievable sets follow from the property that for all compact $D\subseteq\IRbar^\numObj$, $\down{D}$ is compact (see Lemma~\ref{lem:down:closed} of Appendix~\ref{appendix:down:closure} for a proof) and the fact that $\achSetClass{\payoffTuple}{\mdpState}{\stratClass} = \down{\paySetClass{\payoffTuple}{\mdpState}{\stratClass}}$ by definition.
\end{proof}

Theorem~\ref{thm:continuous:closed} provides a sufficient condition on payoffs which guarantees that we can dominate any expected payoff by a Pareto-optimal expected payoff.
This property does not hold for all universally integrable payoffs in full generality, e.g., in the one-dimensional setting, optimal strategies need not exist.

\subsection{Non-universally integrable continuous payoffs}\label{section:continuous:infinite}

In this section, we present counterexamples to Theorems~\ref{thm:continuous:convergence:square} and~\ref{thm:continuous:closed} when the assumption of universal square integrability is relaxed.
The MDPs used in these counterexamples are depicted in Figure~\ref{figure:continuous:infinite:ex}.

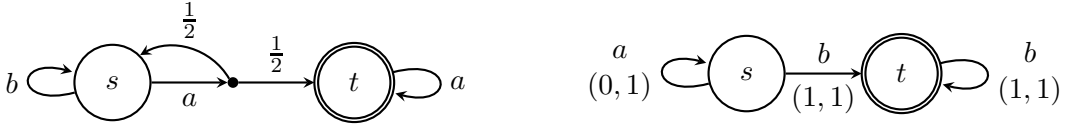
\begin{figure}
  \centering
  \begin{subfigure}[t]{0.48\textwidth}
    \centering
    \begin{tikzpicture}
      \node[state, align=center] (s) {$\mdpState$};
      \node[stochasticc, right=of s] (ss) {};
      \node[state, accepting, right = of ss] (t) {$\mdpStateB$};
      \path[->] (s) edge node[below,align=center] {$\mdpAction$} (ss);
      \path[->] (s) edge[loop left] node[left,align=center] {$\mdpActionB$} (s);
      \path[->] (ss) edge[bend right=45] node[above,align=center] {$\frac{1}{2}$} (s);
      \path[->] (ss) edge node[above,align=center] {$\frac{1}{2}$} (t);
      \path[->] (t) edge[loop right] node[right,align=center] {$\mdpAction$} (t);
    \end{tikzpicture}
    \caption{An MDP with one randomised transition. Weights are omitted and are all $1$.}\label{figure:continuous:infinite:ex:1}
  \end{subfigure}\hfill \begin{subfigure}[t]{0.48\textwidth}
    \centering
    \begin{tikzpicture}
      \node[state, align=center] (s0) {$\mdpState$};
      \node[state, accepting, right = of s0] (s1) {$\mdpStateB$};
      \path[->] (s0) edge node[above,align=center] {$\mdpActionB$} node[below,align=center] {$(1, 1)$} (s1);
      \path[->] (s0) edge[loop left] node[left,align=center] {$\mdpAction$\\$(0, 1)$} (s0);
      \path[->] (s1) edge[loop right] node[right,align=center] {$\mdpActionB$\\$(1, 1)$} (s1);

    \end{tikzpicture}
    \caption{An MDP with two-dimensional weights.}\label{figure:continuous:infinite:ex:2}
  \end{subfigure}
  \caption{MDPs for counter-examples to Theorems~\ref{thm:continuous:convergence:square} (left) and~\ref{thm:continuous:closed} (right) without the universally square integrable assumption.}\label{figure:continuous:infinite:ex}
\end{figure}

For the first example, we provide an MDP with a shortest-path payoff witnessing that Theorem~\ref{thm:continuous:convergence:square} does not generalise to shortest-path payoffs that are not universally (square) integrable, even when only considering pure strategies and when $\paySet{\payoffTuple}{\mdpState}$ is closed.

\begin{example}\label{ex:continuous:infinite:1}
  We consider the MDP $\mdp$ depicted in Figure~\ref{figure:continuous:infinite:ex:1}, the constant weight function $\weight=1$ and the shortest-path function $\spath{\{\mdpStateB\}}{\weight}$.
  We provide a sequence of pure strategies $(\stratMDP^{(\indexSequence)})_{\indexSequence\in\IN}$ converging to a pure strategy $\stratMDP$ such that $\lim_{\indexSequence\to\infty}\expectancy^{\stratMDP^{(\indexSequence)}}_\mdpState(\spath{\{\mdpStateB\}}{\weight})\neq\expectancy^{\stratMDP}_\mdpState(\spath{\{\mdpStateB\}}{\weight})$.
  For all $\indexSequence\in\IN$, we define $\stratMDP^{(\indexSequence)}$ as the pure strategy such that, for all $\hist\in\histSet{\mdp}$, $\stratMDP^{(\indexSequence)}(\hist) = \mdpActionB$ if $\last{\hist} = \mdpState$ and there are at least $\indexSequence$ actions in $\hist$, and otherwise, $\stratMDP^{(\indexSequence)}(\hist) = \mdpAction$.
  Intuitively, when starting in $\mdpState$, $\stratMDP^{(\indexSequence)}$ uses action $\mdpAction$ for the first $\indexSequence$ rounds of the play and then uses $\mdpActionB$ for all subsequent rounds.
  The pure memoryless strategy $\stratMDP$ assigning action $\mdpAction$ to all states is easily checked to be $\lim_{\indexSequence\to\infty}\stratMDP^{(\indexSequence)}$.

  Let $\indexSequence\in\IN$.
  Let $\play_\indexSequence$ denote the play $(\mdpState\mdpAction)^\indexSequence(\mdpState\mdpActionB)^\omega$.
  We have $\proba^{\stratMDP^{(\indexSequence)}}_\mdpState(\{\play_\indexSequence\}) = \frac{1}{2^\indexSequence}$.
  Indeed, for all $\indexLast\in\IN$, the definition of the probability distribution over plays under a strategy implies that
  \[ \proba^{\stratMDP^{(\indexSequence)}}_\mdpState(
    \cyl{(\mdpState\mdpAction)^\indexSequence(\mdpState\mdpActionB)^\indexLast\mdpState}
    ) =
    \proba^{\stratMDP^{(\indexSequence)}}_\mdpState(
    \cyl{(\mdpState\mdpAction)^\indexSequence\mdpState}
    ) = \frac{1}{2^\indexSequence}.
  \]
  It follows from $\spath{\{\mdpStateB\}}{\weight}(\play_\indexSequence) = +\infty$ that $\expectancy^{\stratMDP^{(\indexSequence)}}_\mdpState(\spath{\{\mdpStateB\}}{\weight}) = +\infty$.
  We conclude that $\lim_{\indexSequence\to\infty}\expectancy^{\stratMDP^{(\indexSequence)}}_\mdpState(\spath{\{\mdpStateB\}}{\weight}) = +\infty$.
  
  We now show that $\expectancy^{\stratMDP}_\mdpState(\spath{\{\mdpStateB\}}{\weight})\in\IR$.
  First, we observe that $\proba^{\stratMDP}_\mdpState(\reach{\{\mdpStateB\}}) =1$.
  Therefore, $\expectancy^{\stratMDP}_\mdpState(\spath{\{\mdpStateB\}}{\weight})$ is the unique solution of the equation $x = 1 + \frac{1}{2}x$ (see, e.g.,~\cite{BK08}), i.e., $\expectancy^{\stratMDP}_\mdpState(\spath{\{\mdpStateB\}}{\weight})=2\in\IR$.
  We have shown that $\lim_{\indexSequence\to\infty}\expectancy^{\stratMDP^{(\indexSequence)}}_\mdpState(\spath{\{\mdpStateB\}}{\weight})\neq\expectancy^{\stratMDP}_\mdpState(\spath{\{\mdpStateB\}}{\weight})$.

  We now show that $\paySet{\spath{\{\mdpStateB\}}{\weight}}{\mdpState}$ is closed.
  The memoryless strategy playing $\mdpAction$ is optimal when adopting a minimisation point of view.
  Furthermore, there exist strategies with arbitrarily large but finite expected payoffs from $\mdpState$.
  For instance, for all $\indexSequence\in\IN$, the strategy that plays $\mdpActionB$ for the first $\indexSequence$ rounds in $\mdpState$ and then only uses $\mdpAction$ after ensures a finite payoff greater than $\indexSequence$.
  By convexity of $\paySet{\spath{\{\mdpStateB\}}{\weight}}{\mdpState}\cap\IR$, it follows that $\paySet{\spath{\{\mdpStateB\}}{\weight}}{\mdpState} = \ccInt{2}{+\infty}$ and thus is closed.
  \hfill$\lhd$
\end{example}

We now present a counter-example to Theorem~\ref{thm:continuous:closed} when the considered payoffs are not universally integrable: we provide a continuous payoff such that its set of expected payoffs and achievable sets from a given state are not closed.

\begin{example}\label{ex:continuous:infinite:2}
    We consider the MDP $\mdp$ and the weight function $\weight=(\weight_1, \weight_2)$ depicted in Figure~\ref{figure:continuous:infinite:ex:2}.
  Let $\payoffTuple= (\payoff_1, \payoff_2)$ be the two-dimensional payoff such that $\payoff_1 = \discSum{1/2}{\weight_1}$ and $\payoff_2=\spath{\{\mdpStateB\}}{\weight_2}$.
  To show that $\paySet{\payoffTuple}{\mdpState}$ is not closed, we show that $(2, +\infty)\in\closure{\paySet{\payoffTuple}{\mdpState}}\setminus \paySet{\payoffTuple}{\mdpState}$.
  
  First, we show that $(2, +\infty)\in\closure{\paySet{\payoffTuple}{\mdpState}}$.
  We have $\payoffTuple((\mdpState\mdpAction)^\omega) = (0, \infty)$ and $\payoffTuple(\mdpState(\mdpActionB\mdpStateB)^\omega) = (2, 1)$.
  By convexity of $\paySet{\payoffTuple}{\mdpState}$, we obtain that for all $\eta\in\ooInt{0}{1}$, $(2\eta, +\infty)\in\paySet{\payoffTuple}{\mdpState}$.
  It follows that $(2, +\infty)\in\closure{\paySet{\payoffTuple}{\mdpState}}$.
  
  Next, we argue that $(2, +\infty)\notin\paySet{\payoffTuple}{\mdpState}$.
  We observe that the only play $\play$ from $\mdpState$ such that $\payoff_1(\play) = 2$ is the play $\mdpState(\mdpActionB\mdpStateB)^\omega$.
  All other plays $\play'$ starting in $\mdpState$ are such that $\payoff_1(\play') < 2$.
  Indeed, for all $\indexPosition\in\IN$, we have $\payoff_1((\mdpState\mdpAction)^\indexPosition\mdpState(\mdpActionB\mdpStateB)^\omega) = \sum_{\indexPosition'\geq\indexPosition}\frac{1}{2^{\indexPosition'}} = \frac{1}{2^{\indexPosition-1}} < 2$.
  It follows that, to obtain a payoff of $2$ on the first dimension from $\mdpState$, we require a strategy whose only outcome is $\mdpState(\mdpActionB\mdpStateB)^\omega$.
  This implies that $(2, +\infty)\notin\paySet{\payoffTuple}{\mdpState}$.
  We have shown that $\paySet{\payoffTuple}{\mdpState}$ is not closed.
  \hfill$\lhd$
\end{example}

\bibliographystyle{alpha}
\bibliography{master_references}

\appendix
\newpage

\section{Additional preliminaries}\label{appendix:prelim}
This section complements Section~\ref{section:prelim}.

\subsection{General notions of topology}\label{appendix:prelim:topology}
The goal of this first section is to recall topological definitions and results mentioned in the body of the paper.
Some definitions provided in this section are relevant for the later Section~\ref{appendix:topology:plays}.
We refer the reader to~\cite{munkres1997topology} for a reference on topology.

\subparagraph*{Topology.}
Let $\topSpace$ be a non-empty set.
A \textit{topology} over $\topSpace$ is a set $\topology\subseteq \subsets{\topSpace}$ of subsets of $\topSpace$ such that (i) $\emptyset$, $\topSpace\in\topology$, (ii) for any family $(\openSet_i)_{i\in I}$ such that $\openSet_i\in\topology$ for all $i\in I$, $\bigcup_{i\in I}\openSet_i\in\topology$ and (iii) if $\openSet$, $\openSet'\in\topology$, then $\openSet\cap \openSet'\in\topology$.
The pair $(X, \topology)$ is called a \textit{topological space}.
Elements of $\topology$ are \textit{open sets}.
A set $\closedSet\subseteq\topSpace$ is \textit{closed} if it is the complement of an open set, i.e., if there exists $\openSet\in\topology$ such that $\closedSet = \topSpace\setminus \openSet$.

We say that $(\topSpace, \topology)$ is a \textit{Hausdorff space} when for any two distinct elements $x$ and $y\in\topSpace$, there exists disjoint open sets $\openSet_x$ and $\openSet_y$ such that $x\in\openSet_x$ and $y\in\openSet_y$.
We assume that all topological spaces below are Hausdorff.

Simple examples of topologies include the discrete topology $\topologyDis = \subsets{\topSpace}$ and the trivial topology $\{\emptyset, \topSpace\}$.
The discrete topology is Hausdorff.
The trivial topology is not Hausdorff whenever $\topSpace$ has at least two elements.

A set $\nhood\subseteq\topSpace$ is a \textit{neighbourhood} of $x\in\topSpace$ if there exists an open set $\openSet_x\in\topology$ such that $x\in\openSet_x\subseteq\nhood$.
A set is open if and only if it is a neighbourhood of all of its elements.
A point $x\in\topSpace$ is an \textit{isolated point} if $\{x\}$ is a neighbourhood of $x$.

Let $\topSpaceB\subseteq\topSpace$.
The \textit{closure} $\closure{\topSpaceB}$ of $\topSpaceB$ is the smallest closed set in which $\topSpaceB$ is included.
An element $x\in\topSpace$ is in $\closure{\topSpaceB}$ if and only if all (open) neighbourhoods of $x$ intersect $\topSpaceB$.
The \textit{interior} $\interior{\topSpaceB}$ of $\topSpaceB$ is the greatest open set that is included in $\topSpaceB$.
An element $x\in\topSpace$ is in $\interior{\topSpaceB}$ if and only if there exists an open neighbourhood $\nhood_x$ of $x$ such that $\nhood_x\subseteq\topSpaceB$.
A set is closed (resp.~open) if and only if it is equal to its closure (resp.~interior).

A \textit{base} of $\topology$ is a set $\topBasis\subseteq\topology$ such that all elements of $\topology$ are (arbitrary) unions of elements of $\topBasis$.
For instance, a topology is a base of itself.
A base of the discrete topology is the set of all singleton sets.
Another example is the usual topology of the extended real line $\IRbar$; a base of this topology is given by the set of intervals
\[
  \{\ooInt{\scalar}{\scalarB}, \coInt{-\infty}{\scalar}, \ocInt{\scalar}{+\infty} \mid \scalar, \scalarB\in\IR,\, \scalar < \scalarB\}.
\]
This topological space is Hausdorff.

Given a non-empty set $\topSpaceB\subseteq\topSpace$, we define the \textit{induced} (or subspace) topology $\topologyInd$ on $\topSpaceB$ as the topology defined by $\topologyInd = \{\openSet\cap\topSpaceB\mid\openSet\in\topology\}$.
An element $x\in\topSpaceB$ is an \textit{isolated point of $\topSpaceB$} if it is an isolated point in $(\topSpaceB, \topologyInd)$.

\subparagraph*{Metric and normed spaces.}
Let $\topSpace$ be a non-empty set.
A \textit{metric} over $\topSpace$ is a function $\metric\colon\topSpace\times\topSpace\to\coInt{0}{+\infty}$ such that, for all $x$, $y$, $z\in\topSpace$, (i) $\metric(x, y) = 0$ if and only if $x = y$, (ii) $\metric(x, y) = \metric(y, x)$ and (iii) $\metric(x, z)\leq \metric(x, y) + \metric(y, z)$.
The last condition is called the \textit{triangle inequality}.
An \textit{open ball} centred in $x\in\topSpace$ of radius $\varepsilon > 0$ is the set $\ball{x}{\varepsilon} = \{y\in\topSpace\mid \metric(x, y) < \varepsilon\}$.
A base of the topology induced by a metric is the set of open balls.
A topological space $(\topSpace, \topology)$ is \textit{metrisable} if there exists a metric that induces $\topology$.
We remark that all metrisable spaces are necessarily Hausdorff.

For instance, the usual topology of $\IR$ is metrisable and is induced by the metric $\metric$ defined by $\metric(x, y) = |x - y|$ for all $x$, $y\in\IR$. The extended real line $\IRbar$ with its usual topology is also metrisable; it is homeomorphic (i.e., topologically isomorphic) to $\ccInt{0}{1}$ with the induced topology inherited from $\IR$.
Topological spaces $\topSpace$ with the discrete topology are also metrisable; the \textit{discrete metric} $\discMetric$ defined by $\discMetric(x, y) = 1$ whenever $x\neq y$ induces the discrete topology (observe that all singleton sets are open balls).

Let $\numObj\in\IN_0$.
Any norm $\|\!\cdot\!\|$ on $\IR^\numObj$ induces a topology via the metric $\metric(\vect, \vectB) = \|\vect - \vectB\|$ for all $\vect$, $\vectB\in\IR^\numObj$.
All norms of $\IR^\numObj$ are \textit{equivalent}, i.e., induce the same topology, which is the usual topology of $\IR^\numObj$. 
For infinite-dimensional spaces, which we do not consider here, some norms may not be equivalent, and can induce different topologies.

\subparagraph*{Convergence.}
Let $(\topSpace, \topology)$ be a Hausdorff topological space.
A sequence $(x_\indexSequence)_{\indexSequence\in\IN}$ of elements of $\topSpace$ is said to converge to $x\in\topSpace$ if for all open neighbourhoods $\openSet_x$ of $x$, there exists $n_0\in\IN$ such that for all $n\geq n_0$, $x_n\in\openSet_x$.
It is equivalent to universally quantify only over open neighbourhoods of $x$ in a fixed base of $\topology$ instead of all open neighbourhoods.
The uniqueness of the limit is guaranteed by the Hausdorff assumption.
We note that in general spaces, sequences can have several limits: e.g., for the trivial topology $\{\emptyset, \topSpace\}$, all sequences converge to all elements of the set.

In a metric space $(\topSpace, \metric)$, this definition of convergence is equivalent to the usual definition recalled hereafter: for all $\varepsilon > 0$, there exists some $\indexSequence_0\in\IN$ such that for all $\indexSequence > \indexSequence_0$, $\metric(x_\indexSequence, x) < \varepsilon$ (i.e., $x_\indexSequence\in\ball{x}{\varepsilon}$).

Convergence of (real) sequences in $\IRbar$ in the above sense is equivalent to the classical definitions for convergence to a real limit, $+\infty$ or $-\infty$.
\subparagraph*{Continuity.}
A function $f\colon (\topSpace, \topology)\to(\topSpaceB, \topologyB)$ is \textit{continuous} at $x\in\topSpace$ if for all neighbourhoods $\nhood_{f(x)}\subseteq\topSpaceB$ of $f(x)$, $f^{-1}(\nhood_{f(x)})$ is a neighbourhood of $x$.
If $\topBasis_\topSpaceB$ is a basis of $(\topSpaceB, \topologyB)$, continuity can be checked by looking only at elements of the basis in the following sense: $f$ is continuous at $x$ if and only if for all $\openSet_{f(x)}\in\topBasis_\topSpaceB$ such that $f(x)\in\openSet_{f(x)}$, $f^{-1}(\openSet_{f(x)})$ is a neighbourhood of $x$.
The function $f$ is said to be continuous if it is continuous at $x$ for all $x\in\topSpace$.

If $f\colon (\topSpace, \metric_\topSpace)\to(\topSpaceB, \metric_\topSpaceB)$ is a function between metric spaces, the definition above is directly equivalent to the usual $\varepsilon$-$\delta$ definition of continuity: $f$ is continuous at $x\in\topSpace$ if for all $\varepsilon>0$, there exists $\delta > 0$ such that for all $x'\in\topSpace$, $\metric_\topSpace(x, x') < \delta$ implies that $\metric_\topSpaceB(f(x), f(x')) < \varepsilon$.

For functions between metric spaces, there exists a stronger variant of continuity, called uniform continuity.
A function $\payoff\colon (\topSpace, \metric_\topSpace)\to (\topSpaceB, \metric_\topSpaceB)$ is \textit{uniformly continuous} if for all $\varepsilon >0$, there exists some $\delta > 0$ such that for all $x, x'\in\topSpace$, $\metric_\topSpace(x, x')<\delta$ implies that $\metric_\topSpaceB(\payoff(x), \payoff(x'))<\varepsilon$.
The difference with continuity is the quantification order.
For continuity, $\delta$ may depend on both $\varepsilon$ and the point at which we check continuity, whereas for uniform continuity, $\delta$ may only depend on $\varepsilon$ and must work for all points.

For instance, the function $\coInt{0}{+\infty}\to\coInt{0}{+\infty}\colon x\mapsto\sqrt{x}$ is uniformly continuous.
This can be shown via the observation that errors at a neighbourhood of some $x\geq 0$ can be bounded independently of $x$, i.e., we have that for all $x\geq 0$ and $|h|\leq x$ (this ensures that $\sqrt{x + h}$ is well-defined), we have $|\sqrt{x+h} - \sqrt{x}|\leq\sqrt{|h|}$.
On the other hand, the function $\IR\to\IR\colon x\mapsto x^2$ is not uniformly continuous.
For any $x\in\IR$ and $h\in\IR$, we have $|(x+h)^2 - x^2| = |h^2 - 2xh|$, and thus, intuitively, we cannot choose $\delta$ independently of $x$ in the definition of continuity because errors depend on $x$ (which cannot be bounded).

For functions from a metric space to another, continuity at $x$ is equivalent to \textit{sequential continuity} at $x$.
A function $f\colon (\topSpace, \topology)\to(\topSpaceB, \topologyB)$ is sequentially continuous at $x\in\topSpace$ if for all sequences $(x_\indexSequence)_{\indexSequence\in\IN}$ that converge to $x$, the sequence $(f(x_\indexSequence))_{\indexSequence\in\IN}$ converges to $f(x)$.

We now prove a result implying that the non-negative and non-positive parts of a continuous function are continuous, as this is used in the main text.
\begin{lemma}\label{lemma:continuity:max-min}
  Let $(\topSpace, \topology)$ be a topological space, $x\in\topSpace$, $f\colon\topSpace\to\IRbar$ be a function that is continuous at $x$ and $M\in\IR$.
  Then the functions $\min(f, M)\colon y\mapsto\min\{f(y), M\}$ and $\max(f, M)\colon y\mapsto\max\{f(y), M\}$ are continuous at $x$.
\end{lemma}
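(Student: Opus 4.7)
I would first reduce $\max$ to $\min$: since negation is a homeomorphism of $\IRbar$, the function $-f$ is continuous at $x$ whenever $f$ is, and $\max(f, M) = -\min(-f, -M)$, so it suffices to treat $g := \min(f, M)$. To establish continuity of $g$ at $x$, I would split on the value of $f(x)$ relative to $M$: the subcases $f(x) < M$, $f(x) > M$, and $f(x) = M$.

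If $f(x) < M$, the point $f(x)$ admits a basic open neighbourhood $\openSet \subseteq \IRbar$ contained in $\coInt{-\infty}{M}$ (take $\ooInt{a}{b}$ with $b \leq M$ when $f(x) \in \IR$, and $\coInt{-\infty}{c}$ with $c < M$ when $f(x) = -\infty$). By continuity of $f$ at $x$, $f^{-1}(\openSet)$ is a neighbourhood of $x$ on which $f < M$, so $g$ agrees with $f$ there, and continuity of $g$ at $x$ follows from that of $f$. The case $f(x) > M$ is handled symmetrically: on a suitable neighbourhood of $x$ the function $f$ exceeds $M$, so $g$ is locally equal to the constant $M$, which is trivially continuous at $x$.

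The key case is $f(x) = M$. Let $\openSet$ be an arbitrary open neighbourhood of $M = g(x)$; it suffices to show $g^{-1}(\openSet)$ is a neighbourhood of $x$. Pick $a, b \in \IR$ with $a < M < b$ and $\ooInt{a}{b} \subseteq \openSet$. A direct computation gives $g^{-1}(\ooInt{a}{b}) = f^{-1}(\ocInt{a}{+\infty})$: the upper bound $g < b$ is automatic since $g \leq M < b$, while $g(y) > a$ is equivalent to $f(y) > a$ using $M > a$. Since $\ocInt{a}{+\infty}$ is a basic open neighbourhood of $f(x) = M$ in $\IRbar$, continuity of $f$ at $x$ yields that $f^{-1}(\ocInt{a}{+\infty})$, and hence $g^{-1}(\openSet)$, is a neighbourhood of $x$.

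No step is truly delicate; the only subtlety is being careful with the topology of $\IRbar$ when $f(x) \in \{-\infty, +\infty\}$, which is handled by selecting a basic open set from the explicit base given in Appendix~\ref{appendix:prelim:topology}. Everything else reduces to a short preimage computation and an appeal to the continuity hypothesis on $f$.
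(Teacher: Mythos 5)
Your proposal is correct and follows essentially the same route as the paper: a three-way case split on $f(x)$ versus $M$, concluding each case by pulling back a suitable (basic) open neighbourhood through $f$. The only cosmetic differences are that the paper dispatches $\max(f,M)$ with "analogous" rather than the negation trick, and in the case $f(x)=M$ it uses the inclusion $f^{-1}(\nhood_M)\subseteq\min(f,M)^{-1}(\nhood_M)$ instead of your exact preimage identity $g^{-1}(\ooInt{a}{b}) = f^{-1}(\ocInt{a}{+\infty})$; both are valid.
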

\begin{proof}
  We provide a proof only for $\min(f, M)$ as the argument is analogous for $\max(f, M)$.
  We distinguish three cases.

  First, assume that $f(x) < M$.
  Let $\nhood_{f(x)}$ be a neighbourhood of $f(x)=\min(f, M)(x)$.
  We must show that $\min(f, M)^{-1}(\nhood_{f(x)})$ is a neighbourhood of $x$.
  By continuity of $f$ at $x$, since $\nhood_{f(x)}\cap\coInt{-\infty}{M}$ is a neighbourhood of $f(x)$, $f^{-1}(\nhood_{f(x)}\cap\coInt{-\infty}{M})$ is a neighbourhood of $x$.
  To close the first case, it suffices to establish that $f^{-1}(\nhood_{f(x)}\cap\coInt{-\infty}{M}) \subseteq \min(f, M)^{-1}(\nhood_{f(x)})$.
  This inclusion follows from the fact that for all $y\in\topSpace$, if $f(y) < M$, then $f(y) = \min(f, M)(y)$.
  This ends the proof of the first case.

  Second, assume that $f(x) > M$.
  Let $\nhood_M$ be a neighbourhood of $M = \min(f, M)(x)$.
  By definition of $\min(f, M)$, we obtain that for all $y\in f^{-1}(\ocInt{M}{+\infty})$, $\min(f, M)(y) = M$.
  It follows that $f^{-1}(\ocInt{M}{+\infty})\subseteq\min(f, M)^{-1}(\nhood_M)$.
  By continuity of $f$ at $x$, $f^{-1}(\ocInt{M}{+\infty})$ is a neighbourhood of $x$.
  We conclude from the above that $\min(f, M)^{-1}(\nhood_M)$ is a neighbourhood of $x$.

  Finally, assume that $f(x) = M$ and let $\nhood_M$ be a neighbourhood of $M = \min(f, M)(x)$.
  It suffices to show that $f^{-1}(\nhood_M)\subseteq\min(f, M)^{-1}(\nhood_M)$ by continuity of $f$ at $x$.
  Let $y\in\topSpace$ such that $f(y)\in\nhood_M$.
  If $f(y)\leq M$, then $\min(f, M)(y) = f(y)\in\nhood_M$.
  Otherwise, $\min(f, M)(y) = M\in\nhood_M$.
  This shows the required inclusion and ends the proof.
\end{proof}

\subparagraph*{Compactness.}
A topological space $(\topSpace, \topology)$ is \textit{compact} if for any open cover $(\openSet_i)_{i\in I}$ of $\topSpace$ (i.e., for all $i\in I$, $\openSet_i$ is open and $\bigcup_{i\in I} \openSet_i = \topSpace$), one can extract a finite open cover of $\topSpace$, i.e., there exists $I'\subseteq I$ finite such that $\bigcup_{i\in I'} \openSet_i = \topSpace$.
A subset $\topSpaceB$ of a topological space $(\topSpace, \topology)$ is compact if $(\topSpaceB, \topologyInd)$ is compact, where $\topologyInd$ is the induced topology.
A compact subset of a Hausdorff topological space is closed.
For instance, any finite set with the discrete topology is compact.
Any closed bounded interval of $\IR$ is also compact.
This can be used to show that the extended real line $\IRbar$ is compact (without relying on the fact that $\IRbar$ and $\ccInt{0}{1}$ are homeomorphic).

In metrisable spaces, there is an equivalent characterisation of compactness based on sequences.
A topological space $(\topSpace, \topology)$ is \textit{sequentially compact} if for all sequences $(x_n)_{\indexSequence\in\IN}$ of $\topSpace$, there exists a convergent subsequence.
A metrisable space is compact if and only if it is sequentially compact (e.g.,~\cite[Chap.~3, Thm.~7.4]{munkres1997topology}).

For subsets of $\IR^\numObj$ (and more generally, of finite-dimensional normed vector spaces), there is yet another equivalent formulation of compactness.
A set $D\subseteq\IR^\numObj$ is compact if and only if it is closed and bounded (e.g.,~\cite[Chap.~3, Thm.~6.3]{munkres1997topology}).

Let $(\topSpace, \topology)$ and $(\topSpaceB, \topologyB)$ be topological spaces and let $f\colon \topSpace\to\topSpaceB$.
If $(\topSpace, \topology)$ is compact and $f$ is continuous, then $f(\topSpace)$ is compact.
Furthermore, if $(\topSpace, \metric_\topSpace)$ is a compact metric space and $(\topSpaceB, \metric_{\topSpaceB})$ is a metric space, $f$ is continuous if and only if $f$ is uniformly continuous (e.g.,~\cite[Chap.~3, Thm.~7.3]{munkres1997topology}).

\subparagraph*{Product topology.}
The \textit{product topology} is a topology defined over Cartesian products of topological spaces.
Let $I$ be an arbitrary non-empty set.
For all $i\in I$, let $(\topSpace_i, \topology_i)$ be a topological space.
A base of open sets for the product topology over $\prod_{i\in I}\topSpace_i$ consists of the open sets $\prod_{i\in I}\openSet_i$ where for all $i\in I$, $\openSet_i\in\topology_i$ and $\openSet_i = \topSpace_i$ for all but finitely many $i\in I$.
The product topology is the coarsest topology for which the projections $\prod_{i'\in I}\topSpace_{i'}\to\topSpace_{i}\colon (x_{i'})_{i'\in I}\mapsto x_i$ are continuous for all $i\in I$.

A simple example of the product topology is $\IR^\numObj$; the usual topology of $\IR^\numObj$ corresponds to the product topology of $\numObj$ copies of $\IR$ with its usual topology.

A sequence in a product of topological spaces converges with respect to the product topology if and only if it converges component-wise (see, e.g.,~\cite[Chap.~2, Sect.~8, Exercise~6]{munkres1997topology}).
This is formalised in the following result.
\begin{lemma}\label{lem:componentwise convergence}
  Let $I$ be a non-empty set.
  For all $i\in I$, let $(\topSpace_i, \topology_i)$ be a topological space.
  Let $\prodTopology$ denote the product topology on $\prod_{i\in I}\topSpace_i$.
  Let $(\mathbf{x}^{(\indexSequence)})_{\indexSequence\in\IN}$ be a sequence of elements of $\prod_{i\in I}X_i$ and $\mathbf{x} = (x_i)_{i\in\IN}\in\prod_{i\in I}\topSpace_i$.
  Then $(\mathbf{x}^{(\indexSequence)})_{\indexSequence\in\IN}$ converges to $\mathbf{x}$ if and only if for all $i\in I$, $(x_i^{(\indexSequence)})_{\indexSequence\in\IN}$ converges to $x_i$.
\end{lemma}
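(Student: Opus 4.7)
The plan is to prove both implications by unpacking the definition of convergence in terms of the canonical base of the product topology. Recall that a base of $\prodTopology$ consists of sets $\prod_{i\in I}\openSet_i$ where $\openSet_i\in\topology_i$ for all $i$ and $\openSet_i = \topSpace_i$ for all but finitely many indices. Since the definition of convergence can be checked only on a base (rather than on all open neighbourhoods), working with these basic open sets will be enough.

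For the forward direction, I would fix $i\in I$ and an open neighbourhood $\openSet_i$ of $x_i$ in $\topSpace_i$. The cylinder $\openSet = \openSet_i\times\prod_{i'\neq i}\topSpace_{i'}$ is a basic open neighbourhood of $\mathbf{x}$. By hypothesis, there is $\indexSequence_0\in\IN$ such that $\mathbf{x}^{(\indexSequence)}\in\openSet$ for every $\indexSequence\geq\indexSequence_0$, which by definition of $\openSet$ amounts to $x_i^{(\indexSequence)}\in\openSet_i$ for every such $\indexSequence$. Hence $(x_i^{(\indexSequence)})_{\indexSequence\in\IN}$ converges to $x_i$. (Alternatively, one can invoke the fact that the projections $\prod_{i'}\topSpace_{i'}\to\topSpace_i$ are continuous in the product topology and that continuous maps preserve convergence, but the direct argument above is self-contained and just as short.)

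For the backward direction, I would fix a basic open neighbourhood $\openSet = \prod_{i\in I}\openSet_i$ of $\mathbf{x}$ in the product topology. By definition of the base there is a finite subset $I_0\subseteq I$ such that $\openSet_i = \topSpace_i$ for every $i\in I\setminus I_0$. For each $i\in I_0$, the componentwise convergence hypothesis gives some $\indexSequence_i\in\IN$ such that $x_i^{(\indexSequence)}\in\openSet_i$ for all $\indexSequence\geq\indexSequence_i$. Since $I_0$ is finite, $\indexSequence_0 = \max_{i\in I_0}\indexSequence_i$ is well-defined, and for all $\indexSequence\geq\indexSequence_0$ we have $x_i^{(\indexSequence)}\in\openSet_i$ for every $i\in I_0$, while the condition is trivially satisfied for $i\notin I_0$. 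Hence $\mathbf{x}^{(\indexSequence)}\in\openSet$ for all $\indexSequence\geq\indexSequence_0$, which establishes convergence in the product topology.

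There is no real obstacle here; the whole content is the fact that finitely many indices matter in any basic open neighbourhood of the product topology, which is precisely what makes the finite maximum $\indexSequence_0 = \max_{i\in I_0}\indexSequence_i$ work in the backward direction. If $I$ were uncountable and we had used the box topology instead (where arbitrary open sets in each factor are allowed), the backward direction would fail, so it is worth emphasising in the write-up that it is the restriction to finitely many non-trivial factors in the base of $\prodTopology$ that drives the argument.
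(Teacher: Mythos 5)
Your proof is correct. The paper does not actually prove this lemma --- it cites it as an exercise in Munkres and states it without argument --- and what you have written is precisely the standard argument that reference would supply: the forward direction via single-factor cylinders (equivalently, continuity of projections), and the backward direction via the fact that a basic open set of $\prodTopology$ constrains only finitely many coordinates, so that the finite maximum $\indexSequence_0 = \max_{i\in I_0}\indexSequence_i$ exists. Your appeal to checking convergence only on a base is also explicitly licensed by the paper's own preliminaries, and your closing remark about why the argument breaks for the box topology correctly isolates where the finiteness is used; the only (trivial) edge case left implicit is $I_0=\emptyset$, where any $\indexSequence_0$ works.
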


A product of metrisable spaces need not be metrisable in general.
However, countable products of metrisable spaces are metrisable.
This can be shown in the same way that~\cite[Chap~2, Thm.~9.5]{munkres1997topology} proves that $\IR^\omega$ with the product topology is metrisable (where the usual topology is assumed on $\IR$).

Finally, we show that countable products of compact metrisable spaces are compact.
In full generality, arbitrary products of compact topological spaces are compact; this result is known as Tychonoff's theorem (e.g.,~\cite[Chap.~5, Thm.~1.1]{munkres1997topology}).
Tychonoff's theorem is equivalent to the axiom of choice.
Because of this, we provide an alternative proof below for the case of countable products.
\begin{theorem}\label{thm:countable:product}
  For all $i\in\IN$, let $(X_i, \topology_i)$ be a sequentially compact topological space.
  Then $(\prod_{i\in\IN}X_i, \prodTopology)$ is sequentially compact, where $\prodTopology$ denotes the product topology.
  In particular, countable products of compact metrisable spaces are compact.
\end{theorem}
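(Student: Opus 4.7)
The proof will proceed by a classical diagonal extraction argument, using Lemma~\ref{lem:componentwise convergence} to transfer componentwise convergence to convergence in the product topology.

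Let $(\mathbf{x}^{(\indexSequence)})_{\indexSequence\in\IN}$ be a sequence in $\prod_{i\in\IN}\topSpace_i$, writing $\mathbf{x}^{(\indexSequence)} = (x_i^{(\indexSequence)})_{i\in\IN}$. The goal is to construct a subsequence that converges componentwise. First, I would apply sequential compactness of $\topSpace_0$ to the sequence $(x_0^{(\indexSequence)})_{\indexSequence\in\IN}$ to extract a strictly increasing map $\varphi_0\colon\IN\to\IN$ such that $(x_0^{(\varphi_0(\indexSequence))})_{\indexSequence\in\IN}$ converges in $\topSpace_0$. Then, inductively, assuming strictly increasing maps $\varphi_0,\ldots,\varphi_k$ have been built so that $(x_j^{(\varphi_0\circ\cdots\circ\varphi_k(\indexSequence))})_{\indexSequence\in\IN}$ converges in $\topSpace_j$ for every $j\leq k$, I apply sequential compactness of $\topSpace_{k+1}$ to the sequence $(x_{k+1}^{(\varphi_0\circ\cdots\circ\varphi_k(\indexSequence))})_{\indexSequence\in\IN}$ to obtain a strictly increasing $\varphi_{k+1}\colon\IN\to\IN$ such that $(x_{k+1}^{(\varphi_0\circ\cdots\circ\varphi_{k+1}(\indexSequence))})_{\indexSequence\in\IN}$ converges. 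Extracting a subsequence preserves convergence of the already-convergent first $k+1$ coordinate sequences.

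Next, I would apply the diagonal trick: define $\psi\colon\IN\to\IN$ by $\psi(\indexSequence) = \varphi_0\circ\varphi_1\circ\cdots\circ\varphi_\indexSequence(\indexSequence)$. One checks that $\psi$ is strictly increasing (this uses that each $\varphi_k$ is strictly increasing and that $\varphi_{k}(\indexSequence)\geq\indexSequence$, a standard consequence). For every fixed $i\in\IN$, the tail $(\psi(\indexSequence))_{\indexSequence\geq i}$ is a subsequence of $(\varphi_0\circ\cdots\circ\varphi_i(\indexSequence))_{\indexSequence\in\IN}$ on the $i$-th coordinate, hence $(x_i^{(\psi(\indexSequence))})_{\indexSequence\in\IN}$ converges in $\topSpace_i$ to some $x_i\in\topSpace_i$. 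By Lemma~\ref{lem:componentwise convergence}, the subsequence $(\mathbf{x}^{(\psi(\indexSequence))})_{\indexSequence\in\IN}$ converges to $\mathbf{x} = (x_i)_{i\in\IN}$ in the product topology. This establishes sequential compactness of the product.

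For the ``in particular'' clause, I would combine three facts already invoked in the surrounding text: compact metrisable spaces are sequentially compact; a countable product of metrisable spaces is metrisable (as noted just above the theorem statement); and sequential compactness is equivalent to compactness in metrisable spaces. Thus the countable product is metrisable and sequentially compact, hence compact.

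The only subtle point, and the step most worth writing carefully, is verifying that $\psi$ is strictly increasing and that for each fixed coordinate $i$ the sequence $(\psi(\indexSequence))_{\indexSequence\geq i}$ is indeed a subsequence of the $i$-th stage extraction; the rest reduces to bookkeeping and an invocation of Lemma~\ref{lem:componentwise convergence}. No choice principle beyond dependent choice (used implicitly for the inductive extractions) is required, matching the remark in the excerpt that the countable case avoids full Tychonoff.
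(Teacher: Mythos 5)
Your proposal is correct and follows essentially the same route as the paper's proof: an inductive extraction of nested subsequences (the paper phrases this with nested index sets $I_0\supseteq I_1\supseteq\cdots$ rather than compositions of strictly increasing maps, but this is only a notational difference), followed by a diagonal selection and an appeal to Lemma~\ref{lem:componentwise convergence}, with the ``in particular'' clause obtained from the equivalence of compactness and sequential compactness in metrisable spaces. No gaps.
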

\begin{proof}
  The second claim of the theorem follows from the first and the fact that, in metrisable spaces, compactness and sequential compactness are equivalent.
  We thus focus on the first claim of the theorem.
  Let $(\mathbf{x}^{(\indexSequence)})_{\indexSequence\in\IN}$ be a sequence of elements of $\prod_{i\in\IN}\topSpace_i$.
  Our goal is to show that $(\mathbf{x}^{(\indexSequence)})_{\indexSequence\in\IN}$ has a convergent subsequence.

  First, for all $i\in\IN$, we construct a subsequence $(\mathbf{x}^{(\indexSequence)})_{\indexSequence\in I_i}$ of $(\mathbf{x}^{(\indexSequence)})_{\indexSequence\in\IN}$ such that for all $j < i$, $(x_{j}^{(\indexSequence)})_{\indexSequence\in I_i}$ converges in $\topSpace_{j}$.
  We proceed by induction.
  We let $I_0 = \IN$ for the base case.
  For the induction step, we assume that $I_i$ is defined.
  By compactness of $\topSpace_{i+1}$, there exists $I_{i+1}\subseteq I_i$ such that $(x_{i}^{(\indexSequence)})_{\indexSequence\in I_{i+1}}$ converges.
  The induction hypothesis holds by construction because $(\mathbf{x}^{(\indexSequence)})_{\indexSequence\in I_{i+1}}$ is a subsequence of $(\mathbf{x}^{(\indexSequence)})_{\indexSequence\in I_{i}}$.
  
  We now construct a convergent subsequence of $(\mathbf{x}^{(\indexSequence)})_{\indexSequence\in\IN}$.
  Let $\indexSequence_0=\min I_1$ and, for all $i>0$,  let $\indexSequence_i$ be the least element of $I_{i+1}$ strictly greater than $\indexSequence_{i-1}$.
  It is easy to check that $(\mathbf{x}^{(\indexSequence_i)})_{i\in\IN}$ converges via Lemma~\ref{lem:componentwise convergence}.
\end{proof}

\subsection{Properties of the downward closure}\label{appendix:down:closure}
Let $\numObj\in\IN_0$.
We show that the downward closure of a closed subset of $\IRbar^\numObj$ is closed.
We remark below that the statement, in general, does not apply to closed sets of $\IR^\numObj$.

\begin{lemma}\label{lem:down:closed}
  Let $D\subseteq\IRbar^\numObj$ be closed (i.e., compact) with respect to the topology of $\IRbar^\numObj$.
  Then $\down{D}$ is compact with respect to the topology of $\IRbar^\numObj$ and $\down{D}\cap\IR^\numObj$ is closed with respect to the topology of $\IR^\numObj$.
\end{lemma}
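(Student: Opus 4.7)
The plan is to show $\down{D}$ is closed in $\IRbar^\numObj$; since $\IRbar^\numObj$ is compact (as a product of finitely many copies of the compact metrisable space $\IRbar$, via Theorem~\ref{thm:countable:product}) and Hausdorff, closed subsets are compact, so closedness will imply the first claim. For the second claim, I will then note that the subspace topology induced by $\IRbar^\numObj$ on $\IR^\numObj$ coincides with the usual topology of $\IR^\numObj$ (both are generated by products of open real intervals), so closedness of $\down{D}$ in $\IRbar^\numObj$ transfers to closedness of $\down{D}\cap\IR^\numObj$ in $\IR^\numObj$ by standard properties of subspace topologies.

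For the core step (closedness of $\down{D}$ in $\IRbar^\numObj$), I would use sequential compactness, since $\IRbar^\numObj$ is metrisable as a finite product of metrisable spaces. Take a sequence $(\payoffVectB^{(n)})_{n\in\IN}\subseteq\down{D}$ converging to some $\payoffVectB\in\IRbar^\numObj$. By definition of $\down{D}$, for every $n$ there is $\payoffVect^{(n)}\in D$ with $\payoffVectB^{(n)}\leq\payoffVect^{(n)}$ componentwise. By sequential compactness of $D$, one can extract a subsequence $(\payoffVect^{(n_k)})_{k\in\IN}$ converging to some $\payoffVect\in D$. Then $\payoffVectB^{(n_k)}\to\payoffVectB$ as well. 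Passing to the limit componentwise in the inequality $\payoffVectB^{(n_k)}\leq\payoffVect^{(n_k)}$ yields $\payoffVectB\leq\payoffVect$, whence $\payoffVectB\in\down{D}$.

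The one non-trivial point, and the main obstacle of the proof, is justifying the limit-passing in the componentwise inequality in $\IRbar^\numObj$ rather than in $\IR^\numObj$. The key observation I would isolate is that the relation $\{(a,b)\in\IRbar\times\IRbar\mid a\leq b\}$ is closed in $\IRbar\times\IRbar$ with the product of the usual topologies of $\IRbar$. This follows by showing the complement is open: given $a>b$ in $\IRbar$, one can always pick $c\in\IRbar$ with $b<c<a$ (choosing $c$ real in all cases, using $c=(a+b)/2$ when both are real, and any real strictly between them when one or both are infinite), yielding disjoint basic open neighbourhoods $\ocInt{c}{+\infty}$ of $a$ and $\coInt{-\infty}{c}$ of $b$ which exhibit separation. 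Since convergence in $\IRbar^\numObj$ is componentwise (Lemma~\ref{lem:componentwise convergence}) and each of the $\numObj$ componentwise inequalities survives passage to the limit by closedness of $\{\leq\}$ in $\IRbar^2$, we obtain $\payoffVectB\leq\payoffVect$ as required.
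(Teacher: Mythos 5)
Your proposal is correct and follows essentially the same route as the paper: both show that $\down{D}$ is closed in the compact metrisable space $\IRbar^\numObj$ by taking a convergent sequence in $\down{D}$, dominating each term by an element of $D$, extracting a convergent subsequence via sequential compactness of $D$, and passing to the limit in the componentwise inequality. The only difference is that you explicitly justify the limit-passing step by proving closedness of the order relation in $\IRbar^2$, a detail the paper asserts without argument.
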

\begin{proof}
  Recall that $\IRbar^\numObj$ is a compact metrisable space.
  Thus, it suffices to show that $\down{D}$ is closed with respect to the topology of $\IRbar^\numObj$ to end the proof.

  Let $\payoffVect\in\IR^\numObj$ and $(\payoffVect_n)_{n\in\IN}$ a sequence of elements of $\down{D}$ such that $\payoffVect_n\to\payoffVect$ when $n\to\infty$.
  We must show that $\payoffVect\in \down{D}$.
  The idea of the proof is to bound $\payoffVect$ from above by a vector that is the limit of some sequence of elements of $D$.

  For all $n\in\IN$, we let $\payoffVectB_n\in D$ such that $\payoffVectB_n\geq \payoffVect_n$.
  By (sequential) compactness of $D$, $(\payoffVectB_n)_{n\in\IN}$ has a convergent subsequence.
  Let $\payoffVectB\in D$ denote the limit of one such subsequence.
  It follows that $\payoffVect\leq\payoffVectB$.
  This shows that $\payoffVect\in\down{D}$, and thus $\down{D}$ is a closed subset of $\IRbar^\numObj$.
  Since the topology of $\IR^\numObj$ can be seen as the topology induced on $\IR^\numObj$ by that of $\IRbar^\numObj$, it follows that $\down{D}\cap\IR^\numObj$ is a closed subset of $\IR^\numObj$.
\end{proof}

\begin{remark}[Assumption of Lemma~\ref{lem:down:closed}]
  The subsets of $\IR^\numObj$ that are closed with respect to the topology of $\IRbar^\numObj$ are the compact subsets of $\IR^\numObj$.
  Therefore, the assumption of Lemma~\ref{lem:down:closed} does not apply to all closed subsets of $\IR^\numObj$.

  We provide closed subset of $\IR^2$ the downward-closure of which is not a closed subset of $\IR^2$.
  We consider $D = \{(-\frac{1}{n}, n)\mid n\in\IN_0\}$.
  To see that $D$ is $\IR^2$-closed, consider a convergent sequence of elements of $D$.
  As it is a Cauchy sequence, from some point on, all subsequent elements of the sequence are at a distance of at most $\frac{1}{2}$ from one another.
  Because the distance between two different elements of $D$ is at least $1$, it follows the sequence that is considered is ultimately constant, thus its limit lies in $D$.
  This shows that $D$ is $\IR^2$-closed.

  We now argue that $\down{D}$ and $\down{D}\cap\IR^2$ are not closed: $(0, 0)$ is in the closure of these sets, but not in them.
  To show that $(0, 0)\in\closure{\down{D}}$, we observe that the sequence $((-\frac{1}{n}, 0))_{n\in\IN_0}$ is a sequence of elements of $\down{D}$ that converges to $(0, 0)$.
  On the other hand, we see that for all $n\in\IN_0$, $(0, 0)\leq (-\frac{1}{n}, n)$ does not hold, i.e., $(0, 0)\notin\down{D}$.
  \hfill$\lhd$
\end{remark}

\section{Topology over plays}\label{appendix:plays:topology}
We introduce the usual topology over the set of plays of a POMDP.
We then prove that the definition of a continuous payoff given in Section~\ref{section:payoffs} is equivalent to the topological of continuity.
Finally, we show that the function integrated in Equation~\eqref{equation:prelim:mixed distribution}, which describes the probability distribution induced by a mixed strategy, is measurable.

For the remainder of the section, we fix a POMDP $\pomdp = \pomdpTuple$.
\subsection{Definition}\label{appendix:topology:plays}

We consider $(\mdpStateSpace\mdpActionSpace)^\omega$ equipped with the product topology, where $\mdpStateSpace$ and $\mdpActionSpace$ are both equipped with the discrete topology.
It follows that $(\mdpStateSpace\mdpActionSpace)^\omega$ is metrisable: it is a countable product of metrisable spaces.
If $\pomdp$ is finite, then $(\mdpStateSpace\mdpActionSpace)^\omega$ is also compact (by Theorem~\ref{thm:countable:product}).
The topology of $\playSet{\pomdp}$ is the topology induced on $\playSet{\pomdp}$ by the topology of $(\mdpStateSpace\mdpActionSpace)^\omega$.
A base of the topology of $\playSet{\pomdp}$ is the set of cylinder sets.
We show below that $\playSet{\pomdp}$ is a closed subset of $(\mdpStateSpace\mdpActionSpace)^\omega$, and is therefore compact.

\begin{lemma}\label{lemma:play set closed}
  For any POMDP $\pomdp=\pomdpTuple$, $\playSet{\pomdp}$ is a closed subset of $(\mdpStateSpace\mdpActionSpace)^\omega$.
  In particular, if $\pomdp$ is finite, then $\playSet{\pomdp}$ is a compact space.
\end{lemma}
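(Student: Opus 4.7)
The plan is to prove the closedness claim directly, and then obtain compactness in the finite case as an immediate consequence: by Theorem~\ref{thm:countable:product}, the countable product $(\mdpStateSpace\mdpActionSpace)^\omega$ of the finite (hence compact) discrete spaces $\mdpStateSpace\mdpActionSpace$ is compact, and closed subsets of compact Hausdorff spaces are compact.

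For the closedness, I would show that the complement of $\playSet{\pomdp}$ in $(\mdpStateSpace\mdpActionSpace)^\omega$ is open. A sequence $\mdpState_0\mdpAction_0\mdpState_1\mdpAction_1\ldots \in (\mdpStateSpace\mdpActionSpace)^\omega$ fails to be a play precisely when there exists $\indexPosition \in \IN$ such that either $\mdpAction_\indexPosition \notin \mdpActionSpace(\mdpState_\indexPosition)$ or $\mdpState_{\indexPosition+1} \notin \supp{\mdpTrans(\mdpState_\indexPosition,\mdpAction_\indexPosition)}$. For each fixed $\indexPosition$, the set of sequences violating the condition at position $\indexPosition$ is a union, over all finite prefixes $\mdpState_0\mdpAction_0\ldots\mdpState_\indexPosition\mdpAction_\indexPosition\mdpState_{\indexPosition+1}$ witnessing the violation, of the corresponding basic cylinders. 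Because each coordinate space carries the discrete topology, singletons in $\mdpStateSpace$ and $\mdpActionSpace$ are open, so each such cylinder is open in the product topology. The complement of $\playSet{\pomdp}$ is then the countable union (over $\indexPosition \in \IN$) of these open sets, hence open.

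Alternatively, one can argue sequentially using the fact that $(\mdpStateSpace\mdpActionSpace)^\omega$ is metrisable (as a countable product of metrisable spaces): given a convergent sequence of plays $(\play^{(\indexSequence)})_{\indexSequence\in\IN}$ with limit $\play = \mdpState_0\mdpAction_0\mdpState_1\mdpAction_1\ldots$, Lemma~\ref{lem:componentwise convergence} yields component-wise convergence, and convergence in a discrete space means the coordinate sequence is eventually constant. Thus for every $\indexPosition$ we may pick $\indexSequence$ large enough that the first $\indexPosition+2$ coordinates of $\play^{(\indexSequence)}$ coincide with those of $\play$; since $\play^{(\indexSequence)}$ is a play, the constraints $\mdpAction_\indexPosition \in \mdpActionSpace(\mdpState_\indexPosition)$ and $\mdpState_{\indexPosition+1} \in \supp{\mdpTrans(\mdpState_\indexPosition,\mdpAction_\indexPosition)}$ transfer to $\play$. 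Since $\indexPosition$ was arbitrary, $\play \in \playSet{\pomdp}$.

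There is no real obstacle here: the statement is a routine consequence of the fact that the play condition is a conjunction of constraints, each depending only on two consecutive (discrete) coordinates, so the set of plays is automatically closed in the product topology. The only mild care needed is to invoke the discreteness of the factors to ensure that the "local" constraint sets are clopen, and to invoke Theorem~\ref{thm:countable:product} rather than general Tychonoff in the compactness step, consistent with the appendix's framework.
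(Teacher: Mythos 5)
Your proposal is correct and follows essentially the same route as the paper: the complement of $\playSet{\pomdp}$ is exhibited as a union of open cylinders determined by finite prefixes violating the play condition, and compactness in the finite case follows from Theorem~\ref{thm:countable:product} together with the fact that closed subsets of compact spaces are compact. The sequential argument you sketch as an alternative is also valid but not needed.
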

\begin{proof}
  The first part of the statement implies the second, because all closed subsets of compact spaces are themselves compact~\cite[Chap.~3, Thm.~5.2]{munkres1997topology}.

  For all $\word\in(\mdpStateSpace\mdpActionSpace)^*$, the subset of $(\mdpStateSpace\mdpActionSpace)^\omega$ that contains all continuations of $\word$ is an open set (in fact, this set is in the base of the product topology).
  Furthermore, any $\wordOmega\in(\mdpStateSpace\mdpActionSpace)^\omega\setminus\playSet{\pomdp}$ has a prefix $\word_{\wordOmega}\in (\mdpStateSpace\mdpActionSpace)^*\mdpStateSpace$ that is not a history.
  We obtain that, for all $\wordOmega\in(\mdpStateSpace\mdpActionSpace)^\omega\setminus\playSet{\pomdp}$,  the set of continuations of $\word_\wordOmega$ does not intersect $\playSet{\pomdp}$.
  Therefore, $(\mdpStateSpace\mdpActionSpace)^\omega\setminus\playSet{\pomdp}$ can be written as the union of the sets of continuations of each $\word_\wordOmega$, thus is an open set.
  We have shown that $\playSet{\pomdp}$ is closed.
\end{proof}

\subsection{Continuous payoff functions}

We prove that the definitions of continuous and uniformly continuous payoffs provided in the main text are equivalent to their classical definitions.
For uniform continuity, we require a metric on $\playSet{\pomdp}$.
We consider the metric $\playMetric$ over $\playSet{\pomdp}$ defined by, for all $\play, \play'\in\playSet{\pomdp}$, $\playMetric(\play, \play') = 2^{-\indexLast}$, where $\indexLast\in\IN\cup\{+\infty\}$ is the number of elements in the longest common prefix of $\play$ and $\play'$ (states and actions are counted separately).
This distance is derived from the discrete metric and a construction that shows that countable products of metric spaces are metrisable.

The crux of the argument for both continuity and uniform continuity is that cylinder sets are open balls with respect to $\playMetric$: for all $\hist = \mdpState_0\mdpAction_0\mdpState_1\ldots\mdpAction_{\indexLast-1}\mdpState_\indexLast\in\histSet{\pomdp}$, by definition of $\playMetric$, $\cyl{\hist} = \ball{\play}{2^{-(2\indexLast+1)}}$ for all $\play\in\cyl{\hist}$.
Furthermore, any open neighbourhood of a play contains a cylinder, as all balls include balls with smaller radii.
We provide separate statements for continuity and uniform continuity.
We start with continuity.
\begin{lemma}
  Let $\payoff\colon\playSet{\pomdp}\to\IRbar$ be a payoff function and let $\play\in\playSet{\pomdp}$.
  \begin{itemize}
  \item If $\payoff(\play)\in\IR$, then $\payoff$ is continuous at $\play$ if and only if for all $\varepsilon > 0$, there exists $\indexPosition\in\IN$ such that for all plays $\play'\in\cyl{\playPrefix{\play}{\indexPosition}}$, $|\payoff(\play)-\payoff(\play')| < \varepsilon$.
  \item If $\payoff(\play) = +\infty$ (resp.~$-\infty$), then $\payoff$ is continuous at $\play$ if and only if for all $M\in\IR$, there exists $\indexPosition\in\IN$ such that for all plays $\play'\in\cyl{\playPrefix{\play}{\indexPosition}}$, $\payoff(\play')\geq M$ (resp.~$\payoff(\play')\leq -M$).
  \end{itemize}
\end{lemma}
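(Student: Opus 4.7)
The plan is to translate the claimed cylinder-based conditions into the standard topological definition of continuity at $\play$. The two key ingredients are (i) the cylinders of prefixes of $\play$ form a neighbourhood basis of $\play$ in $\playSet{\pomdp}$, and (ii) in $\IRbar$, the sets $\ooInt{\scalar}{\scalarB}$ form a neighbourhood basis at any real point, while the sets $\ocInt{M}{+\infty}$ (resp.\ $\coInt{-\infty}{-M}$) with $M\in\IR$ form a neighbourhood basis at $+\infty$ (resp.\ $-\infty$).

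First, I would justify ingredient~(i): by definition of the product topology on $(\mdpStateSpace\mdpActionSpace)^\omega$ with the discrete topology on $\mdpStateSpace$ and $\mdpActionSpace$, any basic open set containing $\play = \mdpState_0\mdpAction_0\mdpState_1\ldots$ must contain the set of all sequences that agree with $\play$ on some finite set of coordinates; since such a set contains $\cyl{\playPrefix{\play}{\indexPosition}}$ for $\indexPosition$ large enough, and since each $\cyl{\playPrefix{\play}{\indexPosition}}$ is itself open and contains $\play$, the cylinders $\{\cyl{\playPrefix{\play}{\indexPosition}}\}_{\indexPosition\in\IN}$ form a neighbourhood basis of $\play$ in $\playSet{\pomdp}$ with the induced topology.

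Second, for the equivalence in the real case, I would argue both directions directly. For the forward direction, assume $\payoff$ is topologically continuous at $\play$ and fix $\varepsilon > 0$. The set $\ooInt{\payoff(\play)-\varepsilon}{\payoff(\play)+\varepsilon}$ is an open neighbourhood of $\payoff(\play)$ in $\IRbar$, so its preimage is a neighbourhood of $\play$, and by~(i) it contains some $\cyl{\playPrefix{\play}{\indexPosition}}$, which is exactly the claimed cylinder condition. For the backward direction, assume the cylinder condition and let $\nhood$ be any open neighbourhood of $\payoff(\play)$; by~(ii) there is $\varepsilon > 0$ with $\ooInt{\payoff(\play)-\varepsilon}{\payoff(\play)+\varepsilon}\subseteq\nhood$, and the cylinder condition then yields $\indexPosition$ such that $\cyl{\playPrefix{\play}{\indexPosition}}\subseteq\payoff^{-1}(\nhood)$, witnessing that $\payoff^{-1}(\nhood)$ is a neighbourhood of $\play$.

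Finally, the two infinite cases are handled identically, using the basis $\{\ocInt{M}{+\infty}\}_{M\in\IR}$ at $+\infty$ and $\{\coInt{-\infty}{-M}\}_{M\in\IR}$ at $-\infty$ in place of the $\varepsilon$-intervals. There is no real obstacle here: the only non-cosmetic step is the verification of~(i), which is immediate from the definition of the product topology and the discreteness of $\mdpStateSpace$ and $\mdpActionSpace$; everything else is a direct unpacking of definitions.
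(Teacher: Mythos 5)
Your proposal is correct and follows essentially the same route as the paper: reduce topological continuity at $\play$ to the two neighbourhood bases (prefix cylinders at $\play$, and $\varepsilon$-intervals resp.\ $\ocInt{M}{+\infty}$-type intervals at $\payoff(\play)$). The only cosmetic difference is that the paper obtains the cylinder neighbourhood basis by noting that cylinders are open balls for the metric $\playMetric$, whereas you derive it directly from the product topology; both are immediate.
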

\begin{proof}
  We proceed by equivalences.
  We first open with general equivalences without assuming anything on $\payoff(\play)$.
  By definition, $\payoff$ is continuous at $\play$ if and only if, for all open neighbourhoods $\nhood_{\payoff(\play)}$ of $\payoff(\play)$, there exists an open neighbourhood $\nhood_{\play}$ of $\play$ such that $\payoff(\nhood_{\play})\subseteq\nhood_{\payoff(\play)}$.
  All open neighbourhoods of $\play$ contain an open ball centred in $\play$ with respect to $\playMetric$, and thus a cylinder set of a prefix of $\play$.
  Thus $\payoff$ is continuous at $\play$ if and only if, for all open neighbourhoods $\nhood_{\payoff(\play)}$ of $\payoff(\play)$, there exists $\indexPosition\in\IN$ such that for all $\play'\in\cyl{\playPrefix{\play}{\indexPosition}}$, $\payoff(\play')\in\nhood_{\payoff(\play)}$.

  We now distinguish cases following whether $\payoff(\play)\in\IR$.
  First, assume that $\payoff(\play)\in\IR$.
  Then, all neighbourhoods of $\payoff(\play)$ contain an interval $\ooInt{\payoff(\play)-\varepsilon}{\payoff(\play)+\varepsilon}$ for some $\varepsilon > 0$.
  The equivalence in the statement of the lemma follows.
  Now, assume that $\payoff(\play) = +\infty$.
  By definition of the topology of $\IRbar$ (see the base specified previously), all neighbourhoods of $+\infty$ contain an interval $\ocInt{M}{\infty}$ for some $M\in\IR$.
  The equivalence in the statement of the lemma follows.
  The argument is similar in the case $\payoff(\play) = -\infty$.
\end{proof}

We now continue with uniform continuity.
\begin{lemma}
  Let $\payoff\colon\playSet{\pomdp}\to\IR$ be a real-valued payoff.
  The payoff $\payoff$ is uniformly continuous if and only if for all $\varepsilon > 0$, there exists $\indexPosition\in\IN$ such that for all plays $\play$, $\play'\in\playSet{\pomdp}$, $\playPrefix{\play}{\indexPosition}= \playPrefix{\play'}{\indexPosition}$ implies that $|\payoff(\play) - \payoff(\play')| < \varepsilon$.
\end{lemma}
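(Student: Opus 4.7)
The plan is to unwind both conditions to statements about the metric $\playMetric$ defined on $\playSet{\pomdp}$ just before the lemma, and then to observe that the two statements differ only by a change of the parameter controlling closeness. The key quantitative link is the following: for $\hist = \playPrefix{\play}{\indexPosition}=\playPrefix{\play'}{\indexPosition}$, the sequences $\play$ and $\play'$ agree on $2\indexPosition+1$ entries (counting states and actions separately, as specified in the definition of $\playMetric$), so the longest common prefix has length at least $2\indexPosition+1$, whence $\playMetric(\play,\play')\leq 2^{-(2\indexPosition+1)}$. Conversely, if $\playMetric(\play,\play')<2^{-2\indexPosition}$, then the longest common prefix has length strictly greater than $2\indexPosition$, hence at least $2\indexPosition+1$, which forces $\playPrefix{\play}{\indexPosition}=\playPrefix{\play'}{\indexPosition}$. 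These two implications are exactly what is needed to translate between the $\delta$-based formulation of uniform continuity on $(\playSet{\pomdp},\playMetric)$ and the prefix-based formulation in the statement.

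For the forward direction, I would fix $\varepsilon>0$, invoke uniform continuity to obtain $\delta>0$ such that $\playMetric(\play,\play')<\delta$ implies $|\payoff(\play)-\payoff(\play')|<\varepsilon$, and pick $\indexPosition\in\IN$ large enough that $2^{-(2\indexPosition+1)}<\delta$. The first half of the quantitative link above then yields the desired prefix-based implication. For the backward direction, given $\varepsilon>0$, I would take the $\indexPosition$ provided by the prefix-based property and set $\delta=2^{-2\indexPosition}$; the second half of the link shows that $\playMetric(\play,\play')<\delta$ implies $\playPrefix{\play}{\indexPosition}=\playPrefix{\play'}{\indexPosition}$, hence $|\payoff(\play)-\payoff(\play')|<\varepsilon$.

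There is no genuine obstacle here; the only care required is to keep straight the two different counting conventions (history length in terms of state indices versus length of the longest common prefix in the sequence of alternating states and actions), so that the exponents $2\indexPosition$ and $2\indexPosition+1$ appear on the correct sides. Once that bookkeeping is in place, both implications are a single line each.
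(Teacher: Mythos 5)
Your proof is correct and follows essentially the same route as the paper: both reduce the statement to the $\varepsilon$--$\delta$ formulation of uniform continuity for the metric $\playMetric$ and then use the equivalence between $\playMetric(\play,\play')\leq 2^{-(2\indexPosition+1)}$ and $\playPrefix{\play}{\indexPosition}=\playPrefix{\play'}{\indexPosition}$. Your bookkeeping of the exponents $2\indexPosition$ versus $2\indexPosition+1$ is accurate, so nothing is missing.
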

\begin{proof}
  The payoff $\payoff$ is uniformly continuous if and only if for all $\varepsilon > 0$, there exists some $\eta > 0$ such that for all $\play$, $\play'\in\playSet{\pomdp}$, $\playMetric(\play, \play') < \eta$ implies that $|\payoff(\play) - \payoff(\play')| < \varepsilon$.
  The desired equivalence can be obtained by exploiting the fact that for all $\indexPosition\in\IN$ and $\play$, $\play'\in\playSet{\pomdp}$, $\playMetric(\play, \play') \leq 2^{-(2\indexPosition+1)}$ if and only if $\playPrefix{\play}{\indexPosition} = \playPrefix{\play'}{\indexPosition}$.
\end{proof}

\subsection{Distributions induced by mixed strategies}\label{appendix:prelim:mixed}
Equation~\eqref{equation:prelim:mixed distribution} defines the distribution induced by a mixed strategy from an initial state.
To show that the integral in this equation is well-defined, we must show that the mapping $\stratClassPure{\pomdp}\to\ccInt{0}{1}\colon\stratBMDP\mapsto\proba_{\mdpState}^\stratBMDP(\objective)$ is measurable for all $\mdpState\in\mdpStateSpace$ and $\objective\in\pomdpSigmaAlgebra$.

Our proof is based on an induction on the Borel hierarchy for measurable subsets of $\playSet{\pomdp}$.
Borel subsets of a metrisable topological space can be arranged in a hierarchy.
In particular, this applies to measurable subsets of $\playSet{\pomdp}$.
We refer the reader to~\cite{kechris1995classical} for an extended exposition on the Borel hierarchy.

Let $\omega_1$ be the first uncountable ordinal.
For $\playSet{\pomdp}$, this hierarchy is as follows.
We let $\Sigma_1^0$ be the open subsets of $\playSet{\pomdp}$.
For each ordinal $1\leq \xi < \omega_1$, we let $\Pi_\xi^0 = \{\playSet{\pomdp}\setminus\openSet\mid\openSet\in\Sigma_\xi^0\}$ be the complements of the sets in $\Sigma_\xi^0$, and if $\xi > 1$, we let
\[\Sigma_\xi^0 = \left\{
    \bigcup_{\indexPosition\in\IN}\openSet_\indexPosition\mid
    \openSet_\indexPosition\in\Pi_{\xi_\indexPosition}, \xi_\indexPosition < \xi, \indexPosition\in\IN
  \right\}\]
be the set of countable unions of sets in $\bigcup_{\xi'<\xi}\Pi_{\xi'}^0$.
Every Borel set is in one of the sets $\Sigma_\xi^0$ for some $\xi < \omega_1$ and for all $1 < \xi'\leq \xi <\omega_1$, we have $\Sigma_{\xi'}^0\subseteq\Sigma_{\xi}^0$ and $\Pi_{\xi'}^0\subseteq\Pi_{\xi}^0$.

Let $\mdpState\in\mdpStateSpace$, $\objective\in\pomdpSigmaAlgebra$ and $P_\objective\colon\stratClassPure{\pomdp}\to\ccInt{0}{1}\colon\stratBMDP\mapsto\proba_{\mdpState}^\stratBMDP(\objective)$.
We show that $P_\objective$ is measurable in three steps.
We first establish it directly for history cylinders.
We extend the result to open subsets $\objective$ of $\playSet{\pomdp}$ by showing that $P_\objective$ can be written as a sum or series of the form $\sum_{\hist\in\histPart}P_{\cyl{\hist}}$ for a countable set of histories $\histPart$.
Finally, we generalise to all measurable sets by induction on the Borel hierarchy, by writing the function as a pointwise limit of measurable functions.
\begin{lemma}\label{lemma:preliminaries:strategy probability measurable}
  Let $\objective\in\pomdpSigmaAlgebra$ and let $\concurState\in\concurStateSpace$.
  The function $P_\objective\colon\stratClassPure{\pomdp}\to\ccInt{0}{1}\colon\stratBMDP\mapsto\proba_{\mdpState}^{\stratBMDP}(\objective)$ is measurable.
\end{lemma}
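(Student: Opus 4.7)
The plan is to follow the three-step strategy sketched just before the statement: first establish measurability of $P_\objective$ for $\objective$ a cylinder, then for open $\objective$, then for all Borel $\objective$ by transfinite induction on the Borel hierarchy.

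For the base case, fix $\hist = \mdpState_0\mdpAction_0\mdpState_1\ldots\mdpState_\indexLast$ and recall that
\[
  P_{\cyl{\hist}}(\stratBMDP) = \indic{\mdpState_0=\mdpState}\cdot\prod_{\indexPosition=0}^{\indexLast-1}\stratBMDP(\playPrefix{\hist}{\indexPosition})(\mdpAction_\indexPosition)\cdot\mdpTrans(\mdpState_\indexPosition,\mdpAction_\indexPosition)(\mdpState_{\indexPosition+1}).
\]
Since $\stratSigmaAlgebra$ is the Borel sigma-algebra of $\mdpActionSpace^{\obsFun(\histSet{\pomdp})}$ under the product of discrete topologies, each coordinate projection $\stratBMDP\mapsto\stratBMDP(\obsFun(\playPrefix{\hist}{\indexPosition}))$ is continuous (the preimage of any singleton $\{\mdpAction_\indexPosition\}$ is a basic cylinder of the product topology). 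As pure strategies assign Dirac distributions, $\stratBMDP\mapsto\stratBMDP(\playPrefix{\hist}{\indexPosition})(\mdpAction_\indexPosition)$ is the composition of this projection with $\indic{\{\mdpAction_\indexPosition\}}$, hence $\stratSigmaAlgebra$-measurable. The finite product above is therefore $\stratSigmaAlgebra$-measurable.

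Next, any open set $\objective\subseteq\playSet{\pomdp}$ can be written as a countable disjoint union $\objective = \bigsqcup_{\indexPosition\in\IN}\cyl{\hist_\indexPosition}$ of history cylinders: enumerate all cylinders contained in $\objective$, then disjointify by successively removing previously chosen cylinders (a cylinder set minus a finite union of cylinders is still a finite disjoint union of cylinders, since any two cylinders are either disjoint or one contains the other). Sigma-additivity of $\proba^{\stratBMDP}_{\mdpState}$ then gives $P_\objective = \sum_{\indexPosition\in\IN}P_{\cyl{\hist_\indexPosition}}$, which is $\stratSigmaAlgebra$-measurable as the pointwise limit of the finite partial sums.

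For the inductive step, let $1\leq\xi<\omega_1$ and suppose $P_{\objective'}$ is measurable for every $\objective'\in\bigcup_{\xi'<\xi}\Sigma^0_{\xi'}$. For $\objective\in\Pi^0_{\xi'}$ with $\xi'<\xi$, write $P_\objective(\stratBMDP) = 1 - P_{\playSet{\pomdp}\setminus\objective}(\stratBMDP)$; this is measurable by the induction hypothesis. For $\objective\in\Sigma^0_\xi$, pick a decomposition $\objective = \bigcup_{\indexPosition\in\IN}\objective_\indexPosition$ with each $\objective_\indexPosition\in\Pi^0_{\xi_\indexPosition}$ for some $\xi_\indexPosition<\xi$; by the previous observation each $P_{\objective_\indexPosition}$ is measurable. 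Setting $\objective'_\indexLast = \bigcup_{\indexPosition\leq\indexLast}\objective_\indexPosition$ yields an increasing sequence with union $\objective$, so continuity of the measure $\proba^{\stratBMDP}_\mdpState$ from below gives $P_\objective(\stratBMDP) = \lim_{\indexLast\to\infty}P_{\objective'_\indexLast}(\stratBMDP)$ pointwise in $\stratBMDP$. Each $P_{\objective'_\indexLast}$ is measurable by the inclusion-exclusion formula applied to the finitely many $\objective_\indexPosition$ (each intersection of finitely many sets in $\bigcup_{\xi'<\xi}\Pi^0_{\xi'}$ lies in $\bigcup_{\xi'<\xi}\Sigma^0_{\xi'}$, modulo enlarging $\xi_\indexPosition$ if needed), so $P_\objective$ is a pointwise limit of measurable functions, hence measurable. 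Since every Borel set lies in some $\Sigma^0_\xi$ with $\xi<\omega_1$, this closes the induction.

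The main obstacle is the inductive step: one must be careful that the collection $\{\objective\in\pomdpSigmaAlgebra\mid P_\objective\text{ is measurable}\}$ is closed under the operations generating the Borel hierarchy. Using increasing unions and continuity of measures keeps things concrete and bypasses the need to verify that this collection is a $\sigma$-algebra (which would require closure under arbitrary countable unions, handled here by combining finite unions with monotone limits). Alternatively, one could package the argument as a Dynkin/monotone-class argument on the $\pi$-system of cylinders, which may be more elegant.
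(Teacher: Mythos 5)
Your proposal follows essentially the same three-step route as the paper (cylinders, then open sets as countable disjoint unions of cylinders, then transfinite induction on the Borel hierarchy with complementation for $\Pi^0_{\xi'}$, increasing partial unions, and continuity from below), and the overall argument is sound. Two of your parenthetical justifications are incorrect, however, even though the claims they support survive. First, in the disjointification step: a cylinder minus a finite union of cylinders is \emph{not} in general a finite disjoint union of cylinders when the state or action space is countably infinite (e.g.\ $\cyl{\mdpState_0}\setminus\cyl{\mdpState_0\mdpAction_0\mdpState_1}$ is a countably infinite union of cylinders), so the iterative removal as justified does not go through; the standard fix, which the paper uses implicitly, is to keep only the histories $\hist$ with $\cyl{\hist}\subseteq\objective$ that are minimal in the prefix order---their cylinders are automatically pairwise disjoint because two intersecting cylinders are nested. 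Second, in the inductive step, a finite intersection of sets in $\bigcup_{\xi'<\xi}\Pi^0_{\xi'}$ lies in $\bigcup_{\xi'<\xi}\Pi^0_{\xi'}$ (each class $\Pi^0_{\xi'}$ is closed under finite intersections), \emph{not} in $\bigcup_{\xi'<\xi}\Sigma^0_{\xi'}$; when $\xi$ is a successor and $\xi'=\xi-1$ the containment $\Pi^0_{\xi'}\subseteq\Sigma^0_{\xi''}$ for some $\xi''<\xi$ fails. Your inclusion--exclusion still works because measurability on $\bigcup_{\xi'<\xi}\Pi^0_{\xi'}$ is exactly what you established by complementation, but the paper avoids inclusion--exclusion altogether by invoking closure of each $\Pi^0_{\xi'}$ under finite unions, so that each partial union $\objective'_\indexLast$ itself lies in $\bigcup_{\xi'<\xi}\Pi^0_{\xi'}$; this is the cleaner route.
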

\begin{proof}
  First, we assume that $\objective$ is a cylinder set.
  Let $\hist\in\histSet{\pomdp}$ such that $\objective = \cyl{\hist}$.
  We assume that $\concurState$ is the first state of $\hist$, as otherwise $P_\objective$ is the constant zero function and the result is direct.
  By definition of distributions over plays for pure strategies, there exists a constant $\alpha$ such that, for all strategies $\stratBMDP\in\stratClassPure{\pomdp}$ such that $\hist$ is consistent (resp.~inconsistent) with $\stratBMDP$, we have $P_\objective(\stratBMDP) = \alpha$ (resp.~$P_\objective(\stratBMDP) = 0$).
  The set of pure strategies with which $\hist$ is consistent is a generator of $\stratSigmaAlgebra$.
  It follows that $P_\objective$ is a linear combination of two indicators of measurable sets, and is therefore measurable.

  We now assume that $\objective$ is an open subset of $\playSet{\pomdp}$.
  Open balls in $\playSet{\pomdp}$ are history cylinders, therefore we can write $\objective = \cyl{\histPart}$ for a countable set of histories $\histPart$.
  We assume that the cylinders of histories in $\histPart$ are pairwise disjoint, as two cylinder sets have a non-empty intersection if and only if one is included in the other.
  It follows that $P_\objective = \sum_{\hist\in\histPart}P_{\cyl{\hist}}$ (by sigma-additivity of $\proba_{\concurState}^{\stratBMDP}$ for all $\stratBMDP\in\stratClassPure{\pomdp}$).
  This shows that $P_\objective$ is the pointwise limit of a sequence of measurable functions and is thus measurable.

  We now show the general case by induction on the Borel hierarchy of $\playSet{\pomdp}$.
  In the previous point, we have shown that $P_\objective$ is measurable if $\objective\in\Sigma_1^0$.
  This is our base case.
  For all ordinals $1\leq \xi < \omega_1$, by showing that $P_\objective$ is measurable for all $\objective\in\Sigma_\xi^0$, we obtain that $P_\objective = 1 - P_{\playSet{\pomdp}\setminus\objective}$ is measurable for all $\objective\in\Pi_\xi^0$.

  Let $1 < \xi < \omega_1$.
  We assume by induction that for all $1\leq \xi' < \xi$ and for all $\objective\in\Pi_{\xi'}^0$, $P_\objective$ is measurable.
  Let $\objective\in\Pi_{\xi}^0$.
  We show that $P_\objective$ is measurable.
  Let $(\objective_\indexPosition)_{\indexPosition\in\IN}$ be a sequence of elements of $\bigcup_{\xi'<\xi}\Pi_{\xi'}^0$ such that $\objective = \bigcup_{\indexPosition\in\IN}\objective_\indexPosition$.
  We let $\objective_{\leq\indexPosition} = \bigcup_{\indexPosition'\leq\indexPosition}\objective_{\indexPosition'}$.
  Since the sequence of sets $(\objective_{\leq\indexPosition})_{\indexPosition\in\IN}$ increases to $\objective$, it follows from the continuity of probability measures that $P_\objective$ is the pointwise limit of $(P_{\objective_{\leq\indexPosition}})_{\indexPosition\in\IN}$.
  Thus, to conclude, it remains to show that for all $\indexPosition\in\IN$, $\objective_{\leq\indexPosition}\in\bigcup_{\xi'<\xi}\Pi_{\xi'}^0$ to invoke the induction hypothesis.
  This property follows from the fact that for each $\xi'<\xi$, $\Pi_{\xi'}^0$ is stable by finite unions~\cite[Prop.~22.1]{kechris1995classical}.
  We have shown that $P_\objective$ is measurable, which ends the inductive argument and the overall proof.
\end{proof}

\section{Continuous payoff functions}\label{appendix:continuous payoffs}
We present examples of continuous payoffs in this section.
First, we show that (a generalisation of) the discounted-sum payoff is continuous.
Second, we show that the shortest-path payoff is continuous whenever the considered weight function is bounded from below by a positive number.
Finally, we provide characterisations in finite POMDPs of the objectives whose indicator is continuous and of prefix-independent payoffs that are continuous.
We fix a POMDP $\pomdp = \pomdpTuple$ for the remainder of the section.

\subsection{Discounted-sum payoff}
In the main text, we defined the discounted-sum payoff with a fixed discount factor.
Certain authors study a variant of this payoff where the discount factor changes at each step of the play depending on the current state.
We define a generalisation of the discounted-sum payoff in which both the discount factor and the weights depend not only on the current state-action pair, but on the history and current action at each step.
We provide sufficient conditions ensuring that this generalisation is well-defined and continuous.

We consider a history-dependent weight function $\weight\colon (\mdpStateSpace\mdpActionSpace)^+\to\IR$ and a history-dependent discount factor function $\discFactor\colon(\mdpStateSpace\mdpActionSpace)^+\to\coInt{0}{1}$.
We assume that $\weight$ is no more than $\weightMax\in\IR$ in absolute value and that $\discFactor$ is bounded away from $1$, i.e., there exists $\discFactor_\star\in\coInt{0}{1}$ such that $\discFactor(\wordOmega) \leq \discFactor_\star$ for all $\wordOmega\in(\mdpStateSpace\mdpActionSpace)^+$.
We define the \textit{generalised discounted-sum payoff} as the function $\discSumG{\discFactor}{\weight}$ defined, for all $\play=\mdpState_0\mdpAction_0\mdpState_1\ldots\in\playSet{\pomdp}$, by
\[\discSumG{\discFactor}{\weight}(\play) =
  \sum_{\indexLast=0}^\infty
  \left(\prod_{\indexPosition=0}^{\indexLast-1}
    \discFactor(\mdpState_0\mdpAction_0\ldots\mdpState_{\indexPosition}\mdpAction_{\indexPosition})\right)
  \weight(\mdpState_0\mdpAction_0\ldots\mdpState_{\indexLast}\mdpAction_{\indexLast}).\]
This function is well-defined for all plays: the defining series is absolutely convergent by the assumptions on $\weight$ and $\discFactor$.
We now prove that $\discSumG{\discFactor}{\weight}$ is continuous.
\begin{lemma}
  Let $\weight\colon (\mdpStateSpace\mdpActionSpace)^+\to\IR$ be a history-dependent weight function such that $\weight$ is no more than $\weightMax\in\IR$ in absolute value and $\discFactor\colon(\mdpStateSpace\mdpActionSpace)^+\to\coInt{0}{1}$ be a history-dependent discount factor function such that there exists $\discFactor_\star\in\coInt{0}{1}$ such that $\discFactor(\wordOmega) \leq \discFactor_\star$ for all $\wordOmega\in(\mdpStateSpace\mdpActionSpace)^+$.
  Then $\discSumG{\discFactor}{\weight}$ is a continuous payoff.
\end{lemma}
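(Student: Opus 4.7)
The plan is to establish uniform continuity (which is stronger than continuity) by exploiting the geometric decay guaranteed by the uniform bound $\discFactor_\star$ on the discount factor and the uniform bound $\weightMax$ on the weights. Note first that $\discSumG{\discFactor}{\weight}$ is real-valued: for every play $\play$, the defining series is absolutely convergent and, in fact, bounded in absolute value by $\weightMax\cdot\sum_{\indexLast=0}^{\infty}\discFactor_\star^{\indexLast} = \weightMax/(1-\discFactor_\star)$, using $0\leq \discFactor(\wordOmega)\leq\discFactor_\star <1$. Hence we only need to verify the real-valued definition of continuity from Section~\ref{section:payoffs:continuous}.

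The key observation is the following. Fix $\indexPosition\in\IN$ and consider two plays $\play = \mdpState_0\mdpAction_0\mdpState_1\ldots$ and $\play' = \mdpState'_0\mdpAction'_0\mdpState'_1\ldots$ such that $\playPrefix{\play}{\indexPosition}=\playPrefix{\play'}{\indexPosition}$, i.e., the first $\indexPosition+1$ states and first $\indexPosition$ actions coincide. For every index $\indexLast\leq \indexPosition-1$, the corresponding summand only depends on the history $\mdpState_0\mdpAction_0\ldots\mdpState_{\indexLast}\mdpAction_{\indexLast}$, which has $\indexLast+1\leq\indexPosition$ actions and is therefore part of the common prefix; hence these $\indexPosition$ summands agree between $\play$ and $\play'$. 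The remaining tails are bounded, for each of the two plays, by
\[
  \sum_{\indexLast=\indexPosition}^{\infty}\discFactor_\star^{\indexLast}\cdot\weightMax = \discFactor_\star^{\indexPosition}\cdot\frac{\weightMax}{1-\discFactor_\star}.
\]
By the triangle inequality, $|\discSumG{\discFactor}{\weight}(\play) - \discSumG{\discFactor}{\weight}(\play')|\leq 2\discFactor_\star^{\indexPosition}\weightMax/(1-\discFactor_\star)$.

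To conclude, let $\varepsilon > 0$. Since $\discFactor_\star\in\coInt{0}{1}$, there exists $\indexPosition\in\IN$ such that $2\discFactor_\star^{\indexPosition}\weightMax/(1-\discFactor_\star) < \varepsilon$. By the observation above, this $\indexPosition$ witnesses uniform continuity: for all plays $\play, \play'$ with $\playPrefix{\play}{\indexPosition}=\playPrefix{\play'}{\indexPosition}$, we have $|\discSumG{\discFactor}{\weight}(\play) - \discSumG{\discFactor}{\weight}(\play')| < \varepsilon$. In particular, fixing $\play$, this gives continuity at $\play$, and since $\play$ is arbitrary, $\discSumG{\discFactor}{\weight}$ is continuous (and even uniformly continuous). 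The proof is essentially routine; no genuine obstacle arises, because the geometric tail estimate both guarantees that the payoff is well-defined and provides the required modulus of continuity simultaneously.
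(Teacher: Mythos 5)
Your proof is correct and follows essentially the same route as the paper's: both arguments observe that the summands with index $\indexLast\leq\indexPosition-1$ are determined by the common prefix and bound the two tails by the geometric series $\sum_{\indexLast\geq\indexPosition}\discFactor_\star^{\indexLast}\weightMax$, yielding the bound $2\weightMax\discFactor_\star^{\indexPosition}/(1-\discFactor_\star)<\varepsilon$. Your explicit remark that the choice of $\indexPosition$ is independent of the play, giving uniform continuity, is a harmless strengthening not stated in the paper's proof.
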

\begin{proof}
  Let $\play = \mdpState_0\mdpAction_1\mdpState_1\ldots\in\playSet{\pomdp}$ and $\varepsilon > 0$.
  Let $\indexPosition\in\IN$ such that $\frac{2\cdot\weightMax\cdot\discFactor_\star^\indexPosition}{1-\discFactor_\star} < \varepsilon$ (whose existence is guaranteed by $\discFactor_\star\in\coInt{0}{1}$).
  Let $\play' = \mdpStateB_0\mdpActionB_0\mdpStateB_0\ldots\in\cyl{\playPrefix{\play}{\indexPosition}}$.
  For all $\indexLast\in\IN$, let $\wordOmega_\indexLast = \mdpState_0\mdpAction_0\ldots\mdpState_{\indexLast}\mdpAction_{\indexLast}$ and $\wordOmega_\indexLast' = \mdpStateB_0\mdpActionB_0\ldots\mdpStateB_{\indexLast}\mdpActionB_{\indexLast}$.
  By definition of $\discSumG{\discFactor}{\weight}$, we obtain that
  \begin{align*}
    \big|\discSumG{\discFactor}{\weight}(\play)
    & - \discSumG{\discFactor}{\weight}(\play')\big| \\
    & =
      \left| \sum_{\indexLast=\indexPosition}^\infty
      \left(\left(
      \prod_{\indexSequence=0}^{\indexLast-1}
      \discFactor(\wordOmega_{\indexSequence})
      \right)
      \weight(\wordOmega_{\indexLast}) -
      \left(
      \prod_{\indexSequence=0}^{\indexLast-1}
      \discFactor(\wordOmega'_{\indexSequence})
      \right)
      \weight(\wordOmega'_\indexLast)
      \right)\right| \\
    & \leq
      2 \cdot \sum_{\indexLast=\indexPosition}^\infty\discFactor_\star^\indexLast\cdot\weightMax \\
    & = \frac{2\cdot\weightMax\cdot\discFactor_\star^\indexPosition}{1-\discFactor_\star} \\
    & < \varepsilon.
  \end{align*}
  We have shown that $\discSumG{\discFactor}{\weight}$ is continuous in $\play$.
\end{proof}
\subsection{Shortest-path payoff}

We consider a weight function $\weight\colon\mdpStateSpace\times\mdpActionSpace\to\IR$ and a target $\target\subseteq\mdpStateSpace$.
The shortest-path payoff $\spath{\target}{\weight}$ is continuous over $\reach{\target}$, since the payoff of a play depends only on its prefix prior to the first visit to $\target$.
However, without imposing any conditions on $\weight$, the shortest-path payoff $\spath{\target}{\weight}$ is not necessarily continuous everywhere.

\begin{example}\label{example:continuity:shortest path:ce}
  Consider a two-state MDP $\mdp = (\{\mdpState, \mdpStateB\}, \{\mdpAction\}, \mdpTrans)$ such that $\mdpTrans(\mdpState, \mdpAction)$ is the uniform distribution on $\{\mdpState, \mdpStateB\}$ and $\mdpTrans(\mdpStateB, \mdpAction)(\mdpStateB) = 1$.
  Let $\weight$ be a the constant zero weight function and consider the target $\target = \{\mdpStateB\}$.
  The payoff $\spath{\target}{\weight}$ is not continuous at $\play = (\mdpState\mdpAction)^\omega$.
  Indeed, for all $\indexPosition\in\IN$, the play $\play' = (\mdpState\mdpAction)^{\indexPosition+1}(\mdpStateB\mdpAction)^\omega$ is such that $\playPrefix{\play}{\indexPosition} = \playPrefix{\play'}{\indexPosition}$ and $\spath{\target}{\weight}(\play') = 0$.
  Therefore, for $\spath{\target}{\weight}$ to be continuous,  $0$ would have to be in all neighbourhoods of $\spath{\target}{\weight}(\play) = +\infty$, which is not the case.

  The same example can be used to show that $\indic{\reach{\target}}$ is not continuous in general.
  \hfill$\lhd$
\end{example}

The problem occurring in the previous example is that there is a play with an infinite payoff that can be approached (in the sense of convergence) by plays that have a bounded payoff.
A sufficient condition to avoid this phenomenon is to require that all weights are positive.
In that case, the payoff of a play is no less than the smallest weight multiplied by the number of actions occurring before the first visit to $\target$.
We show that this condition implies the continuity of $\spath{\target}{\weight}$.

\begin{lemma}
  Let $\weight\colon\mdpStateSpace\times\mdpActionSpace\to\IR$ be a weight function and $\target\subseteq\mdpStateSpace$ be a target.
  Assume that there exists $\eta > 0$ such that $\weight(\mdpState, \mdpAction)\geq \eta$ for all $\mdpState\in\mdpStateSpace$ and all $\mdpAction\in\mdpActionSpace$.
  Then $\spath{\target}{\weight}$ is continuous.
\end{lemma}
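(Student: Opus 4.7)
The plan is to verify continuity of $\spath{\target}{\weight}$ at each play $\play \in \playSet{\pomdp}$ by splitting on whether $\play$ reaches the target, and to exploit the lower bound $\weight \geq \eta$ only in the case where the payoff is infinite.

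First, suppose $\play = \mdpState_0 \mdpAction_0 \mdpState_1 \mdpAction_1 \ldots \in \reach{\target}$, and let $\indexLast = \min\{\indexPosition \in \IN \mid \mdpState_\indexPosition \in \target\}$, so that $\spath{\target}{\weight}(\play) = \sum_{\indexPosition = 0}^{\indexLast-1} \weight(\mdpState_\indexPosition, \mdpAction_\indexPosition) \in \IR$. I would take $\indexPosition = \indexLast$ in the definition of continuity at $\play$: any $\play' \in \cyl{\playPrefix{\play}{\indexLast}}$ shares its first $\indexLast + 1$ states and first $\indexLast$ actions with $\play$, hence its initial states $\mdpState_0, \ldots, \mdpState_{\indexLast - 1}$ avoid $\target$ while $\mdpState_\indexLast \in \target$. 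Consequently, $\play'$ reaches $\target$ with first visit exactly at position $\indexLast$, and the shared prefix forces $\spath{\target}{\weight}(\play') = \spath{\target}{\weight}(\play)$. The payoff is therefore \emph{locally constant} on $\cyl{\playPrefix{\play}{\indexLast}}$, so the continuity condition holds for every $\varepsilon > 0$.

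Second, suppose $\play \notin \reach{\target}$, so that $\spath{\target}{\weight}(\play) = +\infty$. Here the bound $\weight \geq \eta$ is the crux: given an arbitrary $M \in \IR$, I would choose any $\indexPosition \in \IN$ with $\indexPosition \cdot \eta \geq M$. Since no state of $\play$ lies in $\target$, any $\play' = \mdpStateB_0 \mdpActionB_0 \mdpStateB_1 \ldots \in \cyl{\playPrefix{\play}{\indexPosition}}$ satisfies $\mdpStateB_0, \ldots, \mdpStateB_\indexPosition \notin \target$; hence either $\play' \notin \reach{\target}$ (in which case $\spath{\target}{\weight}(\play') = +\infty \geq M$) or the first visit of $\play'$ to $\target$ occurs at some index $\indexLast' \geq \indexPosition + 1$, whence
\[
  \spath{\target}{\weight}(\play') = \sum_{\indexSequence = 0}^{\indexLast' - 1} \weight(\mdpStateB_\indexSequence, \mdpActionB_\indexSequence) \geq \indexLast' \cdot \eta \geq \indexPosition \cdot \eta \geq M.
\]
Thus the condition for continuity at $\play$ when $\spath{\target}{\weight}(\play) = +\infty$ is satisfied. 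This exhausts both cases.

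The only potentially delicate step is case two, and even there the obstacle is just ensuring we correctly separate the bookkeeping between states, actions, and the index of the cylinder prefix; because $\weight$ is uniformly bounded away from $0$ from below, the positivity argument converts the number of elapsed steps into a uniform lower bound on the payoff, which is exactly what is needed to witness convergence to $+\infty$. No reasoning about continuity beyond the definitions recalled in Section~\ref{section:payoffs:continuous} is required.
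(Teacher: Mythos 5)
Your proposal is correct and follows essentially the same two-case argument as the paper: local constancy of the payoff on the cylinder of the prefix up to the first target visit when the payoff is finite, and the lower bound $\indexPosition\cdot\eta$ on the accumulated weight when the payoff is infinite. No gaps.
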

\begin{proof}
  Let $\play = \mdpState_0\mdpAction_0\mdpState_1\ldots\in\playSet{\pomdp}$.
  First, assume that $\spath{\target}{\weight}(\play)\in\IR$, i.e., $\play\in\reach{\target}$.
  Let $\indexLast\in\IN$ such that $\mdpState_\indexLast\in\target$.
  By definition of $\spath{\target}{\weight}$, for all $\play'\in\cyl{\playPrefix{\play}{\indexLast}}$, we have $\spath{\target}{\weight}(\play) = \spath{\target}{\weight}(\play')$.
  This implies that $\spath{\target}{\weight}$ is continuous in $\play$.

  Now, assume that $\spath{\target}{\weight}(\play) = +\infty$.
  Let $M\in\IR$.
  Let $\indexLast\in\IN$ such that $M\leq\indexLast\cdot\eta$.
  We claim that for all $\play'\in\cyl{\playPrefix{\play}{\indexLast}}$, $\spath{\target}{\weight}(\play')\geq M$.
  Let $\play'\in\cyl{\playPrefix{\play}{\indexLast}}$.
  If $\play'\notin\reach{\target}$, the sought inequality is direct.
  We now assume that $\play'\in\reach{\target}$.
  Because no state of $\target$ occurs in $\playPrefix{\play}{\indexLast}$ and all weights are non-negative, we obtain that
  \[\spath{\target}{\weight}(\play')\geq \sum_{\indexPosition=0}^\indexLast\weight(\mdpState_\indexPosition,\mdpAction_\indexPosition) \geq \indexLast\cdot \eta \geq M.\]
  We have shown that $\spath{\target}{\weight}$ is continuous.
\end{proof}

\subsection{Objectives and indicators}
We now assume that $\pomdp$ is finite and characterise objectives that have a continuous indicator function.
Let $\objective$ be an objective.
The co-domain of an indicator function is $\{0, 1\}$.
This implies that $\indic{\objective}$ is continuous if and only if, for all plays $\play$, there exist $\indexPosition\in\IN$ such that, for all plays $\play'\in\cyl{\playPrefix{\play}{\indexPosition}}$,  we have $\play'\in\objective$ if and only if $\play\in\objective$.
Furthermore it follows from uniform continuity that $\indexPosition$ can be chosen independently of the play in the previous statement.
Therefore, $\indic{\objective}$ is continuous if and only if membership in $\objective$ depends only on a bounded prefix of plays.
We show below that this is also equivalent to $\objective$ being both open and closed.

\begin{lemma}\label{lem:continuous:indicator characterisation}
  Assume that $\pomdp$ is finite.
  Let $\objective$ be an objective.
  The three following statements are equivalent:
  \begin{enumerate}
  \item $\indic{\objective}$ is continuous; \label{item:continuous:indicator characterisation:1}
  \item there exists $\indexPosition\in\IN$ such that for all $\play\in\playSet{\pomdp}$, $\cyl{\playPrefix{\play}{\indexPosition}}$ is either included in or disjoint from $\objective$;\label{item:continuous:indicator characterisation:2}
  \item $\objective$ is open and closed, i.e., there exist finitely many histories $\hist_1, \ldots, \hist_\indexSequence$ such that $\objective=\bigcup_{\indexSequenceB=1}^\indexSequence\cyl{\hist_\indexSequenceB}$. \label{item:continuous:indicator characterisation:3}
  \end{enumerate}
\end{lemma}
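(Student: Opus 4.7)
The plan is to establish the equivalences via a short cycle: \ref{item:continuous:indicator characterisation:2}~$\Rightarrow$~\ref{item:continuous:indicator characterisation:1}~$\Rightarrow$~\ref{item:continuous:indicator characterisation:2} and \ref{item:continuous:indicator characterisation:2}~$\Leftrightarrow$~\ref{item:continuous:indicator characterisation:3}, taking advantage of the fact that cylinder sets form a basis of clopen sets of $\playSet{\pomdp}$ and that $\playSet{\pomdp}$ is compact (Lemma~\ref{lemma:play set closed}), as $\pomdp$ is finite.

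For \ref{item:continuous:indicator characterisation:2}~$\Rightarrow$~\ref{item:continuous:indicator characterisation:1}, I would argue directly: if $\indexPosition$ is as in~\ref{item:continuous:indicator characterisation:2}, then for every play $\play$, $\indic{\objective}$ is constant on the open neighbourhood $\cyl{\playPrefix{\play}{\indexPosition}}$ of $\play$, which trivially implies continuity at $\play$. For \ref{item:continuous:indicator characterisation:1}~$\Rightarrow$~\ref{item:continuous:indicator characterisation:2}, since $\playSet{\pomdp}$ is compact and metrisable, continuous real-valued functions on it are uniformly continuous. Applying uniform continuity of $\indic{\objective}$ with $\varepsilon < 1$ yields $\indexPosition\in\IN$ such that $\playPrefix{\play}{\indexPosition} = \playPrefix{\play'}{\indexPosition}$ implies $|\indic{\objective}(\play) - \indic{\objective}(\play')| < 1$; since $\indic{\objective}$ is $\{0, 1\}$-valued, this forces $\indic{\objective}(\play) = \indic{\objective}(\play')$, which is exactly property~\ref{item:continuous:indicator characterisation:2}.

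For \ref{item:continuous:indicator characterisation:2}~$\Rightarrow$~\ref{item:continuous:indicator characterisation:3}, let $\histPart$ be the set of histories $\hist$ of length $\indexPosition$ such that $\cyl{\hist}\subseteq\objective$. Since $\pomdp$ is finite, there are finitely many histories of length $\indexPosition$, so $\histPart$ is finite. By~\ref{item:continuous:indicator characterisation:2}, every play lies either in $\bigcup_{\hist\in\histPart}\cyl{\hist}$ or in its complement, yielding $\objective = \bigcup_{\hist\in\histPart}\cyl{\hist}$. For \ref{item:continuous:indicator characterisation:3}~$\Rightarrow$~\ref{item:continuous:indicator characterisation:2}, suppose $\objective = \bigcup_{\indexSequenceB=1}^\indexSequence\cyl{\hist_\indexSequenceB}$ and set $\indexPosition = \max_{1\leq\indexSequenceB\leq\indexSequence}|\hist_\indexSequenceB|$. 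For any play $\play$, let $\hist = \playPrefix{\play}{\indexPosition}$. For each $\indexSequenceB$, since $|\hist_\indexSequenceB|\leq|\hist|$, we have $\cyl{\hist}\cap\cyl{\hist_\indexSequenceB}\neq\emptyset$ iff $\hist_\indexSequenceB$ is a prefix of $\hist$, in which case $\cyl{\hist}\subseteq\cyl{\hist_\indexSequenceB}\subseteq\objective$. Thus $\cyl{\hist}$ is either contained in or disjoint from $\objective$, as required.

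What remains is to justify that the two formulations of~\ref{item:continuous:indicator characterisation:3}---being open and closed versus being a finite union of cylinders---coincide. A finite union of cylinders is clearly open, and is closed because the complement of any single cylinder is a finite union of cylinders of the same length (the cylinders of the other histories of that length), hence open. Conversely, a clopen subset of $\playSet{\pomdp}$ is open and therefore a union of cylinders (which form a basis of the topology); since it is also closed in the compact space $\playSet{\pomdp}$, it is itself compact, and finitely many of these cylinders suffice to cover it. The main obstacle, if any, is choosing the right cycle of implications: the above routing avoids having to directly analyse continuity of $\indic{\objective}$ at plays where the set structure is delicate, and lets all the topological work be carried by uniform continuity and compactness.
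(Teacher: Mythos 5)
Your proposal is correct and takes essentially the same route as the paper's proof: the equivalence of \ref{item:continuous:indicator characterisation:1} and \ref{item:continuous:indicator characterisation:2} via uniform continuity of $\{0,1\}$-valued continuous functions on the compact space $\playSet{\pomdp}$, the equivalence of \ref{item:continuous:indicator characterisation:2} and \ref{item:continuous:indicator characterisation:3} via finiteness of histories of a fixed length, and compactness to extract a finite union of cylinders from a clopen set. The only cosmetic difference is that you handle \ref{item:continuous:indicator characterisation:2}~$\Rightarrow$~\ref{item:continuous:indicator characterisation:1} directly (local constancy) and show cylinders are closed via the complementary cylinders of the same length, whereas the paper argues both directions of \ref{item:continuous:indicator characterisation:1}~$\Leftrightarrow$~\ref{item:continuous:indicator characterisation:2} through uniform continuity and closes cylinders by a closure-point argument.
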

\begin{proof}
  We show that Items~\ref{item:continuous:indicator characterisation:1} and~\ref{item:continuous:indicator characterisation:2} are equivalent, then that Items~\ref{item:continuous:indicator characterisation:2} and~\ref{item:continuous:indicator characterisation:3} are equivalent.
  In the interest of this proof being self-contained, we also show that the two properties in Item~\ref{item:continuous:indicator characterisation:3} are equivalent at the end of the proof.
  
  We first show that Items~\ref{item:continuous:indicator characterisation:1} and~\ref{item:continuous:indicator characterisation:2} are equivalent.
  Since $\indic{\objective}$ is real-valued, it is continuous if and only if it is uniformly continuous, i.e., for all $\varepsilon > 0$ there exists $\indexPosition\in\IN$ such that for all plays $\play, \play'\in\playSet{\pomdp}$, if $\playPrefix{\play}{\indexPosition}=\playPrefix{\play'}{\indexPosition}$, then $|\indic{\objective}(\play) - \indic{\objective}(\play')| < \varepsilon$.
  It follows from the co-domain of $\indic{\objective}$ being $\{0, 1\}$ that $\indic{\objective}$ is continuous if and only if there exists $\indexPosition\in\IN$ such that for all plays $\play, \play'\in\playSet{\pomdp}$, if $\play'\in\cyl{\playPrefix{\play}{\indexPosition}}$, then $\indic{\objective}(\play) = \indic{\objective}(\play')$ (the non-trivial direction follows by choosing $\varepsilon = \frac{1}{2}$), i.e., $\cyl{\playPrefix{\play}{\indexPosition}}$ is included in or disjoint from $\objective$.
  This establishes the equivalence of Items~\ref{item:continuous:indicator characterisation:1} and~\ref{item:continuous:indicator characterisation:2}.

  Next, we establish that Items~\ref{item:continuous:indicator characterisation:2} and~\ref{item:continuous:indicator characterisation:3} are equivalent.
  First, assume that Item~\ref{item:continuous:indicator characterisation:2} holds and let $\indexPosition\in\IN$ be given by this property.
  We obtain that $\objective = \bigcup_{\play\in\objective}\cyl{\playPrefix{\play}{\indexPosition}}$.
  There are finitely many histories of the form $\playPrefix{\play}{\indexPosition}$ ($\play\in\playSet{\pomdp}$) because $\mdpStateSpace$ and $\mdpActionSpace$ are finite.
  This shows that Item~\ref{item:continuous:indicator characterisation:3} holds.
  Conversely, assume that Item~\ref{item:continuous:indicator characterisation:3} holds and let $\hist_1$, \ldots, $\hist_\indexSequence$ such that $\objective = \bigcup_{\indexSequenceB=1}^\indexSequence\cyl{\hist_\indexSequenceB}$.
  Item~\ref{item:continuous:indicator characterisation:2} follows by letting $\indexPosition$ be the greatest number of states in the histories $\hist_1$, \ldots, $\hist_\indexSequence$ or zero if there are no histories.

  For the sake of completeness, we close the proof by showing that the two properties given in Item~\ref{item:continuous:indicator characterisation:3} are equivalent.
  First, assume that $\objective$ is open and closed.
  A base of the topology of $\playSet{\pomdp}$ is the set of cylinder sets, therefore $\objective$ is a union of cylinder sets.
  Since $\playSet{\pomdp}$ is compact and $\objective$ is closed, it follows that $\objective$ is compact.
  We conclude that $\objective$ can be written as a finite union of cylinder sets by compactness.

  Conversely, assume that $\objective = \bigcup_{\indexSequenceB=1}^\indexSequence\cyl{\hist_\indexSequenceB}$ for some histories $\hist_1$, \ldots, $\hist_\indexSequence$.
  It suffices to show that $\cyl{\hist}$ is open and closed for all $\hist\in\histSet{\pomdp}$.
  Let $\hist = \mdpState_0\mdpAction_0\ldots\mdpAction_{\indexLast-1}\mdpState_\indexLast\in\histSet{\pomdp}$.
  By definition of the product topology, $\cyl{\hist}$ is open.
  To show that $\cyl{\hist}$ is closed, we consider $\play\in\closure{\cyl{\hist}}$ and show that $\play\in\cyl{\hist}$.
  By definition of closure, the cylinder $\cyl{\playPrefix{\play}{\indexLast}}$ intersects $\cyl{\hist}$, i.e., there exists a play with the prefixes $\playPrefix{\play}{\indexLast}$ and $\hist$.
  It follows that $\playPrefix{\play}{\indexLast} = \hist$, and thus $\play\in\cyl{\hist}$.
  We have shown that $\cyl{\hist}$ is both open and closed, ending the proof.
\end{proof}

\subsection{Prefix-independent payoffs}

Prefix-independent functions assign the same payoff to any two plays that share a common suffix.
Formally, a payoff $\payoff\colon\playSet{\pomdp}\to\IRbar$ is \textit{prefix-independent} if for any play $\play\in\playSet{\pomdp}$ and any $\indexLast\in\IN$, $\payoff(\play) = \payoff(\playSuffix{\play}{\indexLast})$.
We open the section by introducing some useful notation.
Given a history $\hist = \mdpState_0\mdpAction_0\mdpState_1\ldots\mdpAction_{\indexPosition-1}\mdpState_\indexPosition$ and a play $\play = \mdpState_\indexPosition\mdpAction\indexPosition\mdpState_{\indexPosition+1}$ such that the last state of $\hist$ and the first state of $\play$ coincide, we let $\histConcat{\hist}{\play} = \mdpState_0\mdpAction_0\mdpState_1\ldots\mdpAction_{\indexPosition-1}\mdpState_\indexPosition\mdpAction_\indexPosition\mdpState_{\indexPosition+1}\ldots$ denote the play obtained by concatenating $\hist$ and $\play$ without a repetition of $\mdpState_\indexPosition$.
For any play $\play = \mdpState_0\mdpAction_0\mdpState_1\ldots\in\playSet{\pomdp}$, we let $\infSet{\play} = \{\mdpState\in\mdpStateSpace\mid \forall \indexLast\in\IN,\,\exists\,\indexPosition\geq\indexLast,\,\mdpState_\indexPosition=\mdpState\}$ denote the set of states that occur infinitely often in $\play$.

Let $\payoff\colon\playSet{\pomdp}\to\IRbar$ be a prefix-independent payoff.
Assume that $\pomdp$ is finite and that $\payoff$ is continuous.
We claim that the payoff of a play $\play$ is uniquely determined by $\infSet{\play}$.
Let $\play$ and $\play'$ be two plays such that $\infSet{\play} = \infSet{\play'}$.
By continuity of $\payoff$, to prove that $\payoff(\play) = \payoff(\play')$, it suffices to show that in all neighbourhoods of $\play$, there is a play whose payoff is $\payoff(\play')$.
By prefix-independence of $\payoff$, we need only establish that in all neighbourhoods of $\play$, there is a play that shares a common suffix with $\play'$.
Let $\indexPosition\in\IN$ such that $\last{\playPrefix{\play}{\indexPosition}}\in\infSet{\play}$.
We construct a play $\play''$ in $\cyl{\playPrefix{\play}{\indexPosition}}$ such that $\payoff(\play'') = \payoff(\play)$ as follows.
Since $\infSet{\play} = \infSet{\play'}$, there exists $\indexLast\in\IN$ such that the first state of $\playSuffix{\play'}{\indexLast}$ is $\last{\playPrefix{\play}{\indexPosition}}$.
We thus define $\play'' = \histConcat{\playPrefix{\play}{\indexPosition}}{\playSuffix{\play'}{\indexLast}}$.
By prefix-independence of $\payoff$, we obtain that $\payoff(\play'') = \payoff(\playSuffix{\play'}{\indexLast}) = \payoff(\play')$.
This shows that all neighbourhoods of $\play$ contain a play whose payoff is $\payoff(\play')$ and this closes our argument.

This argument above can be adapted to provide a more general property of continuous prefix-independent payoffs: for any two plays $\play$ and $\play'$, if $\infSet{\play}$ is reachable from $\infSet{\play'}$, then both plays have the same payoff.
We show that this property characterises continuous prefix-independent payoffs.

To formalise our characterisation, we introduce some additional terminology and notation.
First, we define Büchi objectives.
Given a set $\target\subseteq\mdpStateSpace$ of target states, the \textit{Büchi objective} for $\target$ requires that states of $\target$ be visited infinitely often.
Formally, given $\target\subseteq\mdpStateSpace$, we define $\buchi{\target} = \{\play\in\playSet{\pomdp}\mid \infSet{\play}\cap\target\neq\emptyset\}$.
Second, we consider the strongly connected components (SCCs) of (the graph induced by) $\mdp$.
An SCC is a maximal set of states $\scc\subseteq\mdpStateSpace$ such that, for all $\mdpState$, $\mdpStateB\in\scc$ there exists a history with at least one action starting in $\mdpState$ and ending in $\mdpStateB$.
An SCC $\scc$ is reachable from an SCC $\scc'$ if there exists a history starting in $\scc'$ and ending in $\scc$.

We obtain the following characterisation.
For the sake of conciseness, we apply the convention $0\cdot (+\infty) = 0\cdot (-\infty) = 0$ below.
\begin{lemma}\label{lem:continuous:prefix-independence}
  Assume that $\pomdp$ is finite.
  Let $\payoff$ be a prefix-independent payoff function.
  Let $\scc_1$, \ldots, $\scc_\numEC$ be the SCCs of $\mdp$.
  Then $\payoff$ is continuous if and only if there exist constants $\scalar_1, \ldots, \scalar_\numEC\in\IRbar$ such that $\payoff = \sum_{\indexEC=1}^\numEC \scalar_\indexEC\cdot\indic{\buchi{\scc_\indexEC}}$ and, for all $1\leq \indexEC, \indexEC'\leq\numEC$, $\scalar_\indexEC = \scalar_{\indexEC'}$ whenever $\scc_{\indexEC'}$ is reachable from $\scc_{\indexEC}$.
\end{lemma}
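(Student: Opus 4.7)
My first task is to record a standard finite-graph fact that both directions rely on: for every play $\play$, $\infSet{\play}$ is contained in a single SCC of $\pomdp$, and once $\play$ enters that SCC it cannot leave. Indeed, any two states of $\infSet{\play}$ reach each other along $\play$, so they lie in a common SCC; and any excursion $\mdpState \to \mdpStateC \to \mdpState$ with $\mdpState \in \scc_\indexEC$ and $\mdpStateC$ outside would contradict maximality. Thus every $\play$ determines a unique SCC $\scc_{\indexEC(\play)} \supseteq \infSet{\play}$, and $\indic{\buchi{\scc_\indexEC}}(\play) = 1$ iff $\indexEC = \indexEC(\play)$; with the convention $0\cdot(\pm\infty)=0$, the candidate sum evaluates unambiguously to $\scalar_{\indexEC(\play)}$.

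\textbf{Sufficiency.} The easy direction I would dispatch first. Assuming $\payoff = \sum_{\indexEC} \scalar_\indexEC \indic{\buchi{\scc_\indexEC}}$ with $\scalar_\indexEC = \scalar_{\indexEC'}$ whenever $\scc_{\indexEC'}$ is reachable from $\scc_\indexEC$, fix $\play$ and write $\indexEC = \indexEC(\play)$. By the absorption observation, there exists $\indexPosition$ with $\last{\playPrefix{\play}{\indexPosition}} \in \scc_\indexEC$. Any $\play' \in \cyl{\playPrefix{\play}{\indexPosition}}$ has its suffix starting inside $\scc_\indexEC$, so $\scc_{\indexEC(\play')}$ is reachable from $\scc_\indexEC$, which forces $\payoff(\play') = \scalar_{\indexEC(\play')} = \scalar_\indexEC = \payoff(\play)$. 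Since $\payoff$ is locally constant at $\play$, continuity holds whether $\payoff(\play)$ is finite or infinite.

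\textbf{Necessity.} For the harder direction I would prove a single key lemma: whenever $\scc_{\indexEC'}$ is reachable from $\scc_\indexEC$ and $\play, \play'$ satisfy $\indexEC(\play) = \indexEC$, $\indexEC(\play') = \indexEC'$, then $\payoff(\play) = \payoff(\play')$. Specialising to $\indexEC = \indexEC'$ yields that $\payoff$ is constant on each stratum $\{\play : \indexEC(\play) = \indexEC\}$, producing the scalars $\scalar_\indexEC$; the general lemma then delivers the reachability constraint. To prove the lemma, fix an arbitrary open neighbourhood $\nhood$ of $\payoff(\play)$ in $\IRbar$. By continuity of $\payoff$ at $\play$, choose $\indexPosition$ with $\payoff(\cyl{\playPrefix{\play}{\indexPosition}}) \subseteq \nhood$ and $\last{\playPrefix{\play}{\indexPosition}} \in \scc_\indexEC$. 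Pick $\mdpStateB \in \infSet{\play'} \subseteq \scc_{\indexEC'}$ and build a history $\hist$ from $\last{\playPrefix{\play}{\indexPosition}}$ to $\mdpStateB$ by first walking inside $\scc_\indexEC$ to the source of a fixed $\scc_\indexEC \to \scc_{\indexEC'}$ connecting history, then walking inside $\scc_{\indexEC'}$ to $\mdpStateB$. Taking any index $\indexLast$ with the $\indexLast$-th state of $\play'$ equal to $\mdpStateB$, the concatenation $\play'' = \histConcat{\histConcat{\playPrefix{\play}{\indexPosition}}{\hist}}{\playSuffix{\play'}{\indexLast}}$ lies in $\cyl{\playPrefix{\play}{\indexPosition}}$, and prefix-independence gives $\payoff(\play'') = \payoff(\playSuffix{\play'}{\indexLast}) = \payoff(\play')$. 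Consequently $\payoff(\play') \in \nhood$ for every neighbourhood of $\payoff(\play)$, so $\payoff(\play') = \payoff(\play)$ by the Hausdorff property of $\IRbar$.

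\textbf{Main obstacle.} The hard part is engineering the interpolating play $\play''$ so that it simultaneously (i) shares a prefix with $\play$ of the length $\indexPosition$ prescribed by continuity, which may be arbitrarily large, and (ii) ends up as a suffix of $\play'$. Absorption and strong connectivity of each SCC provide the necessary connecting histories, and the continuity hypothesis is invoked exactly once, to convert cylinder-set closeness into equality of payoffs in $\IRbar$. I do not foresee difficulty with $\pm\infty$ values since the argument is phrased in terms of arbitrary $\IRbar$-neighbourhoods rather than $\varepsilon$-balls.
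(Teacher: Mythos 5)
Your proposal is correct and follows essentially the same route as the paper's proof: both directions rest on the observation that each play stabilises in a unique SCC, the sufficiency direction shows $\payoff$ is locally constant on a cylinder once the play has entered its SCC, and the necessity direction splices a prefix of the source-SCC play onto a connecting history and a suffix of the target-SCC play, then invokes continuity at the source-SCC play to force the payoffs to agree in every $\IRbar$-neighbourhood. The only cosmetic difference is which of the two plays you name $\play$ versus $\play'$; the interpolation and the single use of continuity are identical to the paper's argument.
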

\begin{proof}
  We first observe that the set of states visited infinitely often along a play is a subset of an SCC.
  In other words, for all plays $\play\in\playSet{\pomdp}$, there exists a unique SCC $\scc$ such that $\play\in\buchi{\scc}$; we say that $\play$ stabilises in $\scc$ for short.
  
  We first assume that $\payoff$ is continuous.
  Let $\play$, $\play'\in\playSet{\pomdp}$.
  We show that if $\play$ stabilises in an SCC $\scc$ and $\play'$ stabilises in an SCC $\scc'$ such that $\scc$ is reachable from $\scc'$, then $\payoff(\play) = \payoff(\play')$.
  Let $\indexLast\in\IN$ such that $\playSuffix{\play}{\indexLast}$ starts in a state of $\scc$.
  For all $\indexPosition\in\IN$, since $\scc$ is reachable from $\scc'$, there exists a play $\play^{(\indexPosition)}\in\cyl{\playPrefix{\play'}{\indexPosition}}$ such that $\playSuffix{\play}{\indexLast}$ is a suffix of $\play^{(\indexPosition)}$.
  By prefix-independence of $\payoff$, for all $\indexPosition\in\IN$, we have $\payoff(\play^{(\indexPosition)}) = \payoff(\play)$.
  Furthermore, all neighbourhoods of $\play'$ contain at least one play $\play^{(\indexPosition)}$ by construction.
  It follows from the continuity of $\payoff$ that $\payoff(\play)$ is in all neighbourhoods of $\payoff(\play')$, i.e., $\payoff(\play) = \payoff(\play')$.

  The previous argument implies that the payoff of a play depends only on the SCC in which the play stabilises.
  For all $1\leq\indexEC\leq\numEC$, let $\scalar_\indexEC$ be the payoff of any play that stabilises in $\scc_\indexEC$.
  From the above, we obtain that $\payoff = \sum_{\indexEC=1}^\numEC \scalar_\indexEC\cdot\indic{\buchi{\scc_\indexEC}}$ and that the coefficients $\scalar_\indexEC$ satisfy the required conditions.
  This ends the proof of the first implication.

  We now let $\scalar_1, \ldots, \scalar_\numEC\in\IRbar$ such that $\payoff = \sum_{\indexEC=1}^\numEC \scalar_\indexEC\cdot\indic{\buchi{\scc_\indexEC}}$ and, for all $1\leq\indexEC,\indexEC'\leq\numEC$, $\scalar_{\indexEC} = \scalar_{\indexEC'}$ whenever $\scc_{\indexEC'}$ is reachable from $\scc_\indexEC$.
  We show that $\payoff$ is continuous.
  Let $\play\in\playSet{\pomdp}$.
  It suffices to show that $\payoff$ is constant over a neighbourhood of $\play$.
  Let $1\leq\indexEC\leq\numEC$ such that $\play$ stabilises in $\scc_\indexEC$ and let $\indexLast\in\IN$ such that all states of $\playSuffix{\play}{\indexLast}$ are in $\scc_\indexEC$.
  Let $\play'\in\cyl{\playSuffix{\play}{\indexLast}}$.
  Then $\play'$ remains in an SCC that is reachable from $\scc_\indexEC$, thus $\payoff(\play') = \scalar_\indexEC = \payoff(\play)$.
  We have shown that $\payoff$ is constant over $\cyl{\playSuffix{\play}{\indexLast}}$, ending the proof of the second implication.
\end{proof}

\section{Universally square integrable shortest-path payoffs}\label{appendix:square:shortest path}
In this appendix, our goal is to prove that, in finite POMDPs, all universally integrable shortest-path payoffs are universally square integrable.
In particular, this implies that Theorem~\ref{thm:continuous:convergence:square} is applicable to universally integrable continuous shortest-path payoffs.
We fix a finite POMDP $\pomdp=\pomdpTuple$ and a target $\target\subseteq\mdpStateSpace$ for this entire section.

We first introduce some notation and terminology.
For all $\indexLast\in\IN$, we let $\reachBounded{\target}{\indexLast} = \{\mdpState_0\mdpAction_0\mdpState_1\ldots\in\playSet{\pomdp}\mid\exists\,\indexPosition\leq\indexLast,\,\mdpState_\indexPosition\in\target\}$ denote the set of plays in which a target is reached in at most $\indexLast$ transitions.
We let $\reachExact{\target}{0} = \reachBounded{\target}{0}$ and, for all $\indexLast\in\IN_0$, we let $\reachExact{\target}{\indexLast} = \reachBounded{\target}{\indexLast}\setminus \reachBounded{\target}{\indexLast-1}$.

A \textit{belief support} $\belief$ is a subset of $\mdpStateSpace$ such that all states in $\belief$ share the same observation.
Intuitively, a belief support indicates the states in which we could possibly be in after a given history.
We define a \textit{belief update} function as follows.
Given a belief support $\belief$, an action $\mdpAction\in\mdpActionSpace$ and an observation $\observation\in\obsSpace$, we let $\beliefUpdate(\belief, \mdpAction, \observation)$ denote the set of states $\mdpState$ such that $\obsFun(\mdpState) = \observation$ and $\mdpState\in\supp{\mdpTrans(\mdpState', \mdpAction)}$ for some $\mdpState'\in\belief$.
We assign a belief support to each history via the function $\beliefHistory$, inductively defined by $\beliefHistory(\mdpState) = \{\mdpState\}$ for all $\mdpState\in\mdpStateSpace$ and $\beliefHistory(\hist) = \beliefUpdate(\beliefHistory(\hist'), \mdpAction, \obsFun(\mdpState))$ for all $\hist = \hist'\mdpAction\mdpState\in\histSet{\pomdp}$.
Belief supports have been used in the literature to qualitatively analyse POMDPs and stochastic games with imperfect information (e.g.,~\cite{DBLP:conf/aaai/BellyFG00V25,DBLP:journals/jacm/BertrandGG17}).

Let $\mdpState\in\mdpStateSpace$ be an initial state.
The goal of this section is to show that the three following properties are equivalent:
\begin{itemize}
\item for all $\stratBMDP\in\stratClassAll{\pomdp}$, $\proba^{\stratBMDP}_\mdpState(\reach{\target})=1$;
\item for all weight functions $\weight$ and $\stratBMDP\in\stratClassAll{\pomdp}$, $\spath{\target}{\weight}$ is $\proba_{\mdpState}^{\stratBMDP}$-integrable;
\item for all weight functions $\weight$ and $\stratBMDP\in\stratClassAll{\pomdp}$, $\spath{\target}{\weight}$ is square integrable with respect to $\proba_{\mdpState}^{\stratBMDP}$.
\end{itemize}
In particular, a shortest-path payoff is universally integrable if and only if it is universally square integrable.

On the one hand, if there exists a strategy $\stratBMDP$ with which $\target$ is not reached almost-surely from $\mdpState$, then the set of plays with an infinite payoff for $\spath{\target}{\weight}$ has non-zero $\proba_{\mdpState}^{\stratBMDP}$-probability, which implies that $\spath{\target}{\weight}$ and its square are not $\proba_{\mdpState}^{\stratBMDP}$-integrable.

To obtain the other implications, it suffices to show that if all strategies ensure that $\target$ is reached almost-surely from $\mdpState$, then $\spath{\target}{1}$ is square integrable with respect to $\proba_{\mdpState}^{\stratBMDP}$ for all $\stratBMDP\in\stratClassAll{\mdp}$.
Indeed, this result for the constant weight function $\weight=1$ implies the result for arbitrary weight functions: there are finitely many different weights as we deal with finite POMDPs, and therefore any shortest-path function $\spath{\target}{\weight}$ is bounded from above in absolute value by a multiple of $\spath{\target}{1}$.

We thus assume that for all $\stratBMDP\in\stratClassAll{\pomdp}$, $\proba^{\stratBMDP}_\mdpState(\reach{\target})=1$.
Let $\stratMDP\in\stratClassAll{\pomdp}$.
Because $\target$ is reached almost-surely from $\mdpState$ under $\stratMDP$, we can write the expectation of $(\spath{\target}{1})^2$ as follows:
\begin{equation*}
  \expectancy^{\stratMDP}_{\mdpState}((\spath{\target}{1})^2) = \sum_{\indexLast\in\IN}\indexLast^2\cdot\proba^{\stratMDP}_{\mdpState}(\reachExact{\target}{\indexLast}).
\end{equation*}
To prove the convergence of this series, we show that $\proba^{\stratMDP}_{\mdpState}(\reachExact{\target}{\indexLast})\in\bigo(\scalar^\indexLast)$ for some $\scalar\in\coInt{0}{1}$.
We observe that for all $\indexLast\in\IN_0$,
\begin{align*}
  \proba^{\stratMDP}_{\mdpState}(\reachExact{\target}{\indexLast})
  & \leq \proba^{\stratMDP}_{\mdpState}(\reach{\target})
    - \proba^{\stratMDP}_{\mdpState}(\reachBounded{\target}{\indexLast-1}) \\
  & = 1 - \proba^{\stratMDP}_{\mdpState}(\reachBounded{\target}{\indexLast-1}).
\end{align*}
It suffices therefore to provide well-chosen lower bounds on the values of (a subsequence of) the non-decreasing sequence $(\proba^{\stratMDP}_{\mdpState}(\reachBounded{\target}{\indexLast}))_{\indexLast\in\IN}$ to prove our result.

Let $\lengthBound = 2^{|\mdpStateSpace|}$ be an upper bound on the number of beliefs.
The first step in our reasoning is to show that if a target is almost-surely reached under all strategies, then for all strategies $\stratBMDP$ and states $\mdpState'$ reachable from $\mdpState$, we have
\begin{equation}\label{equation:square:k-step}
  \proba_{\mdpState'}^{\stratBMDP}(\reachBounded{\target}{\lengthBound}) \geq \eta^\lengthBound,
\end{equation}
where $\eta$ is the smallest positive transition probability in $\pomdp$.
By Lemma~\ref{lem:expectancy:pure integral}, it suffices to show this property for pure strategies to obtain it for all strategies.
If $\stratBMDP$ is pure, Equation~\eqref{equation:square:k-step} holds if and only if there exists a history from $\mdpState'$ to $\target$ with no more than $\lengthBound$ transitions that is consistent with $\stratBMDP$.

To show the above claim, we prove the following: for a given initial state $\mdpState$, if there exists a strategy $\stratBMDP$ such that $\proba_{\mdpState}^{\stratBMDP}(\reachBounded{\target}{\lengthBound}) = 0$ (i.e., such that all histories of length at most $\lengthBound$ that start in $\mdpState$ and are consistent with $\stratBMDP$ do not visit $\target$), then we can construct a strategy $\stratMDP$ that almost-surely avoids $\target$ from $\mdpState$.
We construct a \textit{belief-based} strategy.
Intuitively, the constructed strategy $\stratMDP$ makes decisions based on the belief support of the current history and mimics a decision of $\stratBMDP$ for one of the shortest histories consistent with $\stratBMDP$ to which this belief support is assigned by $\beliefHistory$.
This strategy $\stratMDP$ avoids $\target$ forever when starting from $\mdpState$: if it were to visit a belief intersecting $\target$, then $\stratBMDP$ would be able to reach $\target$ in no more than $\lengthBound$ steps.
We formalise the above ideas in the proof of the following lemma.

\begin{lemma}\label{lemma:square-integ:reach-length}
  Let $\mdpState\in\mdpStateSpace$ and $\lengthBound = 2^{|\mdpStateSpace|}$.
  Assume that there exists $\stratBMDP\in\stratClassAll{\pomdp}$ such that $\proba_{\mdpState}^{\stratBMDP}(\reachBounded{\target}{\lengthBound}) = 0$.
  Then there exists a pure strategy $\stratMDP$ such that $\proba_{\mdpState}^{\stratMDP}(\reach{\target}) = 0$.
\end{lemma}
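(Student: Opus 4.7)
The plan is to organise the argument around a greatest-fixed-point computation on belief supports. I would define a decreasing sequence $(V_\indexLast)_{\indexLast\in\IN}$ of sets of belief supports by $V_0 = \{\belief : \belief\cap\target=\emptyset\}$ and $V_{\indexLast+1} = \{\belief\in V_\indexLast : \exists\,\mdpAction,\ \forall\,\observation\in\obsSpace,\ \beliefUpdate(\belief,\mdpAction,\observation) = \emptyset \text{ or } \beliefUpdate(\belief,\mdpAction,\observation)\in V_\indexLast\}$, where $\mdpAction$ ranges over the set of actions enabled at any (equivalently, every) state of $\belief$. Since $|V_0|\leq 2^{|\mdpStateSpace|}=\lengthBound$ and the sequence is decreasing, it must stabilise by index $\lengthBound$, so $V_\lengthBound = V_\infty \coloneqq \bigcap_{\indexLast\in\IN} V_\indexLast$.

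The core technical step will be an induction on $\indexLast$ establishing the following belief-level strengthening of the hypothesis: for every belief support $\belief$, if there exists \emph{one} strategy $\stratBMDP'$ such that $\proba_{\mdpState'}^{\stratBMDP'}(\reachBounded{\target}{\indexLast})=0$ for every $\mdpState'\in\belief$ simultaneously, then $\belief\in V_\indexLast$. The base case $\indexLast=0$ is immediate since the assumption forces $\belief\cap\target=\emptyset$. For the inductive step from $\indexLast$ to $\indexLast+1$, I would pick any action $\mdpAction$ in the support of $\stratBMDP'(\obsFun(\mdpState'))$ for any $\mdpState'\in\belief$; this choice is well-defined because all states in $\belief$ share the same observation. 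Now every $\mdpState''\in\beliefUpdate(\belief,\mdpAction,\observation)$ is by definition reachable with positive probability from some $\mdpState'\in\belief$ via $\mdpAction$ with observation $\observation$, and combined with the inductive assumption applied at $\mdpState'$, the continuation strategy $\stratBMDP'_{\mdpAction,\observation}$ must avoid $\target$ for $\indexLast$ further steps from every state of $\beliefUpdate(\belief,\mdpAction,\observation)$. Applying the inductive hypothesis yields $\beliefUpdate(\belief,\mdpAction,\observation)\in V_\indexLast$, hence $\belief\in V_{\indexLast+1}$.

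Applying this induction with $\belief=\{\mdpState\}$, $\stratBMDP'=\stratBMDP$, and $\indexLast=\lengthBound$ yields $\{\mdpState\}\in V_\lengthBound=V_\infty$. To finish, for each $\belief\in V_\infty$ I would fix a witnessing action $\mdpAction_\belief$ (such that $\beliefUpdate(\belief,\mdpAction_\belief,\observation)$ is either empty or in $V_\infty$ for every $\observation$) and define the pure strategy $\stratMDP(\hist) = \mdpAction_{\beliefHistory(\hist)}$ whenever $\beliefHistory(\hist)\in V_\infty$, arbitrary otherwise. This is well-defined as a function of the observation history because $\beliefHistory(\hist)$ depends only on $\obsFun(\hist)$. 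A straightforward induction on history length then shows that every history consistent with $\stratMDP$ from $\mdpState$ has $\beliefHistory(\hist)\in V_\infty\subseteq V_0$, so its last state avoids $\target$; hence $\proba_\mdpState^\stratMDP(\reach{\target})=0$.

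The main obstacle is formulating the induction hypothesis correctly: a naive induction on single states would fail, because after a stochastic transition the true state is unobservable, and a strategy that avoids $\target$ from one particular potential successor need not avoid it from every state consistent with the observed feedback. Phrasing the hypothesis at the level of belief supports---requiring a \emph{single} strategy to avoid $\target$ from \emph{every} state of the support simultaneously---is exactly what is preserved under belief updates and makes the inductive step go through cleanly.
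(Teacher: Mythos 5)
Your proof is correct, but it takes a genuinely different route from the paper's. The paper argues \emph{forwards}: it defines $\stratMDP(\hist)$ by copying an action that $\stratBMDP$ plays at a shortest $\stratBMDP$-consistent history sharing the belief support of $\hist$, shows by induction that any belief support reachable under $\stratMDP$ in at most $\indexPosition$ transitions is reachable under $\stratBMDP$ in at most $\indexPosition$ transitions, and then uses the fact that $\stratMDP$ is belief-based to conclude (by pigeonhole on the at most $2^{|\mdpStateSpace|}$ belief supports) that every belief support reachable under $\stratMDP$ is reachable under $\stratBMDP$ within $\lengthBound-1$ transitions, hence disjoint from $\target$ by hypothesis. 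You instead argue \emph{backwards}, via a greatest-fixed-point iteration on belief supports: your key lemma is that a single strategy keeping every state of a support $\belief$ out of $\target$ for $\indexPosition$ steps certifies membership of $\belief$ in the $\indexPosition$-th iterate, stabilisation at index $\lengthBound$ follows from monotonicity and the cardinality bound, and the pure strategy is then extracted from the fixed point rather than from $\stratBMDP$. Your version is essentially the standard almost-sure-safety algorithm on the belief-support abstraction; it cleanly separates the characterisation of the safe region from the strategy construction, and your closing remark about why the induction must be phrased over whole supports (one strategy protecting all states of the support simultaneously, which is what survives a belief update) is precisely the right observation and is what makes your continuation-strategy step sound. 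The paper's construction is more ad hoc but avoids setting up the fixed-point machinery. Two small points: in your inductive step you should note explicitly that $\belief\in V_{\indexPosition}$ itself (needed since the next iterate is defined as a subset of the current one), which is immediate because avoiding $\target$ for $\indexPosition+1$ steps implies avoiding it for $\indexPosition$ steps; and your justification that $\stratMDP$ is well defined is slightly overstated, since $\beliefHistory(\hist)$ depends on the initial state of $\hist$ and not only on $\obsFun(\hist)$ --- like the paper, you should restrict the belief-based definition to histories starting in $\mdpState$ and extend arbitrarily elsewhere.
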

\begin{proof}
  In the following, we use the following notions.
  For any history $\hist = \mdpState_0\mdpAction_0\ldots\mdpState_{\indexLast}\in\histSet{\pomdp}$, we let $|\hist| = \indexLast$ denote the number of transitions in $\hist$ and refer to $|\hist|$ as the \textit{length} of $\hist$.
  We say that two histories $\hist$ and $\hist'$ are \textit{indistinguishable} if $\obsFun(\hist) = \obsFun(\hist')$.
  Finally, we say that a belief support $\belief$ is \textit{reachable} under a strategy $\stratMDP$ from $\mdpState$ if there is a history $\hist$ starting in $\mdpState$ that is consistent with $\stratMDP$ such that $\beliefHistory(\hist)= \belief$.

  We define $\stratMDP$ as follows.
  For all histories $\hist\in\histSet{\pomdp}$ starting in $\mdpState$, we let $\stratMDP(\hist)$ be an action in $\supp{\stratBMDP(\hist')}$ for some history $\hist'\in\histSet{\pomdp}$ of minimum length that is consistent with $\stratBMDP$ and such that $\beliefHistory(\hist) = \beliefHistory(\hist')$.
  The strategy $\stratMDP$ is a well-defined strategy of $\pomdp$: two indistinguishable histories starting in $\mdpState$ have the same belief support.

  We first show that, for all $\indexLast\in\IN_0$, any belief support that can be reached under $\stratMDP$ from $\mdpState$ in at most $\indexLast$ transitions can also be reached under $\stratBMDP$ in at most $\indexLast$ transitions.
  We prove this by induction on the length of histories starting in $\mdpState$.
  The base case, histories of length zero, is direct: $\mdpState$ is the only such history.
  
  We now assume by induction that the claim holds for all histories of length at most $\indexSequence$.
  Let $\hist = \hist'\mdpAction'\mdpState'\in\histSet{\pomdp}$ be a history of length $\indexSequence+1$.
  By the induction hypothesis (for $\hist'$) and the definition of $\stratMDP$, there must be a history $\hist''$ that is consistent with $\stratBMDP$ of length at most $\indexSequence$ such that $\beliefHistory(\hist'') = \beliefHistory(\hist')$ and $\mdpAction'\in\supp{\stratBMDP(\hist'')}$.
  We may assume that $\last{\hist''} = \last{\hist}$: if a history has a belief support $\belief$, then for all $\mdpState''\in\belief$, there exists an indistinguishable history ending in $\mdpState''$ with the same belief support $\belief$.
  It follows that the history $\hist''\mdpAction'\mdpState'$ is consistent with $\stratBMDP$, has length at most $\indexSequence+1$ and satisfies $\beliefHistory(\hist) = \beliefHistory(\hist'')$.
  This ends the induction argument.

  To conclude, we observe that all belief supports that can be reached under $\stratMDP$ from $\mdpState$ can be reached in at most $\lengthBound-1$ transitions.
  Any belief support that can be reached from $\mdpState$ by playing with $\stratMDP$ can be reached without seeing some belief support twice along the way because $\stratMDP$ makes its decisions based on the current belief support only.
  In particular, any belief support that can be reached from $\mdpState$  under $\stratMDP$ can be reached in no more than $\lengthBound - 1$ transitions from $\mdpState$ under $\stratBMDP$ by the above, and thus no such belief support intersects $\target$ by our assumption on $\stratBMDP$.
  This implies that $\proba_{\mdpState}^{\stratMDP}(\reach{\target}) = 0$.
\end{proof}

Lemma~\ref{lemma:square-integ:reach-length} implies that, under the assumption that the target $\target$ is reached almost-surely from $\mdpState$ under all strategies, then for the probability of reaching $\target$ within $\lengthBound$ steps under a pure strategy is at least $\eta^\lengthBound$ from any state that can be reached from $\mdpState$ (i.e., Equation~\eqref{equation:square:k-step} holds).
This allows us to bound the probability of reaching $\target$ from $\mdpState$ from below in step bounds that are multiples of $\lengthBound$.
More precisely, we now show that, for all $\indexPosition\in\IN$,
\begin{equation}\label{equation:square:kl-inequality}
  \proba_{\mdpState}^{\stratMDP}(\reachBounded{\target}{\indexPosition\lengthBound}) \geq 1 - (1 - \eta^\lengthBound)^{\indexPosition}.
\end{equation}
Intuitively, we obtain Equation~\eqref{equation:square:kl-inequality} by observing that the probability of not reaching $\target$ from $\mdpState$ within $\indexPosition\lengthBound$ transitions is the same as failing coin tosses $\indexPosition$ times in a row such that the coin has a favourable outcome with probability greater than or equal to $\eta^\lengthBound$.
We provide a formal argument below.

\begin{lemma}\label{lemma:square-integ:big-o:period}
  Let $\mdpState\in\mdpStateSpace$, $\lengthBound = 2^{|\mdpStateSpace|}$ and $\eta$ be the minimum transition probability of $\pomdp$.
  Assume that for all $\stratMDP\in\stratClassAll{\mdp}$, $\proba_{\mdpState}^{\stratMDP}(\reach{\target}) = 1$.
  Then for all $\stratMDP\in\stratClassAll{\pomdp}$ and all $\indexPosition\in\IN$, we have
  \begin{equation*}
    \proba_{\mdpState}^{\stratMDP}(\reachBounded{\target}{\indexPosition\lengthBound}) \geq 1 - (1 - \eta^\lengthBound)^{\indexPosition}.
  \end{equation*}
\end{lemma}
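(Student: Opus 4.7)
The plan is to proceed by induction on $\indexPosition$, using Equation~\eqref{equation:square:k-step} (already established above the lemma statement) as the one-step estimate, and then combining it with a conditional decomposition over histories of length $\indexPosition\lengthBound$ that avoid $\target$. It will be more convenient to prove the equivalent inequality
\[
\proba_{\mdpState}^{\stratMDP}(\playSet{\pomdp}\setminus\reachBounded{\target}{\indexPosition\lengthBound}) \leq (1 - \eta^\lengthBound)^{\indexPosition},
\]
i.e., to bound the probability of \emph{not} reaching $\target$ within $\indexPosition\lengthBound$ steps. The base case $\indexPosition=0$ is immediate since the right-hand side is $1$.

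For the induction step, fix $\indexPosition\geq 0$ and assume the bound holds for $\indexPosition$. Partition the event $\playSet{\pomdp}\setminus\reachBounded{\target}{(\indexPosition+1)\lengthBound}$ according to the history realised in the first $\indexPosition\lengthBound$ steps: it equals the disjoint union, over histories $\hist$ of length $\indexPosition\lengthBound$ starting at $\mdpState$ whose states avoid $\target$, of $\cyl{\hist}\cap E_{\hist}$, where $E_{\hist}$ is the event that $\target$ is not visited during the next $\lengthBound$ transitions. Using the definition of $\proba^{\stratMDP}_{\mdpState}$ on cylinders and the Markov property, for each such $\hist$ we have
\[
\proba_{\mdpState}^{\stratMDP}(\cyl{\hist}\cap E_\hist) = \proba_{\mdpState}^{\stratMDP}(\cyl{\hist})\cdot\proba_{\last{\hist}}^{\stratMDP[\hist]}(\playSet{\pomdp}\setminus\reachBounded{\target}{\lengthBound}),
\]
where $\stratMDP[\hist]$ denotes the shifted strategy $\hist'\mapsto\stratMDP(\histConcat{\hist}{\hist'})$.

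The key step is then to apply Equation~\eqref{equation:square:k-step} to $\last{\hist}$ and $\stratMDP[\hist]$: since $\last{\hist}$ is reachable from $\mdpState$ (because cylinders of zero probability contribute nothing), the bound gives $\proba_{\last{\hist}}^{\stratMDP[\hist]}(\playSet{\pomdp}\setminus\reachBounded{\target}{\lengthBound}) \leq 1 - \eta^\lengthBound$. Summing over all admissible $\hist$, using that these cylinders cover $\playSet{\pomdp}\setminus\reachBounded{\target}{\indexPosition\lengthBound}$, and invoking the induction hypothesis yields
\[
\proba_{\mdpState}^{\stratMDP}(\playSet{\pomdp}\setminus\reachBounded{\target}{(\indexPosition+1)\lengthBound})
\leq (1-\eta^\lengthBound)\cdot\proba_{\mdpState}^{\stratMDP}(\playSet{\pomdp}\setminus\reachBounded{\target}{\indexPosition\lengthBound})
\leq (1-\eta^\lengthBound)^{\indexPosition+1},
\]
which closes the induction and proves the lemma.

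The only delicate point is the conditional decomposition: the shifted-strategy identity above is standard for the distribution induced by a behavioural strategy, but it relies on writing $\proba^{\stratMDP}_{\mdpState}$ on finite-horizon events as a sum over cylinders and recognising that the residual conditional probability is precisely a $\proba^{\stratMDP[\hist]}_{\last{\hist}}$-probability. Once this bookkeeping is made explicit (a routine consequence of the definition of $\proba^{\stratMDP}_{\mdpState}$), the inductive step reduces to a direct application of Equation~\eqref{equation:square:k-step}, so this is where I would spend the most care rather than on novel ideas.
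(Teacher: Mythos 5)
Your proof is correct and follows essentially the same route as the paper's: both rest on the $\lengthBound$-step bound of Equation~\eqref{equation:square:k-step} and an induction over multiples of $\lengthBound$ with a conditional decomposition, yours phrased for the complement event $\playSet{\pomdp}\setminus\reachBounded{\target}{\indexPosition\lengthBound}$ and decomposed over histories with shifted strategies, the paper's phrased for $\reachBounded{\target}{(\indexPosition+1)\lengthBound}\setminus\reachBounded{\target}{\indexPosition\lengthBound}$ after first reducing to pure strategies and decomposing over the state reached at time $\indexPosition\lengthBound$. The shifted-strategy bookkeeping you flag is the same step the paper leaves implicit, so nothing essential is missing.
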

\begin{proof}
  We need only prove the result of the lemma for pure strategies; Lemma~\ref{lem:expectancy:pure integral} allows us to extend the result directly to randomised strategies.
  
  We first show that for all pure strategies $\stratMDP\in\stratClassPure{\pomdp}$ and all $\mdpState'\in\mdpStateSpace$ that is reachable from $\mdpState$, we have
  \begin{equation}\label{equation:square:k-step:proof}
    \proba_{\mdpState'}^{\stratMDP}(\reachBounded{\target}{\lengthBound}) \geq \eta^\lengthBound,
  \end{equation}
  i.e., Equation~\eqref{equation:square:k-step} holds.
  Let $\mdpState'\in\mdpStateSpace$ be reachable from $\mdpState$ and let $\stratMDP\in\stratClassPure{\pomdp}$.
  No matter the strategy that is used from $\mdpState'$, $\target$ is reached almost-surely.
  If this were not the case, then there would exist a strategy that does not visit $\target$ almost-surely from $\mdpState$: first ensure that $\mdpState'$ is reached with positive probability and then follow a strategy that does not almost-surely reach $\target$ from $\mdpState'$.
  By Lemma~\ref{lemma:square-integ:reach-length}, there is a history $\hist$ starting in $\mdpState'$ and ending in $\target$ that is consistent with $\stratMDP$ with a most $\lengthBound$ transitions.
  Because $\stratMDP$ is pure, we obtain that
  \[
    \proba_{\mdpState'}^{\stratMDP}(\reachBounded{\target}{\lengthBound}) \geq \proba_{\mdpState'}^{\stratMDP}(\cyl{\hist}) \geq \eta^\lengthBound.
  \]

  We prove the claim of the lemma by exploiting the result shown above.
  Let $\stratMDP\in\stratClassPure{\pomdp}$.
  The proof is by induction on $\indexPosition\in\IN$.
  For $\indexPosition = 0$, the inequality is direct and for $\indexPosition = 1$, it is a special case of the inequality shown in the first part of the proof.
  We now assume by induction that the result holds for $\indexPosition\geq 1$ and show it for $\indexPosition + 1$.

  We first establish the following inequality:
  \begin{equation}\label{equation:square:big-o:induction}
    \proba_{\mdpState}^{\stratMDP}(\reachBounded{\target}{(\indexPosition+1)\lengthBound}\setminus\reachBounded{\target}{\indexPosition\lengthBound}) \geq
    \eta^\lengthBound \cdot \left(1- \proba_{\mdpState}^{\stratMDP}(\reachBounded{\target}{\indexPosition\lengthBound})\right).
  \end{equation}
  For all states $\mdpState'\notin\target$, we let $\objective(\mdpState', \indexPosition)$ be the union of the cylinders of the histories starting in $\mdpState$, ending in $\mdpState'$, with $\indexPosition\lengthBound$ transitions and that do not traverse $\target$.
  We obtain that
  \[
    \reachBounded{\target}{(\indexPosition+1)\lengthBound}\setminus\reachBounded{\target}{\indexPosition\lengthBound} =
    \bigcup_{\mdpState'\notin\target} \reachBounded{\target}{(\indexPosition+1)\lengthBound}\cap\objective(\mdpState', \indexPosition),
  \]
  that is, any element of $\reachBounded{\target}{(\indexPosition+1)\lengthBound}\setminus\reachBounded{\target}{\indexPosition\lengthBound}$ starting in $\mdpState$ can be decomposed into a history in some $\objective(\mdpState', \indexPosition)$ and a play from $\mdpState'$ that visits $\target$ within $\lengthBound$ steps.
  For all $\mdpState'\notin\target$ that is reachable from $\mdpState$, Equation~\eqref{equation:square:k-step:proof} ensures that $\target$ is reached from $\mdpState'$ with probability at least $\eta^\lengthBound$, and therefore,
  \[
    \proba_{\mdpState}^{\stratMDP}(\reachBounded{\target}{(\indexPosition+1)\lengthBound}\cap\objective(\mdpState', \indexPosition)) \geq \eta^\lengthBound\cdot \proba_{\mdpState}^{\stratMDP}(\objective(\mdpState', \indexPosition)).
  \]
  By summing over all $\mdpState'\notin\target$ that are reachable from $\mdpState$, we conclude that
  \begin{align*}
    \proba_{\mdpState}^{\stratMDP}(\reachBounded{\target}{(\indexPosition+1)\lengthBound}\setminus\reachBounded{\target}{\indexPosition\lengthBound})
    & \geq \eta^\lengthBound\cdot \sum_{\mdpState'\notin\target} \proba_{\mdpState}^{\stratMDP}(\objective(\mdpState', \indexPosition))\\
    & = \eta^\lengthBound\cdot \proba_{\mdpState}^{\stratMDP}(\playSet{\pomdp}\setminus\reachBounded{\target}{\indexPosition\lengthBound})\\
    & = \eta^\lengthBound\cdot \left( 1 - \proba_{\mdpState}^{\stratMDP}(\reachBounded{\target}{\indexPosition\lengthBound})\right),
  \end{align*}
  where the second line follows from additivity.

  To end the argument, we use Equation~\eqref{equation:square:big-o:induction} and the induction hypothesis as follows:
  \begin{align*}
    \proba_{\mdpState}^{\stratMDP}(\reachBounded{\target}{(\indexPosition+1)\lengthBound})
    & =
      \proba_{\mdpState}^{\stratMDP}(\reachBounded{\target}{\indexPosition\lengthBound}) + \proba_{\mdpState}^{\stratMDP}(\reachBounded{\target}{(\indexPosition+1)\lengthBound}\setminus\reachBounded{\target}{\indexPosition\lengthBound})\\
    & \geq \proba_{\mdpState}^{\stratMDP}(\reachBounded{\target}{\indexPosition\lengthBound})  + \eta^\lengthBound\cdot \left( 1 - \proba_{\mdpState}^{\stratMDP}(\reachBounded{\target}{\indexPosition\lengthBound})\right)\\
    & = \eta^\lengthBound + (1 - \eta^\lengthBound)\cdot\proba_{\mdpState}^{\stratMDP}(\reachBounded{\target}{\indexPosition\lengthBound})\\
    & \geq \eta^\lengthBound + (1 - \eta^\lengthBound)\cdot\left(1 - (1-\eta^\lengthBound)^\indexPosition\right)\\
    & = 1 - (1 - \eta^\lengthBound)^{\indexPosition+1}.
  \end{align*}
  This ends the induction argument and the overall proof.
\end{proof}

To prove the main result of this section with the approach outlined earlier, it remains to show that $\proba^{\stratMDP}_{\mdpState}(\reachExact{\target}{\indexLast})\in\bigo(\scalar^\indexLast)$ for $\scalar = \sqrt[\lengthBound]{1-\eta^\lengthBound} < 1$.
We use Lemma~\ref{lemma:square-integ:big-o:period} to do so in the following.

\begin{lemma}\label{lemma:square-integ:big-o:global}
  Let $\mdpState\in\mdpStateSpace$, $\lengthBound = 2^{|\mdpStateSpace|}$, $\eta$ be the minimum transition probability of $\pomdp$ and $\scalar = \sqrt[\lengthBound]{1-\eta^\lengthBound}$.
  Assume that for all $\stratMDP\in\stratClassAll{\mdp}$, $\proba_{\mdpState}^{\stratMDP}(\reach{\target}) = 1$.
  Then for all strategies $\stratMDP\in\stratClassAll{\pomdp}$, we have $\proba^{\stratMDP}_{\mdpState}(\reachExact{\target}{\indexLast})\in\bigo(\scalar^\indexLast)$.
\end{lemma}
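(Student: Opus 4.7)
The plan is to reduce the bound on $\proba^{\stratMDP}_{\mdpState}(\reachExact{\target}{\indexLast})$ to the bound on the complement of $\reachBounded{\target}{\indexPosition\lengthBound}$ already supplied by Lemma~\ref{lemma:square-integ:big-o:period}. The key elementary observation is that if $\target$ is visited for the first time at step $\indexLast$, then in particular $\target$ is not visited within the first $\indexPosition\lengthBound$ steps, whenever $\indexPosition\lengthBound < \indexLast$. Concretely, for any such $\indexPosition$ we have the inclusion
\[
\reachExact{\target}{\indexLast} \;\subseteq\; \playSet{\pomdp}\setminus \reachBounded{\target}{\indexPosition\lengthBound},
\]
which, combined with Lemma~\ref{lemma:square-integ:big-o:period}, gives $\proba^{\stratMDP}_{\mdpState}(\reachExact{\target}{\indexLast}) \leq (1-\eta^\lengthBound)^\indexPosition = \scalar^{\indexPosition\lengthBound}$.

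To turn this into a bound of the form $C\cdot\scalar^{\indexLast}$, I would pick $\indexPosition$ as large as the constraint $\indexPosition\lengthBound < \indexLast$ allows, namely $\indexPosition = \lfloor(\indexLast-1)/\lengthBound\rfloor$ (valid for $\indexLast \geq 1$; the case $\indexLast = 0$ is trivial since $\proba^{\stratMDP}_{\mdpState}(\reachExact{\target}{0}) \leq 1$ can be absorbed in the constant). This choice guarantees $\indexPosition\lengthBound \geq \indexLast - \lengthBound$, hence
\[
\proba^{\stratMDP}_{\mdpState}(\reachExact{\target}{\indexLast}) \;\leq\; \scalar^{\indexPosition\lengthBound} \;\leq\; \scalar^{\indexLast-\lengthBound} \;=\; \scalar^{-\lengthBound}\cdot\scalar^{\indexLast},
\]
using $\scalar \in \coInt{0}{1}$ to flip the inequality when the exponent decreases. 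Setting $C = \max\{1, \scalar^{-\lengthBound}\}$ gives $\proba^{\stratMDP}_{\mdpState}(\reachExact{\target}{\indexLast}) \leq C\cdot\scalar^{\indexLast}$ for every $\indexLast\in\IN$, uniformly in $\stratMDP$, which is precisely the claim $\proba^{\stratMDP}_{\mdpState}(\reachExact{\target}{\indexLast})\in\bigo(\scalar^\indexLast)$.

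I do not expect any serious obstacle here: all the real work has already been done in Lemmas~\ref{lemma:square-integ:reach-length} and~\ref{lemma:square-integ:big-o:period}. The only thing to be slightly careful about is the bookkeeping on $\indexPosition$ in terms of $\indexLast$ and $\lengthBound$ (and treating the small-$\indexLast$ boundary case), so that the constant in the $\bigo$ does not depend on $\stratMDP$.
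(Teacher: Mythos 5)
Your proposal is correct and follows essentially the same route as the paper: both bound $\proba^{\stratMDP}_{\mdpState}(\reachExact{\target}{\indexLast})$ by $1-\proba^{\stratMDP}_{\mdpState}(\reachBounded{\target}{m})$ for a suitable $m<\indexLast$ and then invoke Lemma~\ref{lemma:square-integ:big-o:period}, differing only in bookkeeping (the paper first interpolates the lower bound on $\proba^{\stratMDP}_{\mdpState}(\reachBounded{\target}{\indexLast})$ to all $\indexLast$ via monotonicity, arriving at the explicit bound $\scalar^{\indexLast-\lengthBound-1}$, while you choose the largest multiple of $\lengthBound$ below $\indexLast$, arriving at $\scalar^{\indexLast-\lengthBound}$).
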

\begin{proof}
  To establish the result, we show that, for all $\indexLast\in\IN_0$,
  \[
    \proba^{\stratMDP}_{\mdpState}(\reachExact{\target}{\indexLast}) \leq \scalar^{\indexLast - \lengthBound - 1}.
  \]
  
  We recall that the sequence $(\proba^{\stratMDP}_{\mdpState}(\reachBounded{\target}{\indexLast}))_{\indexLast\in\IN}$ is non-decreasing.
  It thus follows from Lemma~\ref{lemma:square-integ:big-o:period} that for all $\indexLast\in\IN$,
  \begin{align*}
    \proba^{\stratMDP}_{\mdpState}(\reachBounded{\target}{\indexLast})
    & \geq \proba^{\stratMDP}_{\mdpState}(\reachBounded{\target}{\lengthBound\cdot\lfloor\frac{\indexLast}{\lengthBound}\rfloor}) \\
    & \geq 1 - (1-\eta^\lengthBound)^{\lfloor\frac{\indexLast}{\lengthBound}\rfloor} \\
    & \geq 1 - (1-\eta^\lengthBound)^{\frac{\indexLast}{\lengthBound}-1}\\
    & = 1 - \scalar^{\indexLast-\lengthBound}.
  \end{align*}

  It follows from the above that, for all $\indexLast\in\IN_0$, we have
  \begin{align*}
    \proba^{\stratMDP}_{\mdpState}(\reachExact{\target}{\indexLast})
    & \leq \proba^{\stratMDP}_{\mdpState}(\reach{\target})
      - \proba^{\stratMDP}_{\mdpState}(\reachBounded{\target}{\indexLast-1}) \\
    & = 1 - \proba^{\stratMDP}_{\mdpState}(\reachBounded{\target}{\indexLast-1})\\
    & \leq \scalar^{\indexLast-1-\lengthBound}.
  \end{align*}
\end{proof}

We can now state the main theorem of this section and provide a short proof based on Lemma~\ref{lemma:square-integ:big-o:global}.

\begin{theorem}
  Let $\target\subseteq\mdpStateSpace$ be a target set and $\mdpState\in\mdpStateSpace$ be an initial state.
  The three following assertions are equivalent:
  \begin{enumerate}
  \item $\proba^{\stratMDP}_{\mdpState}(\reach{\target})=1$ for all strategies $\stratMDP\in\stratClassAll{\pomdp}$;\label{item:theorem:shortest-square:1}
  \item for all weight functions $\weight$ and $\stratMDP\in\stratClassAll{\pomdp}$, $\spath{\target}{\weight}$ is $\proba_{\mdpState}^{\stratMDP}$-integrable;\label{item:theorem:shortest-square:2}
  \item for all weight functions $\weight$ and $\stratMDP\in\stratClassAll{\pomdp}$, $\spath{\target}{\weight}$ is square integrable with respect to $\proba_{\mdpState}^{\stratMDP}$.\label{item:theorem:shortest-square:3}
  \end{enumerate}
  In particular, for any weight function $\weight$, if $\spath{\target}{\weight}$ is universally integrable, then $\spath{\target}{\weight}$ is universally square integrable.
\end{theorem}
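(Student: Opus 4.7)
I would prove the theorem by establishing the chain of implications \ref{item:theorem:shortest-square:3}~$\Rightarrow$~\ref{item:theorem:shortest-square:2}~$\Rightarrow$~\ref{item:theorem:shortest-square:1}~$\Rightarrow$~\ref{item:theorem:shortest-square:3}. The first implication is immediate from the Cauchy--Schwarz inequality: if $(\spath{\target}{\weight})^2$ is $\proba^{\stratMDP}_\mdpState$-integrable, then so is $|\spath{\target}{\weight}| = \sqrt{(\spath{\target}{\weight})^2}$, since $\proba^{\stratMDP}_\mdpState$ is a probability measure and hence constants are integrable. The second implication is by contraposition. If \ref{item:theorem:shortest-square:1} fails, there is a strategy $\stratMDP$ such that the event $\playSet{\pomdp}\setminus\reach{\target}$ has positive $\proba^{\stratMDP}_\mdpState$-measure; choosing the constant weight function $\weight\equiv 1$ makes $\spath{\target}{\weight} = +\infty$ on this event, so its $\proba^{\stratMDP}_\mdpState$-integral is $+\infty$ and \ref{item:theorem:shortest-square:2} fails as well.

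\noindent The main content of the proof lies in \ref{item:theorem:shortest-square:1}~$\Rightarrow$~\ref{item:theorem:shortest-square:3}, and almost all of the work has already been done in Lemma~\ref{lemma:square-integ:big-o:global}. Fix a weight function $\weight$ and a strategy $\stratMDP$, and assume \ref{item:theorem:shortest-square:1}. First I would reduce to the case $\weight\equiv 1$: since $\pomdp$ is finite, $\weightMax = \sup_{\mdpState\in\mdpStateSpace,\mdpAction\in\mdpActionSpace}|\weight(\mdpState,\mdpAction)|$ is a real number, so for every play $\play\in\reach{\target}$ we have $|\spath{\target}{\weight}(\play)|\leq \weightMax\cdot\spath{\target}{1}(\play)$ by the triangle inequality on the finite sum defining $\spath{\target}{\weight}(\play)$. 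Since \ref{item:theorem:shortest-square:1} ensures $\reach{\target}$ has $\proba^{\stratMDP}_\mdpState$-measure $1$, the pointwise bound $(\spath{\target}{\weight})^2\leq \weightMax^2\cdot(\spath{\target}{1})^2$ holds $\proba^{\stratMDP}_\mdpState$-almost surely, and it suffices to prove that $(\spath{\target}{1})^2$ is $\proba^{\stratMDP}_\mdpState$-integrable.

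\noindent For the constant weight function, again using that $\target$ is reached $\proba^{\stratMDP}_\mdpState$-almost surely, we have the partition $\playSet{\pomdp} = \bigsqcup_{\indexLast\in\IN}\reachExact{\target}{\indexLast}$ (up to a null set) and $\spath{\target}{1}(\play) = \indexLast$ for $\play\in\reachExact{\target}{\indexLast}$. Thus
\begin{equation*}
    \expectancy^{\stratMDP}_\mdpState\left((\spath{\target}{1})^2\right) = \sum_{\indexLast\in\IN}\indexLast^2\cdot\proba^{\stratMDP}_\mdpState(\reachExact{\target}{\indexLast}).
\end{equation*}
Lemma~\ref{lemma:square-integ:big-o:global}, which is the crux of the entire section, provides $\scalar\in\coInt{0}{1}$ and a constant $C\in\IR$ (depending on $\lengthBound$ and $\eta$) such that $\proba^{\stratMDP}_\mdpState(\reachExact{\target}{\indexLast})\leq C\cdot\scalar^\indexLast$ for all $\indexLast\in\IN$. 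Since the series $\sum_{\indexLast\in\IN}\indexLast^2\scalar^\indexLast$ converges for $\scalar\in\coInt{0}{1}$ (by, e.g., the ratio test), the series above converges to a real number and $(\spath{\target}{1})^2$ is $\proba^{\stratMDP}_\mdpState$-integrable.

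\noindent The potentially delicate step is the reduction to the constant weight function: one must be careful that the inequality $|\spath{\target}{\weight}|\leq \weightMax\cdot\spath{\target}{1}$ holds only on $\reach{\target}$ (outside, both sides are $+\infty$), but \ref{item:theorem:shortest-square:1} ensures this complement is negligible under every strategy, so the almost-everywhere bound is enough to transfer integrability. Beyond this, the proof is essentially an assembly of results already established in the appendix, and the final statement of the theorem about universal integrability versus universal square integrability follows by specialising the equivalence to every $\mdpState\in\mdpStateSpace$.
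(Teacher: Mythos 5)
Your proposal is correct and follows essentially the same route as the paper: both reduce the substantive implication \ref{item:theorem:shortest-square:1}~$\Rightarrow$~\ref{item:theorem:shortest-square:3} to the constant weight function via the bound $|\spath{\target}{\weight}|\leq\weightMax\cdot\spath{\target}{1}$, expand $\expectancy^{\stratMDP}_\mdpState((\spath{\target}{1})^2)$ as $\sum_{\indexLast}\indexLast^2\,\proba^{\stratMDP}_\mdpState(\reachExact{\target}{\indexLast})$, and conclude by the geometric decay from Lemma~\ref{lemma:square-integ:big-o:global}; the remaining implications are handled by the same Cauchy--Schwarz and contraposition arguments. Your explicit remark that the pointwise bound need only hold $\proba^{\stratMDP}_\mdpState$-almost surely on $\reach{\target}$ is a slightly more careful rendering of a step the paper states without comment.
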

\begin{proof}
  If Item~\ref{item:theorem:shortest-square:1} does not hold, then there exists a strategy $\stratMDP$ such that $\proba_{\mdpState}^{\stratMDP}(\spath{\target}{1} = +\infty) > 0$ and it follows that Items~\ref{item:theorem:shortest-square:2} and~\ref{item:theorem:shortest-square:3} do not hold.

  We now assume that Item~\ref{item:theorem:shortest-square:1} holds.
  To end the proof, it suffices to show that Item~\ref{item:theorem:shortest-square:3} holds, as it implies Item~\ref{item:theorem:shortest-square:2}.
  Let $\stratMDP\in\stratClassAll{\pomdp}$.
  We first show that $\spath{\target}{1}$ is square integrable with respect to $\proba_{\mdpState}^{\stratMDP}$ and derive the result for all weight functions from this special case.
  The equality $\proba^\stratMDP_{\mdpState}(\reach{\target})=1$ implies that
  \[
    \expectancy^{\stratMDP}_{\mdpState}((\spath{\target}{1})^2) = \sum_{\indexLast\in\IN}\indexLast^2\cdot\proba^{\stratMDP}_{\mdpState}(\reachExact{\target}{\indexLast}).
  \]
  Lemma~\ref{lemma:square-integ:big-o:global} implies that the above non-negative series is bounded from above by a converging series.
  This shows that $\spath{\target}{1}$ is square integrable with respect to $\proba_{\mdpState}^{\stratMDP}$.

  It remains to establish that for all weight functions $\weight$, the shortest-path payoff $\spath{\target}{\weight}$ is $\proba^{\stratMDP}_{\mdpState}$-square integrable.
  Let $\weightMax=\max_{(\mdpState,\mdpAction)\in\mdpStateSpace\times\mdpActionSpace}|\weight(\mdpState, \mdpAction)|$ ($\weightMax$ is well-defined because $\pomdp$ is finite).
  By the triangular inequality, we obtain that $|\spath{\target}{\weight}|\leq\weightMax\cdot\spath{\target}{1}$.
  We have thus bounded $\spath{\target}{\weight}$ in absolute value by a multiple of a $\proba^{\stratMDP}_{\mdpState}$-square integrable function, thus implying that $\spath{\target}{\weight}$ is also $\proba^{\stratMDP}_{\mdpState}$-square integrable.
\end{proof}

\end{document}